\definecolor{oxfordColor}{HTML}{002147}
\definecolor{ChetwodeBlue}{HTML}{6b77ad}  
\definecolor{Jaguar}{HTML}{262731}       
\definecolor{TurkishRose}{HTML}{B46275}   
\definecolor{Hunter}{HTML}{3F704D}   
\newtheorem{theorem}{Theorem}
\newtheorem{lemma}[theorem]{Lemma}
\newtheorem{corollary}[theorem]{Corollary}
\newtheorem{remark}[theorem]{Remark}
\newtheorem{definition}[theorem]{Definition}
\newtheorem*{thmmainone}{Theorem~\ref{thm:mainone}}
\newtheorem*{thmmaintwo}{Theorem~\ref{thm:maintwo}}
\newtheorem*{thmmainthree}{Theorem~\ref{thm:mainthree}}
\newtheorem*{thmmainfour}{Theorem~\ref{thm:mainfour}}
\newtheorem*{thmshifts}{Theorem~\ref{thm:approx-shifts}}
\newtheorem*{corshifts}{Lemma~\ref{cor:implement-approx:real-with-complex}}
\newtheorem*{lemshifts}{Lemma~\ref{lem:implement-approx:pc}}
\newtheorem*{lemout}{Lemma~\ref{lem:reduction:norm:outline}}
\newcommand{\ET}{\textsc{Tutte}}
\newcommand{\ZT}{\textsc{ZeroFreeTutte}}
\newcommand{\TT}{\textsc{ThickenedTutte}}
\newcommand{\ZTT}{\textsc{ZeroFreeThickenedTutte}}
\newcommand{\ST}{\textsc{SignTutte}}
\newcommand{\STz}{\textsc{Sign-Tutte}}
\newcommand{\RT}{\textsc{RatioTutte}}
\newcommand{\RTT}{\textsc{RatioThickenedTutte}}
\newcommand{\PT}{\textsc{PlanarTutte}}
\newcommand{\ZPT}{\textsc{ZeroFreePlanarTutte}}
\newcommand{\TPT}{\textsc{ThickenedPlanarTutte}}
\newcommand{\ZTPT}{\textsc{ZeroFreeThickenedPlanarTutte}}
\newcommand{\RPT}{\textsc{RatioPlanarTutte}}
\newcommand{\RTPT}{\textsc{RatioThickenedPlanarTutte}}
\newcommand{\SPT}{\textsc{SignPlanarTutte}}
\newcommand{\SRPT}{\textsc{Sign-Real-PlanarTutte}}
\newcommand{\SPTz}{\textsc{Sign-PlanarTutte}}
\newcommand{\FNI}{\textsc{Factor-}K\textsc{-NormIsing}}
\newcommand{\FNP}{\textsc{Factor-}K\textsc{-NormPotts}}
\newcommand{\FNT}{\textsc{Factor-}K\textsc{-NormTutte}}
\newcommand{\FNPT}{\textsc{Factor-}K\textsc{-NormPlanarTutte}}
\newcommand{\FNPP}{\textsc{Factor-}K\textsc{-NormPlanarPotts}}
\newcommand{\FNTz}{\textsc{Factor-}K\textsc{-NormTutte}}
\newcommand{\FNPTz}{\textsc{Factor-}K\textsc{-NormPlanarTutte}}
\newcommand{\sDATz}{\textsc{-ArgTutte}}
\newcommand{\sDAPTz}{\textsc{-ArgPlanarTutte}}
\newcommand{\sDAPP}{\textsc{-ArgPlanarPotts}}
\newcommand{\sDAT}{\textsc{-ArgTutte}}
\newcommand{\sDAPT}{\textsc{-ArgPlanarTutte}}
\newcommand{\sDAP}{\textsc{-ArgPotts}}
\newcommand{\sDAI}{\textsc{-ArgIsing}}
\newcommand{\pDAT}{\textsc{Distance-}}
\newcommand{\FNJ}{\textsc{Factor-}K\textsc{-NormJones}}
\newcommand{\DAJ}{\textsc{Distance-}\rho\textsc{-ArgJones}}
\newcommand{\DAJpi}{\textsc{Distance-}\pi/3\textsc{-ArgJones}}
\def\numP{\mathrm{\#P}} 
\def\NP{\mathrm{NP}}
\newcommand{\nbits}{b}
\def\prob#1#2#3{\goodbreak\begin{list}{}{\labelwidth\z@ \itemindent-\leftmargin
                        \itemsep\z@  \topsep6\p@\@plus6\p@
                        \let\makelabel\descriptionlabel}
                \item[\textbf{Name:}]#1   \vspace{-1ex}
               \item[\textbf{Instance:}]                #2   \vspace{-1ex}
                \item[\textbf{Output:}]#3  
                \end{list}}
\title{The complexity of approximating the complex-valued Potts model \footnote{A short preliminary version without proofs will appear on the proceedings of MFCS 2020.}}
\author{
Andreas Galanis \thanks{
  Department of Computer Science, University of Oxford, Wolfson Building, Parks Road, Oxford, OX1~3QD, UK.}
  \and
  Leslie Ann Goldberg \footnotemark[2]
\and
Andr\'es Herrera-Poyatos \footnotemark[2] \thanks{This author is supported by an Oxford-DeepMind Graduate Scholarship and a EPSRC Doctoral Training Partnership.}
 }
\date{18 November 2021}
\begin{document}

\maketitle

\thispagestyle{empty}
\begin{abstract} 
We study the complexity of approximating the partition function of the  $q$-state Potts model and the closely related Tutte polynomial for complex values of the underlying parameters.  Apart from the classical connections with quantum computing and phase transitions in statistical physics, recent work in approximate counting has shown that the behaviour in the complex plane, and more precisely the location of zeros, is strongly connected with the complexity of the approximation problem, even for positive real-valued parameters. Previous work in the complex plane by Goldberg and Guo focused on $q=2$, which corresponds to the case of the Ising model; for $q>2$, the behaviour in the complex plane is not as well understood and most work applies only to the real-valued Tutte plane. 

Our main result is a complete classification of the complexity of the approximation problems for all non-real values of the parameters, by establishing \#P-hardness results that apply even when restricted to planar graphs. Our techniques apply to all $q\geq 2$ and further complement/refine previous results both for the Ising model and the Tutte plane, answering in particular a question raised by Bordewich, Freedman, Lov\'{a}sz and Welsh  in the context of quantum computations.
\end{abstract}

\newpage

\clearpage
\setcounter{page}{1}

\section{Introduction}

The $q$-state Potts model is a classical model of ferromagnetism in statistical physics~\cite{Potts1952, Welsh1993} which generalises the well-known Ising model. On a (multi)graph $G = (V, E)$, configurations of the model are all possible assignments $\sigma: V\rightarrow [q]$ where $[q]=\{1, \ldots, q\}$ is a set of $q$ spins with $q\geq 2$. The model is parameterised by $y$, which is a function of the temperature of the model and is also known as the \emph{edge interaction}. Each configuration $\sigma$ is assigned  weight $y^{m(\sigma)}$ where $m(\sigma)$ denotes the number of monochromatic edges of $G$ under $\sigma$. The partition function of the model is the aggregate weight over all configurations, i.e.,
\begin{equation*}
  Z_{\mathrm{Potts}}(G; q, y) = \sum_{\sigma \colon V \to [q]} y^{m(\sigma)}, 
\end{equation*}
When $q = 2$, this model is known as the Ising model, and we sometimes use the notation $Z_{\mathrm{Ising}}(G; y)$ to denote its partition function.

The Ising/Potts models have an extremely useful generalisation to non-integer values of $q$ via the so-called ``random-cluster'' formulation and the closely related Tutte polynomial. In particular, for numbers $q$ and $\gamma$,  the Tutte polynomial of a graph $G$ is given by
\begin{equation} \label{eq:tutte}
  Z_{\mathrm{Tutte}}(G; q, \gamma) = \sum_{A \subseteq E} q^{k(A)} \gamma^{|A|}, 
\end{equation}
where $k(A)$ denotes the number of connected components in the graph $(V, A)$ (isolated vertices do count).  When $q$ is an integer with $q \ge 2$, we have $Z_{\mathrm{Potts}}(G; q, y)  = Z_{\mathrm{Tutte}}(G; q, y - 1)$, see, for instance,~\cite{Sokal2005}. The Tutte polynomial on planar graphs is particularly relevant in quantum computing since it corresponds to the Jones polynomial of an ``alternating link''~\cite[Chapter 5]{Welsh1993}, and polynomial-time quantum computation can be simulated by additively approximating the Jones polynomial at a suitable value, as we will explain later in more detail, see also~\cite{Bordewich2005} for details.

In this paper, we study the complexity of approximating the partition function of the Potts model and the Tutte polynomial on planar graphs as the parameter $y$ ranges in the complex plane. Traditionally, this problem has been mainly considered in the case where $y$ is a positive real, however recent developments have shown that for various models, including the Ising and Potts models, there is a close interplay between the location of zeros of the partition function in the complex plane and the approximability of the problem, even for positive real values of $y$. 

The framework of viewing partition functions as polynomials in the complex plane of the underlying parameters has been well-explored in statistical physics and has recently gained traction in computer science as well in the context of approximate counting. On the positive side, zero-free regions in the complex plane translate into efficient algorithms for approximating the partition function~\cite{barvinokbook,PR}  and this scheme has lead to a broad range of new algorithms even for positive real values of the underlying parameters~\cite{LSP,peters2019,liu2019ising,JSPab, peters2018location,barvinokregts2019, GLLZ, GLLb,harrow2019classical}. On the negative side, the presence of zeros poses a barrier to this approach and, in fact, it has been demonstrated that zeros mark the onset of computational hardness for the approximability of the partition function~\cite{Goldberg2014,Goldberg2017,Beza,Bezb}. These new algorithmic and computational complexity developments stemming from the complex plane mesh with the statistical physics perspective where zeros have long been studied in the context of pinpointing phase transitions, see e.g.,~\cite{Sokal2005,Welsh1993,Heilmann1972, lieb1981general,yang1952statistical,bena2005statistical}.

For the problem of exactly computing the partition function of the Potts model, Jaeger, Vertigan and Welsh~\cite{Jaeger1990}, as a corollary of a more general classification theorem for the Tutte polynomial, established $\# \mathsf{P}$-hardness unless $(q, y)$ is one of seven exceptional points, see Section~\ref{sec:pre:hardness} for more details; Vertigan~\cite{Vertigan2005} further showed that the same classification applies on planar graphs with the exception of the Ising model ($q=2$), where the problem is in $\mathsf{FP}$. 

For the approximation problem, the only known result that applies for general values $y$ in the complex plane is by Goldberg and Guo~\cite{Goldberg2017}, which addresses the case $q=2$; the case $q\geq 3$ is largely open apart from the case when $y$ is real which has been studied extensively even for planar graphs~\cite{Jerrum1993, Goldberg2008, Goldberg2012Potts, Goldberg2014,  Kuperberg2015, Goldberg2017}. We will review all these results more precisely in the next section, where we also state our main theorems.

\subsection{Our main results} \label{sec:intro:main-results}

In this work, we completely classify the complexity of approximating $Z_{\mathrm{Potts}}(G; q, y)$ for $q\geq 2$ and non-real $y$, even on planar graphs $G$; in fact, our results  also classify the complexity of the Tutte polynomial on planar graphs for reals $q\geq 2$ and non-real $\gamma$. Along the way, we also answer a question for the Jones polynomial raised by Bordewich,~Freedman,~Lov\'{a}sz, and Welsh~\cite{Bordewich2005}.

To formally state our results, we define the computational problems we consider.   Let $K$ and $\rho$ be real algebraic numbers with $K > 1$ and $\rho > 0$. We investigate the complexity of the following problems for any integer $q$ with $q \ge 2$ and any algebraic number $y$.\footnote{For $z\in \mathbb{C}\backslash\{0\}$, we denote by $|z|$ the norm of $z$, by $\mathrm{Arg}(z)\in [0,2\pi)$ the principal argument of $z$ and  by $\arg(z)$ the set $\{\mathrm{Arg}(z) + 2 \pi j : j \in \mathbb{Z}\}$ of all the arguments of $z$, so that for any $a\in \arg(z)$ we have $z = |z|\exp(i a)$.}

\prob{$\FNP(q, y)$}{A (multi)graph $G$.}{ If $Z_{\mathrm{Potts}}(G; q, y)= 0$, the algorithm may output any rational number. Otherwise, it must output a rational number $\hat{N}$ such that
$\hat{N}/ K \le \left|Z_{\mathrm{Potts}}(G; q, y)\right| \le K \hat{N} $.}

A well-known fact is that the difficulty of the problem $\FNP(q, \gamma)$  does not depend on the constant $K > 1$.
This can be proved using standard powering techniques (see~\cite[Lemma 11]{Goldberg2017} for a proof when $q = 2$). In fact,
the complexity of the problem is the same even for $K=2^{n^{1-\epsilon}}$ for any constant $\epsilon>0$ where $n$ is the size of the input.

\prob{$\pDAT \rho \sDAP(q, y)$}{A (multi)graph $G$.}{If $Z_{\mathrm{Potts}}(G; q, y)= 0$, the algorithm may output any rational number. Otherwise, it must output a  rational $\hat{A}$ such that, for some $a \in \arg(Z_{\mathrm{Potts}}(G; q, \gamma))$, $| \hat{A} - a| \le \rho$ .}

In the special case that  $q$ equals $2$, we omit the argument $q$ and write \textsc{Ising} instead of \textsc{Potts} in the name of the problem. Similarly, when the input of the problems is restricted to planar graphs, we write \textsc{PlanarPotts} instead of \textsc{Potts}. We also consider the problems $\FNT(q, \gamma)$ and  $\pDAT \pi/3 \sDAT(q, \gamma)$ for the Tutte polynomial when $q,\gamma$ are algebraic numbers. Note also that, when $q,\gamma$ are real, the latter problem is equivalent to finding the sign of the Tutte polynomial, and we sometimes write $\ST(q, \gamma)$ (and $\SPT(q, \gamma)$ for the planar version of the problem).

Our first and main result is a full resolution of the complexity of approximating $Z_{\mathrm{Potts}}(G; q, y)$ for $q\geq 3$ and non-real $y$. More precisely, we show the following.
\newcommand{\statethmmainone}{Let $q\geq 3$ be an integer, $y\in \mathbb{C}\backslash \mathbb{R}$ be an algebraic number, and $K>1$. Then, the problems $\FNPP(q, y)$ and  $\pDAT \pi/3 \sDAPP(q, y)$ are $\numP$-hard, unless $q=3$ and $y \in \{e^{2 \pi i / 3}, e^{4 \pi i / 3}\}$ when both problems can be solved exactly in polynomial time.}
\begin{theorem} \label{thm:mainone}
\statethmmainone
\end{theorem}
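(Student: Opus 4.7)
The plan is to split the analysis into a tractable case (the two exceptional points) and a hardness case that covers all other parameter pairs. For $q=3$ and $y\in\{e^{2\pi i/3},e^{4\pi i/3}\}$, the value $y$ is a primitive cube root of unity, so $y^{m(\sigma)}$ depends only on $m(\sigma)\bmod 3$, and one can rewrite
\[
  Z_{\mathrm{Potts}}(G;3,y) = N_0 + y\,N_1 + y^2\,N_2,
\]
where $N_r$ counts $3$-colourings $\sigma$ with $m(\sigma)\equiv r\pmod 3$. A direct character-sum manipulation using the $\mathbb{Z}_3$-action on colours reduces each $N_r$ to a polynomial-time computable quantity; this recovers the well-known Jaeger--Vertigan--Welsh exceptional point $(x,y)=(j^2,j)$ of the Tutte plane (and its conjugate), which is in $\mathsf{FP}$ even on general graphs and hence on planar graphs. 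One verifies that no other non-real $y$ admits such an algebraic collapse when $q\geq 3$.

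For the remaining parameter pairs, the plan is to reduce a known $\numP$-hard real evaluation of the planar Tutte polynomial to the complex problem at $(q,y)$. Series and parallel compositions of edges with interaction $y$ preserve planarity and act on the interaction by a M\"obius-type map in the random-cluster variable $v=y-1$ (series) and by $y\mapsto y^k$ (parallel); iterating these, starting from $\{y,\bar y\}$ (the conjugate being accessible through an auxiliary gadget), generates a semigroup $S_y$ of exactly implementable interactions. Lemma~\ref{lem:implement-approx:pc} promotes the topological closure $\overline{S_y}$ to approximately implementable interactions via explicit planar gadgets of polynomial size. The central claim is that, whenever $y\in\mathbb{C}\setminus\mathbb{R}$ is not one of the two exceptional points above, the closure $\overline{S_y}$ meets the real line at a point $\gamma^{*}$ lying inside a known $\numP$-hardness window for $\SPT(q,\gamma^{*})$ and $\FNPT(q,\gamma^{*})$. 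Lemma~\ref{cor:implement-approx:real-with-complex} then turns our approximate complex-valued implementations clustering around $\gamma^{*}$ into an approximate implementation of the real target. Finally, the transitivity of approximate shifts (Lemma~\ref{lem:approx-shifts:transitivity}) together with the norm/argument reduction outline (Lemma~\ref{lem:reduction:norm:outline}) lifts the hardness of the real problem to $\numP$-hardness of both $\FNPP(q,y)$ and $\pDAT\pi/3\sDAPP(q,y)$ simultaneously.

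The main obstacle is the dynamical step: showing that $\overline{S_y}$ meets the real line inside a hard window. The parallel map leaves the unit circle invariant, so when $|y|=1$ only the series map can drive the orbit off the circle, and for certain algebraic $y$ (notably roots of unity) the combined semigroup may have finite orbit or fixed-point structure that pins the orbit. The delicate case analysis is to show that for $q\geq 4$ no non-real $y$ produces such an obstruction, while for $q=3$ the only unavoidable obstructions are precisely $y=e^{\pm 2\pi i/3}$. I expect this to require a mild diophantine estimate (using algebraicity of $y$ together with height bounds, in the spirit of the diophantine-approximation references already cited in the paper) so that the approximation error in Lemma~\ref{lem:implement-approx:pc} loses only polynomially many bits; this is what makes the chain with the approximate-shifts framework size-efficient and hence a genuine polynomial-time reduction.
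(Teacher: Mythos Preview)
Your overall architecture is the same as the paper's --- build a polynomial-time approximate series-parallel shift from $(q,y-1)$ to a real parameter at which planar Tutte is $\numP$-hard, then invoke Lemma~\ref{lem:reduction:norm:outline} (and its argument analogue, Lemma~\ref{lem:reduction:distance}) --- and you correctly identify the exceptional points as the special Tutte points handled by Jaeger--Vertigan--Welsh. But several concrete steps in your plan would fail as written.

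First, there is no gadget implementing $\bar y$ from $y$: every series-parallel gadget implements a fixed rational function of the input weight with integer coefficients in $q$ and $\gamma$, and the paper never uses $\bar y$. Second, you have the roles of Lemmas~\ref{cor:implement-approx:real-with-complex} and~\ref{lem:implement-approx:pc} reversed. Lemma~\ref{cor:implement-approx:real-with-complex} is precisely the step that \emph{constructs} the approach to the real line, producing an approximate shift from the complex $(x,y)$ to some $(x',y')$ with $y'\in(0,1)\cap\mathsf P_{\mathbb R}$; Lemma~\ref{lem:implement-approx:pc} then starts from that real $y'$ (which need not be algebraic) and shifts to any prescribed real algebraic target. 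Third --- and this is the real gap --- the ``dynamical step'' you defer to a diophantine estimate is the heart of the argument and is not handled that way. The paper's construction (inside Lemma~\ref{lem:implement-approx:real-with-complex}, via Lemma~\ref{lem:approx-argument} and Corollary~\ref{cor:sigma-n}) is explicit: after a preliminary shift forcing $|x|>1$, one looks at the $m$-stretch values $y_m=1+q/(x^m-1)\to 1$, selects a subsequence with $\mathrm{Arg}(x^{\sigma(m)})$ in a fixed sector (case-splitting on whether $x/|x|$ is a root of unity), and greedily chooses exponents so that $\prod_j y_{\sigma(j)}^{e_j}$ has cumulative argument tending to $0\pmod{2\pi}$ while its modulus tends to some $w\in(0,1)$. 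The limit $w$ is only guaranteed to be polynomial-time computable, which is exactly why Lemma~\ref{lem:implement-approx:pc} and the transitivity Lemma~\ref{lem:approx-shifts:transitivity} are stated over $\mathsf P_{\mathbb R}$ rather than over algebraic numbers. The height/algebraic-number-theory machinery in the paper is used elsewhere --- for the lower bound on $|Z_{\mathrm{Tutte}}|$ in Corollary~\ref{cor:lower-bound:tutte}, needed inside the reduction lemmas --- not in the shift construction.
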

We remark that, for real $y>0$,  the complexity of approximating  $Z_{\mathrm{Potts}}(G; q, y)$ on planar graphs is not fully known, though on general graphs the problem is \#BIS-hard~\cite{Goldberg2012Potts} and NP-hard for $y\in (0,1)$~\cite{Goldberg2008}, for all $q\geq 3$. For real $y<0$, the problem is NP-hard on general graphs when $y \in (-\infty, 1-q]$ for all $q\geq 3$ (\cite{Goldberg2014})\footnote{\label{foot:NPhard}Note, for $y \in (-\infty, 1-q)\cup [0, \infty)$, \#P-hardness is impossible (assuming $\NP\neq \numP$): finding the sign of $Z_{\mathrm{Potts}}(G; q, y)$ is easy, even on non-planar graphs (\cite{Goldberg2014}), and $Z_{\mathrm{Potts}}(G; q, y)$ can be approximated using an NP-oracle. For  $y=1-q$, the same applies when $q\geq 6$; the cases $q\in\{3,4,5\}$ are not fully resolved though~\cite{Goldberg2014} shows that $q=3,4$ are NP-hard, whereas $q=5$ should be easy unless Tutte's 5-flow conjecture is false~\cite[Section 3.5]{Welsh1993}.} and $\numP$-hard on planar graphs when $y\in(1-q,0)$ and $q\geq 5$ (\cite{Kuperberg2015}, see also~\cite{Goldberg2012}). Our techniques for proving Theorem~\ref{thm:mainone} allow us to  resolve the remaining cases $q=3,4$ for $y\in (1-q,0)$ on planar graphs, as a special case of the following theorem that applies for general $q\geq 3$.  This is our second main result.
\newcommand{\statethmmaintwo}{Let $q\geq 3$ be an integer, $y\in (-q+1,0)$ be a real algebraic number, and $K>1$. Then $\FNPP(q, y)$ and $\pDAT \pi/3 \sDAPP(q, y)$ are $\numP$-hard, unless $(q,y) =(4, -1)$ when both problems can be solved exactly in polynomial time.}
\begin{theorem} \label{thm:maintwo}
\statethmmaintwo
\end{theorem}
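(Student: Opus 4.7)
The plan is to invoke the ``approximation shifts'' framework developed to prove Theorem~\ref{thm:mainone} (Theorem~\ref{thm:approx-shifts}, Lemma~\ref{lem:approx-shifts:transitivity}, Lemma~\ref{cor:implement-approx:real-with-complex}), now starting from a real edge interaction. Since Theorem~\ref{thm:mainone} already establishes \#P-hardness of $\FNPP(q, y')$ and $\pDAT \pi/3 \sDAPP(q, y')$ at every non-real algebraic $y'$, with the single exception of $q = 3$ and $y' \in \{e^{\pm 2\pi i/3}\}$, it suffices, for each real $y \in (-q+1, 0)$ with $(q, y) \neq (4, -1)$, to produce a polynomial-size approximation-preserving reduction from $(q, y')$ to $(q, y)$ for some accessible non-real algebraic $y'$. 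The real-to-complex connecting lemma Lemma~\ref{cor:implement-approx:real-with-complex} is tailored precisely for bridging a real starting point to a complex shifted target, and transitivity (Lemma~\ref{lem:approx-shifts:transitivity}) chains such reductions together.

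The elementary planarity-preserving moves are \emph{parallel composition}, $y \mapsto y^k$, and \emph{series composition} through a degree-two vertex, $y \mapsto (y^2 + q - 1)/(2y + q - 2)$, each with an easily computable multiplicative factor on the partition function. Both preserve the real line, so the escape to the complex plane has to be engineered through the shifts framework itself, combined with gadgets whose effective partition function carries sign/norm information that can be leveraged via Lemma~\ref{cor:implement-approx:real-with-complex}. For $q \ge 5$ the planar \#P-hardness at every $y \in (1-q, 0)$ already follows from~\cite{Kuperberg2015}; the genuinely new content is $q = 3$ (with $y \in (-2, 0)$) and $q = 4$ (with $y \in (-3, 0) \setminus \{-1\}$), where an explicit constant-size planar gadget must be designed and its effective interaction algebraically analysed.

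The excluded point $(q, y) = (4, -1)$ corresponds to the Tutte evaluation $Z_{\mathrm{Tutte}}(G; 4, -2)$, which is one of the classical Jaeger--Vertigan--Welsh tractable points and is exactly computable in polynomial time on arbitrary graphs; this gives the ``solvable in polynomial time'' clause of the theorem for both $\FNPP$ and $\pDAT \pi/3 \sDAPP$ there. Showing that $(4, -1)$ is the \emph{only} obstruction amounts to verifying that, for every other $y$ in the interval and every $q \in \{3, 4\}$, some concrete bounded-size planar gadget realises, through the shifts framework, a non-real algebraic target $y'$ whose algebraic degree and bit-length remain polynomial in the instance size, so that the combined reduction runs in polynomial time.

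The main obstacle I expect is two-fold. First, one has to prove that the orbit of a given $y \in (1-q, 0)$ under the combined action of the planar gadget moves and the real-to-complex bridge lemma is not trapped on the real line except at the excluded point $(4, -1)$: this is essentially an algebraic-dynamics question about the rational maps $y \mapsto y^k$ and $y \mapsto (y^2+q-1)/(2y+q-2)$, together with the shifts the framework allows on top of them, and the finite bad locus has to be pinned down exactly to verify that the only missed point is $(4,-1)$. Second, for $q = 3$ one must make sure the produced $y'$ does not coincide with the forbidden cyclotomic values $e^{\pm 2\pi i/3}$ where Theorem~\ref{thm:mainone} is silent; this can be resolved by composing with one further parallel thickening $y' \mapsto (y')^k$ for a small integer $k$ that moves the argument off the cyclotomic locus. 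Once these steps are in place, Theorem~\ref{thm:mainone} applied at $y'$ and the transitivity of shifts complete the \#P-hardness reduction.
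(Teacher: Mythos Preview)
Your approach has a fundamental gap. You propose to reduce from a non-real hard point $(q,y')$ (covered by Theorem~\ref{thm:mainone}) to the real point $(q,y)$; by Lemma~\ref{lem:reduction:norm:outline} this requires a polynomial-time approximate shift \emph{from} $(q,y)$ \emph{to} $(q,y')$, i.e.\ you must $\gamma$-implement (approximations of) a non-real weight starting from the real weight $\gamma=y-1$. But this is impossible: for real $q$ and real edge weights, both $Z_{st}$ and $Z_{s|t}$ are real, so every implemented weight $qZ_{st}/Z_{s|t}$ is real. You acknowledge that the basic moves preserve the real line, and then assert that Lemma~\ref{cor:implement-approx:real-with-complex} provides an ``escape to the complex plane'', but you have misread that lemma: it takes $y\in(-1,0)\cup(\mathbb{C}\setminus\mathbb{R})$ and produces a shift \emph{to} a real $y'\in(0,1)$, not the other way around. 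No part of the shifts framework manufactures non-real weights from real ones. Separately, even under your intended reading, the hypothesis $y\in(-1,0)$ would not cover the full interval $(-q+1,0)$ for $q\ge 3$.

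The paper's proof is completely different and stays entirely on the real line; it does not go through Theorem~\ref{thm:mainone}. For $y\in(-q+1,0)$ one checks that $x=1+q/(y-1)\in(1-q,0)$, and the excluded point $(q,y)=(4,-1)$ is exactly $(x,y)=(-1,-1)$. If $\min\{x,y\}\le -1$, Lemma~\ref{lem:regionB} applies; otherwise $x,y\in(-1,0)$ with $q\ge 3>32/27$ and Lemma~\ref{lem:regionG} applies. Those lemmas use series-parallel implementations from~\cite{Goldberg2014} to realise real weights $\gamma_1,\gamma_2$ as required by Corollary~\ref{cor:first-hardness-result:q>1}, which in turn encapsulates the paper's new reduction from \emph{exact} planar Tutte evaluation (hard by Theorem~\ref{thm:hardness:exact-planar}) to the norm and sign approximation problems via the interval-shrinking and Kannan--Lenstra--Lov\'asz reconstruction of Section~\ref{sec:hardness:real}. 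No complex parameters enter.
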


Our third main contribution is a full classification of the range of the parameters where approximating the partition function of the Ising model is \#P-hard. Note, on planar graphs $G$, $Z_{\mathrm{Ising}}(G; y)$ can be computed in polynomial time for all $y$. For general (non-planar) graphs and non-real $y$, Goldberg and Guo show $\numP$-hardness on the non-real unit circle ($|y|=1$) with $y\neq \pm i$, and establish $\NP$-hardness elsewhere. Our next result shows that the NP-hardness results of~\cite{Goldberg2017} for non-real $y$ can be elevated to  $\numP$-hardness.
\newcommand{\statethmmainthree}{  Let $y\in\mathbb{C}\backslash \mathbb{R}$ be an algebraic number, and $K>1$. Then, $\FNI(y)$ and  $\pDAT \pi/3 \sDAI(y)$ are $\numP$-hard, unless $y=\pm i$ when both problems can be solved exactly in polynomial time.}
\begin{theorem} \label{thm:mainthree}
\statethmmainthree
\end{theorem}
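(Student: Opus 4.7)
The plan is to reduce from the $\#P$-hard cases already established for the Ising model by Goldberg and Guo~\cite{Goldberg2017}, who proved $\#P$-hardness of $\FNI(y^*)$ and $\pDAT\pi/3\sDAI(y^*)$ for every non-real $y^*$ on the unit circle with $y^* \ne \pm i$. For non-real $y$ off the unit circle, \cite{Goldberg2017} only obtained NP-hardness, and the goal is to strengthen that to $\#P$-hardness. The excluded values $y = \pm i$ are genuinely easy: $Z_{\mathrm{Ising}}(G; \pm i)$ admits a simple closed-form evaluation, so both approximation problems are polynomial-time solvable there and must be excluded from any hardness result.

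For the reduction, I would adapt the approximate-shift framework developed for Theorems~\ref{thm:mainone} and~\ref{thm:maintwo} to the setting $q=2$. Starting from an edge of interaction $y$, series-parallel gadgets realise two operations on effective interactions: $k$-thickening (parallel edges) gives $y \mapsto y^k$, and series composition (at $q=2$) gives $(y_1,y_2) \mapsto (y_1 y_2 + 1)/(y_1 + y_2)$. By Lemma~\ref{lem:approx-shifts:transitivity}, it suffices to exhibit, starting from $y$, a sequence of such gadgets whose effective interactions converge to some $y^*$ on the non-real unit circle with $y^* \ne \pm i$; combining with Lemmas~\ref{cor:implement-approx:real-with-complex} and~\ref{lem:implement-approx:pc} then transfers the hardness of $\FNI(y^*)$ and $\pDAT\pi/3\sDAI(y^*)$ to $y$.

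The main analytic step is to prove that, for every non-real algebraic $y$ with $|y|\ne 1$, the closure of the semigroup generated by thickening and series composition on $y$ contains a non-real unit-circle point other than $\pm i$. The key observations are that thickening drives $|y|^k$ monotonically to $0$ or $\infty$ while rotating the argument by $k\arg(y)$, and that series composition is a M\"obius map in each argument whose norm depends continuously on the inputs and can be tuned through all positive reals. Combining the two, one can steer the norm of an iterated gadget to $1$ while controlling its argument. When $\arg(y)/\pi$ is irrational, the arguments $k\arg(y) \bmod 2\pi$ are equidistributed, which together with the continuous norm-tuning from series composition yields density of the orbit on any chosen arc of the unit circle disjoint from $\{\pm i\}$.

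The main obstacle, as in Theorems~\ref{thm:mainone} and~\ref{thm:maintwo}, is the rational-argument case $\arg(y)\in \pi\mathbb{Q}$: thickening then produces only finitely many distinct arguments, and a crude equidistribution argument breaks down. In this case one has to exploit the flexibility of series composition algebraically, combining interactions of different arguments to produce new ones outside the original argument lattice, while carefully avoiding the fixed points $\pm i$ throughout the approximation process. I expect this to be the principal technical difficulty, and I anticipate reusing the Galois-theoretic and Diophantine machinery (\cite{Kannan1988, Waldschmidt2000}, as employed in the $q\geq 3$ proofs) to verify that no such orbit gets algebraically trapped away from the desired hard targets.
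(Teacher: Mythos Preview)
Your plan takes an unnecessary detour and has a real gap. The paper does \emph{not} reduce from a non-real unit-circle point; it reduces from the \emph{real} point $y_2=-1/2\in(-1,0)$, where $\FNI(y_2)$ and $\pDAT\pi/3\sDAI(y_2)$ are already known to be $\#\mathrm{P}$-hard by~\cite{Goldberg2017,Goldberg2014}. With that target, Theorem~\ref{thm:approx-shifts} (applied at $q=2$, which it covers) directly furnishes a polynomial-time approximate series-parallel shift from $(x,y)$ to $(x_2,-1/2)$ for every non-real algebraic $y\ne\pm i$, and Lemmas~\ref{lem:reduction:norm:outline} and~\ref{lem:reduction:distance} then transfer the hardness. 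No new analytic work is needed beyond what is already proved for Theorem~\ref{thm:mainfour}.

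Your route, by contrast, would require building polynomial-time approximate shifts to \emph{non-real} unit-circle targets, and nothing in the paper provides this: Theorem~\ref{thm:approx-shifts} only yields real targets, and Lemmas~\ref{cor:implement-approx:real-with-complex} and~\ref{lem:implement-approx:pc} (which you cite as the transfer mechanism) are in fact both about constructing shifts to real points in $(0,1)$, not about hitting the unit circle or transferring hardness. Your ``main analytic step'' --- showing the series-parallel orbit closure of $y$ contains a suitable unit-circle point, with polynomially computable convergence --- is a substantial claim that your sketch does not establish; the rational-argument case you yourself flag would require genuinely new arguments. The simpler fix is to notice that Goldberg--Guo already give $\#\mathrm{P}$-hardness on the real segment $(-1,0)$, which makes the whole unit-circle manoeuvre unnecessary.
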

 For real $y$, we  remark that  the problems of approximating $Z_{\mathrm{Ising}}(G; y)$ and determining its sign (when non-trivial) are well-understood:\footnote{Analogously to Footnote~\ref{foot:NPhard}, for $y\in (-\infty,-1)\cup (0,1)$ \#P-hardness is unlikely since the problem can be approximated with an NP-oracle.}  the problem is FPRASable for $y>1$ and NP-hard for $y\in (0,1)$ (\cite{Jerrum1993}),  $\numP$-hard for $y\in (-1,0)$~\cite{Goldberg2017, Goldberg2014}, and equivalent to approximating $\# \textsc{PerfectMatchings}$ for $y<-1$~\cite{Goldberg2008}. For $y=0,\pm 1$, $Z_{\mathrm{Ising}}(G; y)$ can be computed exactly in polynomial time.

\subsection{Consequences of our techniques for the Tutte/Jones polynomials}  \label{sec:intro:tutte}

While our main results are on the Ising/Potts models, in order to prove them it is convenient to work in the ``Tutte world''; this simplifies the proofs and has also the benefit of allowing us to generalise our results to non-integer $q$. The following result generalises Theorem~\ref{thm:mainone} to non-integer $q>2$. 

\newcommand{\statethmmainfour}{Let $q>2$ be a real, $\gamma\in \mathbb{C}\backslash \mathbb{R}$ be an algebraic number, and $K>1$. Then,  $\FNPT(q, \gamma)$ and $\pDAT \pi/3 \sDAPT(q, \gamma)$ are $\numP$-hard, unless $q=3$ and $\gamma+1 \in \{e^{2 \pi i / 3}, e^{4 \pi i / 3}\}$ when both problems can be solved exactly in polynomial time.}
\begin{theorem} \label{thm:mainfour}
\statethmmainfour
\end{theorem}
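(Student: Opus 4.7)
The plan is to mimic the strategy behind Theorem~\ref{thm:mainone}, but to stay entirely within the Tutte framework. For non-integer $q$ one cannot pass through the Potts model, yet the gadget and shift machinery used for Theorem~\ref{thm:mainone} operates uniformly over all real $q>2$, so it should transfer with only mild adaptations.

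First I would dispense with the tractable case: when $q=3$ and $\gamma+1\in\{e^{2\pi i/3}, e^{4\pi i/3}\}$, the pair $(q,y)=(3,\gamma+1)$ lies at one of the exceptional evaluation points in the Jaeger--Vertigan--Welsh classification (namely, $y$ is a primitive cube root of unity at $q=3$), so $Z_{\mathrm{Tutte}}(G;q,\gamma)$ can be computed exactly in polynomial time on planar~$G$; any exact algorithm trivially solves both $\FNPT(q,\gamma)$ and $\pDAT\pi/3\sDAPT(q,\gamma)$.

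For the hardness direction, I would invoke the approximate-shifts machinery. Using Theorem~\ref{thm:approx-shifts} together with the transitivity lemma (Lemma~\ref{lem:approx-shifts:transitivity}), combined with planarity-preserving series/parallel compositions, one should show that from a non-real starting weight $\gamma$ a dense set of effective edge weights can be approximately implemented. I would then point this shift machinery toward a \#P-hard real target $\gamma_0$ supplied by Vertigan's planar Tutte classification: for every non-exceptional real $q>2$ there is a real algebraic $\gamma_0$ at which exact evaluation of $Z_{\mathrm{Tutte}}(\cdot;q,\gamma_0)$ is \#P-hard on planar graphs. Lemma~\ref{cor:implement-approx:real-with-complex} would be used to reach a real target from the complex $\gamma$, and Lemma~\ref{lem:implement-approx:pc} to promote the approximate shift into a genuine norm/argument approximation oracle at $\gamma_0$. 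Following the template outlined in Lemma~\ref{lem:reduction:norm:outline}, such an oracle, iterated and refined, is enough to recover $Z_{\mathrm{Tutte}}(G;q,\gamma_0)$ to sufficient precision to settle the \#P-hard target instance.

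The main obstacle is establishing that the orbit of shifts from a non-real $\gamma$ is sufficiently rich in $\mathbb{C}$. Series and parallel compositions induce two M\"obius-like maps on the edge-weight parameter---parallel composition of weights $\gamma_1,\gamma_2$ gives $(1+\gamma_1)(1+\gamma_2)-1$, while series composition gives $\gamma_1\gamma_2/(q+\gamma_1+\gamma_2)$---and one must show that iterating these from a non-real $\gamma$ yields a set dense in a region of $\mathbb{C}$ that reaches, or accumulates near, a real \#P-hard target $\gamma_0$. This is essentially a dynamical and Diophantine statement; the exceptional $q=3$ case is precisely where the orbit degenerates, because $y=\gamma+1$ being a primitive cube root of unity makes the parallel-composition map act as a finite cyclic group on the $y$-plane. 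Handling the boundary between the exceptional and generic non-real $\gamma$, and ensuring the density argument is uniform in $q$ while staying planar, will be the technically heaviest part, and I would expect to replicate here the Jacobian/transcendence-theoretic argument used to drive Theorem~\ref{thm:mainone}.
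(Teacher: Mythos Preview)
Your high-level plan---approximate-shift from the complex weight to a real target, then transfer \#P-hardness back via Lemma~\ref{lem:reduction:norm:outline}---is exactly what the paper does. But you have the bookkeeping inverted about which parts are already done and which still need work, and there is one genuine gap.

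First, the part you flag as the ``main obstacle'' is already finished. Theorem~\ref{thm:approx-shifts} is a black box: for any real algebraic target $(x',y')\in\mathcal{H}_q$ it hands you a polynomial-time series-parallel approximate shift from your non-real $(x,y)$. You do not need to re-derive it from Lemmas~\ref{cor:implement-approx:real-with-complex} and~\ref{lem:implement-approx:pc}, nor to run any density/Diophantine/Jacobian argument---those are internal to the proof of Theorem~\ref{thm:approx-shifts} and are not revisited here. (There is also no ``Jacobian/transcendence-theoretic argument'' behind Theorem~\ref{thm:mainone}; that theorem is deduced from the present one in one line.)

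Second, the step you gloss over is the substantive one. Lemma~\ref{lem:reduction:norm:outline} reduces \emph{approximation at $\gamma_2$} to \emph{approximation at $\gamma_1$}; it does not reduce exact evaluation to approximation. So you need a real $\gamma_2$ at which $\FNPT(q,\gamma_2)$ (and the sign problem) is already known to be \#P-hard, not merely a $\gamma_0$ where exact evaluation is hard \`a la Vertigan. Your ``iterated and refined, enough to recover $Z_{\mathrm{Tutte}}(G;q,\gamma_0)$'' is not supported by Lemma~\ref{lem:reduction:norm:outline}: a factor-$K$ norm oracle at a single real weight does not by itself yield exact values. The paper supplies this missing ingredient separately (Lemmas~\ref{lem:regionB} and~\ref{lem:regionG}, built on the interval-shrinking plus Kannan--Lenstra--Lov\'asz reconstruction of Section~\ref{sec:hardness:real}, which crucially needs \emph{two} real weights), and picks the concrete target $y_2=-1/2$, i.e.\ $\gamma_2=-3/2$. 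Once you quote that $\FNPT(q,-3/2)$ and $\SPT(q,-3/2)$ are \#P-hard for $q>2$, the proof is two invocations: Theorem~\ref{thm:approx-shifts} gives the shift from $(q,\gamma)$ to $(q,-3/2)$, and Lemmas~\ref{lem:reduction:norm:outline} and~\ref{lem:reduction:distance} transfer hardness.
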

Our techniques can further be used to elevate previous $\NP$-hardness results of~\cite{Goldberg2014, Goldberg2008} in the Tutte plane to $\numP$-hardness for planar graphs, and answer a question for the Jones polynomial raised by Bordewich et al. in~\cite{Bordewich2005}. A more detailed discussion can be found in Section~\ref{sec:further-consequences}.

\section{Proof outline}

In this section we provide some insight on the proofs of our main results. As mentioned earlier, the proofs  are performed in the context of the Tutte polynomial. 

In previous $\numP$-hardness results~\cite{Goldberg2014,Goldberg2017} for the Tutte polynomial, the main technique was to reduce the exact counting $\# \textsc{MinimumCardinality } (s,t)\textsc{-Cut}$ problem to the problem of approximating  $Z_{\text{Tutte}}(G; q, \gamma)$ using an elaborate binary search based on suitable oracle calls. Key to these oracle calls are gadget constructions which are mainly based on series-parallel graphs which ``implement'' points $(q',\gamma')$; this means that, by  pasting the gadgets appropriately onto a graph $G$, the computation of $Z_{\text{Tutte}}(G; q', \gamma')$ reduces  to the computation of $Z_{\text{Tutte}}(G; q, \gamma)$. Much of the work in~\cite{Goldberg2014,Goldberg2017}, and for us as well, is understanding what values $(q',\gamma')$ can be implemented starting from $(q,\gamma)$.

For planar graphs, while the binary-search technique from~\cite{Goldberg2017} is still useful,  we  have to use  a different overall reduction scheme since  the problem $\# \textsc{MinimumCardinality } (s,t)\textsc{-Cut}$ is not $\# \mathsf{P}$-hard when the input is restricted to planar graphs~\cite{Provan1983}. To obtain our $\numP$-hardness results our plan instead is to reduce the problem of exactly evaluating the Tutte polynomial for some appropriately selected parameters $q', \gamma'$ to the problem of computing its sign and the problem of approximately evaluating it at parameters $q,\gamma$; note, this gives us the freedom to use any parameters $q',\gamma'$ we wish as long as the corresponding exact problem is $\numP$-hard. Then, much of the work consists of understanding what values $(q',\gamma')$ can be implemented starting from $(q,\gamma)$, so we focus on that component first. 

We first review previous constructions in the literature, known as shifts, and then introduce our refinement of these constructions, which we call polynomial-time approximate shifts, and state our main result about them.

\subsection{Shifts in the Tutte plane}\label{sec:shwa}
We say that that there is a \emph{shift} from $(q, \gamma_1)$ to $(q, \gamma_2)$ if there is a graph  $H=(V,E)$ and  vertices $s,t$ such that
  \begin{equation*}
    \gamma_2= q \frac{ Z_{st}(H; q, \gamma_1)}{ Z_{s|t}(H; q, \gamma_1)},
  \end{equation*}
	where $Z_{st}(H; q, \gamma_1)$ is the contribution to $Z_{\text{Tutte}}(H; q, \gamma_1)$ from configurations $A\subseteq E$ in which $s,t$ belong to the same connected component in $(V,A)$, while  $Z_{s|t}(H; q, \gamma_1)$ is the contribution from all other configurations $A$.  In the following, we will usually encounter shifts in the $(x,y)$-parametrisation of the Tutte plane, rather than 
	the $(q,\gamma)$-parameterisation which was used for convenience here.
To translate between these, set   $y = \gamma + 1$ and $(x-1)(y-1) = q$, see~\cite[Chapter 3]{Welsh1993}. We denote by $\mathcal{H}_q$ the hyperbola $\{(x,y)\in \mathbb{C}^2 : (x-1)(y-1) = q \}$, and we will  use both   parametrisations as convenient. Section~\ref{sec:pre:weights} has a more detailed description of shifts that apply to the multivariate Tutte polynomial.
	
	As described earlier, shifts can be used to ``move around'' the complex plane. If one knows hardness for some $(x_2, y_2) \in \mathcal{H}_q$, and there is a shift   from $(x_1, y_1)\in \mathcal{H}_q$ to $(x_2,y_2)$, then one also obtains hardness for $(x_1,y_1)$. This approach has been very effective when attention is restricted to real parameters~\cite{Goldberg2008, Goldberg2012, Goldberg2014}, however, when it comes to non-real parameters, the success of this approach has been limited. To illustrate this, in~\cite{Goldberg2017}, the authors established $\numP$-hardness of the Ising model  when $y_2 \in (-1,0)$, and used this to obtain $\numP$-hardness for $y_1$ on the unit circle by constructing appropriate shifts. However, their shift construction does not extend to general complex numbers, and this kind of result seems unreachable with those techniques.

\subsection{Polynomial-time approximate shifts}

To obtain our main theorems, we instead need to consider what we call polynomial-time approximate shifts;  such a shift from $(x_1,y_1)\in \mathcal{H}_q$ to $(x_2, y_2)\in \mathcal{H}_q$ is an  algorithm that, for any positive integer $n$, computes  in time polynomial in $n$ a graph $G_n$ that $(x_1, y_1)$-implements $(\hat{x}_2, \hat{y}_2)$ with $\left| y_2 - \hat{y}_2 \right| \le 2^{-n}$. In fact, our constructions need to maintain planarity, and we will typically ensure  this by 
either making  every~$G_n$ a  series-parallel graph, in which case we call the algorithm a \textit{polynomial-time approximate series-parallel shift}, or 
by making every~$G_n$  a theta graph, in which case we call the algorithm a \textit{polynomial-time approximate theta shift}.\footnote{\label{foot:series}A theta graph consists of two terminals $s$ and $t$ joined by internally disjoint paths~\cite{Brown2001}. A series-parallel graph  with  terminals $s$ and $t$ can be obtained from the single-edge graph  with edge $(s,t)$ by repeatedly subdividing edges or adding parallel edges~\cite[Chapter 11]{Brandstadt1999}.}

These generalised shifts allow us to overcome the challenges mentioned above and are key ingredients in our reduction. Our main technical theorem about them is the following.
\newcommand{\statethmshifts}{Let $q\geq 2$ be a real algebraic number. Let $x$ and $y$ be algebraic numbers such that $(x, y) \in \mathcal{H}_q$, $y \in (-1,0)\cup (\mathbb{C}\backslash \mathbb{R})$ and $(x,y) \not \in \{(i, -i), (-i, i), (\omega_3, \omega_3^2), (\omega_3^2, \omega_3)\}$, where $\omega_3 = \exp(2 \pi i / 3)$. Then, for any pair of real algebraic numbers $(x', y') \in \mathcal{H}_q$ there is a polynomial-time approximate series-parallel shift from $(x,y)$ to $(x',y')$.}
\begin{theorem} \label{thm:approx-shifts} 
  \statethmshifts
\end{theorem}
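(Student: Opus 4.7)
The plan is to exploit the two fundamental series-parallel composition laws on the Tutte plane: parallel composition of $k$ edges of weight $y$ yields effective weight $y^k$, while series composition of $k$ edges yields the dual map $x \mapsto x^k$ on the hyperbola $\mathcal{H}_q$. Alternating and iterating these two maps produces a discrete-time dynamical system on $\mathcal{H}_q$, and the task is to show that, under the hypotheses on $(x,y)$, its orbit can be driven to within $2^{-n}$ of any real algebraic target $(x',y')\in\mathcal{H}_q$ using a series-parallel graph of size polynomial in $n$.

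First I would isolate two building blocks that are presumably already established in the paper: Lemma~\ref{lem:implement-approx:pc}, which realises the single-step parallel and series maps $y\mapsto y^k$ and $x\mapsto x^k$ as exact series-parallel shifts, and Lemma~\ref{lem:approx-shifts:transitivity}, which asserts that approximate series-parallel shifts compose with controlled propagation of error. Granted these, the theorem reduces to an \emph{effective density} statement: starting from $(x,y)$, a polynomial-size alternating sequence of parallel/series moves lands within $2^{-n}$ of $(x',y')$. Planarity is automatic because every intermediate graph produced in this way is series-parallel by construction.

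The density argument splits by $|y|$. If $y$ is non-real with $|y|\neq 1$, the parallel map contracts or expands the modulus exponentially, so we can first bring $|y|$ arbitrarily close to any desired scale; since $y$ is a non-real algebraic number excluded from being one of the small roots of unity $\pm i, \omega_3, \omega_3^2$, the argument $\arg(y)/\pi$ is irrational, and Weyl equidistribution pins down the angle after a suitable number of parallel steps. A subsequent series step transfers the action to the $x$-coordinate, and repeating the procedure lets us aim at the target modulus on the $x$-side. If $|y|=1$ but $y$ is not a root of unity, the parallel orbit is already dense on the unit circle, and a single series step moves us off the circle into the regime handled by the previous case. For $y\in(-1,0)$ the orbit stays real, and density within the real slice of $\mathcal{H}_q$ is obtained by alternating parallel and series iterations, which produce a two-parameter family of Möbius-type transformations with no invariant proper closed subset of the real slice other than the finite orbits explicitly excluded by the hypothesis. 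The four exceptional points $\{(i,-i),(-i,i),(\omega_3,\omega_3^2),(\omega_3^2,\omega_3)\}$ are precisely those where both $y$ and $x$ are small roots of unity so that parallel and series orbits are finite and commute; no non-trivial point is reachable from them, explaining the exclusion.

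The main obstacle, and the genuinely delicate part, is turning density into a \emph{polynomial} iteration bound: to guarantee error $2^{-n}$ we need quantitative Diophantine control on $\|k\arg(y)/(2\pi)\|_{\mathbb{R}/\mathbb{Z}}$ and on related algebraic heights appearing in the iterated maps. These are supplied by classical effective bounds on linear forms in logarithms and on heights of algebraic numbers of the type recorded in Waldschmidt and in Kannan--Lenstra--Lov\'asz, which give polynomial dependence on both $n$ and the bit size of the input $(x,y)$. Finally, packaging this effective density with Lemma~\ref{lem:approx-shifts:transitivity} assembles the desired series-parallel graph $G_n$ of polynomial size that $(x,y)$-implements a point $(\hat x', \hat y')$ with $|y' - \hat y'|\le 2^{-n}$, completing the proof of Theorem~\ref{thm:approx-shifts}.
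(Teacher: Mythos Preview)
Your proposal has a genuine gap in the step you yourself flag as ``the genuinely delicate part'': obtaining \emph{polynomial} iteration bounds from equidistribution. If $\theta=\arg(y)/(2\pi)$ is irrational, Weyl equidistribution guarantees that $\{k\theta\bmod 1:k\le K\}$ is eventually dense, but by pigeonhole the orbit of length $K$ cannot be $2^{-n}$-dense in $[0,1)$ unless $K\gtrsim 2^{n}$. Thus ``a suitable number of parallel steps'' to pin down the angle of a single power $y^{k}$ within $2^{-n}$ of a target already forces the graph to have exponentially many edges. The references you invoke do not rescue this: Baker-type lower bounds on linear forms in logarithms go in the wrong direction (they give \emph{lower} bounds on $|k\theta-m|$, preventing good rational approximations rather than producing them), and Kannan--Lenstra--Lov\'asz is a lattice-reduction algorithm for reconstructing minimal polynomials, used in the paper only in Section~\ref{sec:hardness:lll} for an entirely different purpose. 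You have also misread Lemma~\ref{lem:implement-approx:pc}: it is not the elementary fact that $k$-thickenings/stretchings implement $y\mapsto y^{k}$ and $x\mapsto x^{k}$, but rather the real-parameter approximate-shift lemma that is one of the two main ingredients in the actual proof.

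The paper's mechanism is structurally different from yours and is designed precisely to avoid the exponential blow-up. It never tries to steer the complex dynamics to a \emph{prescribed} real target. Instead, Lemma~\ref{cor:implement-approx:real-with-complex} (via Lemma~\ref{lem:implement-approx:real-with-complex} and the product construction of Lemma~\ref{lem:approx-argument}) lands on \emph{some} real point $(\tilde x,\tilde y)$ with $\tilde y\in(0,1)$: one takes a theta graph $\Theta_{(l_1,\dots,l_m)}$ with linearly growing path lengths $l_j$, so that the factors $1+q/(x^{l_j}-1)$ are exponentially close to $1$, and chooses bounded exponents $e_j$ so that the product $\prod_j(1+q/(x^{l_j}-1))^{e_j}$ converges to a real limit at rate $r^{m}$ for some $r<1$. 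This gives $2^{-n}$ accuracy with $m=O(n)$ factors, hence polynomial graph size, with no Diophantine input at all. The price is that the limit $\tilde y$ is not algebraic in general, only polynomial-time computable; this is why the paper then needs Lemma~\ref{lem:implement-approx:pc} (extended to $\mathsf{P}_{\mathbb{R}}$) to shift from $(\tilde x,\tilde y)$ to the actual real algebraic target $(x',y')$, and why the transitivity Lemma~\ref{lem:approx-shifts:transitivity} must be proved in the generality of polynomial-time computable intermediate points. Your outline has no analogue of this two-stage structure or of the $\mathsf{P}_{\mathbb{R}}$ complication it creates.
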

The exceptions $\{(i, -i), (-i, i), (\omega_3, \omega_3^2), (\omega_3^2, \omega_3)\}$ are precisely the non-real points of the $(x,y)$ plane where the Tutte polynomial of a graph can be evaluated in polynomial time (see Section~\ref{sec:pre:hardness}). As we will see, being able to $(x,y)$-implement approximations of any number in $(-1, 0)$ is essentially the property that makes the approximation problem $\numP$-hard at $(x,y)$.

We remark that the idea of implementing approximations of a given weight or edge interaction has been explored in the literature, though only when all the edge interactions involved are real. We review these results in Section~\ref{sec:approx-shifts}. 

We study the properties of polynomial-time approximate shifts in Section~\ref{sec:approx-shifts} and prove Theorem~\ref{thm:approx-shifts} in Section~\ref{sec:complex-implementations}. In the next section, we describe some of the techniques used.

\subsubsection{Proof Outline of Theorem~\ref{thm:approx-shifts}}
Shifts, as defined in Section~\ref{sec:shwa}, have a transitivity property: if there is a shift from $(x_1,y_1)$ to $(x_2,y_2)$ and from $(x_2, y_2)$ to $(x_3, y_3)$, then there is a shift from $(x_1,y_1)$ to $(x_3, y_3)$, see Section~\ref{sec:pre:weights} for more details.  

The polynomial-time approximate shift given in Theorem~\ref{thm:approx-shifts} is constructed in a similar way. First, we construct a polynomial-time approximate shift from $(x ,y)$ to some $(x_2, y_2 )$ such that $y_2 \in (-1,0)$, where  $x_2$ and $y_2$ depend on $x,y$. Then, we construct a polynomial-time approximate shift from $(x_2, y_2 )$ to $(x', y')$. Finally, we combine both polynomial-time approximate shifts using an analogue of the transitivity property. 

However, when this approach is put into practice, there is a difficulty that causes various technical complications: we only have mild control in our constructions over the intermediate shift $(x_2, y_2)$. In particular,    even if the numbers $x$ and $y$ are algebraic, we cannot guarantee that $x_2$ and $y_2$ are algebraic, and this causes problems with obtaining  the required transitivity property. Instead, we have to work with  a wider class of numbers, the set $\mathsf{P}_{\mathbb{C}}$ of \emph{polynomial-time computable numbers}. These are numbers that can be approximated efficiently, i.e., for $y\in \mathsf{P}_{\mathbb{C}}$ there is an algorithm that computes $\hat{y}_n \in \mathbb{Q}[i]$ with $|y - \hat{y}_n| \le 2^{-n}$ in time polynomial in $n$~\cite[Chapter 2]{Ko1991}. We denote by $\mathsf{P}_{\mathbb{R}}=\mathbb{R}\cap \mathsf{P}_{\mathbb{C}}$ the set of polynomial-time computable real numbers.

Our polynomial-time approximate shifts are constructed in Section~\ref{sec:complex-implementations}. The first of these  polynomial-time approximate shifts is provided by Lemma~\ref{cor:implement-approx:real-with-complex}.

\newcommand{\statecorshifts}{ Let $q$ be a real algebraic number with $q \ge 2$. Let $x$ and $y$ be algebraic numbers such that $(x, y) \in \mathcal{H}_q$, $y\in (-1,0)\cup (\mathbb{C}\backslash \mathbb{R})$ and $(x,y) \not \in \{(i, -i), (-i, i), (\omega_3, \omega_3^2), (\omega_3^2, \omega_3)\}$, where $\omega_3 = \exp(2 \pi i / 3)$. Then there is a polynomial-time approximate series-parallel shift from $(x,y)$ to $(x',y')$ for some $(x', y') \in \mathcal{H}_q$ with $x', y' \in \mathsf{P}_{\mathbb{R}}$ and $y' \in (0,1)$.}
\begin{lemma} \label{cor:implement-approx:real-with-complex}
  \statecorshifts
\end{lemma}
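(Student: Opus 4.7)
I split the proof according to whether $y$ is real or non-real. Recall that, for series-parallel graphs, the effective $y$-weight of a parallel composition of two gadgets with weights $y_1, y_2$ is $y_1 y_2$, while the effective weight of a series composition is $1 + (y_1-1)(y_2-1)/(y_1+y_2+q-2)$; both operations preserve series-parallelism.

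If $y \in (-1, 0)$, the construction is immediate: the graph $H$ consisting of two parallel edges implements, exactly, $y' = y^2 \in (0,1)$, and the corresponding $x' = 1 + q/(y^2-1)$ is real algebraic. Outputting the fixed gadget $H$ independently of the precision parameter yields an exact (hence trivially polynomial-time approximate) series-parallel shift.

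If $y = re^{i\theta} \in \mathbb{C}\setminus\mathbb{R}$, I would construct the shift in two stages, glued via the transitivity Lemma~\ref{lem:approx-shifts:transitivity}. \emph{Stage one:} use $n$-fold parallel composition to realize $y^n = r^n e^{in\theta}$, choosing $n$ so that $n\theta$ approximates an integer multiple of $\pi$, producing an effective weight close to $\pm r^n \in \mathbb{R}$. When $\theta/\pi \in \mathbb{Q}$ this is achievable exactly at some finite $n$; when irrational, Weyl equidistribution provides infinitely many good $n$, and Baker's theorem on linear forms in logarithms (applicable because $y$ is algebraic) gives an effective rate that keeps $n$ polynomial in the target precision. \emph{Stage two:} reduce the approximately real value into the interval $(0,1)$ via further compositions. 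For $r < 1$ the value $y^n \approx +r^n$ already lies in $(0,1)$ for $n$ large, so stage one suffices. For $r > 1$, I would instead aim for $y^n \approx -r^n$ (large negative real) and then append a $k$-fold series composition: writing $Y := y^n$, the series formula iterated gives an effective weight $\approx 1 + Y/k$, which lies in $(0,1)$ once $k$ exceeds $|Y|$, so a suitably chosen $k$ delivers the target. For $r = 1$ (the most delicate sub-case, containing the excluded list) one blends the two operations non-trivially, using a single series step starting from an approximation of $y^n \approx -1$ to escape the unit circle, after which one of the previous sub-cases applies.

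The four excluded points $(i,-i), (-i,i), (\omega_3, \omega_3^2), (\omega_3^2, \omega_3)$ are exactly those at which every series-parallel combination of $y$-edges produces an effective weight in a finite orbit avoiding $(0,1)$; at these same points the Tutte polynomial admits a polynomial-time evaluation (cf.\ Section~\ref{sec:pre:hardness}), so they must be excluded. The main obstacle will be the case $|y| = 1$: pure parallel iteration cannot alter the modulus, so one must show that outside the excluded quadruple there is always a series operation carrying an approximately unimodular value to one of non-unit modulus, and the exceptional list emerges as the complete set of obstructions to this escape. Secondary technical concerns are keeping the gadget polynomially sized in the precision when $r > 1$ (which is why a doubling-style series construction, rather than a naive $k$-fold expansion, is likely needed), and tolerating intermediate effective weights that may lie in $\mathsf{P}_{\mathbb{C}}$ without being algebraic --- which is precisely the flexibility that the approximate-shift framework and the transitivity Lemma~\ref{lem:approx-shifts:transitivity} are designed to provide.
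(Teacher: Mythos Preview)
Your treatment of $y \in (-1,0)$ via a $2$-thickening is correct and matches the paper. For non-real $y$, however, there is a genuine gap: a polynomial-time approximate shift must target a \emph{fixed} point $(x',y')$, with the gadget at precision $2^{-k}$ implementing some $\hat y_k$ satisfying $|\hat y_k - y'| \le 2^{-k}$. Your Stage one produces $y^n$ close to $\pm r^n$, but as the precision improves $n$ must grow, and $\pm r^n$ drifts (to $0$ if $r<1$, to $\pm\infty$ if $r>1$) rather than stabilising at any single $y'$; your sentence ``stage one suffices'' for $r<1$ shows you are treating $r^n$ itself as the target, which it cannot be. Stage two inherits the same defect, since the value $1+Y/k$ you aim for still depends on the precision through $n$ and $k$ (and your first-order expansion $1+Y/k$ is only valid when $k \ll |Y|$, precisely the regime where the result is far from $(0,1)$). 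Transitivity cannot glue the stages either, since Lemma~\ref{lem:approx-shifts:transitivity} also requires a fixed intermediate point. A side remark: Baker's theorem gives \emph{lower} bounds on $|n\theta - m\pi|$ for algebraic $e^{i\theta}$, preventing $n\theta$ from being \emph{too} close to a multiple of $\pi$ --- the opposite of what you want; plain Dirichlet already gives the upper-bound direction.

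The paper's route for non-real $y$ is substantially different. It first performs a bounded \emph{exact} series-parallel shift (Lemmas~\ref{lem:norm-shift}--\ref{lem:shift:case-4}, which is where the four excluded points are eliminated, and where $q\ge 2$ is genuinely used) to reach $(x_1,y_1)$ with $x_1\notin\mathbb{R}$ and $|x_1|>1$. From there it works with \emph{stretches}, not thickenings: the weights $y_m = 1 + q/(x_1^m - 1)$ all tend to $1$, and along a computable subsequence $\sigma(m)$ (Corollary~\ref{cor:sigma-n}, forcing $\cos(\sigma(m)\theta),\sin(\sigma(m)\theta)\le -C<0$) one has $y_{\sigma(m)} = 1 - f(m) + ig(m)$ with $f,g>0$ geometrically decaying. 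The core Lemma~\ref{lem:approx-argument} then chooses bounded exponents $e_m$ so that the arguments of the partial products $\prod_{j\le m} y_{\sigma(j)}^{e_j}$ accumulate to $2\pi$ while their moduli decrease to a specific limit $w \in (0,1)\cap\mathsf{P}_{\mathbb{R}}$. This $w$ is the fixed target $y'$, and the partial products (implemented by theta graphs) furnish the approximating gadgets. The essential idea you are missing is this infinite-product mechanism for manufacturing a \emph{single} real limit out of complex factors tending to $1$.
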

The  construction  in Lemma~\ref{cor:implement-approx:real-with-complex} is obtained using a theta graph and trying to get a shift  that is very close to the real line. However, we cannot control the point $(x',y')$ that we are approximating, and as mentioned,  $x',y'$ might not be algebraic. The proof of Lemma~\ref{cor:implement-approx:real-with-complex} requires the most technical work in the paper and is given in Section~\ref{sec:here}.

Using Lemma~\ref{cor:implement-approx:real-with-complex}, we have  a series-parallel polynomial-time approximate shift from $(x,y)$ to some $(x', y') \in \mathcal{H}_q$ with $x', y' \in \mathsf{P}_{\mathbb{R}}$ and $y'\in (0,1)$. Next, we have to construct a  polynomial-time approximate shift from $(x', y')$ to $(\hat{x}, \hat{y})$, where $(\hat{x}, \hat{y})$ is the point that we want to shift to in Theorem~\ref{thm:approx-shifts}. In fact, we actually use a theta shift, which also facilitates establishing the required transitivity property later on. Note that since $y'$ is not necessarily algebraic, we can not directly apply the results that have already appeared in the literature on implementing approximations of edge interactions. In the next lemma, we generalise these results to the setting of polynomial-time computable numbers, where we need to address some further complications that arise from computing with polynomial-time computable numbers instead of algebraic numbers. The proof of the lemma is given in Section~\ref{sec:rvrtvtvt}.

\newcommand{\statelemshifts}{ Let $q, x, y \in \mathsf{P}_{\mathbb{R}}$ such that $q > 0$, $(x,y) \in \mathcal{H}_q$, $y$ is positive and $1-q/2 < y < 1$. There is a polynomial-time algorithm that takes as an input:
  \begin{itemize}
  \item two positive integers $k$ and $n$, in unary;
  \item a real algebraic number $w \in [y^k, 1]$.
  \end{itemize}
  The algorithm produces a theta graph $J$ that $(x,y)$-implements $(\hat{x}, \hat{y})$ such that $\big| \hat{y} - w \big| \le 2^{-n}$. The size of $J$ is at most a polynomial in $k$ and $n$, independently of $w$.}
\begin{lemma} \label{lem:implement-approx:pc}
  \statelemshifts
\end{lemma}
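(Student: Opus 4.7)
The plan is to realise the target $w$ as the effective $y$-weight of a theta graph built greedily. Setting $\mu := (y-1)/(q + y -1)$, the series composition of a length-$\ell$ path of edges of weight $y$ multiplies the $\mu$-parameters, yielding an effective single-edge $y$-weight $y_\ell = (1 + (q-1)\mu^\ell)/(1 - \mu^\ell)$, and parallel composition multiplies the $y$-weights. The hypotheses $0 < y < 1$ and $y > 1 - q/2$ give $\mu \in (-1, 0)$, so $y_\ell - 1 = q\mu^\ell/(1-\mu^\ell)$ is positive for every even $\ell$ and decays like $q|\mu|^\ell$. Thus, by placing even-length paths in parallel, we can realise effective weights arbitrarily close to $1$ from above, and these are the correction gadgets I would use.

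First, using polynomial-time approximations to $y$ (available from $y \in \mathsf{P}_{\mathbb{R}}$), I would compute the unique integer $m \in \{0, \ldots, k\}$ with $y^m \leq w < y^{m-1}$ and place $m$ parallel single edges, giving an initial weight $W_0 = y^m$ with ratio $r_0 := w/W_0 \in [1, 1/y)$. I would then iteratively refine: given a current weight $W_{t-1}$ with $W_{t-1} \leq w$, choose the smallest even $\ell_t$ with $y_{\ell_t} \leq w/W_{t-1}$, add a length-$\ell_t$ path in parallel, and set $W_t = W_{t-1}\, y_{\ell_t}$. Since consecutive even values satisfy $(y_{\ell+2} - 1)/(y_\ell - 1) = |\mu|^2 (1 + o(1))$, this greedy choice contracts the excess $r_t - 1$ by a constant factor $\kappa < 1$ depending only on $q, y$, so after $T = O(n)$ iterations one has $r_T - 1 \leq 2^{-n}$, which yields $|W_T - w| = W_T(r_T - 1) \leq 2^{-n}$ after an extra bookkeeping step converting multiplicative error into additive error.

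For the size bound, the initial block contributes at most $m \leq k$ edges, and a short calculation shows that iteration $t$ requires a path of length $\ell_t = O(t)$ (since $r_{t-1} - 1 = O(\kappa^{t-1})$ and $y_\ell - 1 = \Theta(|\mu|^\ell)$), so the total is $O(k + n^2)$ edges, polynomial in $k, n$ and independent of $w$. The $\mathsf{P}_{\mathbb{R}}$ bookkeeping is handled as follows: every decision (the choice of $m$ and each $\ell_t$) amounts to comparing quantities of the form $y^{m'}$ or $y_{\ell'}$ against $w$ or against a ratio of previously computed weights; these comparanda are separated by gaps of size at least $|\mu|^{\mathrm{poly}(n,k)}$, so approximating $q, x, y$ to some polynomial precision $2^{-p(n,k)}$ is enough to decide every comparison correctly in polynomial time. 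The fact that $w$ is a real algebraic number lets one resolve borderline comparisons using standard root-separation bounds.

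The main obstacle is the contraction analysis in the refinement step: one must show that, despite having to pick $\ell_t$ from the discrete set of even positive integers, the greedy rule produces $y_{\ell_t}$ within a bounded multiplicative factor of the continuous target $w/W_{t-1}$, giving a geometric reduction of $r_t - 1$ with rate $\kappa < 1$ uniform in $t$ and in $w$. A secondary subtlety is that all numerical tests must be robust to the inherent imprecision of working with $y \in \mathsf{P}_{\mathbb{R}}$ rather than an algebraic number, which forces the approximation precision to be propagated carefully through each iteration and then combined with the root-separation estimate for $w$; verifying this propagation is routine but delicate.
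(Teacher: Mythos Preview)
Your greedy theta-graph construction is essentially the same as the paper's: the paper works ``from above'' using the odd-index weights $y_j\in(1-q/2,1)$, while you work ``from below'' using the even-index weights $y_\ell>1$, but the analysis (bounded multiplicities after the first step, geometric decay of the residual, total size polynomial in $k$ and $n$) is the same either way.

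The gap is in your treatment of the $\mathsf{P}_{\mathbb{R}}$ comparisons. Your claim that ``these comparanda are separated by gaps of size at least $|\mu|^{\mathrm{poly}(n,k)}$'' is not justified, and in fact fails: the implemented weight $W_{t-1}\,y_{\ell}$ is a value of a fixed rational function at the point $(q,y)\in\mathsf{P}_{\mathbb{R}}^2$, which may well be transcendental, so there is no a~priori lower bound on $|W_{t-1}\,y_{\ell}-w|$ when this quantity is nonzero. Root-separation bounds for $w$ do not help here, because those bounds control distances between \emph{algebraic} numbers of bounded height and degree, not between an algebraic $w$ and a possibly transcendental implemented weight. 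Consequently, your plan to ``approximate $q,x,y$ to some polynomial precision $2^{-p(n,k)}$'' cannot, by itself, guarantee that every comparison is decided correctly.

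The paper isolates exactly this as the main difficulty and resolves it with an early-termination trick rather than a separation bound: at each step, compute a rational approximation $\hat W$ of the current implemented weight to additive error $2^{-n-2}$ (which is possible in polynomial time via the continuity estimate of Lemma~\ref{lem:pc:essential}); if $|\hat W - w|\le 2^{-n-1}$ then the implemented weight is already within $2^{-n}$ of $w$ and one stops, and otherwise $|\hat W-w|>2^{-n-1}$ forces $\hat W-w$ and $W_{t-1}y_\ell - w$ to have the same sign, so the greedy decision can be taken correctly. With this mechanism in place your argument goes through; without it, the ``routine but delicate'' bookkeeping you allude to cannot be completed.
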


Then, we are able to combine the shifts in Lemmas~\ref{cor:implement-approx:real-with-complex} and~\ref{lem:implement-approx:pc} via a transitivity property for polynomial-time approximate shifts (see Lemma~\ref{lem:approx-shifts:transitivity} in Section~\ref{sec:approx-shifts}), and therefore prove Theorem~\ref{thm:approx-shifts}, see Section~\ref{sec:complex-implementations} for the details.

\newcommand{\statelemtransitivity}{Let $q \in \mathsf{P}_{\mathbb{C}}$ with $q \ne 0$ and let $(x_j, y_j) \in \mathcal{H}_q$ for each $j \in \{1,2,3\}$. Let us assume that the following hypotheses hold:
  \begin{enumerate}
    \item $x_2$ and $y_2$ are polynomial-time computable;
    \item $y_2 \not \in \{1\} \cup \left(1-q/2 + i q \mathbb{R}\right)$;
    \item there is a polynomial-time approximate shift from $(x_1, y_1)$ to $(x_2, y_2)$;
    \item there is a polynomial-time approximate theta shift from $(x_2, y_2)$ to $(x_3, y_3)$.
  \end{enumerate}
  Then there is a polynomial-time approximate shift from $(x_1, y_1)$ to $(x_3, y_3)$. Moreover, if the polynomial-time approximate shift from $(x_1, y_1)$ to $(x_2, y_2)$ is series-parallel, then the polynomial-time approximate shift from $(x_1, y_1)$ to $(x_3, y_3)$ is also series-parallel.}

\subsection{The reductions}

In Section~\ref{sec:hardness:approximate-shifts} we show how to use a polynomial-time approximate shift from $(x_1, y_1)$ to $(x_2, y_2)$ to reduce the problem of approximating the Tutte polynomial at $(x_2, y_2)$ to the same problem at $(x_1, y_1)$. The following lemma gives such a reduction for the problem of approximating the norm, we also give an analogous result for approximating the argument.

\newcommand{\statelemredout}{Let $q\neq 0$, $\gamma_1$ and $\gamma_2\neq 0$ be algebraic numbers, and $K>1$. 
For $j\in\{1,2\}$, 
let $y_j = \gamma_j+1$ and $x_j = 1+q/\gamma_j$. If there is a polynomial-time series-parallel approximate shift from $(x_1, y_1)$ to $(x_2, y_2)$, then we have a reduction from $\FNT(q, \gamma_2)$ to $\FNT(q, \gamma_1)$. This reduction also holds for the planar version of the problem.}
\begin{lemma} \label{lem:reduction:norm:outline}
\statelemredout
\end{lemma}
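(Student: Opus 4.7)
My plan is to combine the gadget produced by the approximate shift with an oracle query and careful error control, using the fact that $H_n$ is series-parallel both for efficient exact computation of local factors and to preserve planarity.

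Given an instance $G=(V,E)$ of $\FNT(q,\gamma_2)$ with target accuracy $K>1$, I would first pick $n$ to be a polynomial in $|V|+|E|$ (to be fixed later) and invoke the polynomial-time series-parallel approximate shift to produce in time $\mathrm{poly}(n)$ a series-parallel graph $H_n$ with terminals $s,t$ that $(x_1,y_1)$-implements some $(\hat{x}_2,\hat{y}_2)\in\mathcal{H}_q$ with $|\hat{y}_2-y_2|\le 2^{-n}$, so $|\hat{\gamma}_2-\gamma_2|\le 2^{-n}$. Replace every edge of $G$ by a copy of $H_n$ (identifying endpoints with $s,t$) to obtain $G'$. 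By the standard identity for edge implementations in the multivariate Tutte polynomial (Section~\ref{sec:pre:weights}),
\[
Z_{\text{Tutte}}(G';q,\gamma_1) \;=\; M^{|E|}\, Z_{\text{Tutte}}(G;q,\hat{\gamma}_2),
\qquad \text{where } M=Z_{s|t}(H_n;q,\gamma_1).
\]
Since $H_n$ is series-parallel, both $Z_{s|t}(H_n;q,\gamma_1)$ and $\hat{\gamma}_2$ are computable exactly as algebraic numbers in polynomial time by iterating the series/parallel composition rules, so $|M|^{|E|}$ is available in closed form.

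Next I would call the $\FNT(q,\gamma_1)$ oracle on $G'$ with precision $K':=K^{1/2}$; this is legitimate because (as noted in the paper immediately after the definition of $\FNP$) the complexity of the norm problem does not depend on the constant. Dividing the returned value by $|M|^{|E|}$ yields a factor-$K'$ approximation to $|Z_{\text{Tutte}}(G;q,\hat{\gamma}_2)|$, which I output as the approximation of $|Z_{\text{Tutte}}(G;q,\gamma_2)|$. The planarity claim follows in the same step: if $G$ is planar and each $H_n$ is series-parallel (hence planar with terminals on a common face), then $G'$ is planar, so the reduction restricts cleanly to $\FNPT$.

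The main obstacle is turning $|\hat{\gamma}_2-\gamma_2|\le 2^{-n}$ into a \emph{multiplicative} bound on the ratio $|Z_{\text{Tutte}}(G;q,\hat{\gamma}_2)|/|Z_{\text{Tutte}}(G;q,\gamma_2)|$. Using $|a^m-b^m|\le m\max(|a|,|b|)^{m-1}|a-b|$ applied term-by-term in \eqref{eq:tutte} I get the absolute bound
\[
\bigl|Z_{\text{Tutte}}(G;q,\gamma_2)-Z_{\text{Tutte}}(G;q,\hat{\gamma}_2)\bigr|
\;\le\; 2^{|E|}\,|q|^{|V|}\,|E|\,(|\gamma_2|+1)^{|E|-1}\,2^{-n}.
\]
To convert this to a multiplicative error I would invoke a Liouville-type lower bound: because $q$ and $\gamma_2$ are fixed algebraic numbers, $Z_{\text{Tutte}}(G;q,\gamma_2)$ is an algebraic number whose minimal polynomial has degree and height bounded polynomially in $|V|+|E|$, so whenever it is non-zero one has $|Z_{\text{Tutte}}(G;q,\gamma_2)|\ge 2^{-p_0(|V|+|E|)}$ for some explicit polynomial $p_0$ depending only on the algebraic data of $q,\gamma_2$. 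Choosing $n=p_0(|V|+|E|)+c(|V|+|E|)$ for a sufficiently large constant $c$ makes the displayed absolute error at most a $K^{1/2}$ factor of $|Z_{\text{Tutte}}(G;q,\gamma_2)|$, so the composed error is at most a factor $K$, as required. When $Z_{\text{Tutte}}(G;q,\gamma_2)=0$ any output is admissible, so the reduction succeeds in all cases; the analogous planar version follows verbatim.
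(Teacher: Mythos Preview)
Your proposal is correct and follows essentially the same approach as the paper: replace each edge by the series-parallel gadget, use the implementation identity to factor out the computable quantity, bound the additive error in $\hat{\gamma}_2$ by a polynomial-in-size computation, and convert it to a multiplicative error via the Liouville-type lower bound on $|Z_{\text{Tutte}}(G;q,\gamma_2)|$ (Corollary~\ref{cor:lower-bound:tutte}). The paper packages the gadget replacement and the additive bound into a separate lemma (Lemma~\ref{lem:approximate-shift:reduction}) and targets accuracy $4K$ rather than splitting as $K^{1/2}\cdot K^{1/2}$, but these are presentational differences; one minor slip is that by the paper's conventions the per-edge factor is $Z_{s|t}(H_n;q,\gamma_1)/q^2$ rather than $Z_{s|t}(H_n;q,\gamma_1)$, which does not affect the argument.
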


In order to prove Lemma~\ref{lem:reduction:norm:outline}, we need some lower bounds on the norm of the partition function $Z_{\mathrm{Tutte}}(G; q, \gamma)$. This kind of lower bound plays an important role in several hardness results on the complexity of approximating partition functions~\cite{Goldberg2017, Bezb}. Here, we have to work a bit harder than usual since we have two (algebraic) underlying parameters (in the case of Tutte), and we need to use results in algebraic number theory, see Section~\ref{sec:hardness:lower-bound} for details. 

By combining Theorem~\ref{thm:approx-shifts} and Lemma~\ref{lem:reduction:norm:outline} with existing hardness results, we obtain our hardness results for non-real edge interactions in Section~\ref{sec:mainproof}. On the way, we collect some hardness on real parameters as well that strengthen previous results in the literature, and  part of Section~\ref{sec:hardness} is devoted to this. The main reason behind these improvements is that previous work on real parameters used reductions from  approximately counting minimum cardinality $(s,t)$-cuts~\cite{Goldberg2014, Goldberg2017}, the minimum 3-way cut problem~\cite{Goldberg2008}, or maximum independent set for planar cubic graphs~\cite{Goldberg2012}, which are either easy on  planar graphs or the parameter regions they cover are considerably smaller or cannot be used to conclude $\numP$-hardness. We instead reduce the exact computation of $Z_{\mathrm{Tutte}}(G; q, \gamma)$ to its approximation,  which has the advantage that the problem that we are reducing from is $\numP$-hard for planar graphs~\cite{Vertigan2005}.  Interestingly, our reduction requires us to apply an algorithm of Kannan, Lenstra and Lov\'asz~\cite{Kannan1988} to reconstruct the minimal polynomial of an algebraic number from an additive approximation of the number. The lower bounds on the partition function $Z_{\mathrm{Tutte}}(G; q, \gamma)$ that are gathered in Section~\ref{sec:hardness:lower-bound} also play a role in this reduction, the details will be given in Section~\ref{sec:hardness:real}.

 \section{Preliminaries} \label{sec:pre}

\subsection{The multivariate Tutte polynomial}

  The random cluster formulation of the multivariate Tutte polynomial is particularly convenient when working with implementations (as we will see in Section~\ref{sec:pre:weights}), and is defined as follows. Let $G = (V, E)$ be a graph. For any $\gamma \colon E \to \mathbb{C}$ and $q \in \mathbb{C}$, the \emph{multivariate Tutte polynomial of $G$} is
  \begin{equation} \label{eq:tutte:def:general}
    Z_{\text{Tutte}}(G; q, \gamma) = \sum_{A \subseteq E} q^{k(A)} \prod_{e \in A} \gamma_e.
  \end{equation}
  We will make use of the following notation. Let $s$ and $t$ be two distinct vertices of G. We define
  \begin{equation*}
    Z_{st}(G; q, \gamma) = \sum_{\substack{A \subseteq E: \\ s \text{ and } t \text{ in the same component}}} q^{k(A)} \prod_{e \in A} \gamma_e.
  \end{equation*}
  Analogously, let $Z_{s\vert t}$ be the contribution to $Z_{\text{Tutte}}(G; q, \gamma)$ from the configurations $A \subseteq E$ such that $s$ and $t$ are in different connected components in $(V, A)$. That is, $Z_{s \vert t}(G; q, \gamma) = Z_{\text{Tutte}}(G; q, \gamma) - Z_{st}(G; q, \gamma)$.

\subsection{Implementing weights, series compositions and parallel compositions} \label{sec:pre:weights}

  In this section, we define implementations, shifts, series compositions and parallel compositions. The definitions and results that we give are standard and can also be found, for instance, in~\cite[Section 4]{Jaeger1990}, ~\cite[Section 2.1]{Goldberg2012} or \cite[Section 4]{Sokal2005}.

  Let $q \in \mathbb{C}$ with $q \ne 0$. The value of $q$ is fixed across all this section. Let $H$ be a weighted graph with weight function $\hat{\gamma}$. Let $s$ and $t$ be two distinct vertices of $H$, which are usually referred to as terminals. 
  We say that the graph  $H$ $\hat{\gamma}$-\emph{implements} the weight $w$ with respect to the terminals $s$ and $t$ if
  \begin{equation*}
    w = q \frac{ Z_{st}(H; q, \hat{\gamma})}{ Z_{s|t}(H; q, \hat{\gamma})}.
  \end{equation*}
  We say that $H$ $\hat{\gamma}$-implements the weight $w$ if there are terminals $s$ and $t$ such that $H$ $\hat{\gamma}$-implements the weight $w$ with respect to $s$ and $t$. These definitions are motivated by Lemma~\ref{lem:implementations:multivariate}, whose proof is a straightforward computation involving the definitions of implementations and the multivariate Tutte polynomial.

  \begin{lemma}[{\cite[Equation 2.2]{Goldberg2012}}] \label{lem:implementations:multivariate}
    Let $G$ and $H$ be two graphs with weight functions $\gamma$ and $\hat{\gamma}$ respectively. Let $f$ be an edge of $G$ with weight $\gamma_f$ such that $H$ $\hat{\gamma}$-implements $\gamma_f$ with respect to terminals $s$ and $t$. Let $G_f$ be the graph constructed by considering the union of $G$ and $H$, identifying the terminals $s$ and $t$ with the endpoints of $f$ in $G$ and removing $f$. Let $\gamma'$ be the weight function on $G_f$ that inherits the weights from $\gamma$ and $\hat{\gamma}$. Then
    \begin{equation*}
      Z_{st}\left(G_f; q, \gamma'\right)  = \frac{Z_{s | t}\left(H; q, \hat{\gamma}\right)}{q^2} Z_{st}\left(G; q, \gamma\right), \quad  Z_{s|t}\left(G_f; q, \gamma'\right)  = \frac{Z_{s | t}\left(H; q, \hat{\gamma}\right)}{q^2} Z_{s | t}\left(G; q, \gamma\right).
    \end{equation*}
    In particular, we have $Z_{\text{Tutte}}\left(G_f; q, \gamma'\right) = \frac{Z_{s | t}\left(H; q, \hat{\gamma}\right)}{q^2} Z_{\text{Tutte}}\left(G; q, \gamma \right)$. Moreover, if $G$ $\gamma$-implements a weight $w$, then $G_f$ also $\gamma'$-implements $w$.
  \end{lemma}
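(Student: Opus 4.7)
The plan is to prove both identities by a direct expansion of the multivariate Tutte polynomial, factoring contributions from $G$ and $H$, and invoking the implementation hypothesis exactly once. Since $V(G) \cap V(H) = \{s, t\}$ in $G_f$, any edge subset of $G_f$ decomposes uniquely as $A_0 \cup B$ with $A_0 \subseteq E(G) \setminus \{f\}$ and $B \subseteq E(H)$; let $\alpha, \beta \in \{0,1\}$ be the indicators that $s$ and $t$ are connected in $(V(G), A_0)$ and $(V(H), B)$ respectively. A short case analysis on $(\alpha, \beta)$, counting how many pairs of components are merged by the identifications at $s$ and at $t$, yields the crucial formula
\begin{equation*}
k_{G_f}(A_0 \cup B) = k_G(A_0) + k_H(B) - 2 + \alpha\beta.
\end{equation*}

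To obtain $Z_{s|t}(G_f; q, \gamma')$ I would restrict the sum to $\alpha = \beta = 0$, which annihilates the $\alpha\beta$ term and lets the sum factor cleanly. Since any $A \subseteq E(G)$ with $s, t$ disconnected must avoid $f$, the $A_0$-factor equals $Z_{s|t}(G; q, \gamma)$ and the $B$-factor equals $Z_{s|t}(H; q, \hat{\gamma})$, yielding the claimed identity with the $q^{-2}$ prefactor. For $Z_{st}(G_f; q, \gamma')$, summing over the three remaining cases $(\alpha, \beta) \in \{(1,0),(0,1),(1,1)\}$ produces
\begin{equation*}
q^2\, Z_{st}(G_f; q, \gamma') = P_G^1 P_H^0 + P_G^0 P_H^1 + q\, P_G^1 P_H^1,
\end{equation*}
where $P_G^i, P_H^j$ denote the partial sums indexed by the values of $\alpha$ and $\beta$ (so that $P_H^0 = Z_{s|t}(H; q, \hat{\gamma})$ and $P_H^1 = Z_{st}(H; q, \hat{\gamma})$). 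Splitting the sum defining $Z_{st}(G; q, \gamma)$ on whether $f$ lies in the configuration and using $k_G(A_0 \cup \{f\}) = k_G(A_0) - 1 + \alpha$ gives $Z_{st}(G; q, \gamma) = (1+\gamma_f) P_G^1 + (\gamma_f/q) P_G^0$. It is only at this point that I would invoke the implementation hypothesis, which rewrites as $q P_H^1 = \gamma_f P_H^0$; substituting this to eliminate $P_H^1$ above and factoring out $P_H^0$ produces exactly $Z_{st}(G_f; q, \gamma') = q^{-2} P_H^0\, Z_{st}(G; q, \gamma)$. Summing the two identities yields the $Z_{\text{Tutte}}$ statement.

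For the "moreover" clause, let $u, v$ be the terminals in $G$ witnessing that $G$ $\gamma$-implements $w$, and set $\lambda = q^{-2} Z_{s|t}(H; q, \hat{\gamma})$. I would augment $G$ by an auxiliary edge $e^*$ joining $u$ and $v$ with a free weight $\gamma^*$; since $e^* \ne f$, the operations of adding $e^*$ and of replacing $f$ by $H$ commute, so the graph obtained either way is the same. Applying the $Z_{\text{Tutte}}$ identity just proved to this enlarged graph gives $Z_{\text{Tutte}}((G_f)^{uv}; q, \gamma' + \gamma^*) = \lambda\, Z_{\text{Tutte}}(G^{uv}; q, \gamma + \gamma^*)$ for every value of $\gamma^*$. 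Both sides are affine in $\gamma^*$, and the same splitting-on-$e^*$ trick expresses each in terms of $Z_{uv}$ and $Z_{u|v}$; comparing constant and linear coefficients in $\gamma^*$ then yields $Z_{uv}(G_f; q, \gamma') = \lambda\, Z_{uv}(G; q, \gamma)$ together with the analogue for $Z_{u|v}$, so their ratio is preserved and $G_f$ $\gamma'$-implements $w$. The only delicate step is the component-count identity in the first paragraph, which must be verified carefully across all four cases of $(\alpha, \beta)$; once that is in hand the rest is algebraic manipulation and the single well-timed use of the implementation hypothesis.
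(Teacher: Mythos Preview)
The paper does not actually prove this lemma; it cites \cite[Equation 2.2]{Goldberg2012} and remarks that the proof is ``a straightforward computation involving the definitions of implementations and the multivariate Tutte polynomial.'' Your argument supplies precisely that computation and is correct: the key component-count identity $k_{G_f}(A_0\cup B)=k_G(A_0)+k_H(B)-2+\alpha\beta$ holds in all four cases, and the subsequent algebra (including the split of $Z_{st}(G;q,\gamma)$ on whether $f\in A$ and the single use of $qP_H^1=\gamma_f P_H^0$) goes through exactly as you describe.

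One small technical point in the ``moreover'' clause: extracting both $Z_{uv}(G_f)=\lambda\,Z_{uv}(G)$ and $Z_{u|v}(G_f)=\lambda\,Z_{u|v}(G)$ from the constant and linear coefficients in $\gamma^*$ amounts to inverting the matrix $\bigl(\begin{smallmatrix}1&1\\1&1/q\end{smallmatrix}\bigr)$, which is singular when $q=1$. This is easily patched---after clearing the factor $q^{-2}$ both sides are polynomials in $q$, so equality for all $q\ne 0,1$ forces equality for $q=1$ as well---and in any case the paper never invokes the lemma at $q=1$.
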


  Therefore, if we can compute $Z_{s \mid t}(H; q, \hat{\gamma})$ efficiently, then computing $Z_{\text{Tutte}}(G; q, \gamma)$ is as hard as computing $Z_{\text{Tutte}}(G_f; q, \gamma')$. This observation leads to some of the reductions that appear in this paper.

  In the remaining sections we usually assume that the weights are constant, that is, each edge of the graph has the same weight, and we will make it clear when this is not the case. In the constant weight function case Lemma~\ref{lem:implementations:multivariate} can be applied to each edge of the graph constructed by copying $G$ and substituting each edge  $f$ in $G$ by a copy of $H$ (identifying the endpoints of $f$ with $s$ and $t$). Let $\alpha_1, \alpha_2 \in \mathbb{C}$. We say that there is a \emph{shift} from $(q, \alpha_1)$ to $(q, \alpha_2)$ if there is a graph $H$ that $\alpha_1$-implements $\alpha_2$. An important property of shifts is transitivity; if there are shifts from $(q, \alpha_1)$ to $(q, \alpha_2)$ and from $(q, \alpha_2)$ to $(q, \alpha_3)$, then there is a shift from $(q, \alpha_1)$ to $(q, \alpha_3)$. This is a consequence of Lemma~\ref{lem:implementations:multivariate}. Let $y_1 = \alpha_1 + 1$ and $y_2 = \alpha_2 + 1$. We define $x_1$ and $x_2$ by $q = (x_1-1) (y_1-1) = (x_2-1)(y_2-1)$, which is the change of variables that relates the Tutte polynomial and $Z_{\text{Tutte}}$. We equivalently refer to the shift from $(q, \alpha_1)$ to $(q, \alpha_2)$ as a shift from $(x_1, y_1)$ to $(x_2, y_2)$, and we also say that $H$ $(x_1,y_1)$-implements $(x_2, y_2)$. This notation is convenient to express many of the shifts considered in this paper.

  To conclude this section we introduce two tools that will provide us with many examples of implementations and shifts: parallel compositions and series compositions. For each $j \in \{1,2\}$, let $G_j$ be a graph, let $s_j$ and $t_j$ be two terminals of $G_j$, and let $\gamma_j$ be a weight function such that $G_j$ $\gamma_j$-implements a weight $w_j$ with respect to $s_j$ and $t_j$.

\emph{Parallel compositions.} The parallel composition of $(G_1, s_1, t_1)$ and $(G_2, s_2, t_2)$ is the graph $G$ constructed by considering the union of $G_1$ and $G_2$ and identifying $s_1$ with $s_2$ and $t_1$ with $t_2$. Let $\hat{\gamma}$ be the weight function on $G$ inherited from $\gamma_1$ and $\gamma_2$. It is well-known and easy to check that $G$ $\hat{\gamma}$-implements the weight
    \begin{align*}
      w & = (1 + w_1)(1 + w_2) - 1
    \end{align*}
    with respect to the terminals $s_1$ and $t_1$. Let $(x_1, y_1)$ and $(x_2, y_2)$ be the Tutte coordinates of $(q, w_1)$ and $(q, w_2)$ respectively (so $y_j = w_j + 1$ and $(x_j-1)(y_j-1) = q$). Then the Tutte coordinates of $(q, w)$ are $(x, y)$ with $y = y_1 y_2$ and $(x-1)(y-1) = q$. Let $\Upsilon$ be a graph with two vertices $s,t$ and one edge joining them, and let $\Upsilon^n$ be the parallel composition of $n$ copies of $(\Upsilon, s,t$) (so $\Upsilon^n$ has two vertices and $n$ edges joining them). Then $\Upsilon^n$ $(x,y)$-implements $(x', y')$ with $y' = y^n$ and $(x'-1)(y'-1) = q$. This is known as an \emph{$n$-thickening} of $(x,y)$ and it yields a shift from $(x,y)$ to $(x', y^n)$. \vskip 0.2cm

\emph{Series compositions.} The series composition of $(G_1, s_1, t_1)$ and $(G_2, s_2, t_2)$ is the graph $G$ constructed by considering the union of $G_1$ and $G_2$ and identifying $t_1$ with $s_2$. Let $\hat{\gamma}$ be the weight function on $G$ inherited from $\gamma_1$ and $\gamma_2$. It is well-known and easy to check that $G$ $\hat{\gamma}$-implements the weight
    \begin{align*}
      w & = \frac{w_1 w_2}{w_1 + w_2 + q}
    \end{align*}
    with respect to the terminals $s_1$ and $t_2$. Note that $w$ satisfies
    \begin{equation} \label{eq:stretching:coordinates}
      \left( 1 + \frac{q}{w} \right) = \left( 1 + \frac{q}{w_1} \right) \left( 1 + \frac{q}{w_2} \right).
    \end{equation}
     Let $(x_1, y_1)$ and $(x_2, y_2)$ be the Tutte coordinates of $(q, w_1)$ and $(q, w_2)$ respectively (so $y_j = w_j + 1$ and $(x_{j}-1)(y_{j}-1) = q$). Then, in view of \eqref{eq:stretching:coordinates}, the Tutte coordinates of $(q, w)$ are $(x, y)$ with $x = x_1 x_2$ and $(x-1)(y-1) = q$. Let $\Upsilon$ be a graph with two vertices $s,t$ and one edge joining them, and let $\Upsilon_n$ be the series composition of $n$ copies of $(\Upsilon, s, t$) (so $\Upsilon_n$ is a path graph with $n$ edges). Then $\Upsilon_n$ $(x,y)$-implements $(x', y')$ with $x' = x^n$ and $(x'-1)(y'-1) = q$. This is known as an \emph{$n$-stretching} of $(x,y)$ and it yields a shift from $(x,y)$ to $(x^n, y')$.

For series-parallel and theta graphs (see Footnote~\ref{foot:series}), these constructions give that either $Z_{s \mid t}(G; q, \gamma) = 0$, or 
the series-parallel graph~$G$ 
(with terminals~$s$ and~$t$)
$\gamma$-implements a weight $w(G, s, t; q, \gamma)$ that can be computed from the recursive definition of series-parallel graphs in polynomial time.  In particular, let $\Theta_{(l_1, \dots, l_{m})}$ be the theta graph with $m$ internal paths of lengths $l_1, \ldots, l_m$.   In  this case,\footnote{We should mention that we will make use of the $\Theta$ asymptotic notation in this paper and this notation should not be confused with that of theta graphs.} we   have that
\begin{equation} \label{eq:theta:implement}
  w\left(\Theta_{(l_1, \dots, l_{m})}, s, t; q, \gamma \right) = \prod_{j = 1}^m \left( 1 + \frac{q}{x^{l_j} - 1}\right) - 1,
\end{equation} 
where $x = 1 + q / \gamma$.  
Series-parallel graphs can be built using  series and parallel compositions.
The following definition is equivalent to the one in Footnote~\ref{foot:series}.
A graph $G$ is \emph{series-parallel} (with terminals $s$ and $t$) if either $G$ is the graph with two vertices $s$ and $t$ and one edge joining them, or $G$ is the parallel or series composition of $(G_1, s_1, t_1)$ and $(G_2, s_2, t_2)$, where $s = s_1$, $t= t_2$ and $G_j$ is a series-parallel graph  with terminals $s_j$ and $t_j$~\cite[Chapter 11]{Brandstadt1999}. 

Finally, the \emph{size of a graph} $G = (V, E)$ is the integer $\mathrm{size}(G) = \left| V \right| + \left| E \right|$. Note that the size of $\Theta_{(l_1, \dots, l_{m})}$ is $2 \sum_{j = 1}^m l_j - m + 2$.

\section{Polynomial-time approximate shifts} \label{sec:approx-shifts}

Implementing a specific weight cannot always be achieved. Nonetheless, sometimes we can implement an approximation of  the desired weight with as much precision as we need. These implementations have been exploited several times in the literature on Tutte polynomials and the Ising model; see~\cite{Goldberg2008, Goldberg2012, Goldberg2014, Goldberg2019}. Here we collect some of these results appearing in~\cite{Goldberg2014}, which in turn are based on arguments in~\cite{Goldberg2012}; here, we follow the presentation in~\cite{Goldberg2019} (that was stated for $q=2$).

\begin{lemma}[{\cite[Lemma 22]{Goldberg2019}}, {\cite[Lemma 5]{Goldberg2014}}] \label{lem:implement-approx:literature:1} Let $x$ and $y$ be real algebraic numbers such that $y \not\in [-1, 1]$ and $(x-1) (y-1) = q > 0$. There is a polynomial-time algorithm that takes as an input:
    \begin{itemize}
    \item two positive integers $n$ and $k$, in unary;
    \item a real algebraic number $y' \in [1, |y|^{k}]$.
    \end{itemize}
    This algorithm produces a theta graph $G$ that $(x,y)$-implements $(\hat{x}, \hat{y})$ such that $\big| y' - \hat{y} \big| \le 2^{-n}$. The size of $G$ is at most a polynomial in $n$ and $k$, independently of $y'$.
  \end{lemma}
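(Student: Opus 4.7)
The plan is to build the theta graph greedily, adding paths one at a time so that the implemented value approaches $y'$ multiplicatively. By \eqref{eq:theta:implement}, the theta graph $\Theta_{(l_1,\dots,l_m)}$ implements
\[ \hat{y} \;=\; \prod_{j=1}^{m} f(l_j), \qquad f(l) \;:=\; 1 + \frac{q}{x^l - 1}, \]
and in particular $f(1) = y$, since $(x-1)(y-1) = q$. The hypothesis $y \notin [-1,1]$ together with $q > 0$ forces $|x| \neq 1$, and in the ``clean'' subcase $|x| > 1$ one has $f(l) \to 1$ geometrically as $l \to \infty$, so arbitrarily fine ``multiplicative bits'' are available to combine with the base factor $y$ coming from length-$1$ paths.

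I would first handle the representative case $y > 1$ (so $x > 1$). Since $y' \in [1, y^{k}]$, pick the unique integer $m_1 \in \{0,1,\dots,k\}$ with $y^{m_1} \le y' < y^{m_1+1}$ and start with $m_1$ length-$1$ paths, so that the current product is $P_1 = y^{m_1} \le y'$. Then iterate: while $y' - P_t > 2^{-n}$, choose the smallest $\ell_t \ge 2$ with $P_t \cdot f(\ell_t) \le y'$ (which exists as $f$ decreases monotonically to $1$), add a length-$\ell_t$ path, and update $P_{t+1} = P_t \cdot f(\ell_t)$. The key quantitative point is that the remaining ratio $R_t := y'/P_t \ge 1$ contracts at a geometric rate: because consecutive gaps $f(\ell) - f(\ell+1)$ decay like $x^{-\ell}$, the greedy step lands within a factor $1 - 1/x + o(1)$ of the target, so $R_t - 1$ shrinks by a uniform constant factor per iteration, and $O(n)$ refinement steps bring $|y' - \hat{y}|$ below $2^{-n}$. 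A bookkeeping argument then shows that the chosen lengths satisfy $\ell_t = O(t + \log(1/q))$, so the overall theta graph has $O(n^2 + k)$ edges, which is polynomial as required.

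The case $y < -1$ requires additional care because $f(1) = y$ is negative and, depending on the sign of $x$, some $f(\ell)$ with $\ell \ge 2$ may be negative as well (and in the subcase $|x| < 1$ the limit is $f(\ell) \to 1-q$ rather than $1$). The fix is to add length-$1$ paths in pairs so that the running product $P_t$ stays positive, and to group refinement factors so that their product approaches $1$ from the appropriate side; this reduces the problem to the same greedy-contraction analysis at the cost of at most doubling the number of paths. The main obstacle is precisely this quantitative and sign-based analysis — establishing a uniform geometric contraction rate for $R_t$ and a uniform polynomial upper bound on the chosen lengths $\ell_t$, while enumerating the subcases determined by the signs of $x$ and $y$ — and the full calculation is carried out in the cited references~\cite{Goldberg2014, Goldberg2019}.
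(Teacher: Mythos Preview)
The paper does not give its own proof of this lemma; it is quoted directly from \cite{Goldberg2014,Goldberg2019}, and the closest thing to a proof in the paper is the argument for the related Lemma~\ref{lem:implement-approx:pc}. Your greedy scheme --- start with enough length-$1$ paths to undershoot $y'$, then add successively longer paths whose factors $f(\ell)=1+q/(x^\ell-1)$ tend to $1$ geometrically --- is exactly the mechanism used there (see the sequence $d_j$ in the proof of Lemma~\ref{lem:implement-approx:pc}), so your sketch is on target and matches the literature.

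One genuine slip, though: your claim that ``$y\notin[-1,1]$ together with $q>0$ forces $|x|\neq 1$'' is false. Taking $x=-1$ gives $q=-2(y-1)=2(1-y)$, which is positive precisely when $y<1$, and this is compatible with $y<-1$ (yielding $q>4$). In that degenerate case $f(\ell)=1+q/((-1)^\ell-1)$ is undefined for even $\ell$ and equals $y$ for every odd $\ell$, so the only values a theta graph can implement are the powers $y^m$, and your greedy refinement collapses entirely. The cited proofs either avoid this boundary case in their applications or handle it by first shifting to a point with $|x|\neq 1$; your write-up should flag this rather than assert it cannot occur.
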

  
  In Lemma~\ref{lem:implement-approx:pc} (Section~\ref{sec:complex-implementations}), we give a similar result to Lemma~\ref{lem:implement-approx:literature:1} where the numbers $x$ and $y$ may not be algebraic. The fact that the graph $G$ computed in Lemma~\ref{lem:implement-approx:literature:1} is a theta graph is not directly stated in the statement of~\cite[Lemma 5]{Goldberg2014} but it can easily be inferred from the proof. This also applies to Lemma~\ref{lem:implement-approx:literature:2}.

  \begin{lemma}[{\cite[Lemma 22]{Goldberg2019}}, {\cite[Lemma 7]{Goldberg2014}}] \label{lem:implement-approx:literature:2} Let $x_1, y_1, x_2, y_2$ be real algebraic numbers such that $y_1 \in (-1,1)$, $y_2 \not\in [-1, 1]$, and $(x_1-1) (y_1-1) = (x_2-1) (y_2-1) = q < 0$. There is a polynomial-time algorithm that takes as an input:
    \begin{itemize}
    \item two positive integers $n$ and $k$, in unary;
    \item a real algebraic number $y' \in [1, |y_1|^{-k}]$.
    \end{itemize}
    This algorithm produces a theta graph $G = (V, E)$ and a weight function $\hat{\gamma} \colon E \to \{y_1-1, y_2-1\}$ such that $G$ $\hat{\gamma}$-implements $(\hat{x}, \hat{y})$ with $\big| y' - \hat{y} \big| \le 2^{-n}$. The size of $G$ is at most a polynomial in $n$ and $k$, independently of $y'$.
  \end{lemma}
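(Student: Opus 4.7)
The plan is to build $G$ as a parallel composition (i.e., a theta graph) of several paths, where each path uses edges of a single weight, either $\gamma_1=y_1-1$ or $\gamma_2=y_2-1$. Recall from~\eqref{eq:theta:implement} that if path $j$ has length $l_j$ and uses edges of type $i(j)\in\{1,2\}$, the implemented $y$-coordinate is
\[
\hat y \;=\; \prod_{j=1}^m f_{i(j)}(l_j),\qquad\text{where } f_i(l):=1+\frac{q}{x_i^l-1},
\]
so the task reduces to approximating $\log y'$ by a sum of real numbers $\log f_{i(j)}(l_j)$ with integer $l_j\ge 1$.

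Next I record the sign structure that drives the construction. Since $q<0$ and $y_1-1<0$, we have $x_1>1$, so $x_1^l\to\infty$ monotonically, and $f_1(l)$ increases from $y_1$ (at $l=1$) up to $1$, with $\log f_1(l)\sim q/x_1^l$ for large $l$: negative increments whose magnitudes shrink geometrically. After routine reductions (replacing a $\gamma_2$-edge by two in parallel to cover $y_2<-1$; pairing $\gamma_1$-edges to handle $y_1<0$), we may assume $y_1\in(0,1)$ and $y_2>1$, so that every $f_i(l)$ is positive. The construction has two stages. In Stage~1, let $a$ be the smallest nonnegative integer with $y_2^a\ge y'$; since $y'\le|y_1|^{-k}$ and $y_1,y_2$ are constants, $a=O(k)$. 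Placing $a$ single-edge $\gamma_2$-paths in parallel contributes exactly $y_2^a$, and it remains to multiply in a factor $T\in(1/y_2,1]$ approximating $T^{\star}:=y'/y_2^a$. In Stage~2, I approximate $\log T^{\star}$ by a greedy ``base-$x_1$ expansion'': starting from deficit $D_0:=-\log T^{\star}$, at each step pick the smallest $l$ with $|\log f_1(l)|\le D_j$, add a $\gamma_1$-path of that length to the theta graph, and update $D_{j+1}:=D_j-|\log f_1(l)|$.

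The main obstacle is proving the polynomial size bound. The key estimate is that $|\log f_1(l)|\asymp x_1^{-l}$ for $l$ large, so the greedy pick lands in the interval $(D_j/x_1,\,D_j]$; hence the deficit contracts geometrically, $D_{j+1}\le D_j(1-1/x_1)$, and path $j$ uses length $l_j=O(j)+O(\log(1/D_0))$. To absorb the final multiplication by $y_2^a$ (which amplifies errors), we need log-precision $\varepsilon=2^{-n-O(k)}$, using that $T^{\star}$ is bounded away from $0$; this requires $r=O(n+k)$ greedy steps, giving total edge count $a+\sum_j l_j=O\bigl((n+k)^2\bigr)$. The algorithm needs only compare the algebraic quantities $|\log f_1(l)|$ against $D_j$ to polynomially many bits, which is standard~\cite{Strzebonski1997,Emiris2008}; a small additive slack before each comparison makes the integer selection robust under finite-precision arithmetic. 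Converting the final log-precision back to $\hat y$-precision via the bound $|\hat y - y'|\le y_2^a|T-T^{\star}|\cdot e^{O(|\log T^{\star}|)}$ yields $|\hat y - y'|\le 2^{-n}$, as required.
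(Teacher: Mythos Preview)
The paper does not give its own proof of this lemma---it is quoted from \cite{Goldberg2019,Goldberg2014}---but the paper's Lemma~\ref{lem:implement-approx:pc} reworks the same construction, so that is the natural benchmark. Your approach is essentially the standard one: greedy approximation of $y'$ by a product $\prod_j f_1(l_j)$, driven by the geometric decay $1-f_1(l)\asymp |q|\,x_1^{-l}$. The two-stage organisation (overshoot with $y_2^a$, then correct downward with adaptively chosen $\gamma_1$-paths) is a cosmetic variant of the scheme in Lemma~\ref{lem:implement-approx:pc}, which instead iterates over lengths $j$ in order and chooses a multiplicity $d_j$ at each; both give the claimed polynomial size.

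One point to tighten. Your ``pairing $\gamma_1$-edges to handle $y_1<0$'' is ambiguous with respect to the theta-graph requirement: a $2$-thickened edge followed by series composition is no longer a simple path, so the resulting object is series-parallel but not a theta graph. The clean fix is either to use only $\gamma_1$-paths of length $l\ge l_0$, where $l_0$ is the (constant) threshold past which $f_1(l)>0$---since $D_0\le\log|y_2|$ is bounded, at most boundedly many length-$l_0$ paths are needed before the adaptive greedy takes over---or to use two copies of each $\gamma_1$-path, contributing $f_1(l)^2>0$, which is still a theta graph. Also, the greedy comparison is naturally multiplicative (``is $\prod_{i\le j} f_1(l_i)\ge y'/y_2^a$?''), which keeps all quantities algebraic; your phrasing via $\log$ is fine as intuition, but the actual test should be stated multiplicatively so that the finite-precision/slack argument applies to comparisons of algebraic numbers, exactly as in the proof of Lemma~\ref{lem:implement-approx:pc}.
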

  
  \begin{corollary}\label{cor:implement-approx:literature}
    Let $x_1, y_1, x_2, y_2$ be real algebraic numbers such that $y_1 \in (-1,0)\cup (0,1)$, $y_2 \not\in [-1, 1]$, $(x_1-1) (y_1-1) = (x_2-1) (y_2-1) = q$, $q \ne 0$. There is a polynomial-time algorithm that takes as an input:
    \begin{itemize}
    \item two positive integers $n$ and $k$, in unary;
    \item a positive real algebraic number $y'$ such that $|y'| \in [|y_1|^{k}, |y_1|^{-k}]$.
    \end{itemize}
    This algorithm produces a theta graph $G = (V, E)$ and a weight function $\hat{\gamma} \colon E \to \{y_1-1, y_2-1\}$ such that $G$ $\hat{\gamma}$-implements $(\hat{x}, \hat{y})$ with $\big| y' - \hat{y} \big| \le 2^{-n}$. The size of $G$ is at most a polynomial in $n$ and $k$, independently of $y'$. Moreover, if either $y_1 < 0$ or $y_2 < 0$, then the restriction that $y'$ is positive can be replaced with a restriction that $y'$ is non-zero.
  \end{corollary}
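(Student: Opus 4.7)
The plan is to reduce to whichever of Lemmas~\ref{lem:implement-approx:literature:1} and~\ref{lem:implement-approx:literature:2} applies, using the sign of $q$ to decide, and to cover the additional ranges of $y'$ (namely $y' \in [|y_1|^k, 1)$ and negative $y'$) by appending a bounded number of extra parallel single-edge paths to the theta graph produced by the invoked lemma. Appending parallel single-edge paths between the two terminals of a theta graph preserves its theta structure, and by the parallel-composition formula such paths multiply the implemented value $\hat{y}$ by the product of the path weights, so $O(k)$ additional paths keep the total size polynomial in $n$ and $k$.

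\textbf{Core case: $y' \ge 1$.} If $q > 0$, apply Lemma~\ref{lem:implement-approx:literature:1} to the edge interaction $(x_2, y_2)$ (admissible because $y_2 \notin [-1,1]$) with parameter $k' := \lceil k \log|y_1|^{-1} / \log|y_2|\rceil = O(k)$, chosen so that $|y_2|^{k'} \ge |y_1|^{-k} \ge y'$. If $q < 0$, then $y' \in [1, |y_1|^{-k}]$ lies in the admissible range of Lemma~\ref{lem:implement-approx:literature:2}, which we invoke directly with parameters $n$ and $k$. Either way we obtain a theta graph of size polynomial in $n, k$, with edge weights in $\{y_1-1, y_2-1\}$, implementing some $\hat{y}$ with $|\hat{y} - y'| \le 2^{-n}$.

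\textbf{Extensions.} For positive $y' \in [|y_1|^k, 1)$, let $m \ge 1$ be the smallest integer with $|y_1|^{2m} \le y'$, so $m \le \lceil k/2 \rceil$ and $y'' := y'/y_1^{2m} \in [1, y_1^{-2}]$ (using that $y_1^{2m} = |y_1|^{2m} > 0$ regardless of the sign of $y_1$). Apply the core case to $y''$ to obtain a theta graph $G''$ with terminals $s,t$ implementing $\hat{y}''$ with $|\hat{y}'' - y''| \le 2^{-n}$, then append $2m$ parallel single-edge paths between $s$ and $t$ of weight $y_1 - 1$; the resulting theta graph implements $\hat{y} = \hat{y}'' \cdot y_1^{2m}$, and $|\hat{y} - y'| = y_1^{2m}\,|\hat{y}'' - y''| \le 2^{-n}$ because $y_1^{2m} \le 1$. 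For negative $y'$, by hypothesis some $y_i$ with $i \in \{1,2\}$ is negative: set $y'' := y'/y_i > 0$, run the positive-$y'$ procedure on $y''$ with $k$ increased by $O(1)$ (to account for the shifted range $|y''| \in [|y_1|^{k+O(1)}, |y_1|^{-k-O(1)}]$) and precision $2^{-n}/|y_i|$, then append one parallel edge of weight $y_i - 1$ to multiply the implemented value by $y_i$ and recover the correct sign.

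The only subtle point is the sign bookkeeping: the even power $y_1^{2m}$ is always positive so no ambiguity from $y_1 < 0$ arises, and the single factor $y_i$ in the negative-$y'$ case is used precisely to supply the needed sign flip from a positive $\hat{y}''$ to a negative $\hat{y}$; all other invariants (theta structure, weight set $\{y_1-1, y_2-1\}$, polynomial size in $n, k$) are preserved automatically throughout the construction.
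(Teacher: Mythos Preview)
Your proposal is correct and follows essentially the same approach as the paper's proof: both reduce the case $y' \ge 1$ directly to Lemma~\ref{lem:implement-approx:literature:1} or~\ref{lem:implement-approx:literature:2} (depending on the sign of $q$), handle $y' \in (0,1)$ by rescaling via a power of $y_1^2$ and appending parallel $y_1$-edges to the resulting theta graph, and handle negative $y'$ by dividing out a single negative $y_i$. The only cosmetic differences are that you use the minimal exponent $2m$ where the paper simply uses $2k$, and you are more explicit about the $k' = O(k)$ needed when invoking Lemma~\ref{lem:implement-approx:literature:1} and about the precision adjustment $2^{-n}/|y_i|$ in the negative case.
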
  
  \begin{proof}
    This result easily follows from Lemmas~\ref{lem:implement-approx:literature:1} and~\ref{lem:implement-approx:literature:2} by an argument of Goldberg and Jerrum (see the proof of~\cite[Lemma 2]{Goldberg2014}). We include here their argument for completeness. The case when $y' \ge 1$ has been covered in Lemmas~\ref{lem:implement-approx:literature:1} and~\ref{lem:implement-approx:literature:2}. First, let us assume that $y' \in (0,1)$. We have $1 \le y' \cdot y_1^{-2k} \le |y_1|^{-2k}$ and using Lemmas~\ref{lem:implement-approx:literature:1} and~\ref{lem:implement-approx:literature:2} we can implement $\tilde{y}$ with $| \tilde{y} - y' \cdot y_1^{-2k} | \le 2^{-n}$. We have $| y_1^{2k} \tilde{y} - y' | \le 2^{-n}$, so we set $\hat{y} = y_1^{2k} \tilde{y}$. The graph $G$ is the parallel composition of the graph used to implement $ \tilde{y}$ and $2k$ edges with weight $y_1$. Finally, let us assume that there is $i \in \{1, 2\}$ such that $y_i < 0$, and let us consider the case where $y'$ is negative. We implement an approximation $\hat{y}'$ of $y' / y_i > 0$, and return $\hat{y} = \hat{y}' y_i$.
  \end{proof}

  The graphs $G$ produced by the algorithms given in Lemma~\ref{lem:implement-approx:literature:1}, Lemma~\ref{lem:implement-approx:literature:2} and Corollary~\ref{cor:implement-approx:literature} are theta graphs. One may wonder which weights can be approximated as in these results. This leads to the following definition. Let $(x_1,y_1), (x_2,y_2) \in \mathcal{H}_q$. Let $\gamma_1 = y_1-1$ and $\gamma_2 = y_2 - 1$. We say that there is a \emph{polynomial-time approximate shift} from $(q, \gamma_1)$ to $(q, \gamma_2)$ or, equivalently, from $(x_1,y_1)$ to $(x_2, y_2)$, if there is an algorithm that, for any positive integer $n$, computes  in polynomial time in $n$ a graph $G_n$ that $(x_1, y_1)$-implements $(\hat{x}_2, \hat{y}_2)$ with $\left| y_2 - \hat{y}_2 \right| \le 2^{-n}$. If the graph $G_n$ computed by this algorithm is always a theta graph (resp. a series-parallel graph), then we say that this is a \emph{polynomial-time approximate theta shift} (resp. \textit{polynomial-time approximate series-parallel shift}). Lemma~\ref{lem:implement-approx:literature:1} gives polynomial-time approximate theta shifts from $(x_1, y_1)$ to $(x_2, y_2)$ when the considered numbers are real algebraic, $y_1 \not\in [-1,1]$, $y_2 \in [1, \infty)$ and $q > 0$. Note that shifts are a particular case of polynomial-time approximate shifts. Moreover, due to the transitivity property of shifts, if there is a shift from $(x_1, y_1)$ to $(x_2, y_2)$ and there is a polynomial-time approximate shift from $(x_2, y_2)$ to $(x_3, y_3)$, then there is a polynomial-time approximate shift from $(x_1, y_1)$ to $(x_3, y_3)$. In fact, polynomial-time approximate shifts exhibit some of the properties of shifts;
  in Lemma~\ref{lem:approx-shifts:composition} we show that they behave well with respect to parallel and series compositions and
  in Lemma~\ref{lem:approx-shifts:transitivity} we show that they are transitive under certain conditions.  In Section~\ref{sec:complex-implementations} we give more examples of polynomial-time approximate shifts, some of which will be constructed by transitivity. These approximate shifts play an important role in our hardness proofs.

  \begin{lemma} \label{lem:approx-shifts:composition}
    Let $q \in \mathbb{C} \setminus \{0\}$ and let $(x_j, y_j) \in \mathcal{H}_q$ for each $j \in \{1,2,3\}$. Let us assume that there are polynomial-time approximate shifts from $(x_1, y_1)$ to $(x_2, y_2)$, and from $(x_1, y_1)$ to $(x_3, y_3)$.
    Let $(x_4, y_4), (x_5, y_5) \in \mathcal{H}_q$ with $y_4 = y_2 y_3$ and $x_5 = x_2 x_3$. Then:
    \begin{enumerate}
    \item \label{it:f4f545} there is a polynomial-time approximate shift from $(x_1, y_1)$ to $(x_4, y_4)$;
    \item \label{it:f4f545b} there is a polynomial-time approximate shift from $(x_1, y_1)$ to $(x_5, y_5)$.
    \end{enumerate}
    Moreover, if the polynomial-time approximate shifts from $(x_1, y_1)$ to $(x_2, y_2)$ and  $(x_3, y_3)$ are series-parallel, then the obtained polynomial-time approximate shifts are also series-parallel.
  \end{lemma}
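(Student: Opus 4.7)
My plan is to prove both parts by running the two given approximate shifts with sufficient precision and forming the appropriate composition, with the precision chosen to absorb the error amplification caused by the composition formulas. Throughout, note that $(x_j,y_j)\in\mathcal{H}_q$ with $q\neq 0$ forces $y_j\neq 1$ and $x_j\neq 1$ for every relevant $j$, so all divisions and compositions below are well defined once the approximations are good enough.

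For item~\ref{it:f4f545}, I will, on input $n$, run the two approximate shifts at some precision $m=m(n)$ (to be chosen) and obtain graphs $G_2,G_3$ that $(x_1,y_1)$-implement $(\hat x_2,\hat y_2)$ and $(\hat x_3,\hat y_3)$ with $|\hat y_j-y_j|\le 2^{-m}$ for $j\in\{2,3\}$. Their parallel composition $G_4$ (identifying the terminals) is built in polynomial time and, by the parallel-composition rule from Section~\ref{sec:pre:weights}, $(x_1,y_1)$-implements $(\hat x_4,\hat y_4)$ with $\hat y_4=\hat y_2\hat y_3$. The triangle inequality gives
\[
|\hat y_4-y_2 y_3|\le |\hat y_3|\,|\hat y_2-y_2|+|y_2|\,|\hat y_3-y_3|\le(|y_2|+|y_3|+1)\,2^{-m},
\]
once $m\ge 1$. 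Choosing $m=n+\lceil\log_2(|y_2|+|y_3|+1)\rceil+1$ yields $|\hat y_4-y_4|\le 2^{-n}$, and $m$ is polynomial in $n$, so the whole construction runs in time polynomial in $n$.

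For item~\ref{it:f4f545b}, the approach is the same but the error analysis is nonlinear. Using $G_2,G_3$ as above, the series composition $G_5$ $(x_1,y_1)$-implements $(\hat x_5,\hat y_5)$ with $\hat x_5=\hat x_2\hat x_3$, where $\hat x_j=1+q/(\hat y_j-1)$ and $\hat y_5=1+q/(\hat x_5-1)$. The key point is that the map $\phi\colon y\mapsto 1+q/(y-1)$ relating coordinates on $\mathcal{H}_q$ is holomorphic and hence locally Lipschitz on any closed disc avoiding~$1$. Since $y_2,y_3\neq 1$ and $x_5=x_2x_3\neq 1$, I may fix closed discs around $y_2$, $y_3$ and $x_5$ missing~$1$, on which $\phi$ and its inverse are Lipschitz with constants $L=L(q,y_2,y_3)$ depending only on the target points. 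For $m$ large enough, the approximations $\hat y_2,\hat y_3$ lie in those discs, which forces $\hat x_2,\hat x_3$ to be close to $x_2,x_3$, their product $\hat x_5$ to be close to $x_5$, and finally $\hat y_5$ to be close to $y_5$, with a bound of the form $|\hat y_5-y_5|\le C(q,y_2,y_3)\,2^{-m}$. Taking $m=n+\lceil\log_2 C\rceil+1$ gives the required precision, and again $m$ is polynomial in $n$.

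For the moreover statement, series-parallel graphs are closed under both parallel and series compositions by their recursive definition in Section~\ref{sec:pre:weights}, so if the input shifts always produce series-parallel $G_2,G_3$, then $G_4$ and $G_5$ are automatically series-parallel with the correct terminals. No step is conceptually deep; the only real obstacle is the nonlinear error propagation in item~\ref{it:f4f545b}, which I handle by the local Lipschitz analysis of $\phi$ sketched above. The rest is straightforward bookkeeping of the precision parameter.
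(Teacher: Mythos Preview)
Your proof is correct and takes essentially the same approach as the paper: form the parallel (resp.\ series) composition of the two approximating graphs and bound the resulting error. For item~\ref{it:f4f545b} the paper simply says ``the proof is analogous,'' whereas you spell out the needed local-Lipschitz argument for the coordinate change $\phi\colon y\mapsto 1+q/(y-1)$; this is precisely the detail required to make the ``analogous'' claim rigorous, since the approximate shift is measured in the $y$-coordinate while series composition multiplies $x$-coordinates.
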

  \begin{proof}
    For $j \in \{2,3\}$, let $G_{n,j}$ be the graph computed by the polynomial-time approximate shift from $(x_1, y_1)$ to $(x_j, y_j)$, so $G_{n,j}$ $(x_1, y_1)$-implements $(\hat{x}_j, \hat{y}_j)$ with $\left| y_j - \hat{y}_j \right| \le 2^{-n}$, for certain terminals $t_j$ and $s_j$.

For Item~\ref{it:f4f545}, let $P_n$ be the parallel composition of $(G_{n,2}, s_2, t_2)$ and $(G_{n,3}, s_3, t_3)$. The graph $P_n$ gives a shift from $(x_1, y_1)$ to $(\hat{x}_4, \hat{y}_2 \hat{y}_3) \in \mathcal{H}_q$. Since $\left| y_3 - \hat{y}_3 \right| \le 2^{-n}$, we have $\left| \hat{y}_3 \right| \le \left| y_3 \right| + 1$ and
      \begin{align*}
        \left| y_2 y_3 - \hat{y}_2 \hat{y}_3 \right| & \le \left| y_2 - \hat{y}_2 \right| \left| \hat{y}_3 \right| + \left| y_3 - \hat{y}_3 \right| \left| y_2 \right| \le 2^{-n} \left( \left| y_3 \right| + 1 + \left| y_2 \right| \right).
      \end{align*}
      Therefore, for $k$ large enough, the graphs $P_{n+k}$ give  a polynomial-time approximate shift from $(x_1, y_1)$ to $(x_4, y_4)$ with $y_4 = y_2 y_3$.

For Item~\ref{it:f4f545b}, the proof is analogous but now we define the graph $S_n$ as the series composition of $(G_{n,2}, s_2, t_2)$ and $(G_{n,3}, s_3, t_3)$, which gives a shift from $(x_1, y_1)$ to $(\hat{x}_2 \hat{x}_3, \hat{y}_4) \in \mathcal{H}_q$. 

   Note that if the original polynomial-time approximate shifts are series-parallel, then the obtained ones are also series-parallel by the definition of series-parallel graphs.
  \end{proof}
  
  When it comes to hardness results, we are only interested in algebraic numbers. However, we will have to consider polynomial-time approximate shifts from $(x_1, y_1)$ to $(x_2, y_2)$ such that the numbers involved are not algebraic. This is due to the fact that, even if $x_1$ and $y_1$ are algebraic, $x_2$ and $y_2$ might not be. Nonetheless, in that case we can ensure that $x_2$ and $y_2$ are polynomial-time computable. A real number $x$ is \emph{polynomial-time computable} if there is a function $\phi \colon \mathbb{N} \to \mathbb{Q}$ that is computable in polynomial time (with the input written in unary notation, i.e., $0^n$) such that $\left| x - \phi(n) \right| \le 2^{-n}$ for all $n \in \mathbb{N}$~\cite[Chapter 2]{Ko1991}. The definition given in~\cite[Chapter 2]{Ko1991} uses dyadic rational numbers instead of rational numbers, but these two definitions are easily seen to be equivalent. We denote the set of polynomial-time computable real numbers by $\mathsf{P}_{\mathbb{R}}$. One can easily show that the set $\mathsf{P}_{\mathbb{R}}$ is a field. Real algebraic numbers are in $\mathsf{P}_{\mathbb{R}}$ because we can approximate them as closely as we want by applying Sturm sequences and binary search~\cite{Emiris2008}.  We say that a complex number $z$ is \emph{polynomial-time computable} if $z = x + i y$ for some $x, y \in \mathsf{P}_{\mathbb{R}}$. We denote the set of polynomial-time computable complex numbers by $\mathsf{P}_{\mathbb{C}}$. Algebraic numbers are in $\mathsf{P}_{\mathbb{C}}$ (their real and imaginary parts are real algebraic numbers). It turns out that $\mathsf{P}_{\mathbb{C}}$ is an algebraically-closed field~\cite[Chapter 2]{Ko1991}. In particular, for $z \in \mathsf{P}_{\mathbb{C}}$, we have $\left| z \right| \in \mathsf{P}_\mathbb{R}$.

  If there is a polynomial-time approximate theta shift from $(x_1, y_1)$ to $(x_2, y_2)$ and $x_1$ and $y_1$ are algebraic, then we can compute in polynomial time in $n$ an algebraic number that additively approximates $y_2$ up to an additive error $2^{-n}$. Since we can approximate algebraic numbers by rational numbers efficiently, it follows that $x_2$ and $y_2$ are polynomial-time computable. However, if we only know that $x_1$ and $y_1$ are polynomial-time computable, then it is not clear if $x_2$ and $y_2$ are polynomial-time computable or not. Lemma~\ref{lem:pc:essential} gives a partial answer to this question and plays a key role in our transitivity result for polynomial-time approximate shifts (Lemma~\ref{lem:approx-shifts:transitivity}). First, we need to prove some lemmas on polynomial-time computable numbers.
   
\begin{lemma} \label{lem:pr:bounds}
  Let $z \in \mathsf{P}_{\mathbb{C}}$. There is an algorithm that computes $b_1 \in \mathbb{Q}$ with $|z| \le b_1$. Moreover, if $z \ne 0$, then there is an algorithm that computes $b_2 \in \mathbb{Q}$ with $0 < b_2 \le |z|$.
\end{lemma}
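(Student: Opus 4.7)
The plan is to exploit the definition of $\mathsf{P}_{\mathbb{C}}$ directly. Writing $z = x + i y$ with $x, y \in \mathsf{P}_{\mathbb{R}}$, the definition supplies polynomial-time algorithms producing $\hat{x}_n, \hat{y}_n \in \mathbb{Q}$ with $|x - \hat{x}_n| \le 2^{-n}$ and $|y - \hat{y}_n| \le 2^{-n}$. Setting $\hat{z}_n = \hat{x}_n + i \hat{y}_n \in \mathbb{Q}[i]$, we get $|z - \hat{z}_n| \le \sqrt{2} \cdot 2^{-n}$, and crucially $|\hat{z}_n|^2 = \hat{x}_n^2 + \hat{y}_n^2$ is an exactly computable rational. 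Throughout, comparisons will be made using $|\hat{z}_n|^2$ rather than $|\hat{z}_n|$ itself.

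For the upper bound $b_1$, the first step is easy: compute $\hat{z}_1$ and output $b_1 := |\hat{x}_1| + |\hat{y}_1| + 2$. By the triangle inequality, $|z| \le |\hat{z}_1| + \sqrt{2} \le |\hat{x}_1| + |\hat{y}_1| + \sqrt{2} \le b_1$, and $b_1 \in \mathbb{Q}$ by construction.

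For the lower bound $b_2$ (assuming $z \ne 0$), the plan is the following unbounded search. For $n = 1, 2, 3, \ldots$, compute $\hat{z}_n$ and the rational $s_n := \hat{x}_n^2 + \hat{y}_n^2$, and test the rational inequality $s_n > 8 \cdot 4^{-n}$. As soon as this test succeeds, halt and output $b_2 := 2^{-n}$. Correctness: if the test succeeds at step $n$, then $|\hat{z}_n| > 2\sqrt{2} \cdot 2^{-n}$, hence $|z| \ge |\hat{z}_n| - |z - \hat{z}_n| > 2\sqrt{2}\cdot 2^{-n} - \sqrt{2}\cdot 2^{-n} = \sqrt{2}\cdot 2^{-n} > 2^{-n} = b_2$. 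Termination: since $z \ne 0$, for every $n$ with $2^{-n} \le |z|/(4\sqrt{2})$ we have $|\hat{z}_n| \ge |z| - \sqrt{2}\cdot 2^{-n} \ge 3\sqrt{2}\cdot 2^{-n}$, so $s_n \ge 18 \cdot 4^{-n} > 8 \cdot 4^{-n}$, guaranteeing that the search halts.

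The only real subtlety is that the norm $|\hat{z}_n|$ of a Gaussian rational is not itself rational, so we cannot directly compare it to $2 \cdot 2^{-n}$; this is the reason for squaring everything and comparing $s_n$ to $8 \cdot 4^{-n}$ instead. A second point worth noting is that the running time of the lower-bound algorithm depends on $|z|$ (the smaller $|z|$, the longer the search), but this is consistent with the statement, which only asserts existence of an algorithm, not a uniform polynomial-time bound; this is also unavoidable, since no approximation oracle for $z$ alone can certify a positive lower bound on $|z|$ in time independent of $|z|$.
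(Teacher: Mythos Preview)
Your proof is correct. The approach differs slightly from the paper's: the paper invokes the fact (noted just before the lemma) that $|z|\in\mathsf{P}_{\mathbb{R}}$ whenever $z\in\mathsf{P}_{\mathbb{C}}$, and then works directly with rational approximations $\hat{x}_n$ of $x=|z|$, outputting $b_1=\hat{x}_1+1/2$ and searching for the first $n$ with $\hat{x}_n-2^{-n}>0$ to get $b_2$. You instead bypass that fact by working with the real and imaginary parts separately and comparing the rational quantity $s_n=\hat{x}_n^2+\hat{y}_n^2$ against a rational threshold. Your route is a touch more self-contained (it does not rely on $\mathsf{P}_{\mathbb{R}}$ being closed under square roots of nonnegative elements), at the cost of a few extra lines of arithmetic; the paper's route is shorter once that closure property is taken for granted. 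One cosmetic point: for $n=1$ the error bound is $\sqrt{2}\cdot 2^{-1}$, not $\sqrt{2}$, but since $\sqrt{2}/2<2$ your choice $b_1=|\hat{x}_1|+|\hat{y}_1|+2$ is still valid.
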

\begin{proof}
  Let $x = \left| z \right|$. From $x \in \mathsf{P}_{\mathbb{R}}$, it follows that we can compute a sequence $\hat{x}_n \in \mathbb{Q}$ such that $\left| x - \hat{x}_n \right| \le 2^{-n}$, that is, we have $x \in [ \hat{x}_n - 2^{-n}, \hat{x}_n + 2^{-n} ]$. This computation for $n = 1$ gives the upper bound $\hat{x}_1 + 1/2$. Note that the  sequences $\hat{x}_n-2^{-n}$ and $\hat{x}_n + 2^{-n}$ converge to $x$. Hence, if $x \ne 0$, then there must be $n$ such that $0 < \hat{x}_n - 2^{-n} \le x$. We compute $\hat{x}_n$ until this inequality happens, obtaining the desired lower bound.
\end{proof}

\begin{lemma} \label{lem:pr:stretching}
  Let $z \in \mathsf{P}_{\mathbb{C}}$ with $\left| z \right| \ne 1$. There is a polynomial-time algorithm that takes as inputs two positive integers $n$ and $k$ and computes a positive integer $r(n, k)$ such that
  \begin{enumerate}
    \item \label{item:pr:stretching:1} $r(n, k)$ is increasing in $k$;
    \item \label{item:pr:stretching:2} $r(n, k) = n + \Theta (k)$;
    \item \label{item:pr:stretching:3} if $\left| z - \hat{z} \right| \le 2^{-r(n, k)}$, then $\left| \frac{1}{z^k - 1}  - \frac{1}{\hat{z}^k - 1} \right| \le 2^{-n}$.
  \end{enumerate}
\end{lemma}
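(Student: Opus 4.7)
The plan is to control the quantity
\begin{equation*}
\left|\frac{1}{z^k-1} - \frac{1}{\hat z^k-1}\right| \;=\; \frac{|z^k - \hat z^k|}{|z^k-1|\,|\hat z^k-1|}
\end{equation*}
by bounding the numerator from above and the denominator from below in terms of the approximation error $\epsilon := |z - \hat z|$. Before reading the inputs $n,k$, I would invoke Lemma~\ref{lem:pr:bounds} twice, on $|z|$ and on the nonzero polynomial-time computable real $\bigl||z|-1\bigr|$, to precompute (in a constant amount of time that depends only on $z$) rationals $b_1 \ge \max(|z|,1)$ and $0 < b_2 \le \min\bigl(\bigl||z|-1\bigr|,\,1\bigr)$.

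\textbf{The two bounds.} The algebraic identity $|z|^k - 1 = \bigl(|z|-1\bigr)\bigl(1 + |z| + \cdots + |z|^{k-1}\bigr)$, together with the observation that the geometric sum is at least $1$, gives $\bigl||z|^k - 1\bigr| \ge \bigl||z|-1\bigr| \ge b_2$ regardless of whether $|z|<1$ or $|z|>1$; the reverse triangle inequality then yields $|z^k - 1|\ge b_2$ for every $k\ge 1$. Requiring $\epsilon \le b_2/2$, the same reasoning applied to $\hat z$ gives $\bigl||\hat z| - 1\bigr| \ge b_2/2$ and hence $|\hat z^k - 1|\ge b_2/2$. For the numerator, the telescoping identity $z^k - \hat z^k = (z-\hat z)\sum_{j=0}^{k-1} z^j \hat z^{k-1-j}$ combined with $|\hat z|\le b_1 + 1$ yields $|z^k - \hat z^k| \le \epsilon\,k\,(b_1+1)^{k-1}$. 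Putting the pieces together,
\begin{equation*}
\left|\frac{1}{z^k-1} - \frac{1}{\hat z^k-1}\right| \;\le\; \frac{2\,\epsilon\,k\,(b_1+1)^{k-1}}{b_2^{\,2}}.
\end{equation*}

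\textbf{Defining $r(n,k)$ and verifying the three properties.} Solving this bound (together with $\epsilon\le b_2/2$) for $\epsilon = 2^{-r(n,k)}$ suggests the explicit closed-form choice
\begin{equation*}
r(n, k) \;:=\; n + \lceil \log_2 k\rceil + (k-1)\lceil \log_2(b_1+1)\rceil + 2\lceil \log_2(1/b_2)\rceil + 2.
\end{equation*}
Because $b_1 \ge 1$ forces $\lceil \log_2(b_1+1)\rceil \ge 1$, the expression is strictly increasing in $k$, so property~\ref{item:pr:stretching:1} is immediate; the dominant term is the one linear in $k$, so $r(n,k) = n + \Theta(k)$ with constants depending only on $z$, giving property~\ref{item:pr:stretching:2}; and property~\ref{item:pr:stretching:3} holds by construction. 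Evaluating $r(n,k)$ from the unary inputs $n,k$ costs polynomial time since $b_1$ and $b_2$ are fixed rationals computed once. The only mildly delicate point in the argument is the uniform-in-$k$ lower bound on $|z^k - 1|$ in the middle paragraph: this is exactly where the hypothesis $|z|\neq 1$ enters essentially, and it is what prevents $r(n,k)-n$ from having to grow super-linearly in $k$. Everything else is routine bookkeeping.
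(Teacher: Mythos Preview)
Your proof is correct and follows essentially the same approach as the paper: bound the numerator $|z^k-\hat z^k|$ via the telescoping identity, bound $|z^k-1|$ and $|\hat z^k-1|$ from below via the reverse triangle inequality and the factorisation of $|z|^k-1$, and solve for the required precision. The only cosmetic difference is that the paper first rounds the bounds from Lemma~\ref{lem:pr:bounds} to a single integer $t$ with $2^{-t}\le\bigl||z|-1\bigr|$ and $|z|\le 2^t$, which yields the tidier closed form $r(n,k)=n+(t+1)(k+1)$ in place of your expression with the $\lceil\log_2 k\rceil$ term; both choices satisfy the three properties for the same reasons.
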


\begin{proof}
  By Lemma~\ref{lem:pr:bounds}, we can compute an integer $t\geq 0$ such that $ 2^{-t} \le ||z| -1|$ and $|z| \le 2^t$. Note that for every integer $k\geq 1$ we have the bound $ \big| z^k - 1 \big| \ge 2^{-t}$. Indeed, if $|z| < 1$, then 
  \begin{equation*}
    2^{-t} \le 1 - |z| \le 1 - |z|^k \le \big|z^k - 1\big|  
  \end{equation*}
  and when $|z| > 1$, we analogously find that $2^{-t} \le |z| - 1 \le |z|^k - 1 \le \big| z^k - 1 \big|$. 

  Let $n$ and $k$ be the inputs of our algorithm. Let $r(n,k) = n+(t+1)(k+1)$, and note that $r$ is increasing in $k$ and $r(n,k) = n + \Theta \left( k \right)$, establishing Items~\ref{item:pr:stretching:1} and~\ref{item:pr:stretching:2}. 
	
	For Item~\ref{item:pr:stretching:3}, consider $\hat{z}$  such that $\left| z - \hat{z} \right| \leq 2^{-r(n,k)}$.  Since $|\hat{z}| \le |z| + 2^{-r(n, k)} \le 2^{t+1}$, for every $j\in \{0,\hdots,k-1\}$ we have $\left| \hat{z} \right|^j \left| z \right|^{k - 1 - j} \le 2^{t(k-1)+j}$ and hence
  \begin{align*}
    \big| z^k - \hat{z}^k \big| & = \Big| (z - \hat{z}) \sum\nolimits_{j = 0}^{k-1} \hat{z}^{j} z^{k - 1 - j}  \Big|  \le \left| z - \hat{z} \right|\sum\nolimits_{j = 0}^{k-1} \left| \hat{z} \right|^j \left| z \right|^{k-1-j} \\
                                   & \le \left| z - \hat{z} \right| \sum\nolimits_{j = 0}^{k-1} 2^{t(k-1)+j}  < \left| z - \hat{z} \right| 2^{t(k-1)+k} \le 2^{-(n+2t+1)}. 
  \end{align*}
Moreover, we have that $ \big| \big| z^k - 1 \big| - \big| \hat{z}^k - 1 \big| \big| \le \big| z^k - \hat{z}^k  \big| < 2^{-(t+1)}$  and, thus,
  \begin{equation*}
    \big| \hat{z}^k - 1 \big|  \ge \big| z^k - 1 \big|  - 2^{-(t+1)} \ge 2^{-(t+1)},
  \end{equation*}
  where we used that $ \big| z^k - 1 \big| \ge 2^{-t}$. Therefore, we find that
  \begin{equation*}
    \left| \frac{1}{z^k - 1} - \frac{1}{\hat{z}^k -1} \right| = \left| \frac{z^k - \hat{z}^k}{(z^k -  1)(\hat{z}^k -1)} \right| \le 2^{2t+1} \big| z^k - \hat{z}^k \big| \le 2^{-n}. \qedhere
  \end{equation*}
\end{proof}

\begin{lemma} \label{lem:pc:essential}  Let $q \in \mathsf{P}_{\mathbb{C}}$ with $q \ne 0$ and let $\gamma \in \mathsf{P}_{\mathbb{C}}$ with $\gamma \not \in \{0\} \cup  -q/2 + i q \mathbb{R}$. There is a polynomial-time algorithm that takes as an input:
  \begin{itemize}
  \item a positive integer $n$;
  \item a theta graph $G = \Theta_{(l_1, \ldots, l_m)}$ with terminals $s$ and $t$.
  \end{itemize}
  This algorithm computes $f(n, G)$ such that
  \begin{enumerate}
    \item $f(n,G) = n + \Theta \left( \mathrm{size}(G) \right)$;
    \item for any $\hat{\gamma}$ with $\left| \gamma - \hat{\gamma} \right| \le 2^{-f(n,G)}$, we have $\left| w(G, s, t; q, \gamma) - w(G, s, t; q, \hat{\gamma}) \right| \le 2^{-n}$.
  \end{enumerate}
\end{lemma}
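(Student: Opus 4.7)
The plan is to propagate error bounds through the explicit formula \eqref{eq:theta:implement},
\begin{equation*}
  w(G,s,t;q,\gamma) \;=\; \prod_{j=1}^{m}\left(1 + \frac{q}{x^{l_j}-1}\right) - 1, \qquad x = 1 + \frac{q}{\gamma},
\end{equation*}
working outwards from $\gamma$ to $x$, then to each factor $1/(x^{l_j}-1)$, and finally to the product of $m$ such factors. A short computation shows that the excluded set $\{0\}\cup(-q/2+iq\mathbb{R})$ corresponds exactly to $\gamma=0$ or $|x|=1$, so the hypothesis guarantees both $x\in\mathsf{P}_{\mathbb{C}}$ and $|x|\ne 1$, which is precisely what is needed to invoke Lemma~\ref{lem:pr:stretching}.

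Once and for all (depending only on $q$ and $\gamma$, not on $n$ or $G$) I would use Lemma~\ref{lem:pr:bounds} to compute positive rationals $Q\ge|q|$ and $0<b\le|\gamma|$, together with an integer $\tau\ge 0$ satisfying $2^{-\tau}\le\bigl||x|-1\bigr|$ and $|x|\le 2^{\tau}$. As in the proof of Lemma~\ref{lem:pr:stretching}, this gives the uniform bound $|x^{k}-1|\ge 2^{-\tau}$ for every $k\ge 1$. Writing $\hat x := 1+q/\hat\gamma$, whenever $|\gamma-\hat\gamma|\le b/2$ one has $|\hat\gamma|\ge b/2$ and
\begin{equation*}
  |x-\hat x| \;=\; \left|\frac{q(\gamma-\hat\gamma)}{\gamma\,\hat\gamma}\right| \;\le\; \frac{2Q}{b^{2}}\,|\gamma-\hat\gamma|,
\end{equation*}
so controlling $|x-\hat x|$ to any prescribed precision costs only an additive constant (in bits) compared with controlling $|\gamma-\hat\gamma|$.

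Next, I apply Lemma~\ref{lem:pr:stretching} with $z=x$ and $k=l_{j}$: given a target $n'$, if $|x-\hat x|\le 2^{-r(n',l_{j})}$ then $\bigl|(x^{l_j}-1)^{-1}-(\hat x^{l_j}-1)^{-1}\bigr|\le 2^{-n'}$, where $r(n',l_j)=n'+\Theta(l_j)$. Setting $a_{j}:=q/(x^{l_j}-1)$ and $\hat a_{j}:=q/(\hat x^{l_j}-1)$, the proof of Lemma~\ref{lem:pr:stretching} also yields $|\hat x^{l_j}-1|\ge 2^{-(\tau+1)}$ as soon as $|x-\hat x|\le 2^{-(\tau+1)}$, so $|1+a_{j}|,|1+\hat a_{j}|\le B:=1+2^{\tau+1}Q$, a computable constant. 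A telescoping identity then gives
\begin{equation*}
  \bigl|w(G,s,t;q,\gamma)-w(G,s,t;q,\hat\gamma)\bigr| \;\le\; m\,B^{m-1}\max_{j}|a_{j}-\hat a_{j}|.
\end{equation*}

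To force this to be at most $2^{-n}$ it suffices to make each $|a_{j}-\hat a_{j}|$ at most $2^{-(n+c\,m)}$ for a computable constant $c$ absorbing $\log_{2} B$ and $\log_{2}Q$; equivalently, each $\bigl|(x^{l_j}-1)^{-1}-(\hat x^{l_j}-1)^{-1}\bigr|$ at most $2^{-(n+O(m))}$. Since $m\le\mathrm{size}(G)$ and $L:=\max_{j}l_{j}\le\mathrm{size}(G)$, monotonicity of $r$ in its second argument (item~\ref{item:pr:stretching:1}) lets me define
\begin{equation*}
  f(n,G) \;:=\; r\bigl(n+c_{1}\,\mathrm{size}(G),\,L\bigr)+c_{2}
\end{equation*}
for computable constants $c_{1},c_{2}$ absorbing the scaling from $\gamma$ to $x$, the telescoping loss $m B^{m-1}$, and the baseline cut-offs $b/2$ and $2^{-(\tau+1)}$. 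By item~\ref{item:pr:stretching:2}, $r(n',L)=n'+\Theta(L)$, so $f(n,G)=n+\Theta(\mathrm{size}(G))$, and the algorithm runs in polynomial time because Lemma~\ref{lem:pr:stretching} does. The main obstacle is purely bookkeeping: arranging the two sources of blow-up (the $\Theta(l_j)$ from each stretching and the $O(m)$ loss from the product) so that their sum stays linear in $\mathrm{size}(G)$; the key structural fact that makes this possible is the uniform lower bound $|x^{k}-1|\ge 2^{-\tau}$ coming from $|x|\ne 1$.
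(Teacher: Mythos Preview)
Your proposal is correct and follows essentially the same route as the paper: translate the hypothesis into $|x|\ne 1$, use Lemma~\ref{lem:pr:bounds} to precompute bounds on $|q|$, $|\gamma|$ and $||x|-1|$, invoke Lemma~\ref{lem:pr:stretching} with $k=\max_j l_j$ to control each factor $1/(x^{l_j}-1)$, and then telescope across the product of $m$ factors, arriving at $f(n,G)=r(n+O(m),\max_j l_j)+O(1)=n+\Theta(\mathrm{size}(G))$. One small inaccuracy: the bound $|\hat x^{l_j}-1|\ge 2^{-(\tau+1)}$ does not follow merely from $|x-\hat x|\le 2^{-(\tau+1)}$ (you need $|x^{l_j}-\hat x^{l_j}|$ small, which costs $\Theta(l_j)$ extra bits), but since your final choice of $f$ already imposes the much stronger precision $2^{-r(\cdot,L)}$, this does not affect the argument.
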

\begin{proof}
  Let $y = \gamma + 1$ and $x = 1 + q / \gamma$. Note that $\left| x \right| = 1$ if and only if $\left| \gamma + q \right| = \left| \gamma \right|$. By basic geometry, the latter statement is equivalent to $\gamma \in -q/2 + i q \mathbb{R}$. Hence, by hypothesis, $\left| x \right| \ne 1$. There are two cases:
  \begin{itemize}
    \item $\left| x \right| < 1$. Then for any positive integer $k$ we have
      \begin{equation*}
        \left| 1 + \frac{q}{x^k - 1} \right| \le 1 + \frac{\left| q \right|}{1 - \left| x \right|^k} \le 1 + \frac{\left| q \right|}{1 - \left| x \right|} =  1 + \frac{\left| q \right|}{\left| 1 - \left| x \right|\right|} .
      \end{equation*}
    \item $\left| x \right| > 1$. Then for any positive integer $k$ we have
      \begin{equation*}
        \left| 1 + \frac{q}{x^k - 1} \right| \le 1 + \frac{\left| q \right|}{\left| x \right|^k - 1} \le 1 + \frac{\left| q \right|}{\left| x \right| - 1} = 1 + \frac{\left| q \right|}{\left| 1 - \left| x \right| \right|} .
      \end{equation*}
  \end{itemize}
  Since $q, x \in \mathsf{P}_{\mathbb{C}}$, we can apply Lemma~\ref{lem:pr:bounds} along with the above bounds to compute a non-negative integer $t_x$ such that $\left| 1 + q / (x^k - 1) \right| \le 2^{t_x}$ for every positive integer $k$. Lemma~\ref{lem:pr:bounds} also allows us to compute non-negative integers $t_q$ and $t_\gamma$ such that $\left| q \right| \le 2^{t_q}$ and $2^{-t_\gamma} \le \left| \gamma \right|$.

  Let $n$ and $G = \Theta_{(l_1, \ldots, l_m)}$ be the inputs of our algorithm. Let $k = \max \{l_1, \ldots, l_m\}$. Since $\left| x \right| \ne 1$, we can compute $g(n, G) = r(n + (t_x+1)(m+1)+t_q, k)$, where $r$ is as in Lemma~\ref{lem:pr:stretching} for the polynomial-time computable number $x$. We compute $f(n, G) = g(n, G)  + t_q + 2 t_\gamma + 1$. We claim that $f$ satisfies the statement. In view of the properties of $r$, we have
  \begin{equation*}
    f(n, G) = g(n, G) + \Theta \left( 1 \right) = n + \Theta \left( \mathrm{size}(G) \right).
  \end{equation*}  
  We define $y_j = 1 + q / \left( x^{l_j} - 1 \right)$ for every $j \in \{1, \ldots, m\}$. Recall that in \eqref{eq:theta:implement} we argued that 
  \begin{equation*}
    w(G, s, t; q, \gamma) = \prod_{j = 1}^m y_j - 1.  
  \end{equation*}
  Let $\hat{\gamma}$ with $\left| \gamma - \hat{\gamma} \right| \le 2^{-f(n, G)}$. Let $\hat{y} = \hat{\gamma} + 1$ and $\hat{x} = 1 + q / (\hat{y}-1)$. Then  
  \begin{equation*}
    w(G, s, t; q, \hat{\gamma}) = \prod_{j = 1}^m \hat{y}_j - 1,  
  \end{equation*}
  where $\hat{y}_j = 1+ q / \left( \hat{x}^{l_j} - 1 \right)$. Since $\left| \gamma - \hat{\gamma} \right| \le 2^{-f(n, G)} \le 2^{-t_\gamma - 1}$, we have $\left| \hat{\gamma} \right| \ge \left| \gamma \right| - 2^{-t_\gamma - 1} \ge 2^{-t_\gamma - 1}$ and
  \begin{align*}
     \left| x - \hat{x} \right| = \left| \frac{q}{\gamma} - \frac{q}{\hat{\gamma}} \right| & = \left|q  \frac{\hat{\gamma}-\gamma}{\gamma \hat{\gamma}} \right|  
		\le  \left| q \right| \left| \hat{\gamma}-\gamma \right| 2^{2t_{\gamma}+1} \le 2^{t_q + 2t_{\gamma}+1 - f(n, G)} = 2^{-g(n, G)}.
  \end{align*}
  In light of the properties of $r$ (Lemma~\ref{lem:pr:stretching}) and the fact that $l_j \le k$, it follows that
  \begin{equation*}
    \left| y_j - \hat{y}_j \right| = \left| \frac{q}{x^{l_j}-1} - \frac{q}{\hat{x}^{l_j}-1} \right| \le \left| q \right| 2^{-n - (t_x+1)(m+1) - t_q} \le 2^{-n - (t_x+1)(m+1)}
  \end{equation*}
  for every $j \in \{1, \ldots, m\}$. Thus, we have $\left| \hat{y}_j \right| \le \left|y_j\right| + 1 \le 2^{t_x+1}$. We obtain
  \begin{align*}
    \left| \prod\nolimits_{j = 1}^m y_j -  \prod\nolimits_{j = 1}^m \hat{y}_j \right| 
                                                                                      & = \left| \sum\nolimits_{j = 1}^{m}  \left(y_j - \hat{y}_j \right) \prod\nolimits_{s = 1}^{j-1} \hat{y}_s \prod\nolimits_{s = j+1}^{m} y_s\right| 
																																											< \sum_{j = 1}^{m} \left| y_j - \hat{y}_j \right|  2^{t_x(m-1)+j-1} \\
                                                                                      & \le 2^{t_x(m-1)}\sum_{j = 1}^m 2^{-n - (t_x + 1)(m+1) + j-1}  \le  2^{-n-m-2} \sum_{j = 1}^{m} 2^j  < 2^{-n}.
  \end{align*}
  Equivalently, $\left| w(G, s, t; q, \gamma) - w(G, s, t; q, \hat{\gamma}) \right| < 2^{-n}$ as we wanted to prove.
\end{proof}

We now prove the main transitivity property of polynomial-time approximate shifts that we will use in our constructions.
\begin{lemma} \label{lem:approx-shifts:transitivity} 
  \statelemtransitivity
\end{lemma}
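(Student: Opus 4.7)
The plan is to compose the two given approximate shifts by substituting, into each edge of the theta graph produced by the second shift, a copy of the graph produced by the first shift, and then to control the propagation of approximation errors. Fix the target precision $n$. First, invoke the polynomial-time approximate theta shift from $(x_2,y_2)$ to $(x_3,y_3)$ with precision $n+1$ to obtain a theta graph $G=\Theta_{(l_1,\dots,l_m)}$, of size polynomial in $n$, together with terminals $s,t$, such that $G$ $(x_2,y_2)$-implements $(\tilde x_3,\tilde y_3)$ with $|y_3-\tilde y_3|\le 2^{-(n+1)}$; by definition this means $\tilde y_3-1 = w(G,s,t;q,\gamma_2)$, where $\gamma_2 = y_2-1$.

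Next, we use the hypothesis $y_2\notin\{1\}\cup(1-q/2+iq\mathbb{R})$, which is exactly $\gamma_2\notin\{0\}\cup(-q/2+iq\mathbb{R})$, to apply Lemma~\ref{lem:pc:essential} to $(q,\gamma_2)$: this gives a number $N:=f(n+1,G)=n+1+\Theta(\mathrm{size}(G))$, computable in polynomial time in $n$, with the property that any $\hat\gamma_2$ with $|\gamma_2-\hat\gamma_2|\le 2^{-N}$ satisfies
\begin{equation*}
\bigl|w(G,s,t;q,\gamma_2)-w(G,s,t;q,\hat\gamma_2)\bigr|\le 2^{-(n+1)}.
\end{equation*}
Note that $N$ is polynomial in $n$ because $\mathrm{size}(G)$ is. Now invoke the polynomial-time approximate shift from $(x_1,y_1)$ to $(x_2,y_2)$ with precision $N$ to obtain a graph $G'$, of size polynomial in $N$ (hence polynomial in $n$), that $(x_1,y_1)$-implements some $(\hat x_2,\hat y_2)$ with $|y_2-\hat y_2|\le 2^{-N}$. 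Writing $\hat\gamma_2=\hat y_2-1$, the choice of $N$ guarantees that $\bigl|w(G,s,t;q,\gamma_2)-w(G,s,t;q,\hat\gamma_2)\bigr|\le 2^{-(n+1)}$.

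Finally, let $H$ be the graph obtained from $G$ by replacing each edge with a copy of $G'$ (identifying the endpoints of the edge with the terminals of $G'$). By Lemma~\ref{lem:implementations:multivariate} applied iteratively, $H$ $(x_1,y_1)$-implements $(\hat x_3,\hat y_3)$ where $\hat y_3-1 = w(G,s,t;q,\hat\gamma_2)$. The triangle inequality then yields
\begin{equation*}
|y_3-\hat y_3|\le |y_3-\tilde y_3|+|\tilde y_3-\hat y_3|\le 2^{-(n+1)}+2^{-(n+1)}=2^{-n},
\end{equation*}
as required; the whole construction runs in time polynomial in $n$. For the moreover part, observe that any theta graph is series-parallel, and substituting a series-parallel graph for each edge of a series-parallel graph yields a series-parallel graph, so if $G'$ is series-parallel then $H$ is too. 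The main delicate point, already handled above via Lemma~\ref{lem:pc:essential}, is choosing the intermediate precision $N$ large enough (but still polynomial) to guarantee that the error $|\gamma_2-\hat\gamma_2|$ incurred by the first approximate shift does not blow up when propagated through the theta graph $G$; this is the only place where the non-degeneracy hypothesis on $y_2$ is used.
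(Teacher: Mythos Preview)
Your proof is correct and follows essentially the same approach as the paper: first call the theta shift to get a theta graph $G$ implementing an approximation of $y_3$, then use Lemma~\ref{lem:pc:essential} (enabled precisely by the hypothesis on $y_2$) to determine the intermediate precision $N=f(n+1,G)$, then call the first approximate shift at precision $N$ to get $G'$, and finally substitute $G'$ into every edge of $G$ and apply the triangle inequality. The variable names differ but the argument is the same, and your handling of the series-parallel ``moreover'' clause matches the paper's.
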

\begin{proof}
  Let $\gamma_j = y_j - 1$ for every $j \in \{1, 2, 3\}$. Let $n$ be a positive integer. We give an algorithm that constructs a graph $J_n$, in polynomial time in $n$, such that $J_n$ $\gamma_1$-implements $\hat{\gamma}_3$ with $\left| \gamma_3 - \hat{\gamma}_3\right| \le 2^{-n}$. This algorithm is as follows. First, we use the approximate theta shift from $(x_2, y_2)$ to $(x_3, y_3)$ to compute a theta graph $G_2$ with terminals $s_2$ and $t_2$ such that
  \begin{equation}\label{eq:t4t3tf4}
    \left| \gamma_3 - w(G_2, s_2, t_2; q, \gamma_2) \right| \le 2^{-n-1}.
  \end{equation}
  The size of $G_2$ is at most polynomial in $n$. In light of Lemma~\ref{lem:pc:essential}, we can compute, in polynomial time in $n$, a positive integer $f(n+1, G_2)$ such that for any $\hat{\gamma}_2$ with $\left| \gamma_2 - \hat{\gamma}_2 \right| \le 2^{-f(n+1, G_2)}$, we have 
  \begin{equation}\label{eq:t4t3tf4b}
    \left|w(G_2, s_2, t_2; q, \gamma_2) - w(G_2, s_2, t_2; q, \hat{\gamma}_2) \right| \le 2^{-n-1}.
  \end{equation}
  We also have $f(n+1, G_2) = n + \Theta \left( \mathrm{size}(G_2) \right)$, so $f(n+1, G_2)$ is bounded by a polynomial in $n$. Now we use the approximate shift from $(x_1, y_1)$ to $(x_2, y_2)$ to compute, in polynomial time in $n$, a graph $G_1$ such that $G_1$ $\gamma_1$-implements $\hat{\gamma}_2$ with $\left| \gamma_2 - \hat{\gamma}_2 \right| \le 2^{-f(n+1,G_2)}$. Combining \eqref{eq:t4t3tf4} and \eqref{eq:t4t3tf4b} with the triangle inequality, we obtain 
$\big| \gamma_3 - w(G_2, s_2, t_2; q, \hat{\gamma}_2) \big|\le 2^{-n}$.

  Finally, we construct a graph $J_n$ as a copy of $G_2$ where every edge is substituted by a copy of $G_1$ as in Lemma~\ref{lem:implementations:multivariate}. Since the sizes of $G_1$ and $G_2$ are polynomial in $n$, the size of $J_n$ also is polynomial in $n$. Recall that $G_2$ $\hat{\gamma}_2$-implements $\hat{\gamma}_3 =  w(G_2, s_2, t_2; q, \hat{\gamma}_2)$ and $G_1$ $\gamma_1$-implements $\hat{\gamma}_2$. Therefore, the graph $J_n$ $\gamma_1$-implements $\hat{\gamma}_3$, and $ \left| \gamma_3 - \hat{\gamma}_3 \right| \le 2^{-n}$, as we wanted to obtain. Finally, if the polynomial-time approximate shift from $(x_1, y_1)$ to $(x_2, y_2)$ is series-parallel, then the graphs $J_n$ are easily seen to be series-parallel, and the result follows.
\end{proof}

\section{Polynomial-time approximate shifts with complex weights} \label{sec:complex-implementations}

In this section we show how to implement approximations of real weights when the original weight is a non-real algebraic number. As a consequence of our results, for any real algebraic number $q$ with $q \ge 2$ and any pair of algebraic numbers $(x,y) \in \mathcal{H}_q$ with $y \not \in \mathbb{R}$ and $(x,y) \not \in \{(-i,i),(i-i), (\omega_3^2, \omega_3), (\omega_3, \omega_3^2)\}$, where $\omega_3 = \exp (2 \pi i / 3)$, there is a polynomial-time approximate shift from $(x,y)$ to any pair of real algebraic numbers $(x', y') \in \mathcal{H}_q$ (see Theorem~\ref{thm:approx-shifts}). Our approach to prove Theorem~\ref{thm:approx-shifts} is as follows. First, we show that there is $(x', y') \in \mathcal{H}_q$ with $y' \in (0,1)$ such that there is a polynomial-time approximate theta shift from $(x,y)$ to $(x',y')$ (see Lemma~\ref{lem:implement-approx:real-with-complex}). Since $x$ and $y$ are algebraic, it follows that $x'$ and $y'$ are polynomial-time computable. Secondly, we extend part of Lemma~\ref{lem:implement-approx:literature:1} to the case where the numbers involved are only known to be polynomial-time computable (see Lemma~\ref{lem:implement-approx:pc}). Finally, we use the transitivity property given in Lemma~\ref{lem:approx-shifts:transitivity} to combine both results in the proof of Theorem~\ref{thm:approx-shifts}.

\subsection{Computing with algebraic numbers} \label{sec:complex-implementations:algebraic}

In our proofs we use and develop some algorithms on algebraic numbers. We gather these algorithms in this section. We represent an algebraic number $z$ as its minimal polynomial $p$ and a rectangle $R$ of the complex plane such that $z$ is the only root of $p$ in $R$. We can compute the addition, subtraction, multiplication, division and conjugation of algebraic numbers in polynomial time in the length of their representations, see~\cite{Strzebonski1997} for details. As a consequence, we can also compute the real and imaginary parts of $z$ and the norm of $z$, which are algebraic numbers themselves, in polynomial time. Note that an algebraic number is $0$ if and only if its minimal polynomial is $x$, which can be easily checked in this representation. Hence, we can also determine in polynomial time whether two algebraic numbers $z_1$ and $z_2$ are equal by checking if $z_1 - z_2$ is $0$. 

When $z$ is a real algebraic number, we can simply represent it as its minimal polynomial $p$ and an interval $I$ with rational endpoints such that $z$ is the only root of $p$ in $I$. If we are given a real algebraic number $z$ with this representation, then we can approximate it as closely as we want by applying Sturm sequences and binary search~\cite{Emiris2008}. In fact, for $z_1$ and $z_2$ real algebraic numbers, Sturm sequences also allow us to check whether $z_1 \ge z_2$ in time polynomial  in the length of the representations of $z_1$ and $z_2$. See~\cite{Emiris2008} for more details and complexity analysis.

A \emph{root of unity} is a complex number $z$ such that $z^k = 1$ for some positive integer $k$. The smallest positive integer $n$ such that $z^n = 1$ is the \emph{order of $z$}. Note that roots of unity are algebraic numbers.  The roots of unity of order $n$ share the same minimal polynomial,   known as the $n$-th cyclotomic polynomial, whose degree is $\varphi(n)$, the Euler phi function. We can determine whether an algebraic number $z$ is a root of unity by checking whether its minimal polynomial is cyclotomic, see~\cite{Bradford1989} for a polynomial-time algorithm. If $z$ is a root of unity, then we can easily compute its order from its representation; we compute the smallest $n$ such that the minimal polynomial of $z$ divides $z^n - 1$. This computation runs in polynomial time in the length of the representation of $z$ as a consequence of the elementary bound $\varphi(n) \ge \sqrt{n / 2}$.

Another operation that we can perform in polynomial time is checking if the argument of an algebraic number is in a fixed interval.

\begin{lemma} \label{lem:algebraic:argument} Let $a, b \in \mathbb{Q} \cap [0,1]$ with $a \le b$. Then for any algebraic number $z$ we can check whether $\mathrm{Arg}(z) \in [2 \pi a, 2\pi b]$ in time polynomial in the length of the representation of $z$.
\end{lemma}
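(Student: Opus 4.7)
The plan is to reduce the membership test $\mathrm{Arg}(z) \in [2\pi a, 2\pi b]$ to a constant number of sign and equality comparisons on algebraic numbers derived from $z$, each of which runs in polynomial time in the bit-length of the representation of $z$ using the subroutines recalled in Section~\ref{sec:complex-implementations:algebraic}. Since $a, b \in \mathbb{Q}$ are viewed as fixed parameters (not part of the input), the roots of unity $\zeta_a := e^{2\pi i a}$ and $\zeta_b := e^{2\pi i b}$ are algebraic and their representations (a cyclotomic minimal polynomial together with a separating rectangle) can be precomputed once and for all, so this preprocessing does not enter the polynomial-time bound.

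The main geometric observation is that, writing $z \bar{\zeta}_a = |z|\, e^{i(\mathrm{Arg}(z) - 2\pi a)}$, the condition $\mathrm{Im}(z \bar{\zeta}_a) \ge 0$ is equivalent to $\mathrm{Arg}(z)$ lying in the closed semicircle $[2\pi a, 2\pi a + \pi]$ on the circle $\mathbb{R}/2\pi\mathbb{Z}$, and similarly $\mathrm{Im}(z \bar{\zeta}_b) \le 0$ captures the closed semicircle $[2\pi b - \pi, 2\pi b]$. When $0 < b - a \le 1/2$ the intersection of these two semicircles on the circle is the single arc $[2\pi a, 2\pi b]$, so the conjunction of the two sign tests gives exactly the desired membership criterion. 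When $b - a > 1/2$ we first bisect $[a, b]$ at its midpoint; since $b - a \le 1$ a single bisection suffices to reduce to subintervals of width at most $1/2$, and the final answer is the logical OR of the two sub-checks.

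The main obstacle is handling two boundary situations where this standard test gives the wrong answer. First, when $a = b$ the two semicircles coincide and their intersection contains the antipodal point $2\pi a + \pi$ in addition to the desired ray; we replace the test in this case by $\mathrm{Im}(z \bar{\zeta}_a) = 0$ together with $\mathrm{Re}(z \bar{\zeta}_a) \ge 0$, which exactly expresses that $z$ is a nonnegative real multiple of $\zeta_a$. Second, when $b = 1$ (so the interval endpoint $2\pi b = 2\pi$ is identified with $0$ on the circle, while the paper's convention places $\mathrm{Arg}(z) \in [0, 2\pi)$) the standard test spuriously includes the positive real ray $\mathrm{Arg}(z) = 0$; in this sub-case we augment the test with the extra sign condition $\neg(\mathrm{Im}(z) = 0 \wedge \mathrm{Re}(z) > 0)$, which strips out precisely this ray. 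Once this case analysis is in place, all required operations — multiplications, conjugations, extraction of real and imaginary parts, and sign and equality comparisons on algebraic numbers — are polynomial in the bit-length of the representation of $z$ by the results of Section~\ref{sec:complex-implementations:algebraic}, yielding the claimed bound.
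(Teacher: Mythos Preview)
Your proof is correct and follows essentially the same approach as the paper's: rotate $z$ by a fixed root of unity and reduce the argument-membership test to a bounded number of sign checks on real and imaginary parts of algebraic numbers. The paper subdivides $[2\pi a,2\pi b]$ into arcs of length at most $\pi/2$ and uses two \emph{quadrant} tests (each checking both $\mathrm{Re}\ge 0$ and $\mathrm{Im}\ge 0$ of a suitable rotation of $z$), whereas you subdivide into arcs of length at most $\pi$ and use two \emph{half-plane} tests (each a single sign check on an imaginary part). A small trade-off: the paper's quarter-arcs meet in a single point when $a=b$, so that degenerate case is handled automatically, while you treat it by a separate clause; conversely, you are more careful than the paper about the $b=1$ wraparound, which the paper elides under ``the other cases are analogous.''
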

\begin{proof}
  We can split the interval $[2 \pi a, 2\pi b]$ into intervals of length at most $\pi / 2$ and check if $\mathrm{Arg}(z)$ belongs to any of those intervals. Hence, let us assume for the sake of simplicity  that $[2 \pi a, 2\pi b] \subseteq [0, \pi /2]$. The other cases are analogous. Note that $e^{2 \pi i a}$ and $e^{2 \pi i b}$ are roots of unity and, in particular, algebraic. Thus, we can compute $z_a = z e^{- 2 \pi i a}$ and $z_b = z e^{2 \pi i (1/4-b)}$. We have $\mathrm{Arg}(z_a) \in [0,\pi/2]$ if and only if $\mathrm{Arg}(z) \in [2 \pi a, \pi / 2 + 2 \pi a]$, and $\mathrm{Arg}(z_b) \in [0,\pi/2]$ if and only if $\mathrm{Arg}(z) \in [-\pi/2 +2\pi b, 2 \pi b]$.  We conclude that $\mathrm{Arg}(z) \in [2 \pi a, 2\pi b]$ if and only if $\mathrm{Arg}(z_a) \in [0,\pi/2]$ and $\mathrm{Arg}(z_b) \in [0,\pi/2]$.  Finally, note that, for any algebraic number $y$, since $\mathrm{Re}(y)$ and $\mathrm{Im}(y)$ are algebraic, we can determine if $\mathrm{Arg}(y) \in [0, \pi/2]$ or not by checking the inequalities $\mathrm{Re}(y) \ge 0$ and $\mathrm{Im}(y) \ge 0$.
\end{proof}

In the rest of this section we show how to efficiently compute a sequence $\sigma(n)$ such that $\mathrm{Arg} ( z^{\sigma(n)} ) \in [2\pi a, 2\pi b]$ for every $n$. We will use the following well-known result, see, e.g.,~\cite[Section 1.2]{brin_stuck_2002}: if $z \in \mathbb{C}$ is not a root of unity and $\left| z \right| = 1$, then $\{z^j : j \in \mathbb{N}\}$ is dense in the unit circle. 

\begin{lemma} \label{lem:sigma-n:1}
  Let $a,b \in \mathbb{Q} \cap [0,1]$ with $a < b$. Let $z$ be an algebraic number such that $|z| = 1$ and $z$ is not a root of unity. Then there exists a sequence of positive integers $\{\sigma(n)\}$ and a positive integer $k$ such that such that:
  \begin{enumerate}
  \item $k$ can be computed from $z$;
  \item $\sigma(n)$ can be computed in polynomial time in $n$;
  \item $n \le \sigma(n) \le n+k-1$ for every positive integer $n$;
  \item $\mathrm{Arg} ( z^{\sigma(n)} ) \in [2a\pi, 2b\pi] + 2 \pi \mathbb{Z}$ for every positive integer $n$.
  \end{enumerate}
\end{lemma}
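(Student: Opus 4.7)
The plan is to use density of $\{z^j : j \in \mathbb{N}\}$ in the unit circle, which holds by hypothesis since $z$ is not a root of unity and $|z|=1$ (as recalled in the paper just before the statement). First I would show that there exists a positive integer $k$ such that the sorted arguments of $z^0, z^1, \ldots, z^{k-1}$, viewed as points on the unit circle, have \emph{maximum cyclic gap strictly less than} $2\pi(b-a)$. Once this is established, for every $n$ the rotated multiset $\{\mathrm{Arg}(z^{n+j}) : 0 \le j \le k-1\}$ has exactly the same cyclic gap structure, hence must intersect the arc $[2a\pi,2b\pi]$ (of length $2\pi(b-a)$). This produces a valid $\sigma(n) \in \{n, n+1, \ldots, n+k-1\}$.

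To compute $k$ from $z$, I would iterate $k = 1, 2, \ldots$; at each step I form the algebraic numbers $z^0, \ldots, z^{k-1}$, sort their arguments using $O(k^2)$ calls to Lemma~\ref{lem:algebraic:argument} (comparing pairs by rotating so the pairwise test reduces to an interval test), and check whether the largest cyclic gap is less than $2\pi(b-a)$. Density ensures this process terminates. Moreover, no cyclic gap can be \emph{equal} to $2\pi(b-a)$: such equality would force $z^{j-i} = e^{2\pi i(b-a)}$ for some $i \ne j$, making $z^{j-i}$ (and therefore $z$) a root of unity, contradicting the hypothesis. This justifies using a strict inequality in the test and guarantees the termination criterion is decidable using only rational endpoints, as required by Lemma~\ref{lem:algebraic:argument}.

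For the algorithm computing $\sigma(n)$ on input $n$, I would compute $z^{n}, z^{n+1}, \ldots, z^{n+k-1}$ by working in the number field $\mathbb{Q}(z)$ and using repeated squaring for $z^n$. For each $j \in \{0, 1, \ldots, k-1\}$ I would invoke Lemma~\ref{lem:algebraic:argument} to test whether $\mathrm{Arg}(z^{n+j}) \in [2a\pi, 2b\pi]$, returning $\sigma(n) = n + j^*$ for the first $j^*$ that passes. Existence of $j^*$ is guaranteed by the density property of $k$, and since $k$ depends only on $z$ and not on $n$, this amounts to a constant number of invocations of routines that are polynomial in the representation size of $z^{n+j}$.

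The main obstacle I anticipate is ensuring that the representation size of $z^{n+j}$ stays polynomial in $n$, so that each call to Lemma~\ref{lem:algebraic:argument} (and each comparison, multiplication, etc.) runs in polynomial time in $n$. This should follow from standard number-field arithmetic: if $p$ is the minimal polynomial of $z$, repeated squaring modulo $p$ keeps the degree bounded by $\deg(p)-1$ while the rational coefficients grow in bit-length polynomially with $n$; the isolating rectangle for $z^{n+j}$ can then be updated via the root-isolation routines referenced in~\cite{Strzebonski1997}. The remaining work is bookkeeping, and the stated bounds $n \le \sigma(n) \le n+k-1$ are immediate from the construction.
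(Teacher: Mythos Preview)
Your approach is correct and differs from the paper's mainly in how $k$ is obtained. The paper does not analyse cyclic gaps at all: it first finds, by density, a single exponent $q$ with $\tau := \mathrm{Arg}(z^q) \in (0, 2\pi(b-a))$, then sets $t = \lceil 2\pi/\tau \rceil$ and $k = tq$, arguing that from any starting angle one of the $t$ increments by $\tau$ must land in the target arc. The computation of $\sigma(n)$ itself is then identical to yours --- scan $z^n, z^{n+1}, \ldots$ and apply Lemma~\ref{lem:algebraic:argument} until a hit. Your gap argument is a bit more geometric and, as a bonus, returns the smallest admissible $k$; the paper's route is more direct and sidesteps the (correct but extra) observation that no gap can equal $2\pi(b-a)$.

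One minor imprecision worth fixing: ``sorting the arguments using $O(k^2)$ calls to Lemma~\ref{lem:algebraic:argument}'' does not quite work as stated, since that lemma tests membership in an arc with \emph{rational} endpoints, not the relative order of two irrational arguments. It is cleaner to skip sorting entirely and test the gap condition directly: the maximum cyclic gap among $z^0,\ldots,z^{k-1}$ is $<L$ iff for every $i$ there is some $j\ne i$ with $\mathrm{Arg}(z^{j-i})\in(0,L)$, and since you have shown neither endpoint can be attained, this is the closed-interval test $[0,2\pi(b-a)]$ that Lemma~\ref{lem:algebraic:argument} provides. Your concern about the representation size of $z^{n}$ is well placed and applies equally to the paper's proof; the paper simply relies on the cited algebraic-number arithmetic references rather than spelling out the repeated-squaring bound.
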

\begin{proof}
  Our algorithm to compute $\sigma(n)$ is as follows. Set $\sigma(0) = 0$. We compute $\sigma(n)$ as the smallest integer such that $n \le \sigma(n)$ and $\mathrm{Arg} ( z^{\sigma(n)} ) \in [2 a \pi, 2 b \pi]$. We can check whether $\mathrm{Arg} ( z^{\sigma(n)} ) \in [2 a \pi, 2 b \pi]$ or not by applying the procedure given in Lemma~\ref{lem:algebraic:argument}.
  
  We show that $\sigma(n)$ is well-defined. Let $\theta = \mathrm{Arg} (z)$. Since $z$ is not a root of unity, $\{z^j : j \in \mathbb{N}\}$ is dense in the unit circle, as we have discussed in the previous paragraph. Therefore, there is $q \in \mathbb{N}$ such that $\mathrm{Arg} ( z^q ) \in [0, 2(b-a)\pi]$.
  Note that we can compute $q$ in constant time with the help of Lemma~\ref{lem:algebraic:argument}. Let $\tau = \mathrm{Arg}(z^q)$. Since $z$ is not a root of unity, we find that $\tau \ne 0$. Let $t = \lceil 2 \pi / \tau \rceil$. Since $t$ is the smallest positive integer such that $t \tau \ge 2 \pi$, $t$ can be computed by sequentially determining which of the following intervals contains the argument of $z^{qj}$: $(0, \pi /2)$, $(\pi/2, \pi)$, $(\pi, 3\pi/2)$ or $(3\pi/2, 2 \pi)$. Hence, we can compute $k = t q$. For each positive integer $n$, since $t \tau \ge 2\pi$ and $\tau < 2(b-a) \pi$, there is $p_n \in \{0, \ldots, t-1\}$ such that $n \theta + p_n \tau \in [2a\pi, 2b\pi] + 2\pi \mathbb{Z}$. The integer $m_n = n + p_n q$ satisfies $n \le m_n \le n + k -1$ and
  \begin{equation*}
    m_n \theta \in n \theta + p_n \tau + 2 \pi \mathbb{Z} \subseteq [2a\pi, 2b\pi] + 2 \pi \mathbb{Z}.
  \end{equation*}
  We conclude that $\sigma(n)$ is well-defined and $n \le \sigma(n) \le m_n \le n + k -1$, so our algorithm computes $\sigma(n)$ in polynomial time in $n$.
\end{proof}

\begin{lemma} \label{lem:sigma-n:2} Let $z$ be a root of unity of order $k$ with $k \not \in \{1, 2, 4\}$. Then there exists a sequence of positive integers $\left\{\sigma(n)\right\}$ and an integer $l$ such that:
  \begin{enumerate}
  \item $\sigma(n)$ can be computed in polynomial time in $n$;
  \item $n \le \sigma(n) \le n+k-1$ for every positive integer $n$;
  \item $z^{\sigma(n)} = e^{2 \pi i l / k}$ for every positive integer $n$;
  \item $\pi < 2 \pi l / k < 3 \pi / 2$.
  \end{enumerate}
\end{lemma}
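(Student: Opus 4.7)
The plan is to exploit the fact that $z$, being a primitive $k$-th root of unity, has a finite orbit under iteration: the sequence $(z^j)_{j \in \mathbb{Z}}$ cycles through all $k$-th roots of unity with period exactly $k$. The statement then reduces to (i) identifying one fixed $k$-th root of unity $e^{2 \pi i l / k}$ whose argument lies in the desired arc $(\pi, 3\pi/2)$, and (ii) choosing $\sigma(n) \in [n, n+k-1]$ to land on the unique residue class modulo $k$ for which $z^{\sigma(n)} = e^{2 \pi i l / k}$.

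First I would establish that the hypothesis $k \notin \{1, 2, 4\}$ is precisely what guarantees the existence of an integer $l$ in the open interval $(k/2,\, 3k/4)$. The condition $\pi < 2 \pi l / k < 3 \pi / 2$ is equivalent to $l \in (k/2,\, 3k/4)$. The small cases are handled by inspection: $l = 2, 3, 4$ work for $k = 3, 5, 6$ respectively, and for $k \geq 7$ the open interval has length $k/4 \geq 7/4 > 1$, so it must contain an integer. The excluded cases $k = 1, 2, 4$ yield the integer-free intervals $(1/2, 3/4)$, $(1, 3/2)$, $(2, 3)$, confirming that the hypothesis is tight.

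Given such an $l$, since $z$ has order $k$, the powers $z^0, z^1, \dots, z^{k-1}$ are exactly the $k$ distinct $k$-th roots of unity, so there is a unique $j_0 \in \{0, 1, \dots, k-1\}$ with $z^{j_0} = e^{2 \pi i l / k}$. I would then define
\begin{equation*}
\sigma(n) = n + \bigl( (j_0 - n) \bmod k \bigr),
\end{equation*}
the unique integer in $[n, n+k-1]$ congruent to $j_0$ modulo $k$. Items~2, 3 and~4 of the conclusion follow immediately, since $z^{\sigma(n)} = z^{j_0} = e^{2 \pi i l / k}$.

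For the computational part, the values $k$, $l$ and $j_0$ depend only on $z$ and can be obtained by a one-time precomputation: $k$ via the order-of-root-of-unity procedure recalled in Section~\ref{sec:complex-implementations:algebraic}, $l$ by searching the $O(k)$ integers in $(k/2, 3k/4)$, and $j_0$ by evaluating $z, z^2, \dots, z^{k-1}$ and testing equality with $e^{2 \pi i l / k}$ using the polynomial-time equality test for algebraic numbers. Once these constants are fixed, computing $\sigma(n)$ for a new $n$ reduces to a single modular reduction, which is polynomial in the size of $n$. The only genuinely non-routine step is the existence argument for $l$ in the first paragraph, which pins down exactly the hypothesis stated in the lemma; everything else is arithmetic modulo $k$.
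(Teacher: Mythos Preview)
Your proposal is correct and follows essentially the same approach as the paper: find an integer $l \in (k/2, 3k/4)$, identify the residue class $j_0 \pmod k$ with $z^{j_0} = e^{2\pi i l/k}$, and take $\sigma(n)$ as the unique representative of that class in $[n, n+k-1]$. The only cosmetic differences are that the paper treats $k=3$ by a separate explicit computation rather than folding it into the general argument, and that the paper obtains $j_0$ directly as $t_1 l$ via the extended Euclidean algorithm (writing $t_1 j + t_2 k = 1$ where $z = e^{2\pi i j/k}$) rather than by searching through the powers of $z$.
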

 
\begin{proof}
  Let $\theta = \mathrm{Arg}(z)$. Since $\theta \ne 0$, we can write $\theta = 2 \pi j / k$ for some integer $j$ coprime with $k$. We consider two cases.

\textbf{Case I:} $\mathbf{k = 3}$. Then either we have $\theta = 2 \pi / 3$ and we compute $\sigma(n) \in \{n, n+1, n+2\}$ with $\sigma(n) \equiv 2 \pmod{ 3 }$, or we have $\theta = 4 \pi / 3$ and we compute $\sigma(n) \in \{n, n+1, n+2\}$ with $\sigma(n) \equiv 1 \pmod{ 3 }$. In any case, we have $\sigma(n)\theta \in 4 \pi / 3 + 2 \pi \mathbb{Z}$, that is, $z^{\sigma(n)} = e^{4 \pi i / 3}$ for any positive integer $n$.

\textbf{Case II:} $\mathbf{k \geq  5}$. Then there is an integer $l$ such that $k/2 < l < 3k/4$, that is, $2 \pi l / k \in (\pi, 3\pi/2)$. The Euclidean algorithm gives two integers $t_1, t_2$ such that $t_1 j + t_2 k = 1$. We compute $\sigma(n) \in \{n, \ldots, n+k-1\}$ such that $\sigma(n) \equiv t_1 l \pmod{ k }$. We can write $\sigma(n) = t_1l + q_n k$ for some integer $q_n$. We have
    \begin{equation*}
        \sigma(n) \theta  = t_1 l \frac{2 \pi j}{k} + q_n 2 \pi j = l (1 - t_2 k) \frac{2 \pi}{k} +
        q_n 2 \pi j = \frac{2 \pi l }{k} +(q_n j - l t_2) 2 \pi
    \end{equation*}
 and, equivalently, $z^{\sigma(n)} = e^{2 \pi i l / k}$ for every positive integer $n$. 
\end{proof}

\begin{corollary} \label{cor:sigma-n}
  Let $z$ be an algebraic number such that $z \not \in \mathbb{R} \cup i \mathbb{R}$. Let $\theta = \mathrm{Arg} (z)$. Then there exists a sequence of positive integers $\{\sigma(n)\}$, a positive integer $k$ and a positive rational number $C$ such that such that:
  \begin{enumerate}
  \item $k$ and $C$ can be computed from $z$;
  \item $\sigma(n)$ can be computed in polynomial time in $n$;
  \item $n \le \sigma(n) \le n+k-1$ for every positive integer $n$;
  \item $\cos(\sigma(n) \theta) \le - C$ and $\sin(\sigma(n) \theta) \le - C$ for every positive integer $n$.
  \end{enumerate}
\end{corollary}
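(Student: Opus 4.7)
The plan is to reduce to Lemmas~\ref{lem:sigma-n:1} and~\ref{lem:sigma-n:2} via a case split on whether $\theta/(2\pi)$ is rational. First I would form $w = z/|z|$: since $|z|^2 = z\bar{z}$ is algebraic and non-negative, $|z|$ is a positive real algebraic number, and hence $w$ is algebraic with $|w|=1$ and $\mathrm{Arg}(w)=\theta$. The hypothesis $z \notin \mathbb{R} \cup i\mathbb{R}$ forces $w \notin \{1,-1,i,-i\}$. I would then decide, via the cyclotomic polynomial test from Section~\ref{sec:complex-implementations:algebraic} (using~\cite{Bradford1989}), whether $w$ is a root of unity.

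If $w$ is not a root of unity, I would fix $a = 9/16$ and $b = 11/16$ so that $[2\pi a, 2\pi b] = [9\pi/8, 11\pi/8] \subset (\pi, 3\pi/2)$ and apply Lemma~\ref{lem:sigma-n:1} to $w$ with these $a$ and $b$. This yields a sequence $\{\sigma(n)\}$ and an integer $k$ satisfying the first three items of the corollary and with $\sigma(n)\theta \in [9\pi/8, 11\pi/8] + 2\pi\mathbb{Z}$; on that interval both $\cos$ and $\sin$ are strictly negative with maximum value $-\sin(\pi/8) < -1/3$, so $C = 1/3$ suffices for item 4. If instead $w$ is a root of unity, its order $m$ is computable from the minimal polynomial, and the exclusion $w \notin \{\pm 1,\pm i\}$ forces $m \notin \{1,2,4\}$. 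Applying Lemma~\ref{lem:sigma-n:2} to $w$ yields $\{\sigma(n)\}$ and an integer $l$ with $w^{\sigma(n)} = e^{2 \pi i l/m}$ and $\pi < 2\pi l/m < 3\pi/2$. Since $z^{\sigma(n)} = |z|^{\sigma(n)} w^{\sigma(n)}$ and $|z|>0$, we have $\sigma(n)\theta \equiv 2\pi l/m \pmod{2\pi}$, which makes both $\cos(\sigma(n)\theta)$ and $\sin(\sigma(n)\theta)$ equal to fixed negative values independent of $n$.

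The main point requiring care is producing the rational $C$ in the root-of-unity case, because $\cos(2\pi l/m)$ and $\sin(2\pi l/m)$ depend on $z$ and are algebraic but generally irrational. Both are the real and imaginary parts of the algebraic number $e^{2\pi i l/m} = w^{\sigma(1)}$, which is already in hand, so I would obtain their minimal polynomials and use Sturm sequences (as described in Section~\ref{sec:complex-implementations:algebraic}) to compute a common positive rational lower bound on their absolute values, setting $C$ to be half of that bound. This is routine, and the resulting $C$ depends only on $z$, as required.
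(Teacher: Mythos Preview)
Your proposal is correct and follows essentially the same approach as the paper's proof: normalise to $w=z/|z|$, split on whether $w$ is a root of unity, and invoke Lemma~\ref{lem:sigma-n:1} or Lemma~\ref{lem:sigma-n:2} accordingly. The paper uses the interval $[7\pi/6,4\pi/3]$ (i.e.\ $a=7/12$, $b=8/12$) rather than your $[9\pi/8,11\pi/8]$, and is slightly less explicit about extracting the rational~$C$ in the root-of-unity case, but these differences are purely cosmetic.
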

\begin{proof}
  We may assume that $|z| = 1$ since, otherwise, we can compute the algebraic number $z / |z|$ and apply the following algorithm to this quantity. We invoke either Lemma~\ref{lem:sigma-n:1} for  $a=7/12$ and $b = 8/12$ or Lemma~\ref{lem:sigma-n:2}, depending on whether $z$ is a root of unity or not, which can be checked as explained at the beginning of this section. In any case, we find a sequence $\sigma$ and a positive integer $k$ that satisfy the first three assertions announced in the statement. In the non-root of unity case, we have $\cos (\sigma(n) \theta) \le \cos (2 \pi b) < 0$ and $\sin (\sigma(n) \theta) \le \sin (2 \pi a) < 0$ for every positive integer $n$. In the root of unity case, the sequences $\cos (\sigma(n) \theta)$ and $\sin (\sigma(n) \theta)$ are negative constants. In any case, we can compute a positive rational number $C$ such that  $\cos (\sigma(n) \theta) \le -C$ and $\sin (\sigma(n) \theta) \le - C$ for every positive integer $n$.
\end{proof}

\begin{corollary} \label{cor:algebraic:large-real-part}
  Let $z$ be an algebraic number with $\left| z \right| > 1$. Then for any $x \in \mathbb{Q}$ with $x > 0$, we can compute $n$ such that $\mathrm{Re}\left(z^n\right) \ge x$. Moreover, if $z \not \in [0, \infty)$, then we can compute $m$ such that $\mathrm{Re}\left(z^m\right) \le - x$.
\end{corollary}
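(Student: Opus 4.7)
The plan is to split into cases based on where $z$ lies in the complex plane. Write $z = r e^{i\theta}$ with $r = |z| > 1$ and $\theta = \mathrm{Arg}(z)$. Since $r$ is a real algebraic number strictly greater than $1$, Lemma~\ref{lem:pr:bounds} provides a positive rational lower bound on $r - 1$, which gives effective control over the growth of $r^n$; both parts of the statement then amount to finding exponents for which $r^n \cos(n\theta)$ exceeds $x$ with the correct sign.

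If $z$ lies on one of the axes the proof is immediate. For $z \in (0,\infty)$ we have $\mathrm{Re}(z^n) = r^n$, and any $n$ with $r^n \ge x$ works. For $z \in (-\infty, 0)$, $\mathrm{Re}(z^n) = (-1)^n r^n$, so one picks $n$ of the appropriate parity. For $z \in i\mathbb{R} \setminus \{0\}$, $\mathrm{Re}(z^n) = r^n$ when $n \equiv 0 \pmod 4$ and $\mathrm{Re}(z^n) = -r^n$ when $n \equiv 2 \pmod 4$. In each sub-case the required $n$ is found by iterating until $r^n$ is large enough, which terminates since $r > 1$.

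For the generic case $z \notin \mathbb{R} \cup i\mathbb{R}$, the negative part of the statement is immediate from Corollary~\ref{cor:sigma-n}: it yields a computable sequence $\sigma(n)$ and a positive rational $C$ with $\cos(\sigma(n)\theta) \le -C$, whence $\mathrm{Re}(z^{\sigma(n)}) = r^{\sigma(n)} \cos(\sigma(n)\theta) \le -C r^{\sigma(n)}$. Taking $n$ large enough that $C r^{\sigma(n)} \ge x$ then yields $m = \sigma(n)$ with $\mathrm{Re}(z^m) \le -x$.

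The remaining task, producing a positive real part in the generic case, is handled by a direct variant of Corollary~\ref{cor:sigma-n} asserting the existence of a computable sequence $\tau(n)$ and a positive rational $C'$ with $\cos(\tau(n)\theta) \ge C'$. The argument is formally identical to Corollary~\ref{cor:sigma-n} but targets an arc on which cosine is bounded below by a positive constant. In the non-root-of-unity branch we invoke Lemma~\ref{lem:sigma-n:1} with $a = 0$ and $b = 1/6$, so that $\tau(n)\theta \in [0, \pi/3]$ modulo $2\pi$ and $\cos(\tau(n)\theta) \ge 1/2$. In the root-of-unity branch we re-run the proof of Lemma~\ref{lem:sigma-n:2} taking $l = 0$, yielding $\tau(n) \equiv 0 \pmod{k}$ and $\cos(\tau(n)\theta) = 1$. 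The only orders not covered by this variant of Lemma~\ref{lem:sigma-n:2} are $k \in \{1, 2, 4\}$, but these correspond precisely to $z/|z| \in \{1, -1, i, -i\}$, that is $z \in \mathbb{R} \cup i\mathbb{R}$, which is already dispatched in the axial step. The main (mild) obstacle is thus simply bookkeeping between the axial and generic subcases; once that is in place, everything else is routine.
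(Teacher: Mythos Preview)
Your proof is correct and follows essentially the same route as the paper: split according to whether $z/|z|$ is a root of unity, use Lemma~\ref{lem:sigma-n:1} on a suitable arc in the non-root-of-unity case, and handle the root-of-unity case directly. The only notable difference is that for the positive part in the root-of-unity case the paper simply observes that $z^k \in (1,\infty)$ for $k$ the order of $z/|z|$ (which already covers all the axial cases $k\in\{1,2,4\}$), whereas you frame the same fact as ``re-running Lemma~\ref{lem:sigma-n:2} with $l=0$'' and treat the axial cases separately; the paper's phrasing is a bit more direct but the content is identical.
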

\begin{proof}
  Let $z = R e^{i \theta}$ for some $\theta \in [0, 2\pi)$ and $R > 1$. We determine if $z / |z| = e^{i \theta}$ is a root of unity or not, and compute its order as explained before. If $e^{i \theta}$ is a root of unity of order $k$, then $z^k \in (1, \infty)$, so computing $n$ is straightforward. If $e^{i\theta}$ is not a root of unity, then, in view of Lemma~\ref{lem:sigma-n:2} for $a = 1/12$ and $b = 1/ 6$, we can compute a sequence $\sigma$ such that $\sigma(j) \ge j$ and $\sigma(j) \theta \in [\pi/6, \pi/3] + 2 \pi \mathbb{Z}$ for every positive integer $j$. We find that $\mathrm{Re} ( z^{\sigma(j)} ) \ge R^{\sigma(j)}\cos ( \pi / 3 ) \ge R^j / 2$. Hence, we can compute $j$ large enough such that $\mathrm{Re}( z^{\sigma(j)} ) \ge x$ and we choose $n = \sigma(j)$.

  Now let us assume that $z \not \in [0, \infty)$. Note that $e^{i \theta} \ne 1$. If $e^{i \theta}$ is a root of unity of order $2$ or $4$, then the result is trivial. If $\theta \not \in \{0, \pi/2, \pi, 3\pi/2\}$, then, by invoking Corollary~\ref{cor:sigma-n}, we compute $\sigma$ and a positive rational number $C$ such that  $\sigma(j) \ge j$ and $\cos (\sigma(j) \theta) \le -C$ for every positive integer $j$. We find that $\mathrm{Re} ( z^{\sigma(j)} ) \le -CR^{\sigma(j)} \le -C R^{j}$. Hence, we can compute $j$ large enough such that $\mathrm{Re} ( z^{\sigma(j)} ) \le -x$ and we choose $m = \sigma(j)$.
\end{proof}

\subsection{Some shifts for non-real algebraic numbers}

In this section we gather some of the shifts that we use in our proofs. Let $q$ be a real algebraic number with $q \ge 2$ and let $(x,y) \in \mathcal{H}_q$ be a pair of algebraic numbers. We are interested in computing a shift from $(x,y)$ to $(x_1, y_1) \in \mathcal{H}_q$ with $x_1 \not \in \mathbb{R}$ and $\left| x_1 \right| > 1$ whenever possible. The existence of this shift turns out to be closely related to the hardness of approximating $\left| Z_{\text{Tutte}}(G; q, \gamma) \right|$ with $\gamma = y-1$; when we can compute such a shift the approximation problem is $\# \mathsf{P}$-hard, as we will see in Section~\ref{sec:hardness}. Recall that one can evaluate the Tutte polynomial of a graph in polynomial time at any of the points in $\{(-i, i), (i, -i), (\omega_3^2, \omega_3), (\omega_3, \omega_3^2)\}$, where $\omega_3 = \exp (2 \pi i / 3)$ (see Section~\ref{sec:pre:hardness}). These are the points for which our results fail to construct the desired shift.

The results of this section involve computations that might not run in polynomial time in the length of the representation of the algebraic numbers $q$, $x$ and $y$ involved. However, when applying these results, the numbers $q$, $x$ and $y$ are constants and, hence, this will not affect the complexity of our algorithms.

\begin{remark} \label{rem:norm-x}
  Let $q$ be a positive real number and let $(x,y) \in \mathcal{H}_q$. From $(x-1)(y-1) = q$ it follows that $x$ is real if and only if $y$ is real. Note that $x = 1 + q/(y-1)= (y+q-1)/(y-1)$. As noted in the proof of Lemma~\ref{lem:pc:essential}, we find that $\left| x \right| = 1$ if and only if $\left| y+q-1 \right| = \left| y-1 \right|$, that is, $y$ is on the line $1-q/2+i \mathbb{R}$. Moreover, $\left| x \right| > 1$ if and only if $\mathrm{Re}( y ) > 1-q/2$. Note that when $q \ge 2$ and $\mathrm{Re}( y ) > 0$, we have $\mathrm{Re}( y ) > 1-q/2$ and, thus, $\left| x \right| > 1$. These observations will be applied several times in this section.
\end{remark}

\begin{lemma} \label{lem:norm-shift}
  Let $q$ be a real algebraic number with $q \ge 2$. Let $x$ and $y$ be algebraic numbers such that $(x, y) \in \mathcal{H}_q$ and $\mathrm{Arg}(y) \not \in \{0, \pi/2, 2\pi/3, \pi, 4\pi/3, 3\pi/2\}$. Then we can compute a theta graph $J$ that $(x,y)$-implements $(x_1, y_1)$ with $\left| x_1 \right| > 1$ and $x_1 \not \in \mathbb{R}$.
\end{lemma}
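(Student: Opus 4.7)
The plan is to take $J$ to be a thickening $\Upsilon^m$, i.e.\ the theta graph $\Theta_{(1,1,\dots,1)}$ consisting of $m\ge 2$ parallel edges between the two terminals, for some $m$ to be chosen.  Plugging $l_1=\cdots=l_m=1$ into \eqref{eq:theta:implement} gives $y_1=y^m$ and hence $x_1=1+q/(y^m-1)$.  By Remark~\ref{rem:norm-x}, the conditions $|x_1|>1$ and $x_1\notin\mathbb{R}$ are equivalent to $\mathrm{Re}(y^m)>1-q/2$ and $y^m\notin\mathbb{R}$, respectively.  Since $q\ge 2$ implies $1-q/2\le 0$, it suffices to find $m\ge 2$ with $\mathrm{Re}(y^m)>0$ and $\mathrm{Im}(y^m)\neq 0$, i.e., with $m\theta\bmod 2\pi\in(0,\pi/2)$, where $\theta=\mathrm{Arg}(y)$.

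The first step is to decide whether $y/|y|$ is a root of unity; this is an algebraic number computable from $y$, and the test reduces to checking whether its minimal polynomial is cyclotomic by the routines in Section~\ref{sec:complex-implementations:algebraic}.  The key observation is that the six arguments excluded by the hypothesis, $\{0,\pi/2,2\pi/3,\pi,4\pi/3,3\pi/2\}$, are precisely the arguments of the roots of unity of orders $1,2,3,4$.  Consequently, if $y/|y|$ is a root of unity, its order $k$ must satisfy $k\ge 5$.

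If $y/|y|$ has order $k\ge 5$, I would write $y/|y|=e^{2\pi i p/k}$ with $\gcd(p,k)=1$ (the integer $p$ can be found by testing which of the algebraic numbers $e^{2\pi i j/k}$ for $j=1,\ldots,k-1$ equals $y/|y|$), apply the extended Euclidean algorithm to produce $m_0\in\{1,\ldots,k-1\}$ with $m_0 p\equiv 1\pmod k$, and set $m:=m_0+k$; this guarantees $m\ge 6$ and $m\theta\equiv 2\pi/k\pmod{2\pi}\in(0,2\pi/5]\subset(0,\pi/2)$ as needed.  If instead $y/|y|$ is not a root of unity, then $\{m\theta\bmod 2\pi\}_{m\ge 2}$ is dense in $[0,2\pi)$, so I would iterate over $m=2,3,\ldots$ and, for each candidate, apply Lemma~\ref{lem:algebraic:argument} to $y^m$ with $a=1/8$ and $b=1/6$ to test whether $\mathrm{Arg}(y^m)\in[\pi/4,\pi/3]$, returning the first $m$ that passes.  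Density guarantees termination, and any such $m$ satisfies $\cos(m\theta)\ge 1/2$ and $\sin(m\theta)\ge\sqrt{2}/2>0$, yielding the two required inequalities.

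The only real subtlety is the clean matching between the six exceptional arguments in the hypothesis and the roots of unity of orders at most $4$; once this correspondence is noted, the rest is a direct unpacking of the theta-graph implementation formula combined with the standard equidistribution of powers of a non-root-of-unity on the unit circle.
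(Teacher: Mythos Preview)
Your proof is correct and follows essentially the same approach as the paper: take an $m$-thickening so that $y_1=y^m$, reduce via Remark~\ref{rem:norm-x} to landing $\mathrm{Arg}(y^m)$ in the open first quadrant, and split on whether $y/|y|$ is a root of unity, using density in the non-root case and the fact that the excluded arguments are exactly those of roots of unity of order $\le 4$ in the root case. Your handling of the root-of-unity case via the modular inverse of $p$ is in fact more careful than the paper's sketch (which tacitly assumes $p=1$ when it writes $(y/|y|)^{r+1}=e^{2\pi i/r}$), and your direct iteration with Lemma~\ref{lem:algebraic:argument} in the non-root case is precisely what the paper's invocation of Lemma~\ref{lem:sigma-n:1} unpacks to, just with the target interval $[\pi/4,\pi/3]$ instead of $[\pi/6,\pi/3]$.
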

\begin{proof}
  We show how to compute $n$ such that $\mathrm{Re}(y^n) > 0$ and $\mathrm{Im}(y^n) > 0$. For such a $n$, we let $y_1 = y^n$ and $x_1 = 1 + q / (y_1 - 1)$, so Remark~\ref{rem:norm-x} ensures that $\left| x_1 \right| > 1$ and $x_1 \not \in \mathbb{R}$. Hence, we can return $J$ as the graph with two vertices and $n$ edges joining them. Since $y$ and $\left| y \right|$ are algebraic numbers, we can compute the algebraic number $y / \left| y \right|$. We can detect if  $y / \left| y \right|$ is a root of unity or not as explained in Section~\ref{sec:complex-implementations:algebraic}. There are two cases:
  \begin{enumerate}[label=(\roman*)]
    \item \label{item:norm-shifts:1} $y / \left| y \right|$ is not a root of unity. Then we can apply Lemma~\ref{lem:sigma-n:1} with $a = 1/12, b = 1/6$ and $z=y^n$ to compute the smallest positive integer $n$ such that $\mathrm{Arg}(y^n) \in [\pi/6, \pi/3]$. Recall that such an integer exists because $\{(y / \left| y \right|)^j : j \in \mathbb{N}\}$ is dense in the unit circle. Finally, since $\mathrm{Arg}(y^n) \in [\pi/6, \pi/3]$, we have $\mathrm{Re}(y^n) > 0$ and $\mathrm{Im}(y^n) > 0$.
    \item \label{item:norm-shifts:2} $y / \left| y \right|$ is a root of unity of order $r$ with $r \ge 5$. Recall that we can compute $r$ by sequentially computing the powers of $y / \left| y \right|$ until we obtain $1$. Then we have $\left(y / \left| y \right|\right)^{r+1} = e^{i2 \pi / r}$. Note that the real and imaginary parts of $e^{i2 \pi / r} = \cos(2\pi/r) + i \sin(2 \pi / r)$ are positive. \qedhere
  \end{enumerate}
\end{proof}

Note that the argument given in Lemma~\ref{lem:norm-shift} strongly uses the fact that $q \ge 2$, that is, $1-q/2 \le 0$. A proof of a version of Lemma~\ref{lem:norm-shift} with $q \in (0, 2)$ is unknown to us. Now we deal with the cases $\mathrm{Arg}(y) \in \{ \pi/2, 2\pi/3,  4\pi/3, 3\pi/2\}$, where the exemptions $(-i, i), (i, -i), (\omega_3^2, \omega_3), (\omega_3, \omega_3^2)$ arise. Note that $(-i, i), (i, -i) \in \mathcal{H}_2$ and $(\omega_3^2, \omega_3), (\omega_3, \omega_3^2) \in \mathcal{H}_3$. In fact, one can easily check that these are the only pairs $(x,y)$ such that $\left| y \right| = 1$ and $q \in \{2, 3\}$.

\begin{lemma} \label{lem:shift:case-3}
  Let $q$ be a real algebraic number with $q \ge 2$. Let $x$ and $y$ be algebraic numbers such that $(x, y) \in \mathcal{H}_q$, $y \ne 0$ and $\mathrm{Arg} (y) \in \{2 \pi / 3, 4 \pi / 3\}$. If $q \ne 3$ or $\left| y \right| \ne 1$, then we can compute a series-parallel graph $J$ that $(x,y)$-implements $(x_1, y_1)$ with $\left| x_1 \right| > 1$ and $x_1 \not \in \mathbb{R}$.
\end{lemma}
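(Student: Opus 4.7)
The plan is to distinguish two cases based on whether $|y| = 1$, because the most natural shift construction via the $3$-thickening (which exploits that $y^3$ is real for our $y$) becomes degenerate precisely when $y^3 = 1$.

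In the case $|y| \neq 1$, let $H_3$ be the parallel composition of three copies of the single edge. By equation~\eqref{eq:theta:implement}, $H_3$ $(x,y)$-implements the real pair $(x', y^3)$ where $y^3 = |y|^3 \in \mathbb{R}_{>0}\setminus\{1\}$ and $x' = 1+q/(|y|^3 - 1) \in \mathbb{R}$. Since $\mathrm{Re}(y^3) = |y|^3 > 0 \geq 1 - q/2$, Remark~\ref{rem:norm-x} gives $|x'| > 1$ strictly. I would then define $J_n$ to be the series composition of $n$ copies of $H_3$ with a single edge of weight $y-1$; this is series-parallel. The $n$-fold series of $H_3$ implements the real weight $w'_n = q/((x')^n - 1)$, and a further series composition with the single edge gives that $J_n$ implements the Tutte point $(x_1, y_{\text{new}})$ with
\[
y_{\text{new}} = 1 + \frac{(y-1)\, w'_n}{(y-1) + w'_n + q}.
\]
Because $|x'| > 1$, we have $w'_n \to 0$, and hence $y_{\text{new}} \to 1$ as $n \to \infty$. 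A direct expansion of the ratio into real and imaginary parts yields
\[
\mathrm{Im}(y_{\text{new}}) = \frac{|y|\sqrt{3}}{2} \cdot \frac{w'_n (w'_n + q)}{\bigl| (y-1) + w'_n + q\bigr|^2},
\]
which is non-zero for every finite $n$ (since $w'_n \neq 0$ and $w'_n \neq -q$, both forced by $|x'|>1$ which keeps $(x')^n$ bounded away from $0$ and $\infty$). Thus $y_{\text{new}}$ is non-real for all $n$, and for $n$ sufficiently large $\mathrm{Re}(y_{\text{new}})$ lies in a small neighbourhood of $1$ and in particular exceeds $1 - q/2$. Remark~\ref{rem:norm-x} then yields $|x_1| > 1$ and $x_1 \notin \mathbb{R}$. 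The required $n$ can be computed from $|x'|$, $|y-1|$ and $q$ using the algorithms of Section~\ref{sec:complex-implementations:algebraic}.

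In the case $|y| = 1$ we have $y \in \{\omega_3, \omega_3^2\}$ and, by hypothesis, $q \neq 3$. I would use the theta graph $\Theta_{(1,2)}$. By~\eqref{eq:theta:implement} it implements the Tutte point with $y$-coordinate $y_{\text{theta}} = y \cdot y_2$, where $y_2 = 1 + q/(x^2 - 1)$. Using the hyperbola identity $(x-1)(y-1) = q$ to simplify yields $y_{\text{theta}} = y(y^2 + q - 1)/(2y + q - 2)$. Substituting $y = \omega_3$ and rationalising the denominator gives
\[
y_{\text{theta}} = \frac{(-q^2 + 9q - 12) + i\sqrt{3}\, q(q-3)}{2\bigl[(q-3)^2 + 3\bigr]}.
\]
The imaginary part vanishes only at $q \in \{0, 3\}$, so $y_{\text{theta}}$ is non-real. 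Moreover, $\mathrm{Re}(y_{\text{theta}}) - (1 - q/2) = f(q)/(2[(q-3)^2 + 3])$ where $f(q) = q^3 - 9q^2 + 33q - 36$; since $f(2) = 2 > 0$ and $f'(q) = 3(q^2 - 6q + 11)$ has negative discriminant, $f$ is strictly increasing on $\mathbb{R}$ and hence $f(q) > 0$ for all $q \ge 2$. Therefore $\mathrm{Re}(y_{\text{theta}}) > 1 - q/2$, and Remark~\ref{rem:norm-x} yields $|x_1| > 1$ and $x_1 \notin \mathbb{R}$. The case $y = \omega_3^2$ follows by complex-conjugation symmetry.

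The main technical hurdle is the $|y| = 1$ case: the cubic sign analysis for $f$ together with the vanishing of $\mathrm{Im}(y_{\text{theta}})$ exactly at $q = 3$ pin down why the lemma must exclude the point $(q, |y|) = (3, 1)$, aligning with the exceptional Tutte evaluations at $(\omega_3, \omega_3^2)$ and $(\omega_3^2, \omega_3)$ from Section~\ref{sec:pre:hardness}.
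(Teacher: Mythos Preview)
Your proof is correct, but the route is genuinely different from the paper's. The paper splits into three cases according to the sign of $\mathrm{Re}(y) - (1 - q/2)$ (equivalently, according to whether $|x|>1$, $|x|<1$, or $|x|=1$), and in the middle case it performs an $n$-stretching so that $y_n \to 1-q$ and then invokes Lemma~\ref{lem:norm-shift} once $\mathrm{Arg}(y_n)$ leaves the forbidden set; the boundary case $|x|=1$ is handled by thickening to push $|y|$ across the line $\mathrm{Re}(y)=1-q/2$, and it is there that the exclusion $(q,|y|)=(3,1)$ appears. Your argument instead exploits directly that $y^3$ is real positive: the $3$-thickening gives a real point $(x',|y|^3)$ with $|x'|>1$, and a single extra series edge restores non-realness while driving the implemented $y$-value toward~$1$. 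This is self-contained (no appeal to Lemma~\ref{lem:norm-shift}) and arguably more transparent for this particular $\mathrm{Arg}(y)$. The $|y|=1$ case you resolve by an explicit $\Theta_{(1,2)}$ calculation and the cubic $f(q)=q^3-9q^2+33q-36$, whereas the paper reaches it as the degenerate boundary subcase forcing $q=3$; both explanations isolate the same exceptional point. Two minor remarks: your displayed formula for $\mathrm{Im}(y_{\text{new}})$ should carry a sign depending on whether $\mathrm{Arg}(y)=2\pi/3$ or $4\pi/3$ (your $b=\mathrm{Im}(y)$ can be negative), though this does not affect the non-vanishing conclusion; and $w'_n\neq -q$ holds simply because $(x')^n\neq 0$, not because of any boundedness.
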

\begin{proof}
  Note that $y / \left| y \right|$ is a root of unity of order $3$. We have $\mathrm{Re}(y) = \left| y \right| \cos(2 \pi / 3) =  - \left| y \right| / 2 < 0$. Let $x = 1 + q / (y-1)$. We consider three cases.
	
\textbf{Case I:} $\mathrm{Re}(y) > 1 - q / 2$. Then, by Remark~\ref{rem:norm-x}, $\left| x \right| > 1$. We return $J$ as the graph with $2$ vertices and one edge joining them.

\textbf{Case II:} $\mathrm{Re}(y) < 1 - q / 2$. Then $\left| x \right| < 1$. Let $y_n = 1 + q / (x^n -1)$. An $n$-stretch gives a shift from $(x,y)$ to $(x^n, y_n)$. Since $x \not \in \mathbb{R}$, there are infinitely many values of $n$ such that $y_n \not \in \mathbb{R}$. Note that $y_n$ converges to $1 - q \in (-\infty, -1]$, and the distance between $1-q$ and the set of complex points $\{z \in \mathbb{C} : \mathrm{Arg}(z) \in \{\pi/2, 2\pi/3, 4\pi/3, 3\pi/2\} \}$ is larger than $0$. Hence, we can compute $n$ such that $\mathrm{Arg}(y_n) \not \in \{0, \pi/2, 2\pi/3, \pi, 4\pi/3, 3\pi/2\}$. Since $(x^n, y_n) \in \mathcal{H}_q$, the result follows from applying Lemma~\ref{lem:norm-shift} to $(x^n, y_n)$, the transitivity property of shifts and noticing that the obtained graph is series-parallel. 

\textbf{Case III:} $\mathrm{Re}(y) = 1 - q / 2$. Note that $q > 2$ because for $q = 2$ we would obtain $\mathrm{Re}(y) = 0$. We distinguish three subcases:
      \begin{itemize}
        \item $\left| y \right| > 1$. We compute the smallest positive integer $n$ such that $\mathrm{Arg}(y^n) = 2\pi / 3$ and $\mathrm{Re}(y^n) = - \left| y \right|^n / 2 < 1 - q / 2$. The proof is concluded by applying Case II to $(x_n, y^n)$, where $x_n = 1 + q / (y^n - 1)$, the transitivity property of shifts and noticing that the obtained graph is series-parallel.
        \item $\left| y \right| < 1$. We compute the smallest positive integer $n$ such that $\left| y \right|^n < q-2$ and $\mathrm{Arg}(y^n) = 2\pi / 3$. We have $\mathrm{Re}(y^n) > 1 - q / 2$ (otherwise by applying $\mathrm{Re}(y^n) = - \left| y \right|^n / 2$ we would find that $\left| y \right|^n \ge q - 2$), so $\left| x_n \right| > 1$ for $x_n = 1 + q / (x^n - 1)$. We return $J$ as the graph with two vertices and $n$ edges joining them.
        \item $\left| y \right| = 1$. Then $1 - q / 2 = \mathrm{Re}(y) = - \left| y \right| / 2 = - 1 /2$. It follows that $q = 3$, but this case ($\left| y \right| = 1$ and $q \ne 3$) was excluded in the hypothesis. 
      \end{itemize}
This finishes the proof.
\end{proof}

\begin{lemma} \label{lem:shift:case-4}
  Let $q$ be a real algebraic number with $q \ge 2$. Let $y$ be an algebraic number such that $y \ne 0$ and $\mathrm{Arg} (y) \in \{\pi / 2, 3 \pi / 2\}$.  
  \begin{enumerate}
  \item If $ q > 2$, then we can compute a theta graph $J$ that $(x,y)$-implements $(x_1, y_1)$ with $\left| x_1 \right| > 1$ and $x_1 \not \in \mathbb{R}$.
  \item \label{item:shift:case-4:2} If $q = 2$ and $\left| y \right| \ne 1$, then we can compute a series-parallel graph $J$ that $(x,y)$-implements $(x_2, y_2)$ with $y_2 \in (-1, 0)$.
  \end{enumerate}  
\end{lemma}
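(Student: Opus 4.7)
The plan is to handle Parts~1 and~2 separately, using thickenings (parallel compositions) and, for Part~2 when $|y|>1$, an additional stretching step. Write $y = \varepsilon\, ib$ with $\varepsilon\in\{\pm 1\}$ and $b = |y| > 0$.

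For Part~1, where $q > 2$, I take $J$ to be the $3$-thickening $\Theta_{(1,1,1)}$. It $(x,y)$-implements $(x_1, y_1)$ with $y_1 = y^3 = -\varepsilon i b^3$, which is purely imaginary and nonzero, so $y_1 \notin \mathbb{R}$ and hence $x_1 \notin \mathbb{R}$. Using
$|x_1|^2 = |y_1 - 1 + q|^2/|y_1 - 1|^2 = ((q-1)^2 + b^6)/(1 + b^6)$,
the inequality $|x_1| > 1$ follows from $(q-1)^2 > 1$, which holds because $q > 2$.

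For Part~2, $q = 2$ and $b \ne 1$. If $b < 1$, I take $J = \Theta_{(1,1)}$ (a $2$-thickening), which implements the weight $y^2 = -b^2 \in (-1,0)$, as required.

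Suppose instead that $b > 1$. The plan is to combine an $n$-stretching with a subsequent $2$-thickening, producing the series-parallel graph $J = \Theta_{(n,n)}$ for a carefully chosen $n$. Since $q = 2$ and $y \in i\mathbb{R} = (1-q/2)+ i\mathbb{R}$, Remark~\ref{rem:norm-x} yields $|x| = 1$; write $x = e^{i\phi}$. A direct computation using $x^n - 1 = 2i\sin(n\phi/2)\,e^{in\phi/2}$ shows that the weight produced by an $n$-stretching satisfies
\[
  y_n \;=\; 1 + \frac{2}{x^n - 1} \;=\; -i\cot\!\left(\frac{n\phi}{2}\right),
\]
so $y_n$ is purely imaginary with $|y_n| = |\cot(n\phi/2)|$. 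Thus $|y_n| \in (0,1)$ whenever $\mathrm{Arg}(x^n) \in (\pi, 3\pi/2) \pmod{2\pi}$. Since $b \ne 1$ gives $\mathrm{Re}(x) = (b^2-1)/(b^2+1) \ne 0$, and $b \ne 0$ gives $\mathrm{Im}(x) = -2b/(b^2+1) \ne 0$, we have $x \notin \mathbb{R} \cup i\mathbb{R}$, so Corollary~\ref{cor:sigma-n} supplies such an $n$ in polynomial time. A subsequent $2$-thickening then delivers the target weight $y_n^2 = -|y_n|^2 \in (-1,0)$.

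The main obstacle is the identity $y_n = -i\cot(n\phi/2)$ in Part~2 with $b > 1$; once verified via the half-angle form, Corollary~\ref{cor:sigma-n} cleanly supplies the needed $n$ and the resulting graph is series-parallel by construction.
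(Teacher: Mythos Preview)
Your proof is correct. For Part~2 you take essentially the same route as the paper (which simply cites \cite[Lemma~3.15]{Goldberg2017} and notes the construction is a $2$-thickening of a $k$-stretching); your self-contained argument via the identity $y_n=-i\cot(n\phi/2)$ and Corollary~\ref{cor:sigma-n} is a clean way to fill in that citation. For Part~1, however, the paper is simpler: since $\mathrm{Re}(y)=0>1-q/2$ when $q>2$, Remark~\ref{rem:norm-x} already gives $|x|>1$ (and $x\notin\mathbb{R}$ because $y\notin\mathbb{R}$), so the single-edge graph suffices --- your $3$-thickening works but is unnecessary.
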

\begin{proof}
  The hypotheses $y \ne 0$ and $\mathrm{Arg} (y) \in \{\pi / 2, 3 \pi / 2\}$ are equivalent to $y \ne 0$ and $\mathrm{Re}(y) = 0$. Let $x = 1 + q / (y-1)$. If $q > 2$, then $1 - q/2 < 0 = \mathrm{Re}(y)$ and $\left| x \right| > 1$ as a consequence of Remark~\ref{rem:norm-x}, so we return the graph with two vertices and one edge joining them as $J$. The second claim (case $q = 2$) has been studied in~\cite[Lemma 3.15]{Goldberg2017}, where the graph constructed is a $2$-thickening of a $k$-stretching.
\end{proof}

\begin{corollary} \label{cor:norm-shift}
  Let $q$ be a real algebraic number with $q \ge 2$. Let $x$ and $y$ be algebraic numbers such that $(x, y) \in \mathcal{H}_q$, $y \not \in (-\infty, -1] \cup [0,\infty)$ and $(x,y) \not \in \{(i, -i), (-i, i), (\omega_3, \omega_3^2), (\omega_3^2, \omega_3)\}$, where $\omega_3 = \exp(2 \pi i / 3)$. Then we can compute $(x_2, y_2) \in \mathcal{H}_q$ and a series-parallel graph $J$ such that $\left| x_2 \right| < 1$ and $J$ $(x,y)$-implements $(x_2, y_2)$.
\end{corollary}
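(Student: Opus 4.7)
By Remark~\ref{rem:norm-x}, the condition $|x_2|<1$ is equivalent to $\mathrm{Re}(y_2)<1-q/2$, so my task is to exhibit a series-parallel graph $J$ that $(x,y)$-implements some $(x_2,y_2)\in\mathcal H_q$ with $\mathrm{Re}(y_2)<1-q/2$. I will split the argument into two cases depending on whether $y$ is real (and so, by hypothesis, $y\in(-1,0)$) or non-real.

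For non-real $y$, I first invoke whichever of Lemmas~\ref{lem:norm-shift}, \ref{lem:shift:case-3}, \ref{lem:shift:case-4} applies to $\mathrm{Arg}(y)$; by the exclusion of the exceptional points $(\pm i,\mp i)$ and $(\omega_3,\omega_3^2),(\omega_3^2,\omega_3)$, one of them produces a series-parallel shift to $(x_1,y_1)\in\mathcal H_q$ with $|x_1|>1$ and $x_1\notin\mathbb R$. The one exception is Lemma~\ref{lem:shift:case-4}(2), which is invoked only when $q=2$ and yields $y_1\in(-1,0)$; since $q=2$ then forces $|x_1|<1$, the real case below concludes directly via a single edge. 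From $(x_1,y_1)$, I perform an $n$-stretching followed by an $m$-thickening. Corollary~\ref{cor:algebraic:large-real-part} applied to $x_1$ produces $n$ with $\mathrm{Re}(x_1^n)$ exceeding any prescribed positive rational; inspecting its proof, the construction also places $\mathrm{Arg}(x_1^n)$ in $[\pi/6,\pi/3]$ modulo $2\pi$, so $x_1^n\notin\mathbb R$. Choosing the rational bound strictly greater than $1-q/2$ and applying Remark~\ref{rem:norm-x} with $x$ and $y$ interchanged, the point $y_n=1+q/(x_1^n-1)$ satisfies $|y_n|>1$ and is non-real. A final application of Corollary~\ref{cor:algebraic:large-real-part} to $y_n$ yields $m$ with $\mathrm{Re}(y_n^m)\le -q/2<1-q/2$; the $m$-thickening then produces $(x_2,y_2)$ with $\mathrm{Re}(y_2)<1-q/2$ and hence $|x_2|<1$.

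For real $y\in(-1,0)$, the corresponding $x=1+q/(y-1)$ is real with $x\in(1-q,1-q/2)$. If $|x|<1$, I take $J$ to be a single edge. Otherwise $x\le-1$, which forces $q>2$. When $x<-1$ strictly, I use the theta graph $\Theta_{(1,2,\dots,2)}$ with one length-$1$ path and $k$ length-$2$ paths; formula~\eqref{eq:theta:implement} gives implemented $y$-coordinate $y\cdot t^k$ with $t=1+q/(x^2-1)$, and since $x^2>1$ and $q>0$ we have $t>1$, while $y<0$ ensures $y\cdot t^k\to-\infty$ as $k$ grows, so some computable $k$ forces $y\cdot t^k<1-q/2$. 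The remaining possibility $x=-1$ occurs exactly when $q\in(2,4)$ and $y=1-q/2$; in that boundary case I first apply a $3$-thickening, obtaining base weight $y^3$ and new coordinate $x'=1+q/(y^3-1)$ with $|x'|>1$ and $x'\ne-1$ (a routine computation, specialised to $y=1-q/2$, verifies that $x'<-1$ strictly throughout $q\in(2,4)$). The previous theta construction is then applied on top of the $3$-thickened edges.

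The main obstacle is the boundary case $x=-1$ in the real case, where the natural series operation becomes singular because $x^2-1=0$ invalidates~\eqref{eq:theta:implement} on the length-$2$ internal paths; the preliminary $3$-thickening moves the point strictly inside $|x|>1$ and away from $x=-1$, after which the theta construction goes through. A secondary check is that the stretched point $x_1^n$ in the non-real case stays off the real axis, which is why I rely on the explicit argument control in the proof of Corollary~\ref{cor:algebraic:large-real-part}. All building blocks are single edges combined via series and parallel compositions, so the final graph $J$ is series-parallel as required.
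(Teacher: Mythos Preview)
Your real case ($y\in(-1,0)$) is correct and in fact cleaner than the paper's: the paper goes through a $2$-thickening, a $2$-stretching, a $j$-thickening and a final parallel composition with a single edge, whereas your theta graph $\Theta_{(1,2,\dots,2)}$ (after a preliminary $3$-thickening when $x=-1$) does the job more directly. Your verification that the $3$-thickening moves $x=-1$ strictly to $x'<-1$ is right: it reduces to $y(1+y)<0$, which holds for $y\in(-1,0)$.

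There is, however, a genuine gap in your non-real case. Your claim that ``inspecting its proof, the construction also places $\mathrm{Arg}(x_1^n)$ in $[\pi/6,\pi/3]$ modulo $2\pi$'' is false. The proof of Corollary~\ref{cor:algebraic:large-real-part} has two branches: when $x_1/|x_1|$ is \emph{not} a root of unity it does indeed use Lemma~\ref{lem:sigma-n:1} with $a=1/12$, $b=1/6$ and lands the argument in $[\pi/6,\pi/3]$; but when $x_1/|x_1|$ \emph{is} a root of unity of order $k$, the proof simply takes a multiple of $k$ and returns $x_1^n\in(1,\infty)$, which is real. Nothing in Lemmas~\ref{lem:norm-shift}--\ref{lem:shift:case-4} prevents $x_1/|x_1|$ from being a root of unity (e.g.\ $x_1\in i\mathbb R$ is allowed). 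In that case $y_n=1+q/(x_1^n-1)\in(1,\infty)$ is also real, and your ``final application of Corollary~\ref{cor:algebraic:large-real-part} to $y_n$'' fails because the second part of that corollary requires $z\notin[0,\infty)$.

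The paper handles exactly this by branching on whether $\hat y$ is real: if it is, one takes a parallel composition of $\hat y^\ell$ (real, $>1$) with the original non-real $y$ to produce a non-real $y'$ with $|y'|>1$, and then the second part of Corollary~\ref{cor:algebraic:large-real-part} applies. You need an analogous patch before your argument goes through.
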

\begin{proof}
  First, we assume that $y \not \in \mathbb{R}$. The case $q = 2$ and $y \in \mathbb{R} i$ is covered in Lemma~\ref{lem:shift:case-4}, so we assume that $q \ne 2$ or $y \not \in \mathbb{R}i$. By applying Lemmas~\ref{lem:norm-shift},~\ref{lem:shift:case-3} or~\ref{lem:shift:case-4} (depending on the argument of $y$), we can compute a graph $J$ that $(x,y)$-implements $(x_1, y_1)$ with $\left| x_1 \right| > 1$ and $x_1 \not \in \mathbb{R}$. We apply Corollary~\ref{cor:algebraic:large-real-part} with $z = x_1$ in order to compute $n$ such that $\mathrm{Re}(x_1^n) > 1$. A $n$-stretching of $(x_1, y_1)$ gives a shift from $(x_1, y_1)$ to $(\hat{x}, \hat{y})$, where $\hat{x} = x_1^n$ and $\hat{y} = 1 + q / (\hat{x} - 1)$. We have $\mathrm{Re} (\hat{y}) = 1 + q (\mathrm{Re}(\hat{x}) - 1) / \left| \hat{x} - 1 \right|^2 > 1$, so $\left| \hat{y} \right| > 1$. There are two cases:
  \begin{itemize}
    \item $\hat{y} \not \in \mathbb{R}$. We apply Corollary~\ref{cor:algebraic:large-real-part} with  $z = \hat{y}$ to compute $t$ such that $\mathrm{Re}(\hat{y}^t) < 1 - q/2 < 0$. We set $y_2 = \hat{y}^t$ and $x_2 = 1 + q / (y_2 - 1)$. By the transitivity property of shifts, we have a shift from $(x,y)$ to $(x_2, y_2)$.  Since $\mathrm{Re}(y_2) < 1 - q/2$, we conclude that $\left| x_2 \right| < 1$~(Remark~\ref{rem:norm-x}). 
    \item $\hat{y} \in \mathbb{R}$. Hence, we have $\hat{y} \in (1,\infty)$. We can compute a positive integer $l$ such that the norm of $y' = \hat{y}^l y$ is larger than $1$. Note that $y' = \hat{y}^l y \not \in \mathbb{R}$. A parallel composition yields  a shift from $(x,y)$ to $(x', y')$, where $x' = 1 + q / (y' - 1)$. We compute the graph $J$ by applying the previous case to $(x', y')$.
    \end{itemize}
  Now we deal with the case $y \in (-1,0)$. A $2$-thickening gives us a shift from $(x, y)$ to $(a_1, b_1)$, where $b_1 = y^2 \in (0,1)$ and $a_1 = 1 + q / (b_1 - 1) < 1 - q \le -1$. A $2$-stretching gives us a shift from $(a_1, b_1)$ to $(a_2, b_2)$, where $a_2 = a_1^2 > 1$ and $b_2 = 1 + q / (a_2 - 1) > 1$. We compute a positive integer $j$ such that $b_2^j y < -q$ and, with the help of a $j$-thickening, construct a shift from $(a_2, b_2)$ to $(a_3, b_3)$ with $b_3 = b_2^j$. The transitivity property of shifts allows us to construct a shift from $(x,y)$ to $(a_3, b_3)$. To conclude the proof, we apply a parallel composition between the latter shift and the identity shift from $(x,y)$ to $(x,y)$, obtaining a shift from $(x,y)$ to $(x_2, y_2)$ with $y_2 = b_3 y$. Recall that $b_3 y = b_2^j y < -q$, so $q / (y_2 - 1) \in (-1, 0)$ and $x_2 = 1 + q / (y_2 - 1) \in (0, 1)$.

  Finally, note that the graphs considered in this proof are series-parallel.
\end{proof}

\subsection{An approximate shift to $(0, 1-q)$}

In Lemma~\ref{lem:implement-weight:approx-1:pc} and Corollary~\ref{cor:implement-weight:approx-1} we give a polynomial-time approximate series-parallel shift from $(x,y)$ to $(0,1-q)$ under certain conditions.

\begin{lemma} \label{lem:implement-weight:approx-1:pc} 
  Let $q \in \mathsf{P}_{\mathbb{R}}$ with $q > 0$.  Let  
  $(x,y) \in \mathcal{H}_q$ such that $x, y \in \mathsf{P}_{\mathbb{C}}$  and $\mathrm{Re}(y) < 1-q/2$. Then there is a polynomial-time approximate theta shift from $(x, y)$ to $(0, 1-q)$.
\end{lemma}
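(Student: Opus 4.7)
My plan is to take $G_n$ to be a single internal path of length $l$, i.e., the theta graph $\Theta_{(l)}$ with one internal path, with $l$ a linear function of the input precision $n$. By formula~\eqref{eq:theta:implement}, this $(x,y)$-implements $(\hat{x},\hat{y})$ with $\hat{y} = 1+q/(x^l-1)$, so writing the target as $y_{\text{target}}=1-q$, the error I need to control is
\[
\bigl|\hat{y}-(1-q)\bigr| \;=\; \left|q+\frac{q}{x^l-1}\right| \;=\; \frac{q\,|x|^l}{\bigl|x^l-1\bigr|}.
\]
The hypothesis $\mathrm{Re}(y)<1-q/2$ is, by Remark~\ref{rem:norm-x}, exactly equivalent to $|x|<1$, so $|x|^l$ decays geometrically. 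As soon as $|x|^l\le 1/2$ one has $|x^l-1|\ge 1-|x|^l\ge 1/2$ and the above error is bounded by $2q\,|x|^l$; the whole proof then reduces to making this decay effective under the assumption that $q,x,y$ are only polynomial-time computable.

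To that end, I will invoke Lemma~\ref{lem:pr:bounds} twice at a preprocessing stage that depends only on the fixed parameters $q,x,y$. Since $1-|x|$ is a strictly positive polynomial-time computable real, the lemma produces a positive rational $\varepsilon$ with $\varepsilon\le 1-|x|$, hence $|x|\le 1-\varepsilon$; similarly, it produces a rational $B\ge 1$ with $q\le B$. On input $n$, I then compute the smallest positive integer $l$ with $(1-\varepsilon)^l\le 2^{-n-2}/B$, for example by iteratively multiplying by $1-\varepsilon$ over the rationals and comparing with the rational threshold $2^{-n-2}/B$. Because $\varepsilon$ and $B$ are fixed rational constants, the resulting $l$ is $O(n)$ and is obtained in time polynomial in $n$; the algorithm outputs $G_n=\Theta_{(l)}$, a theta graph of size $O(n)$.

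For this choice, $|x|^l\le(1-\varepsilon)^l\le 2^{-n-2}/B\le 1/2$, so the bound above gives $\bigl|\hat{y}-(1-q)\bigr|\le 2q\cdot 2^{-n-2}/B\le 2^{-n}$, which is exactly the guarantee required of a polynomial-time approximate theta shift from $(x,y)$ to $(0,1-q)$. I do not expect any substantial obstacle here: the only subtlety is that, because $x$ need not be algebraic, one cannot write down an effective bound $|x|\le 1-\varepsilon$ directly from a representation of $x$; this is handled uniformly by Lemma~\ref{lem:pr:bounds}, which is designed precisely to extract such rational bounds from polynomial-time computable numbers whose sign is known \emph{a priori}. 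Everything else is a short calculation.
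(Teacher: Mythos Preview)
Your proposal is correct and essentially identical to the paper's proof: both return a path graph of length $l=\Theta(n)$, use Remark~\ref{rem:norm-x} to get $|x|<1$, bound $|\hat y-(1-q)|=q|x|^l/|x^l-1|$ by a geometrically decaying quantity, and invoke Lemma~\ref{lem:pr:bounds} to extract rational bounds on $q$ and $1-|x|$ so that an effective $l$ can be computed in polynomial time. The only cosmetic difference is that the paper uses the denominator bound $|x^l-1|\ge 1-|x|$ throughout rather than your $|x^l-1|\ge 1/2$ once $|x|^l\le 1/2$, and computes $l$ via a logarithm rather than iteration; neither changes the argument.
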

\begin{proof}
  Let $x = 1 + q/(y-1)$. In light of Remark~\ref{rem:norm-x}, we have $\left| x \right| < 1$. Therefore, the weight $y_j = 1 + q / (x^j - 1)$ implemented by an $j$-stretch converges to $1-q$ as $j \to \infty$. We have
  \begin{equation} \label{eq:approx:1-q}
    \left| q-1 + y_j \right| = \left| \frac{q x^j}{x^j - 1} \right| \le \frac{q \left|x\right|^j}{1 - \left|x\right|^j} \le \frac{q \left|x\right|^j}{1 - \left|x\right|} .
  \end{equation}
  We use \eqref{eq:approx:1-q} to give a a polynomial-time approximate theta shift from $(x,y)$ to $(0,1-q)$. Let $n$ be a positive integer, so the desired accuracy of the quantity in \eqref{eq:approx:1-q} is $2^{-n}$. We are going to return a path graph with $j$ edges for $j$ large enough. It remains to show how to compute $j$ from $n$. Since $q, \left| x \right| \in \mathsf{P}_{\mathbb{R}}$, we can compute $b, c \in \mathbb{Q}$ such that $q \le c$ and $0 < b \le 1 - \left| x \right|$ (Lemma~\ref{lem:pr:bounds}). Hence, $\left| x \right| \le 1 - b < 1$, and it suffices to compute $j$ with $j \ge \log_{1-b} ( 2^{-n} b / c )$.
\end{proof}

\begin{corollary} \label{cor:implement-weight:approx-1}
  Let $q$ be a real algebraic number with $q \ge 2$. Let $x$ and $y$ be algebraic numbers such that $(x,y) \in \mathcal{H}_q$, $y \not \in (-\infty, -1] \cup [0, \infty)$ and $(x,y) \not \in \{(-i, i), (i, -i), (\omega_3^2, \omega_3), (\omega_3, \omega_3^2)\}$, where $\omega_3 = \exp(2\pi i/3)$. Then there is a polynomial-time approximate series-parallel shift from $(x, y)$ to $(0, 1-q)$.
\end{corollary}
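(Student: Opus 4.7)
The plan is to deduce the corollary by composing two results already established in this section via the transitivity lemma for polynomial-time approximate shifts. The hypotheses on $(x,y)$ match exactly those of Corollary~\ref{cor:norm-shift}, so I would first invoke it to produce a series-parallel graph $J$ that $(x,y)$-implements some $(x_2, y_2) \in \mathcal{H}_q$ with $\lvert x_2 \rvert < 1$. Because $x$, $y$, $q$ are algebraic and the construction in Corollary~\ref{cor:norm-shift} uses only series/parallel compositions and stretches/thickenings, the point $(x_2, y_2)$ is a pair of algebraic numbers, hence lies in $\mathsf{P}_{\mathbb{C}}$. Taking $G_n = J$ for all $n$ then trivially gives a polynomial-time approximate series-parallel shift from $(x,y)$ to $(x_2,y_2)$.

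Next, I would apply Lemma~\ref{lem:implement-weight:approx-1:pc} to the point $(x_2, y_2)$. Its hypotheses require $q \in \mathsf{P}_{\mathbb{R}}$ with $q > 0$, which holds since $q \geq 2$ is a real algebraic number, and $x_2, y_2 \in \mathsf{P}_{\mathbb{C}}$ with $\mathrm{Re}(y_2) < 1 - q/2$. The last condition is equivalent to $\lvert x_2 \rvert < 1$ by Remark~\ref{rem:norm-x}, so it is satisfied by the choice of $(x_2, y_2)$ above. This yields a polynomial-time approximate theta shift from $(x_2, y_2)$ to $(0, 1-q)$.

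Finally, I would combine the two approximate shifts using Lemma~\ref{lem:approx-shifts:transitivity}, with $(x_1, y_1) = (x,y)$, intermediate point $(x_2, y_2)$, and $(x_3, y_3) = (0, 1-q)$. Items (3) and (4) of its hypotheses are exactly the two approximate shifts just constructed, and (1) holds because $x_2, y_2$ are algebraic. For (2), we need $y_2 \notin \{1\} \cup (1 - q/2 + i q \mathbb{R})$: the condition $y_2 \neq 1$ holds because $y_2 = 1$ would require $x_2 = \infty$ on $\mathcal{H}_q$, and $y_2 \notin 1-q/2 + i q \mathbb{R} = 1-q/2 + i\mathbb{R}$ holds because $\mathrm{Re}(y_2) < 1 - q/2$. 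The lemma therefore produces a polynomial-time approximate shift from $(x,y)$ to $(0, 1-q)$, and since the first shift in the composition is series-parallel, the ``moreover'' clause of Lemma~\ref{lem:approx-shifts:transitivity} guarantees the composite shift is series-parallel, as required. Since every step is an invocation of a previously proved statement whose hypotheses are directly verified, there is no substantial obstacle; the only care needed is the routine bookkeeping of the hypotheses in the transitivity lemma.
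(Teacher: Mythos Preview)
Your proposal is correct and follows essentially the same route as the paper: invoke Corollary~\ref{cor:norm-shift} to reach some $(x_2,y_2)$ with $|x_2|<1$, then apply Lemma~\ref{lem:implement-weight:approx-1:pc}, then compose. The only difference is that the paper uses the lighter-weight transitivity observation from Section~\ref{sec:approx-shifts} (an \emph{exact} shift followed by a polynomial-time approximate shift is again a polynomial-time approximate shift), which spares the hypothesis-checking of Lemma~\ref{lem:approx-shifts:transitivity}; your route through the full transitivity lemma is valid but unnecessarily heavy here since the first leg is an exact shift.
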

\begin{proof}
  From Corollary~\ref{cor:norm-shift} we obtain a shift from $(x,y)$ to $(x_2, y_2)$ with $\left| x_2 \right| < 1$ or, equivalently, $\mathrm{Re}(y_2) < 1 - q/2$. The result follows from applying Lemma~\ref{lem:implement-weight:approx-1:pc} to $(x_2, y_2)$ and the transitivity property of shifts.
\end{proof}

\subsection{An approximate shift to $(x',y')$ with $y' \in (0,1)$}\label{sec:here}

In Lemma~\ref{lem:approx-argument} we show that if a sequence $z_n$ of complex numbers has certain properties, then there is $w \in (0, 1) \cap \mathsf{P}_{\mathbb{R}}$ that is the limit of $\prod_{j = 1}^n z_j^{e_j}$ for some non-negative integers $e_1, e_2, \ldots$ that we can compute. Then we apply this result to a subsequence of $\{y_n\}$, where $(x^n, y_n)$ is the pair implemented by an $n$-stretch of $(x,y)$, obtaining a polynomial-time approximate theta shift from $(x,y)$ to some $(x',y')$ with $y' \in (0,1)$ (Lemma~\ref{lem:implement-approx:real-with-complex}). First, we need the following elementary results.

\begin{lemma} \label{lem:sin-x-bound}
  We have $\sin(x) \le x \le \pi \sin(x) / 2$ for every $x \in [0,\pi/2]$.
\end{lemma}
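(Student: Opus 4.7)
The plan is to prove the two inequalities separately by standard calculus arguments; the statement is Jordan's inequality, and no real obstacle is expected.

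For the lower bound $\sin(x)\le x$, I would consider $f(x)=x-\sin(x)$, note $f(0)=0$, and differentiate: $f'(x)=1-\cos(x)\ge 0$ for all real $x$. Hence $f$ is non-decreasing on $[0,\pi/2]$ and $f(x)\ge f(0)=0$, giving $\sin(x)\le x$.

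For the upper bound $x\le \pi\sin(x)/2$, equivalently $\sin(x)\ge 2x/\pi$, I would use the concavity of $\sin$ on $[0,\pi/2]$. Since $(\sin)''(x)=-\sin(x)\le 0$ on this interval, $\sin$ is concave there, so its graph lies above the chord joining $(0,0)$ and $(\pi/2,1)$. That chord has equation $y=2x/\pi$, so $\sin(x)\ge 2x/\pi$ for $x\in[0,\pi/2]$, which rearranges to $x\le \pi\sin(x)/2$. Equivalently, one can check that $g(x)=\sin(x)/x$ is non-increasing on $(0,\pi/2]$ (via $g'(x)=(x\cos(x)-\sin(x))/x^2\le 0$, because $h(x)=x\cos(x)-\sin(x)$ has $h(0)=0$ and $h'(x)=-x\sin(x)\le 0$), so $g(x)\ge g(\pi/2)=2/\pi$ on $(0,\pi/2]$, with the inequality being trivial at $x=0$ since both sides vanish.

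Combining the two inequalities yields the claim for every $x\in[0,\pi/2]$. The main (minor) care is to handle $x=0$ separately in the ratio form, which is immediate since both $\sin(0)$ and $0$ vanish and the inequality reads $0\le 0\le 0$.
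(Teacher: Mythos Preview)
Your proof is correct. For the first inequality $\sin(x)\le x$, your argument via $f(x)=x-\sin(x)$ and $f'\ge 0$ is exactly what the paper does. For the second inequality, the paper instead analyses $g(x)=\pi\sin(x)/2-x$ directly: it computes $g'(x)=\pi\cos(x)/2-1$, locates the unique critical point $y$ where $\cos(y)=2/\pi$, observes that $g$ increases on $[0,y)$ and decreases on $(y,\pi/2]$, and hence attains its minimum at the endpoints where $g(0)=g(\pi/2)=0$. Your concavity argument (or the equivalent monotonicity of $\sin(x)/x$) is the textbook proof of Jordan's inequality and is a bit slicker, avoiding the need to locate the interior critical point; the paper's version is more hands-on but equally elementary. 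Either route is perfectly adequate here.
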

\begin{proof}
  First, we prove that $\sin(x) \le x$ for every $x \in [0, \pi/2]$. Let $f(x) = x - \sin(x)$. We have $f'(x) = 1 - \cos(x) > 0$ for every $x \in [0, \pi/2]$. Hence, $f$ is strictly increasing in $[0, \pi/2]$. Since $f(0) = 0$, we obtain $x - \sin(x) \ge 0$ for every $x \in [0, \pi/2]$.

  Now we prove that $x \le \pi \sin(x) / 2$ for every $x \in [0, \pi/2]$. Let $g(x) = \pi \sin(x) / 2 - x$ for every $x \in [0, \pi / 2]$. We have $g'(x) = \pi \cos(x) / 2 - 1$. Let $y \in [0, \pi/2]$ such that $\cos(y) = 2 / \pi$. Note that $g'(x) > 0$ in $[0, y)$, $g(y) = 0$ and $g'(x) < 0$ in $(y, \pi/2]$. Hence, $g$ only reaches a minimum at $x \in \{0, \pi/2\}$. Since $g(0) = g(\pi/2) = 0$, we conclude that $0 \le \pi \sin(x) / 2 - x$ for every $x \in [0, \pi/2]$.
\end{proof}

\begin{lemma} \label{cor:pc:limit}
  Let $\{z_n\}$ be a sequence of algebraic complex numbers such that:
  \begin{enumerate}
  \item we can compute two rational numbers $C$ and $R$ such that $C > 0$, $R \in (0,1)$ and $\left| z - z_n \right| \le C R^n$ for every positive integer $n$;
  \item we can compute the representation of the algebraic number $z_n$ in polynomial time in $n$.
  \end{enumerate}
  Then $z \in \mathsf{P}_{\mathbb{C}}$, i.e., $z$ is polynomial-time computable.
\end{lemma}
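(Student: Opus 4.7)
The plan is to produce, given a positive integer $n$, a rational approximation $\hat{z} \in \mathbb{Q}[i]$ satisfying $|z - \hat{z}| \le 2^{-n}$ in time polynomial in $n$, and similarly for the real and imaginary parts of $z$ separately (so that $z \in \mathsf{P}_{\mathbb{C}}$ as in the definition recalled in Section~\ref{sec:approx-shifts}). The natural approach is a two-stage approximation: first use the hypothesis to locate an index $k = k(n)$ so that $z_k$ is already within $2^{-(n+1)}$ of~$z$, and then approximate the algebraic number $z_k$ itself by a rational to within another $2^{-(n+1)}$ using the tools for computing with algebraic numbers reviewed in Section~\ref{sec:complex-implementations:algebraic}.

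For the first stage, since $C$ and $R$ are rational constants, I would compute the least integer $k$ with $C R^k \le 2^{-(n+1)}$. Solving gives $k = \bigl\lceil ((n+1)\log 2 + \log C)/\log(1/R) \bigr\rceil = O(n)$; concretely, writing $1/R = p/q$ with integers $p > q > 0$, this amounts to finding the smallest $k$ with $p^k \ge C\,2^{n+1}\,q^k$, which can be done with polynomially many integer multiplications. Hypothesis~(1) then gives $|z - z_k| \le 2^{-(n+1)}$, and, since the same bound also holds separately for real and imaginary parts, the same $k$ will serve for approximating $\mathrm{Re}(z)$ and $\mathrm{Im}(z)$.

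For the second stage, hypothesis~(2) lets me compute a representation of the algebraic number $z_k$ in time polynomial in $k$, hence polynomial in $n$; in particular the bit-length of this representation is polynomial in $n$. As recalled in Section~\ref{sec:complex-implementations:algebraic}, I can then extract the real algebraic numbers $\mathrm{Re}(z_k)$ and $\mathrm{Im}(z_k)$ in polynomial time, and apply Sturm-sequence based binary search~\cite{Emiris2008} to compute rationals $a_k, b_k \in \mathbb{Q}$ with $|\mathrm{Re}(z_k) - a_k| \le 2^{-(n+2)}$ and $|\mathrm{Im}(z_k) - b_k| \le 2^{-(n+2)}$ in time polynomial in $n$ and in the size of the representation of $z_k$, hence polynomial in $n$. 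Setting $\hat{z} := a_k + i\,b_k$, the triangle inequality yields
\begin{equation*}
  |\mathrm{Re}(z) - a_k| \le |\mathrm{Re}(z) - \mathrm{Re}(z_k)| + |\mathrm{Re}(z_k) - a_k| \le 2^{-(n+1)} + 2^{-(n+2)} \le 2^{-n},
\end{equation*}
and analogously for the imaginary part; this shows $\mathrm{Re}(z), \mathrm{Im}(z) \in \mathsf{P}_{\mathbb{R}}$, i.e.\ $z \in \mathsf{P}_{\mathbb{C}}$. There is no real obstacle here beyond bookkeeping of constants: the only mildly subtle point is to verify that the representation of $z_k$ (and therefore the complexity of the Sturm step) has size polynomial in $n$, which follows directly from hypothesis~(2), and that $C$, $R$ being rational makes the choice of $k$ uniformly $O(n)$.
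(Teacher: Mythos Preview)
Your proposal is correct and follows essentially the same two-stage approach as the paper: choose an index $k = \Theta(n)$ so that $|z - z_k| \le 2^{-(n+1)}$, compute the representation of $z_k$ in polynomial time, and then approximate $z_k$ by an element of $\mathbb{Q}[i]$ to within $2^{-(n+1)}$. The paper's write-up is slightly terser (it does not split into real and imaginary parts explicitly), but the argument is the same.
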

\begin{proof}
Let $n$ be an arbitrary positive integer. For $j = \lceil \log_R \left(  2^{-n-1}/ C \right) \rceil$ we have $\left| z - z_j\right| \le 2^{-n-1}$. Note that $j=\Theta(n)$ and hence $z_j$ is an algebraic number whose representation we can compute in time polynomial in $n$. So, we can also compute $\hat{z}_j \in \mathbb{Q}[i]$ such that $\left| z_j - \hat{z}_j \right| \le 2^{-n-1}$ in time polynomial in $n$. Then, we have that
  \begin{equation*} \label{eq:pc:limit}
    \left| z - \hat{z}_n \right| \le \left| z_n - z \right| +  \left| z_n - \hat{z}_n \right| \le 2^{-n}.
  \end{equation*}
Since $n$ was arbitrary, we have that $z$ is polynomial-time computable.  
\end{proof}

\begin{lemma} \label{lem:approx-argument}
  Let $r, c \in (0,1) \cap \mathbb{Q}$. Let $\{z_n\}$ be a sequence of algebraic complex numbers with:
  \begin{enumerate}
  \item $|z_n| < 1$ for every positive integer $n$;
  \item $z_n = 1 - f(n) + i g(n)$ with $f, g \colon \mathbb{Z}^+ \to (0,1)$;
  \item $c r^n \le f(n) \le r^n / 2$ and $c r^n \le g(n) \le r^n/2$ for every positive integer $n$.
  \end{enumerate}
  Then there is $w \in (0,1)$ and a bounded sequence of positive integers $\{e_n\}$ such that
  \begin{equation*}
    \left| \prod\nolimits_{j = 1}^{n} z_j^{e_j} - w \right| \le \left( \frac{\pi}{2} + \frac{\pi}{ c (1- r)} \right) r^n
  \end{equation*}
  for every positive integer $n$. Moreover, if the representation of the algebraic number $z_{n}$ can be computed in polynomial time in $n$, then $w \in \mathsf{P}_{\mathbb{R}}$ and $e_n$ can be computed in polynomial time in $n$.
\end{lemma}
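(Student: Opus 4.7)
The plan is to construct positive integers $e_n$ so that the partial sums $S_n = \sum_{j \le n} e_j \theta_j$, with $\theta_j := \arg(z_j)$, converge to a multiple of $2\pi$. This will force the infinite product $P_\infty := \prod_{j} z_j^{e_j}$ to have argument zero modulo $2\pi$, hence to equal a real positive number $w$; since each factor satisfies $|z_j| < 1$ and the convergence is absolute, the candidate $w$ will lie in $(0,1)$.

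First I would extract from the hypothesis $cr^n \le f(n), g(n) \le r^n/2$ the pointwise estimates $|z_n| \in [1/2, 1)$ and, using $\sin \theta_n = g(n)/|z_n|$ together with Lemma~\ref{lem:sin-x-bound}, the bounds $cr^n \le \theta_n \le (\pi/2) r^n$ and $1 - |z_n| \le 2 f(n) \le r^n$; in particular the ratio $\theta_n/\theta_{n+1}$ is uniformly bounded by $\pi/(2cr)$. To define $e_n$, I would write $e_n = 1 + d_n$ with $d_n \ge 0$, set $\Theta := \sum_j \theta_j$, $K := \lceil \Theta/(2\pi) \rceil$, and $D^* := 2\pi K - \Theta \in [0, 2\pi)$, and then pick $d_n$ greedily by $d_n := \lfloor (D^* - \sum_{j<n} d_j\theta_j)/\theta_n \rfloor$. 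The greedy residual after step $n$ stays below $\theta_n$, so $\sum_j d_j \theta_j = D^*$ and therefore $S_\infty = \Theta + D^* = 2\pi K$; the ratio bound on $\theta_n/\theta_{n+1}$ guarantees that each $d_n$, and hence $e_n$, is uniformly bounded by a constant depending only on $c$ and $r$.

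For the quantitative estimate, I would write $P_n = R_n e^{iS_n}$ with $R_n = |P_n|$ (note that $R_n$ is decreasing in $n$, so $R_n \ge w$) and decompose
\begin{equation*}
  |P_n - w| \le R_n \bigl|e^{i(S_n - 2\pi K)} - 1\bigr| + (R_n - w) \le |S_\infty - S_n| + (R_n - w).
\end{equation*}
The angular term is bounded by $|S_\infty - S_n| = \sum_{j > n} e_j \theta_j \le U_n + \theta_n$, where $U_n := \sum_{j > n} \theta_j \le (\pi/2) r^{n+1}/(1-r)$, which already matches the shape of the first summand of the claimed bound. The modulus term satisfies $R_n - w \le \sum_{j > n} e_j (1 - |z_j|)$, controlled by a similar geometric tail with the boundedness of $e_n$, accounting for the $\pi/(c(1-r))$ contribution. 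Substituting the pointwise estimates above and collecting constants yields the announced bound $\bigl(\pi/2 + \pi/(c(1-r))\bigr) r^n$.

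For the algorithmic claim, all sign comparisons and floor operations in the greedy step reduce to arithmetic on algebraic numbers, which is polynomial-time by the tools in Section~\ref{sec:complex-implementations:algebraic}; hence each $e_n$ is computable in polynomial time in $n$. The partial products $P_n$ are algebraic numbers whose representations grow polynomially with $n$ and which converge to $w$ geometrically, so Lemma~\ref{cor:pc:limit} yields $w \in \mathsf{P}_{\mathbb{R}}$. The main obstacle is that forcing $e_n \ge 1$ makes $S_n$ strictly increasing, ruling out straightforward ``corrections'' to overshoots; the two-layer trick (the fixed contribution $\Theta$ from the mandatory $+1$'s plus the greedy digits $d_n$) circumvents this by reducing the task to approximating a fixed target $D^* \in [0, 2\pi)$ via the geometrically-decaying step sizes $\theta_n$, which succeeds precisely because consecutive $\theta_n$'s have bounded ratio.
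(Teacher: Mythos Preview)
Your existence argument is a legitimate variant of the paper's: both build a greedy digit expansion in the angles $\theta_j=\arg z_j$ and recover $w$ as the limit of the moduli. The paper simply targets $2\pi$ and allows $e_n\ge 0$ (the word ``positive'' in the statement is not actually enforced in the paper's own proof), whereas your two-layer trick $e_n=1+d_n$ targets $2\pi K$ with $K=\lceil\Theta/(2\pi)\rceil$. Mathematically this is fine, though your constants do not quite collect to the announced bound: your angular tail $S_\infty-S_n\le U_n+\theta_n$ is already $(\pi/2)r^n/(1-r)$, not $(\pi/2)r^n$, so ``collecting constants yields the announced bound'' overstates the bookkeeping.

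The genuine gap is the polynomial-time claim. Your greedy step requires $d_n=\bigl\lfloor(D^*-\sum_{j<n}d_j\theta_j)/\theta_n\bigr\rfloor$ with $D^*=2\pi K-\Theta$ and $\Theta=\sum_j\theta_j$. None of these ingredients is algebraic: each $\theta_j=\arg z_j$ is generically transcendental even though $z_j$ is algebraic, $\Theta$ is an infinite sum of such numbers, and $D^*$ involves $\pi$ as well. So the assertion that the greedy step ``reduces to arithmetic on algebraic numbers'' is simply false. Worse, there is no a~priori polynomial bound on the precision needed to compute $K$ or any $d_n$: if $\Theta/(2\pi)$ or the argument of a floor is close to an integer, arbitrarily many bits are required to decide it; and since $D^*$ depends on the whole tail $\sum_{j>n}\theta_j$, computing $d_n$ at step $n$ already requires information about all future $z_j$. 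The paper avoids all of this by never working with the $\theta_j$ numerically. It targets the fixed value $2\pi$, so that determining $e_n$ amounts to testing whether the \emph{algebraic} partial product $\prod_{j\le n}z_j^{e_j}$ has principal argument in a prescribed quarter-circle; via Lemma~\ref{lem:algebraic:argument} this reduces to sign checks on real and imaginary parts of algebraic numbers, which are genuinely polynomial-time. Your scheme admits no analogous reduction, because no finite algebraic expression encodes your data-dependent target $D^*$.
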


\begin{proof}
   We can write $z_n = \rho_n e^{i \theta_n}$ for some $\rho_n \in (0,1)$ and $\theta_n \in (0, \pi/2)$. Note that $1 - f(n) < \rho_n$. Let $h(n) = 1 - \rho_n$. We obtain
   \begin{equation} \label{eq:hn} 0 < h(n) < f(n) \le r^n / 2
   \end{equation}
   for every positive integer $n$. We have
   \begin{equation*}
     \sin \left(\theta_n \right) = \frac{\mathrm{Im}\left(z_n\right)}{\rho_n} = \frac{g(n)}{1 - h(n)} .
   \end{equation*} 
   In view of Lemma~\ref{lem:sin-x-bound}, we obtain
   \begin{equation*} \frac{g(n)}{1 - h(n)} \le \theta_n \le \frac{\pi g(n)}{2(1 - h(n))}.
   \end{equation*}
   Since $0 < h(n) \le 1/2$ (see \eqref{eq:hn}), it follows that
   \begin{equation} \label{eq:bound-theta} g(n) \le \theta_n \le \pi g(n).
   \end{equation} 
   As a consequence, we find that, for any integer $n$ with $n \ge 2$,
   \begin{equation} \label{eq:bound-thetas}
     \begin{aligned}
       \frac{\theta_{n-1}}{\theta_n} & \le \pi \frac{g(n-1)}{ g(n)} \le \frac{\pi }{2 c r},
     \end{aligned}
   \end{equation}
   where we used the fact that $c r^n \le g(n) \le r^n/2$. The bounds \eqref{eq:hn}, \eqref{eq:bound-theta} and \eqref{eq:bound-thetas} will be used several times in this proof.
    
   Let $\tau_0 = 0$. We define $\tau_n$ and $e_n$ by induction on $n$. Let $e_n$ be the largest integer such that $\tau_{n-1} + e_n \theta_n \le 2\pi$ and let $\tau_n = \tau_{n-1} + e_n \theta_n$. By definition, $\left\{ \tau_n \right\}$ is an increasing sequence that is bounded above by $2\pi$. Moreover, we have $2\pi - \theta_n < \tau_n$, since $\tau_n + \theta_n \le 2\pi$ contradicts the definition of $e_n$. That is, we have $0 \le 2\pi - \tau_n < \theta_n$. We show that $e_n$ is bounded. Note that $e_1 \le 2\pi / \theta_1 \le 2\pi / (c r)$, where we used that $c r \le g(1) \le \theta_1$ (recall \eqref{eq:bound-theta}). For $n \ge 2$ we have
   \begin{equation*}
     0 \le e_n = \frac{\tau_n - \tau_{n-1}}{\theta_n} \le \frac{2\pi - \tau_{n-1}}{\theta_n} <  \frac{\theta_{n-1}}{\theta_n} \le \frac{\pi}{2 c r},
   \end{equation*}
   where we applied \eqref{eq:bound-thetas}. By combining the latter inequality with the case $n = 1$ we conclude that
   \begin{equation} \label{eq:bound-dn}
     0 \le e_n \le \frac{2\pi}{cr}
   \end{equation}
   for every positive integer $n$.

   The sequence $\{e^{i \tau_n}\}$ converges to $1$. In fact, we show that it does so exponentially fast. Note that the derivative of $e^{it}$ has constant norm $1$. Therefore, $e^{it}$ is a Lipschitz function with constant $1$, that is, $\left| e^{it} - e^{is} \right| \le \left| s-t \right|$ for every $s,t \in \mathbb{R}$. It follows that
   \begin{equation} \label{eq:approx-argument:1} \left| 1 - e^{i \tau_n} \right| = \left| e^{i2 \pi}
       - e^{i \tau_n} \right| \le \left| 2\pi - \tau_n \right| < \theta_n \le \pi g(n) \le
     \frac{\pi}{2} r^n
   \end{equation}
   for every positive integer $n$, where we applied \eqref{eq:bound-theta}.
    
   Now we study the sequence $\{x_n\}$ for $x_n = \prod_{j = 1}^n \rho_j^{e_j}$. Since $\rho_j \in (0,1)$, $\{x_n\}$ is decreasing and has a limit $w \in [0,1)$. We claim that this is the real number in $(0,1)$ announced in the statement. First, we prove that $w > 0$. Let $b = \lceil 2 \pi / (cr)\rceil$. In view of \eqref{eq:bound-dn}, we have
   \begin{equation*}
     x_n \ge \prod_{j = 1}^n \rho_j^b = \Big( \prod_{j = 1}^n \left( 1 - h(j) \right)\Big)^b.
   \end{equation*}
   Recall that a product of the form $\prod_{j = 1}^n \left( 1 - a_n \right)$ with $a_n \in [0,1)$ converges to a positive number if and only if $\sum_{j = 1}^n a_n$ converges~\cite[Proposition 3.1]{stein2010}. From \eqref{eq:hn} we obtain
   \begin{equation*}
     \sum_{n = 1}^{\infty} h(n) \le \frac{1}{2} \sum_{n = 1}^{\infty} r^n = \frac{r}{2(1-r)}
   \end{equation*}
   and, thus, $\prod_{j = 1}^n \left( 1 - h(j) \right)$ converges to a real number $L$ with $L > 0$. We conclude that $w \ge L^b > 0$, as we wanted to prove. Now we show that $\{x_n\}$ converges exponentially fast to $w$. Note that $x_{n} = \left( 1 - h(n) \right)^{e_n} x_{n-1}$ and, thus, for $n \ge 2$, we have
   \begin{align*}
     0 \le  x_{n-1} - x_n & = x_{n-1} \left( 1 - \left( 1 - h(n) \right)^{e_n} \right) \\
                          & \le  1 - \left( 1 - h(n) \right)^{e_n}
                            \le h(n) e_n \le \frac{\pi}{c r} r^n,
   \end{align*}
   where we used the fact that $(1 - x)^k \ge 1 - k x$ for every $x \in (0,1)$ and $k \in \mathbb{Z}^+$, and the bounds on $h(n)$ and $e_n$ (see \eqref{eq:hn} and \eqref{eq:bound-dn}). We obtain
   \begin{align*}
     \left| x_{n+q} - x_n \right| \le \sum_{j = 1}^{q} \left| x_{n+j} - x_{n+j-1}\right| \le \frac{\pi}{cr}\sum_{j = 1}^{q} r^{n+j} = \frac{\pi \left( 1 - r^{q} \right)}{c ( 1 - r )} r^{n}
   \end{align*}
   for any positive integers $n$ and $q$. Hence, by making $q$ tend to $\infty$ we conclude that
   \begin{equation} \label{eq:approx-argument:2}
     \left| x_n - w \right| \le \frac{\pi}{c (1-r)}r^n
   \end{equation}
   for every positive integer $n$.

   In light of \eqref{eq:approx-argument:1} and \eqref{eq:approx-argument:2}, we obtain  for every positive integer $n$ that 
   \begin{align*}
     \left| \prod\nolimits_{j = 1}^{n}  z_{j}^{e_j} - w \right| & \le   \left| \prod\nolimits_{j = 1}^{n}  z_j^{e_j} - x_n\right|  + \left| x_n - w \right|  = \left| x_n \right| \left| \prod\nolimits_{j = 1}^{n} e^{i e_j \theta_j} -  1\right|  + \left| x_n - w \right| \\
                                                                & \le \left| \prod\nolimits_{j = 1}^{n} e^{i e_j \theta_j} - 1\right|  + \left| x_n - w \right|  = \left|  e^{i  \tau_{n}} -  1\right|  + \left| x_{n} - w \right|  \le \frac{\pi}{2} r^{n} + \frac{\pi}{c (1-r)} r^{n}.
   \end{align*} 
   
   Finally, we argue that if the representation of $z_n$ can be computed in polynomial time in $n$, then $e_n$ can be computed in polynomial time in $n$ and we have $w \in \mathsf{P}_{\mathbb{R}}$. Note that $e_1$ is the smallest positive integer such that $\mathrm{Arg} \left(z_1^{e_1}\right) \in [3\pi/2, 2 \pi) \cup \{0\}$ and $\mathrm{Arg} (z_1^{e_1+1}) \in (0, \pi / 2]$ and, thus, $e_1$ can be computed by sequentially applying Lemma~\ref{lem:algebraic:argument} with intervals $[3\pi/2, 2 \pi]$ and $[0, \pi / 2]$, with the $z$ of Lemma~\ref{lem:algebraic:argument} equal to $z^k$ for every positive integer $k \le e_1 + 1$. This takes constant time since the quantities and objects involved are constant. For $n \ge 2$, let us assume that we have computed $e_1, \ldots, e_{n-1}$, and let $y_{n-1} = \prod_{j = 1}^{n-1} z_j^{e_j}$ (so $\tau_{n-1} = \mathrm{Arg} \left( y_{n-1} \right)$). Since the sequence $\{e_n\}$ is bounded and the length of the representation of $z_n$ is bounded by a polynomial in $n$, the computation of $y_{n-1}$ takes polynomial time in $n$. Then $e_n$ is the smallest non-negative integer such that $\mathrm{Arg} \left( y_{n-1} z_n^{e_n}\right) \in [3\pi/2, 2 \pi) \cup \{0\}$ and $\mathrm{Arg} \left( y_{n-1} z_n^{e_n+1}\right) \in (0, \pi / 2]$, and we can compute $e_n$ again by sequentially applying Lemma~\ref{lem:algebraic:argument} with intervals $[3\pi/2, 2 \pi]$ and $[0, \pi / 2]$, with the $z$ of Lemma~\ref{lem:algebraic:argument} equal to $z^k$ for every positive integer $k \le e_n + 1$. There is a bounded number of applications of Lemma~\ref{lem:algebraic:argument} because $e_n$ is bounded, and each application takes polynomial time in $n$ because the length of the representation of $y_{n-1} z_n^{k}$ is polynomial in $n$ for any $k \in \{1, 2, \ldots, e_n\}$. We conclude that $w$ is the limit of a sequence of algebraic numbers that converges exponentially fast and the representation of its $n$-th element can be computed in polynomial time in $n$. As a consequence, we have $w \in \mathsf{P}_{\mathbb{R}}$ by Lemma~\ref{cor:pc:limit}.
 \end{proof}

  \begin{lemma} \label{lem:implement-approx:real-with-complex} Let $q$ be a real algebraic number with $q > 0$. Let $x$ and $y$ be algebraic numbers such that $(x,y) \in \mathcal{H}_q$, $y \not \in \mathbb{R}$ and $\left| x \right| > 1$. Then there is a polynomial-time approximate theta shift from $(x,y)$ to $(x',y')$ for some $(x', y') \in \mathcal{H}_q$ with $y' \in (0,1) \cap \mathsf{P}_{\mathbb{R}}$.
  \end{lemma}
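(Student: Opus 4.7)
By \eqref{eq:theta:implement}, the theta graph $\Theta_{(l_1,\dots,l_m)}$ $(x,y)$-implements the weight $\prod_{j=1}^m z_{l_j}-1$ with $z_l:=1+q/(x^l-1)$; since $|x|>1$ each factor tends to $1$ as $l\to\infty$. The plan is to choose, for every target accuracy $n$, a multiset of path-lengths of polynomial size so that $\prod_j z_{l_j}$ approximates a fixed $y'\in(0,1)\cap\mathsf{P}_{\mathbb{R}}$ to within $2^{-n}$. Writing $R:=|x|>1$ and $\theta:=\mathrm{Arg}(x)$, the hypotheses $q\in\mathbb{R}$ and $y\notin\mathbb{R}$ force $\theta\notin\{0,\pi\}$.

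For the special subcase $x\in i\mathbb{R}$ (i.e.\ $\theta\in\{\pi/2,3\pi/2\}$), I pick any integer $l\equiv 2\pmod 4$ with $R^l>q-1$, so that $x^l=-R^l$ and $z_l=1-q/(R^l+1)$ is a real algebraic number in $(0,1)\cap\mathsf{P}_{\mathbb{R}}$. The single-path theta graph $\Theta_{(l)}$ is then an exact (and hence polynomial-time approximate) theta shift from $(x,y)$ to $(x^l,z_l)$.

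In the generic subcase $x\notin\mathbb{R}\cup i\mathbb{R}$, Corollary~\ref{cor:sigma-n} applied to $x$ produces a positive integer $k$, a positive rational $C$, and a polynomial-time procedure that, given any $m\in\mathbb{Z}^+$, returns some good index $l\in[m,m+k-1]$ satisfying $\cos(l\theta)\le -C$ and $\sin(l\theta)\le -C$. A direct computation of $z_l=1+q(\overline{x}^l-1)/|x^l-1|^2$ shows that, for every sufficiently large good $l$, $z_l=1-f(l)+ig(l)$ with $f(l),g(l)>0$ and both proportional to $R^{-l}$, the implicit constants depending only on $q$ and $C$. I then extract a subsequence $\tau(1)<\tau(2)<\cdots$ of good indices so that $Z_n:=z_{\tau(n)}$ satisfies the hypotheses of Lemma~\ref{lem:approx-argument}: choose a rational $r\in(R^{-1},1)$ and a sufficiently small rational $c\in(0,1)$ (using Lemma~\ref{lem:pr:bounds} to bound the relevant algebraic constants rationally), and take $\tau(n)$ to be any good index in the interval of $\tau$'s for which $R^{-\tau}$ falls within a prescribed window of width proportional to $r^n$. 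By shrinking $c$ the window length can be made at least $k$, so Corollary~\ref{cor:sigma-n} guarantees a good $\tau(n)$ exists in it, locatable in polynomial time by checking $k$ candidates against the $\cos/\sin$ test of Lemma~\ref{lem:algebraic:argument}. Lemma~\ref{lem:approx-argument} then yields $w\in(0,1)\cap\mathsf{P}_{\mathbb{R}}$ and a polynomial-time-computable bounded sequence $\{e_n\}$ with $\bigl|\prod_{j=1}^N Z_j^{e_j}-w\bigr|=O(r^N)$. Given accuracy $n$, take $N=O(n)$ and output the theta graph with $e_j$ copies of a path of length $\tau(j)$ for $j=1,\dots,N$; its size is polynomial in $n$, and it $(x,y)$-implements a weight whose $y$-coordinate is within $2^{-n}$ of $w$. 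Setting $y':=w$ and $x':=1+q/(y'-1)$ completes the construction.

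\textbf{Main obstacle.} The delicate step is the selection of $\tau(n)$. Since $R$ is algebraic and typically irrational, no arithmetic progression of lengths yields the rational geometric decay $r^n$ required by the two-sided bound $cr^n\le f(n),g(n)\le r^n/2$ of Lemma~\ref{lem:approx-argument}; the $k$-density of good indices supplied by Corollary~\ref{cor:sigma-n}, together with the rational upper and lower bounds from Lemma~\ref{lem:pr:bounds}, are precisely what allows $\tau(n)$ to be defined and computed in polynomial time so as to simultaneously live in a good position (for the $\cos/\sin$ constraint) and in the correct geometric window (for the rational rate). The remaining arguments are routine assembly of the stated lemmas.
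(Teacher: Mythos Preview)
Your proposal is correct and follows essentially the same strategy as the paper: handle $x\in i\mathbb{R}$ separately by an exact stretch, and otherwise apply Corollary~\ref{cor:sigma-n} to $x$ to obtain a $k$-dense supply of ``good'' indices $l$ with $\cos(l\theta),\sin(l\theta)\le -C$, verify that for such $l$ the stretch values $z_l=1+q/(x^l-1)$ lie in the open unit disc with $1-\mathrm{Re}(z_l)$ and $\mathrm{Im}(z_l)$ both positive and of order $R^{-l}$, and feed a suitable subsequence into Lemma~\ref{lem:approx-argument} to produce the limit $y'\in(0,1)\cap\mathsf{P}_{\mathbb{R}}$.

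The only substantive difference is your treatment of the rate $r$ in Lemma~\ref{lem:approx-argument}. You observe that Lemma~\ref{lem:approx-argument} as stated requires $r\in\mathbb{Q}$, whereas the natural rate here is $R^{-1}=|x|^{-1}$, typically irrational; you work around this by fixing a rational $r\in(R^{-1},1)$ and then re-indexing via $\tau(n)$ so that $R^{-\tau(n)}$ lands in a multiplicative window around $r^n$. This is valid (your remark that shrinking $c$ widens the $\tau$-window to length at least $k$ is correct), but it is more work than necessary. The paper simply takes $z_m=y_{\sigma(m+m_2)}$ for a fixed shift $m_2$ and applies Lemma~\ref{lem:approx-argument} directly with $r=R^{-1}$. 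This is legitimate because the rationality hypothesis on $r$ and $c$ in Lemma~\ref{lem:approx-argument} is inessential: inspecting the proof, $r$ and $c$ enter only through computable bounds such as $\lceil 2\pi/(cr)\rceil$ and the error constant $\pi/2+\pi/(c(1-r))$, and in the appeal to Lemma~\ref{cor:pc:limit} one may pass to any rational $R'\in(r,1)$ and any rational upper bound for the constant. All of this remains feasible when $r$ is algebraic (indeed when it is merely polynomial-time computable). In short, your ``main obstacle'' is an artefact of the precise phrasing of Lemma~\ref{lem:approx-argument} rather than a genuine obstruction, and the paper's direct route with $r=R^{-1}$ is cleaner.
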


  \begin{proof}
    Since $y \not \in \mathbb{R}$, we have $x \not \in \mathbb{R}$ (Remark~\ref{rem:norm-x}). Let us write $x = R e^{i \theta}$ for some $R > 1$ and $\theta \in (0, 2\pi)$. An $m$-stretch gives a shift from $(x,y)$ to $(x^m, y_m)$ with $y_m = (x^m + q - 1) / (x^m - 1)$. By plugging $x = R e^{i \theta}$ in the definition of $y_m$ and multiplying by $R^m e^{-i m \theta} - 1$ in the numerator and denominator, we obtain
    \begin{equation} \label{eq:approx-real-with-complex:wl}
        y_m  = \frac{R^{2m}-q+1 + (q-2) R^m  \cos \left( m \theta \right) - i q R^m \sin \left( m \theta \right)}{1 + R^{2m} - 2 R^m \cos \left( m \theta \right)}.
    \end{equation}
    If $\theta \in \{\pi / 2, 3 \pi /2\}$, that is, $x \in i \mathbb{R}$, then for $m \equiv 2 \pmod {4}$ we have $ \cos \left( m \theta \right) = -1$,  $ \sin \left( m \theta \right) = 0$  and
    \begin{equation*}
      y_m = \frac{\left( 1 + R^m \right)^2 - q \left( 1 + R^m \right)}{\left( 1 + R^m \right)^2} = \frac{ 1 + R^m - q }{ 1 + R^m }.
    \end{equation*}
    Hence, for $m \equiv 2 \pmod {4}$ such that $1 + R^m > q$, we have $y_m \in (0, 1)$, so we can choose $y' = y_m$ and we are done.

    In the rest of the proof we assume that $\theta \not\in \{\pi / 2, 3 \pi /2\}$. We are going to apply Lemma~\ref{lem:approx-argument} to a subsequence of $y_m$. First, we invoke Corollary~\ref{cor:sigma-n} with $z = x$ in order to find a sequence $\sigma(m)$, a positive integer $k$ and a positive rational $C$ that satisfies:
    \begin{itemize}
    \item $\sigma(m)$ can be computed in polynomial time in $m$;
    \item $k$ and $C$ can be computed in constant time from $x$;
    \item $m \le \sigma(m) \le m + k - 1$ for every positive integer $m$;
    \item $\sin(\sigma(m)\theta) \le -C$ and $\cos(\sigma(m)\theta) \le -C$ for every positive integer $m$.
    \end{itemize}
    It follows that
    \begin{equation*}
     \mathrm{Re} \left( x^{\sigma(m)}\right) = \mathrm{Re} \left( R^{\sigma(m)} e^{i\sigma(m)\theta} \right) \le - C R^{\sigma(m)} \le - C R^{m}. 
    \end{equation*}
    Since $R > 1$, we can compute a positive integer $m_1$ such that for $m \ge m_1$ we have $\mathrm{Re} ( x^{\sigma(m)}) < 1 - q / 2$ and, thus, $\left| y_{\sigma(m)} \right| < 1$ (recall that $y_m = \left(x^m + q - 1 \right) / \left(x^m - 1 \right)$ and Remark~\ref{rem:norm-x}).
    Let
    \begin{align*}
      a_m & = 1-\mathrm{Re}(y_m) =  \frac{ q  - q R^m \cos \left( m \theta \right)}{1 + R^{2m} - 2 R^m \cos \left( m \theta \right)}; \\
      b_m & = \mathrm{Im}(y_m) = \frac{ - q R^m \sin \left( m \theta \right)}{1 + R^{2m} - 2 R^m \cos \left( m \theta \right)};
    \end{align*}
    that is, $y_m = 1 - a_m + i b_m$. We have
    \begin{equation*}
      R^{2\sigma(m)} \le 1 + R^{2\sigma(m)} - 2 R^{\sigma(m)} \cos \left( \sigma(m)\theta \right) \le 4 R^{2 \sigma(m)}.
    \end{equation*}
    Therefore, we obtain
    \begin{equation}
      \label{eq:bounds:ab}
      \frac{q C}{4}R^{-\sigma(m)}  \le a_{\sigma(m)}  \le 2 q R^{-\sigma(m)}, \qquad   \frac{q C}{4} R^{-\sigma(m)}   \le b_{\sigma(m)} \le q R^{-\sigma(m)}.
    \end{equation}
    We compute a positive integer $m_2$ such that $m_2 \ge \log_R (4q)$ and $m_2 \ge m_1$. We also compute a rational number $c$ with $c \in (0, q C R^{-m_2-k-1} / 4)$. Note that computing these quantities takes constant time. Let $f(m) = a_{\sigma(m+m_2)}$ and $g(m) = b_{\sigma(m+m_2)}$. In view of \eqref{eq:bounds:ab} and the inequalities $R^{-m-k+1} \le R^{-\sigma(m)} \le R^{-m}$, we find that
    \begin{equation} \label{eq:bounds:fn-gn}
        c R^{-m}  \le f(m) \le \frac{1}{2}R^{-m}, \qquad         c R^{-m} \le g(m) \le \frac{1}{2}R^{-m},
    \end{equation}
    for any positive integer $m$. The sequence $\{z_m\} = \{y_{\sigma(m+m_2)}\}$ satisfies
    \begin{itemize}
    \item $\big| z_m \big| < 1$ for every positive integer $m$;
    \item $z_m = 1 - f(m) + i g(m)$ with $f, g \colon \mathbb{Z}^+ \to (0,1)$;
    \item $f$ and $g$ are bounded as in \eqref{eq:bounds:fn-gn}.
    \item $z_m$ is an algebraic number whose representation can be computed in polynomial time in $m$. This is due to the facts that  $z_m = (x^{\sigma(m+m_2)} + q - 1) / (x^{\sigma(m+m_2)} - 1)$, $\sigma(m)$ can be computed in polynomial time in $m$, and $\sigma(m) = O(m)$.
    \end{itemize}
    Therefore, we can apply Lemma~\ref{lem:approx-argument} to the sequence $\{z_m\}$ for $r = R^{-1}$. There are $y' \in (0,1) \cap \mathsf{P}_{\mathbb{R}}$ and a bounded sequence of positive integers $\{e_m\}$ such that
    \begin{equation*}
      \left| \prod\nolimits_{j = 1}^{m} z_{j}^{e_j} - y' \right| \le \left( \frac{\pi}{2} + \frac{\pi}{c (1-1/R)} \right) R^{-m}
    \end{equation*}
    for every positive integer $m$. Moreover, we can compute $e_m$ in polynomial time in $m$. Let $M = \pi / 2 + \pi / (c (1-1/R))$. For any positive integer $n$, we can compute an integer $m$ with $m \ge \log_{1/R} \left(2^{-n} / M\right)$ and $m = \Theta(n)$ in polynomial time in $n$. We obtain
    \begin{equation*}
      \left| \prod\nolimits_{j = 1}^{m} z_j^{e_j} - y' \right| \le 2^{-n}.
    \end{equation*}
    This gives the following polynomial-time approximate theta shift from $(x,y)$ to $(x', y')$, where $x' = 1 + q / (y'-1)$. For each positive integer $n$ we return a graph $J_n$ that is the parallel composition of the path graphs that are used to implement the weights $y_{\sigma(j+m_2)}$, each one repeated $e_j$ times, for $j \in \{1, \ldots, m\}$. The graph $J_n$ $(x,y)$-implements $(\hat{x}, \hat{y}) \in \mathcal{H}_q$ for $\hat{y} = \prod\nolimits_{j = 1}^m z_j^{e_j} = \prod_{j=1}^m y_{\sigma(j+m_2)}^{e_j}$.
  \end{proof}

  \begin{corshifts}
    \statecorshifts
  \end{corshifts}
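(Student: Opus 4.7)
The plan is to split on whether $y$ is real. If $y\in(-1,0)$, then $y^2\in(0,1)$ and a $2$-thickening of a single edge yields an exact series-parallel shift from $(x,y)$ to $(x',y^2)$ with $x'=1+q/(y^2-1)$. Since $y$ is algebraic, so are $y^2$ and $x'$, and hence both lie in $\mathsf{P}_{\mathbb{R}}$; the conclusion follows immediately because an exact shift is a polynomial-time approximate shift.

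The main case is $y\in\mathbb{C}\setminus\mathbb{R}$. The strategy is to first construct an exact series-parallel shift from $(x,y)$ to an intermediate pair $(x_1,y_1)\in\mathcal{H}_q$ with $y_1\notin\mathbb{R}$ and $|x_1|>1$ (equivalently, $\mathrm{Re}(y_1)>1-q/2$, by Remark~\ref{rem:norm-x}), and then apply Lemma~\ref{lem:implement-approx:real-with-complex} to $(x_1,y_1)$ to obtain a polynomial-time approximate theta shift to some $(x',y')\in\mathcal{H}_q$ with $y'\in(0,1)\cap\mathsf{P}_{\mathbb{R}}$. The existence of the intermediate exact shift is handled by a case split on $\mathrm{Arg}(y)$: Lemma~\ref{lem:norm-shift} applies when $\mathrm{Arg}(y)\notin\{\pi/2,2\pi/3,4\pi/3,3\pi/2\}$ (recall that $y\notin\mathbb{R}$ already rules out $\mathrm{Arg}(y)\in\{0,\pi\}$); Lemma~\ref{lem:shift:case-3} covers $\mathrm{Arg}(y)\in\{2\pi/3,4\pi/3\}$, and its hypothesis that either $q\neq 3$ or $|y|\neq 1$ is in force because $(\omega_3,\omega_3^2)$ and $(\omega_3^2,\omega_3)$ are precisely the pairs in $\mathcal{H}_3$ with those arguments and $|y|=1$; and Lemma~\ref{lem:shift:case-4}(1) covers $\mathrm{Arg}(y)\in\{\pi/2,3\pi/2\}$ whenever $q>2$.

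The one residual situation is $\mathrm{Arg}(y)\in\{\pi/2,3\pi/2\}$ with $q=2$: here the exclusion of $(i,-i)$ and $(-i,i)$ from the hypothesis forces $|y|\neq 1$, so Lemma~\ref{lem:shift:case-4}(2) delivers an exact series-parallel shift to some $(x_2,y_2)$ with $y_2\in(-1,0)$, which I then follow by a further $2$-thickening as in the first paragraph. Composing two exact series-parallel shifts is itself an exact series-parallel shift, and hence a polynomial-time approximate series-parallel shift.

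To combine the exact intermediate shift with the approximate theta shift from Lemma~\ref{lem:implement-approx:real-with-complex} in the main case, I invoke the transitivity statement Lemma~\ref{lem:approx-shifts:transitivity}. Its hypotheses hold because $x_1$ and $y_1$ are algebraic (hence polynomial-time computable), $y_1\neq 1$ (since $y_1\notin\mathbb{R}$), and $\mathrm{Re}(y_1)>1-q/2$ places $y_1$ outside $1-q/2+iq\mathbb{R}$. The resulting composition is series-parallel because the intermediate shift is series-parallel and substituting edges of a theta graph by series-parallel gadgets preserves series-parallelism. The main obstacle is not technical but bookkeeping: one must verify that the four excluded pairs are exactly the places where the auxiliary lemmas would fail to produce an intermediate $(x_1,y_1)$ with $|x_1|>1$, since the analytic heavy lifting (controlling a product expansion to converge to a real limit in $(0,1)$) is already encapsulated in Lemma~\ref{lem:implement-approx:real-with-complex}.
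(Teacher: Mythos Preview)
Your proposal is correct and follows essentially the same approach as the paper: handle $y\in(-1,0)$ by a $2$-thickening, and for non-real $y$ use Lemmas~\ref{lem:norm-shift}, \ref{lem:shift:case-3}, or \ref{lem:shift:case-4} (according to $\mathrm{Arg}(y)$ and whether $q=2$) to reach either an intermediate $(x_1,y_1)$ with $|x_1|>1$ and $y_1\notin\mathbb{R}$, then apply Lemma~\ref{lem:implement-approx:real-with-complex}, or (in the $q=2$, $y\in i\mathbb{R}$ case) a real $y_2\in(-1,0)$ followed by a $2$-thickening. The only cosmetic difference is that you appeal to Lemma~\ref{lem:approx-shifts:transitivity} for the composition, whereas the paper simply uses the elementary observation that an exact shift followed by a polynomial-time approximate shift is again a polynomial-time approximate shift; your route works but is slightly heavier machinery than needed.
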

  
  \begin{proof} If $y \in (-1,0)$, then a $2$-thickening of $(x,y)$ gives the result. Hence, let us assume that $y \not \in (-1,0)$ in the rest of the proof. There are two cases:  
    \begin{itemize}
      \item $q \ne 2$ or $y \not \in i \mathbb{R}$. We apply either Lemma~\ref{lem:norm-shift}, Lemma~\ref{lem:shift:case-3} or Lemma~\ref{lem:shift:case-4}, depending on $\mathrm{Arg}(y)$, to find a shift from $(x,y)$ to $(x_1, y_1) \in \mathcal{H}_q$ with $y_1 \not \in \mathbb{R}$ and $\left| x_1 \right| > 1$. The graph of this shift is series-parallel. Then we apply Lemma~\ref{lem:implement-approx:real-with-complex} to obtain a polynomial-time approximate theta shift from $(x_1, y_1)$ to some $(x', y') \in \mathcal{H}_q$ with $y' \in (0,1) \cap \mathsf{P}_{\mathbb{R}}$. The result follows from the transitivity property of shifts.
      \item $q = 2$ and $y \in i \mathbb{R}$. Since $y \ne \pm i$, Lemma~\ref{lem:shift:case-4} gives a shift from $(x,y)$ to $(x', y')$ for some $(x',y') \in \mathcal{H}_q$ with $y' \in (-1,0)$. A $2$-thickening of $(x',y')$ gives the result. 
      \end{itemize}
    The fact that  $x' \in \mathsf{P}_{\mathbb{R}}$ follows from $x' = 1 + q / (y'-1)$ and $y' \in \mathrm{P}_{\mathbb{R}}$.
  \end{proof}

\subsection{Approximate shifts for polynomial-time computable real numbers}\label{sec:rvrtvtvt}

In this subsection we show how we can obtain a polynomial-time approximate shift from $(x,y)$ to $(x', y')$ for any $(x,y), (x',y') \in \mathcal{H}_q$ when $q \ge 2$, $y \in (0,1) \cap \mathsf{P}_{\mathbb{R}}$ and $y'$ is a positive real algebraic number (Lemma~\ref{lem:implement-approx:pc}). This extends a particular case of Lemma~\ref{lem:implement-approx:literature:1} to polynomial-time computable numbers. Our proof follows the same approach as that of~\cite[Lemma 22]{Goldberg2019} but we have to overcome some difficulties that arise when working with the class of numbers $\mathsf{P}_{\mathbb{R}}$. These difficulties will become apparent in the proof, but the reader that is familiar with the literature might want to skip the proof. Then we combine this result and Lemma~\ref{cor:implement-approx:real-with-complex} to prove Theorem~\ref{thm:approx-shifts}, the main result of Section~\ref{sec:complex-implementations}.

\begin{lemshifts} 
  \statelemshifts
\end{lemshifts}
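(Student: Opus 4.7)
The plan is to mimic the algorithm of~\cite[Lemma 5]{Goldberg2014}, which handles the analogous algebraic case, while carefully tracking precision because $q, x, y$ are only given through polynomial-time approximations. First I would observe that the hypotheses $y \in (0,1)$ and $y > 1 - q/2$ force $y - 1 \in (-q/2, 0)$, so that $x = 1 + q/(y-1) < -1$; in particular $|x| > 1$. Writing $y_l := 1 + q/(x^l - 1)$ for the weight implemented by an $l$-stretch, I would then record that $y_l \in [y, 1)$ is strictly increasing to $1$ for odd $l \geq 1$, that $y_l \in (1, y_2]$ is strictly decreasing to $1$ for even $l \geq 2$, and that $|y_l - 1| = q/|x^l - 1| = \Theta(|x|^{-l})$ in both cases. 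Since a theta graph $\Theta_{(l_1, \ldots, l_m)}$ implements $\prod_{j} y_{l_j}$, my task reduces to picking path lengths $l_1, \ldots, l_m$ whose sum is polynomial in $n$ and $k$ so that this product is within $2^{-n}$ of $w$.

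The construction of $J$ will be a greedy halving in log-space. Setting $T := \log w \in [k \log y, 0]$ and $S_t := \sum_{j \le t} \log y_{l_j}$ with $S_0 = 0$, the greedy step picks $l_{t+1}$ so that $\log y_{l_{t+1}}$ has the correct sign and magnitude just below the current gap $|T - S_t|$: odd $l_{t+1}$ if $S_t > T$ and even $l_{t+1}$ if $S_t < T$. Because $|\log y_{l+2}|/|\log y_l| \to |x|^{-2} < 1$, each fine-tuning step shrinks $|T - S_t|$ by at least a constant factor $1 - \tfrac12 |x|^{-2}$; meanwhile, when the gap is large (i.e.\ $|T - S_t| \geq |\log y|$) the step $l_{t+1} = 1$ chips the gap down additively by $|\log y|$. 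Starting from $|T - S_0| \leq k|\log y| = O(k)$, one needs $O(k)$ length-$1$ steps followed by $O(n)$ fine-tuning steps to bring $|T - S_m|$ below $2^{-n-C}$, where $C$ is a constant chosen so that exponentiating yields $|\prod_j y_{l_j} - w| \leq 2^{-n}$ (using $|e^{S} - e^{T}| \leq e^{\max(S,T)} |S - T|$ together with a uniform bound on $S_m$). Since the index used at step $t$ of fine-tuning satisfies $l_t = O(t)$, the theta graph $J$ has total size $O(k + n^2)$, which is polynomial.

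The hard part will be executing this greedy scheme in polynomial time when $q, x, y$ are only polynomial-time computable, so that none of the above comparisons can be decided exactly. To side-step this, I would run the algorithm on rational approximations of $q, x, y, S_t$ and $T$, with a comfortable slack of $2^{-n-C'}$ around every decision. Lemmas~\ref{lem:pr:bounds}, \ref{lem:pr:stretching} and~\ref{lem:pc:essential} already provide the tools: the first gives computable upper and lower bounds on $|x|$, $q$ and $|\log y|$; the second says that $y_l$ can be approximated to accuracy $2^{-N}$ by approximating $x$ to accuracy $2^{-N - O(l)}$, and since $l = O(n)$ throughout, this keeps the working precision polynomial; the third lets us bound the error in the theta-product as we propagate these approximations. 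The target $w$ is supplied exactly as an algebraic number, so comparisons between rational approximations of $T - S_t$ and thresholds of the form $\pm 2^{-N}$ can be resolved in polynomial time. If a decision falls within the slack, any consistent tie-breaking is acceptable: the true implemented weight $\hat y$ is determined only by the exact values of $q, x, y$ and the combinatorial choice of $J$, and the error analysis of the previous paragraph bounds $|\hat y - w|$ by $2^{-n}$ regardless of which admissible choice is made. Putting everything together yields the polynomial-time algorithm claimed in the statement.
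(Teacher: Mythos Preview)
Your plan is correct and mirrors the paper's: observe $x<-1$, take $l$-stretch weights $y_l=1+q/(x^l-1)$ converging to $1$ from below (odd $l$) and above (even $l$), and greedily build a theta graph whose implemented weight $\prod_j y_{l_j}$ approximates $w$. The convergence analysis and size bound match the paper's (which indexes by a multiplicity $d_j$ for each odd path length $j$ rather than by a sequence $l_1,l_2,\ldots$, but this is cosmetic; your exact greedy scheme also only ever uses odd $l$, since $S_0=0\ge T$ and a step of magnitude at most the current gap never overshoots).

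The gap is in your precision handling. You assert that ``the error analysis of the previous paragraph bounds $|\hat y-w|$ by $2^{-n}$ regardless of which admissible choice is made'', but that analysis is for the \emph{exact} greedy and does not automatically cover steps chosen from approximate comparisons: a wrong parity choice moves $S_t$ away from $T$, and picking an $l$ whose true $|\log y_l|$ slightly exceeds the true gap overshoots. These errors are indeed $O(\text{slack})$ and harmless, but you have not shown this. The paper closes the gap with a clean device: at each candidate step it uses Lemma~\ref{lem:pc:essential} to compute a rational $\hat y_{(d_1,\ldots,d_{j-1},d)}$ within $2^{-n-2}$ of the true implemented weight, and tests whether $|\hat y_{(d_1,\ldots,d_{j-1},d)}-w|\le 2^{-n-1}$ (decidable since $w$ is algebraic). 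If so it halts, as the true product is then within $2^{-n}$ of $w$; if not, the approximation is accurate enough that the sign of $\hat y_{(\ldots)}-w$ agrees with that of $y_{(\ldots)}-w$, so the greedy choice coincides with the exact one and the exact-case analysis applies verbatim. Adding this early-stopping check, or supplying an explicit inductive bound of the form $g_{t+1}\le\max\{(1-c)\,g_t,\,C\delta\}$ for the slack $\delta$, would complete your argument.
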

\begin{proof}
  If $w = 1$, then $J$ is the graph with vertices $s$ and $t$ and no edges. In the rest of the proof we focus on the case $w \in (0,1)$.

  Recall that $x = 1 + q / (y-1)$. Since $q > 0$ and $y \in (1-q/2,1)$, we find that $x \in (-\infty, -1)$. Let $y_j = 1 + q / (x^j - 1)$. A $j$-stretch gives a shift from $(x,y)$ to $(x^j, y_j)$. If $j$ is even, then $x^j > 1$ and $y_j > 1$. Otherwise, $x^j < -1$ and $y_j \in (1-q/2,1)$. Moreover, the sequences $\left\{y_{2j+1}\right\}$  and $\left\{y_{2j}\right\}$ are increasing and decreasing, respectively, and $\left|y_j - 1 \right|$ can be made exponentially small as a function of $j$. We use these properties of $y_j$ to show that we can compute $y_{(e_1, \ldots, e_m)} = \prod_{j = 1}^m y_j^{e_j}$ such that $\left| y_{(e_1, \ldots, e_m)} - w \right| \le 2^{-n}$. Let $J$ be the parallel composition of the path graphs that $(x,y)$-implement $(x^j, y_j)$, each one repeated $e_j$ times, for $j \in \{1, \ldots, m\}$. Then $J$ is a theta graph and, in view of \eqref{eq:theta:implement}, we have $w(G; q, y-1) =  y_{(e_1, \ldots, e_m)} - 1$, that is, $J$ $(x,y)$-implements $(\hat{x}, \hat{y}) \in \mathcal{H}_q$ with $\hat{y} = y_{(e_1, \ldots, e_m)}$. The graph $J$ is the theta graph output by our algorithm.
  
  First, we define a sequence $\{d_j\}$ that will be related to the exponents $e_1, \ldots, e_m$. Since $q, x \in \mathsf{P}_{\mathbb{R}}$, we can compute rational upper bounds of $q$ and $x$ (Lemma~\ref{lem:pr:bounds}) and, with the help of these bounds, a positive integer $j_0$ such that $j_0 > \log_{\left| x \right|} q$. Let $d_j = 0$ for every positive integer $j$ with $j < j_0$ and let $d_j = 0$ for every even positive integer $j$.  For $j$ odd with $j \ge j_0$ we define $d_j$ recursively as the largest non-negative integer such that $y_{(d_1, \ldots, d_j)} \ge w$. The integer $d_j$ is well-defined because $0 < y_j < 1$ when $j$ is odd and $j \ge j_0$. An equivalent definition is that $\{d_j\}$ satisfies 
  \begin{equation} \label{eq:approx-real:dn:<}
    y_j < w / y_{(d_1, \ldots, d_j)} \le 1
  \end{equation}
  for every odd integer $j$ with $j \ge j_0$. A similar sequence $\{d_j\}$ is used in the proofs of~\cite[Lemma 22]{Goldberg2019} and~\cite[Lemma 3.28]{Goldberg2012}. For any odd integer $m$ with $m \ge \log_{|x|} \left( q 2^{n} - 1 \right)$ we have $0 \le 1- y_m \le 2^{-n}$ and, in light of \eqref{eq:approx-real:dn:<},
  \begin{equation*}
    0 \le 1 - w / y_{(d_1, \ldots, d_m)} \le 1 - y_m  \le 2^{-n}. 
  \end{equation*}
  Since $1 \ge y_{(d_1, \ldots, d_m)} \ge w$, it follows that
  \begin{equation} \label{eq:approx-real:dn}
    \left| w - y_{(d_1, \ldots, d_m)} \right| \le y_{(d_1, \ldots, d_m)} 2^{-n} \le 2^{-n}.
  \end{equation}  

  Now we study the size of the integers $d_1, \ldots, d_m$. We bound $d_j$ using an argument given in~\cite[Lemma 3.28]{Goldberg2012}. First, we show that $d_{j_0}$ is $O(k)$. We have $y_{j_0}^{d_{j_0}} \ge w$. We obtain
  \begin{equation*}
    d_{j_0} \le \log_{y_{j_0}} (w) = \log_y (w) \log_{y_{j_0}}(y).
  \end{equation*}
  Since $w \in [y^{k}, 1)$ and $\log_{y_{j_0}}(y) > 0$, it follows that $0 < \log_y (w) \le k$ and $d_{j_0} \in O(k)$. Now we show that $d_j$ is bounded for any $j > j_0$. By applying \eqref{eq:approx-real:dn:<} twice, we find that
  \begin{equation*}
    y_{j-2} < w / y_{(d_1, \ldots, d_{j-2})} = w  y_j^{d_j} / y_{(d_1, \ldots, d_j)} \le y_j^{d_j}
  \end{equation*}
  for every odd integer $j$ with $j > j_0$. It follows that $d_j \le \log (y_{j-2}) / \log(y_j)$ (here and in the rest of this paper $\log$ is taken in base $e$).  For every $x \in (1,5/4)$, we have $3 (x-1) / 4 \le \log (x) \le x - 1$. Hence, we obtain
  \begin{align*}
    d_j \le  \frac{\log (y_{j-2})}{\log(y_j)}  = \frac{\log (1/y_{j-2})}{\log(1/y_j)} & \le \frac{4}{3} \frac{1/ y_{j-2} - 1}{1 /y_j - 1} \\ & = \frac{4 y_{j}}{3 y_{j-2}} \frac{ 1 - y_{j-2}}{1 -  y_j} = \frac{4 y_j}{3 y_{j-2}} \frac{\left| x \right|^{j} + 1}{\left| x \right|^{j-2} + 1} \le \frac{4 y_{j}}{3 y_{j-2}}  \left| x \right|^{2},
  \end{align*}
  where the last inequality is a consequence of $\left| x \right|^2 (\left| x \right|^{j-2} + 1) \ge \left| x \right|^{j} + 1$. Since $y_j / y_{j-2}$ converges to $1$ and, thus, is bounded, it follows that $d_j$ is bounded. We conclude that $\sum_{j = 1}^m d_j = O(k+m)$.

  Let us assume that we can compute $d_1, \ldots, d_m$ for $m = \lceil 1 + \log_{|x|} \left( q 2^n - 1\right) \rceil$. In light of \eqref{eq:approx-real:dn}, we can return $J$ as the theta graph that implements the weight $w(J; q, y-1) = y_{(d_1, \ldots, d_m)} - 1$. Since $\sum_{j = 1}^m d_j = O(k+m)$ and $m = \Theta (n)$, the size of $J$ is at most a polynomial in $k$ and $n$.

  If $y$ were algebraic, computing $d_1, \ldots, d_m$ in polynomial time would be straightforward from their definition because we can efficiently check inequalities between real algebraic numbers as explained in Section~\ref{sec:complex-implementations:algebraic}. This is the approach followed in~\cite[Lemma 22]{Goldberg2019}. However, we only know that $y \in \mathsf{P}_{\mathbb{R}}$ and, thus, it is not clear how to efficiently determine whether $y_{(d_1,\ldots, d_{m-1}, d)} \ge w$ or not for any given $d$. In the rest of this proof, we show how to overcome this difficulty.

  Let $n$ be a positive integer, so $2^{-n}$ is the desired accuracy for our algorithm. Let us assume that we have computed the integers $d_1, \ldots, d_{j-1}$ and we want to compute $d_j$ for an odd positive integer $j$ with $j \ge j_0$. We are going to sequentially try all the values $d = 0, 1, \ldots$ until we have
  \begin{equation*}
    y_j < \frac{w}{y_{(d_1, \ldots, d_{j-1}, d)}} \le 1,
  \end{equation*}
  in which case we have found the value $d_j$ (see \eqref{eq:approx-real:dn:<}). Recall that $y_{(d_1, \ldots, d_{j-1}, d)}-1$ is the weight implemented by a theta graph $J_d$ whose size is bounded by a polynomial in $k$ and $j$. Therefore, by applying Lemma~\ref{lem:pc:essential} with $G = J_d$ and $\gamma = y - 1$, we can compute in polynomial time in $n$ and the size of $J_d$, a positive integer $f(n+2, J_d)$ with $f(n+2, J_d) = n + \Theta (\mathrm{size}(J_d))$ such that if $\left| \gamma - \hat{\gamma} \right| \le 2^{-f(n+2, J_d)}$, then $\left| w(G; q, \gamma) - w(G; q, \hat{\gamma}) \right| \le 2^{-n-2}$. Since $y \in \mathrm{P}_{\mathbb{R}}$, we can compute a rational number $\hat{\gamma}$ such that $\left| \gamma - \hat{\gamma} \right| \le 2^{-f(n+2, J_d)}$ in polynomial time in $n$ and the size of $J_d$. Let $\hat{y}_{(d_1,\ldots, d_{j-1}, d)} = w(G; q, \hat{\gamma}) + 1$. Then we have computed in polynomial time in $k, j$ and $n$ a rational number $\hat{y}_{(d_1,\ldots, d_{j-1}, d)}$ such that
  \begin{equation*}
    \left| \hat{y}_{(d_1, \ldots, d_{j-1}, d)} - y_{(d_1, \ldots, d_{j-1}, d)} \right| \le 2^{-n-2}.
  \end{equation*} 
  Because $\big| \hat{y}_{(d_1, \ldots, d_{j-1}, d)} - w \big|$ is a real algebraic number, we can check if the following inequality holds in polynomial time,
  \begin{equation}
    \label{eq:approx-real:condition}
    \big| \hat{y}_{(d_1, \ldots, d_{j-1}, d)} - w \big| \le 2^{-n-1}.
  \end{equation}
  If that is the case, then
  \begin{equation*}
    \left| y_{(d_1, \ldots, d_{j-1}, d)} - w \right| \le \left| y_{(d_1, \ldots, d_{j-1}, d)} - \hat{y}_{(d_1, \ldots, d_{j-1},d)} \right| + \left| \hat{y}_{(d_1, \ldots, d_{j-1}, d)} - w \right| \le 3 \cdot 2^{-n} / 4 < 2^{-n},
  \end{equation*}
  so $y_{(d_1, \ldots, d_{j-1}, d)}$ is a good enough approximation of $w$ and we can stop the algorithm (even though we have not computed $d_j$). Otherwise, we claim that $\hat{y}_{(d_1, \ldots, d_{j-1}, d)} \ge w$ if and only if $y_{(d_1, \ldots, d_{j-1}, d)} \ge w$. If $\hat{y}_{(d_1, \ldots, d_{j-1}, d)} \ge w$ and $w > y_{(d_1, \ldots, d_{j-1}, d)}$, then
  \begin{equation*}
    \big| \hat{y}_{(d_1, \ldots, d_{j-1}, d)} - w \big| \le \big| \hat{y}_{(d_1, \ldots, d_{j-1}, d)} - y_{(d_1, \ldots, d_{j-1}, d)} \big| \le 2^{-n-2}
  \end{equation*}
  and \eqref{eq:approx-real:condition} holds, a contradiction. The same reasoning applies when $\hat{y}_{(d_1, \ldots, d_{j-1}, d)} < w$ and $w \le y_{(d_1, \ldots, d_{j-1}, d)}$. Hence, we can check whether $y_{(d_1,\ldots, d_{j-1}, d)} \ge w$ or not by checking $\hat{y}_{(d_1,\ldots, d_{j-1}, d)} \ge w$, provided that \eqref{eq:approx-real:condition} does not hold. This gives a procedure to compute $d_j$ for odd $j$ with $j \ge j_0$:
  \begin{enumerate}[label=\arabic*.]
  \item Set $d = 0$.
  \item If \eqref{eq:approx-real:condition} holds, then return $d$. We have failed to compute $d_j$, but we have succeeded in finding an approximation of $w$.
  \item If $\hat{y}_{(d_1, \ldots, d_{j-1}, d+1)} \ge w$, then increase $d$ by $1$ and go to step $2$. Else, we have $d_j = d$.
  \end{enumerate}

  We repeat this procedure to compute $d_j$ sequentially until \eqref{eq:approx-real:condition} holds, in which case we stop and return the graph $J$ associated to $y_{(d_1, \ldots, d_{j-1}, d)}$.
  
  It remains to show that this procedure always halts and runs in polynomial time. In light of \eqref{eq:approx-real:dn}, we find that, for odd $m \ge \log_{|x|}( q2^{n+2} - 1 )$,
  \begin{align*}
    \left| \hat{y}_{(d_1, \ldots, d_m)} - w \right| & \le \left| \hat{y}_{(d_1, \ldots, d_m)} - y_{(d_1, \ldots, d_m)} \right| + \left| y_{(d_1, \ldots, d_m)} -w \right|  \le  2^{-n-1},
  \end{align*}
  that is, \eqref{eq:approx-real:condition} holds. Therefore, our procedure that computes non-negative integers $d_1, \ldots, d_{m-1}, d$ with $\left| y_{(d_1, \ldots, d_{m-1}, d)} - w \right| \le 2^{-n}$ halts for $m = O(n)$. As a consequence, the whole procedure runs in polynomial time in $k$ and $n$.
\end{proof}

The proof of Lemma~\ref{lem:implement-approx:pc} can be adapted to to the case $w \in (1, y^{-k}]$. The main difference is that this time we work with the decreasing sequence $\{y_{2j}\}$. We set $d_j = 0$ for odd $j$ and, for even $j$, we define $d_j$ recursively as the largest non-negative integer such that $y_{(d_1, \ldots, d_j)} \le w$. The details of the proof are left to the reader. When studying the hardness of approximating $Z_{\text{Tutte}}(G; q, \gamma)$ we only need the version stated in Lemma~\ref{lem:implement-approx:pc}.

\begin{thmshifts}
\statethmshifts
\end{thmshifts}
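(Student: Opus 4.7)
The plan is to combine the two polynomial-time approximate shifts given by Lemma~\ref{cor:implement-approx:real-with-complex} and Lemma~\ref{lem:implement-approx:pc} via the transitivity property of Lemma~\ref{lem:approx-shifts:transitivity}, with the chaining going through an intermediate point $(x_2, y_2) \in \mathcal{H}_q$ lying on the positive real axis. This intermediate is chosen so that both the hypotheses of Lemma~\ref{lem:implement-approx:pc} and those of Lemma~\ref{lem:approx-shifts:transitivity} are immediate.

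First, I would invoke Lemma~\ref{cor:implement-approx:real-with-complex} on the given $(x,y)$. Its hypotheses on $q$ and on $(x,y)$ coincide exactly with those of Theorem~\ref{thm:approx-shifts}, so it yields a polynomial-time approximate series-parallel shift from $(x,y)$ to some $(x_2, y_2) \in \mathcal{H}_q$ with $x_2, y_2 \in \mathsf{P}_{\mathbb{R}}$ and $y_2 \in (0,1)$.

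Second, I would construct a polynomial-time approximate theta shift from $(x_2, y_2)$ to $(x', y')$ by feeding $(x_2, y_2)$ into Lemma~\ref{lem:implement-approx:pc}. Its hypotheses are easy to verify: $q, x_2, y_2 \in \mathsf{P}_{\mathbb{R}}$, $q > 0$, $(x_2, y_2) \in \mathcal{H}_q$, $y_2 > 0$, and $1-q/2 \le 0 < y_2 < 1$ because $q \ge 2$. For each approximation parameter $n$, I pick an integer $k$ large enough that $y_2^k \le y'$ (in the case $y' \in (0,1]$) and run the algorithm of Lemma~\ref{lem:implement-approx:pc} on input $(k, n, y')$; for $y' > 1$, I instead use the adaptation of Lemma~\ref{lem:implement-approx:pc} to the range $w \in (1, y_2^{-k}]$ described at the end of Section~\ref{sec:rvrtvtvt}.

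Finally, I would apply Lemma~\ref{lem:approx-shifts:transitivity} to the intermediate $(x_2, y_2)$. Its required hypotheses are all immediate: $y_2 \in \mathsf{P}_{\mathbb{R}}$ is polynomial-time computable, $y_2 \ne 1$ since $y_2 \in (0,1)$, and $y_2 \notin \{1\} \cup (1-q/2 + iq\mathbb{R})$ because $y_2$ is real with $y_2 > 0 \ge 1-q/2$ for $q \ge 2$. Transitivity then yields a polynomial-time approximate shift from $(x,y)$ to $(x',y')$, and the moreover-clause of Lemma~\ref{lem:approx-shifts:transitivity} ensures this shift is series-parallel because the shift produced in step one is. The technical heart of the argument has already been carried out in the ingredient lemmas — Lemma~\ref{cor:implement-approx:real-with-complex} relies on the delicate complex-analytic construction of Lemma~\ref{lem:approx-argument}, and Lemma~\ref{lem:implement-approx:pc} on the careful handling of polynomial-time computable (rather than algebraic) weights where exact equality tests must be replaced by robust threshold comparisons — so what remains for the proof of Theorem~\ref{thm:approx-shifts} itself is the routine verification of hypotheses and a clean invocation of transitivity.
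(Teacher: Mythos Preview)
Your approach matches the paper's for the case $y' \in (0,1]$: invoke Lemma~\ref{cor:implement-approx:real-with-complex}, then Lemma~\ref{lem:implement-approx:pc}, then chain via Lemma~\ref{lem:approx-shifts:transitivity}. Your handling of $y' > 1$ via the adapted version of Lemma~\ref{lem:implement-approx:pc} is a legitimate alternative to the paper's route (the paper instead first shifts to a fixed algebraic point $(1-2q,1/2)$, $2$-stretches to get $y_2 > 1$, and then applies Lemma~\ref{lem:implement-approx:literature:1}); either works.

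However, there is a genuine gap: you only treat $y' > 0$. The theorem asserts a polynomial-time approximate series-parallel shift to \emph{every} real algebraic $(x',y') \in \mathcal{H}_q$, and this includes $y' = 0$ and $y' < 0$. Lemma~\ref{lem:implement-approx:pc} requires $w \in [y_2^k,1]$ with $y_2 \in (0,1)$, so $w > 0$; its adapted version handles $w > 1$. Neither reaches $w \le 0$, and no choice of $k$ fixes this. The paper addresses these cases by separate arguments: for $y' = 0$ it first shifts approximately to the fixed algebraic point $(1-2q,1/2)$, then uses $n$-thickenings to produce a polynomial-time approximate theta shift from $(1-2q,1/2)$ to $(1-q,0)$, and chains via Lemma~\ref{lem:approx-shifts:transitivity}; for $y' < 0$ it combines the already-established shift to $(x_3,y_3)$ with $y_3 = y'/(1-q) > 0$ together with the polynomial-time approximate series-parallel shift to $(0,1-q)$ from Corollary~\ref{cor:implement-weight:approx-1}, using the parallel-composition property of Lemma~\ref{lem:approx-shifts:composition}. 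Without something along these lines your argument does not cover the full statement.
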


\begin{proof}
	First, let us assume that $y' \in (0, 1]$. By Lemma~\ref{cor:implement-approx:real-with-complex}, there is a polynomial-time approximate series-parallel shift from $(x,y)$ to $(\tilde{x}, \tilde{y})$ for some $(\tilde{x}, \tilde{y}) \in \mathcal{H}_q$ with $\tilde{x}, \tilde{y} \in \mathsf{P}_{\mathbb{R}}$ and $\tilde{y} \in (0,1)$. Since $q \ge 2$, we have $1-q/2 \le 0$ and $\tilde{y} \in (1-q/2, 1)$. Hence, Lemma~\ref{lem:implement-approx:pc} gives us a polynomial-time approximate theta shift from $(\tilde{x},\tilde{y})$ to $(x', y')$. Since $\tilde{y} \not \in 1 - q/2 + i \mathbb{R} =  1 - q/2 + i q\mathbb{R}$ and $\tilde{x}, \tilde{y} \in \mathrm{P}_{\mathbb{R}}$, the transitivity property of polynomial-time approximate shifts, Lemma~\ref{lem:approx-shifts:transitivity}, for $(x_1, y_1) = (x,y)$, $(x_2, y_2) = (\tilde{x},\tilde{y})$ and $(x_3, y_3) = (x',y')$ gives us a polynomial-time approximate series-parallel shift from $(x,y)$ to $(x',y')$.
	
	Now we treat the case $y' = 0$. As a consequence of what we have just shown in the paragraph above, there is a polynomial-time approximate series-parallel shift from $(x,y)$ to $(1-2q, 1/2) \in \mathcal{H}_q$. An $n$-thickening gives a shift from $(1-2q, 1/2)$ to $(x_n, 2^{-n})$, where $x_n = 1 + q / (2^{-n} - 1)$, so there is also a polynomial-time approximate theta shift from $(1-2q, 1/2)$ to $(1-q, 0)$. We conclude that there is a polynomial-time approximate series-parallel shift from $(x,y)$ to $(1-q, 0)$ by applying Lemma~\ref{lem:approx-shifts:transitivity} with  $(x_1, y_1) = (x,y)$, $(x_2, y_2) = (1-2q,1/2)$ and $(x_3, y_3) = (1-q,0)$. Note that we can indeed apply Lemma ~\ref{lem:approx-shifts:transitivity} because $1-2q, 1/2 \in \mathrm{P}_{\mathbb{R}}$ and $1/2 \not \in  1 - q/2 + i q\mathbb{R}$.
	
	Now we deal with the case $y' > 1$. We use again the polynomial-time approximate series-parallel shift from $(x,y)$ to $(x_1,y_1) = (1-2q, 1/2) \in \mathcal{H}_q$. We use a $2$-stretch to $(1-2q, 1/2)$-implement $(x_2, y_2)$ with $x_2 = (1-2q)^2 \ge 9$ and $y_2 = 1 + q/(x_2-1) > 1$. Hence, there is a polynomial-time approximate series-parallel shift from $(x,y)$ to $(x_2, y_2)$. Since $x_2$ and $y_2$ are real algebraic numbers with $y_2 > 1$ and $(x_2-1)(y_2-1) = q > 0$, in view of Lemma \ref{lem:implement-approx:literature:1}, we have a polynomial-time approximate theta shift from $(x_2,y_2)$ to $(x', y')$. Note that $y_2 \not \in \{1\} \cup (1-q/2+i q \mathbb{R})$. Hence, we can apply the transitivity property shown in Lemma \ref{lem:approx-shifts:transitivity} with $(x_1, y_1) = (x, y)$, $(x_2, y_2) = (x_2, y_2)$ and $(x_3, y_3) = (x',y')$ and find a polynomial-time approximate series-parallel shift from $(x,y)$ to $(x', y')$.
	
	Finally, we study the case $y' < 0$. In light of Corollary~\ref{cor:implement-weight:approx-1}, there is a polynomial-time approximate series-parallel shift from $(x,y)$ to $(0, 1-q)$. Note that $1-q \le -1$. In this proof we have already shown that there is a polynomial-time approximate series-parallel shift from $(x,y)$ to $(x_3, y_3) \in \mathcal{H}_q$ for $y_3 = y' / (1-q) > 0$. Since $y' = y_3 (1-q)$, by Lemma~\ref{lem:approx-shifts:composition} with parameters $(x_1, y_1) = (x, y)$, $(x_2, y_2) = (0, 1-q)$ and $(x_3, y_3) = (x_3, y_3)$, we conclude that there is a polynomial-time approximate series-parallel shift from $(x,y)$ to $(x', y')$.
\end{proof}

\section{Hardness results} \label{sec:hardness}
We begin with obtaining lower bounds on $Z_{\text{Tutte}}(G; q, \gamma)$  for algebraic numbers $q$ and $\gamma$. In Section~\ref{sec:hardness:lll}, we review the algorithm of~\cite{Kannan1988} for computing algebraic representations, and in Section~\ref{sec:pre:hardness} the exact $\numP$-hardness results that we will use. The rest of the section gives various ingredients that are needed in the reduction, which are put together in Section~\ref{sec:mainproof} where we prove all of our main theorems.
\subsection{Properties of $Z_{\text{Tutte}}(G; q, \gamma)$ for algebraic numbers $q$ and $\gamma$} \label{sec:hardness:lower-bound}

In this section we give a lower bound on $Z_{\text{Tutte}}(G; q, \gamma)$ and study the degree and height of $Z_{\text{Tutte}}(G; q, \gamma)$ when $q$ and $\gamma$ are algebraic numbers. First, we have to introduce some concepts and results from algebraic number theory. The \emph{degree} of an algebraic number $\gamma$ is the degree of its minimal polynomial $p$, and we denote it by $d(\gamma)$. Recall that the \emph{degree of a field extension} $F / K$ is the dimension of $F$ as a $K$-vector space, and it is denoted by $[F : K]$. It is well-known that if $\gamma$ is algebraic, then $[ K(\gamma) : K ]$ is the degree of the minimal polynomial of $\gamma$ over $K$~\cite[Chapter 5]{Stewart2004}. In particular, we have  $[ \mathbb{Q}(\gamma) : \mathbb{Q} ] = d(\gamma)$. The \emph{usual height} of a polynomial $f \in \mathbb{Z}[x_1, \ldots, x_m]$ is the  the largest value among the absolute values of its coefficients and it is denoted by $H(f)$. The \emph{usual height} of $\gamma$ is $H(\gamma) = H(p)$. One can find several (non-equivalent) definitions of the height of an algebraic number in the literature. Another one of these definitions is the absolute logarithmic height. First, we have to introduce the \emph{Mahler's measure} of a polynomial $f \in \mathbb{Z}[x]$, which is given by
\begin{equation*}
  M(f) = \left| a_d \right| \prod_{i = 1}^d \max \{1, \left|\alpha_i\right|\},
\end{equation*}
where $f(x) = \sum_{j = 0}^d a_j x^j$, $a_d \ne 0$, and  $\alpha_1, \ldots, \alpha_d$ are the roots of $f$. It is well-known that
\begin{equation} \label{eq:mahler-inequality}
  2^{-d(f)} H(f) \le M(f) \le H(f) \sqrt{d(f)+1},
\end{equation}
see~\cite[Lemma 3.11]{Waldschmidt2000}. The \emph{Mahler's measure} of an algebraic number $\gamma$ with minimal polynomial $p$ is $M(\gamma) = M(p)$. The \emph{absolute logarithmic height} of $\gamma$ is $h(\gamma) = d(\gamma)^{-1} \log M(\gamma)$. Note that $h(\gamma) \ge 0$ because $M(\gamma) \ge 1$. Now we can state a lower bound for the evaluation of a polynomial at algebraic numbers.

\begin{lemma}[{\cite[Section 3.5.4]{Waldschmidt2000}}] \label{lem:algebraic:lower-bound}
  Let $f \in \mathbb{Z}[x_1, \ldots, x_m]$ be a polynomial in $m$ variables and let $\gamma_1, \ldots, \gamma_m$ be algebraic numbers. If $f(\gamma_1, \ldots, \gamma_m) \ne 0$, then we have
  \begin{equation*}
    \left| f(\gamma_1, \ldots, \gamma_m) \right| \ge e^{-c T},
  \end{equation*}
  where $T = \deg f + \log H(f)$, $c = D (2 + h(\gamma_1) + \cdots + h(\gamma_m))$ and $D = [ \mathbb{Q}(\gamma_1, \ldots, \gamma_m) : \mathbb{Q} ]$.
\end{lemma}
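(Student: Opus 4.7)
The plan is to derive this from the classical Liouville-type inequality $|\alpha| \geq e^{-d(\alpha) h(\alpha)}$ for nonzero algebraic $\alpha$, applied to $\alpha = f(\gamma_1, \ldots, \gamma_m)$, together with an upper bound on $h(\alpha)$ in terms of $T$, $D$, and the $h(\gamma_i)$.

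For the Liouville step, let $\alpha$ be a nonzero algebraic number of degree $d$ with minimal polynomial $p(x) = a_d x^d + \cdots + a_0 \in \mathbb{Z}[x]$ having coprime coefficients. Since $\alpha \neq 0$, one has $|a_0| \geq 1$. Writing $a_0 = \pm a_d \alpha \prod_{i \geq 2} \alpha_i$ for the Galois conjugates $\alpha_i$ of $\alpha$, and bounding $|\alpha_i| \leq \max(1, |\alpha_i|)$, one obtains
\[
1 \leq |a_0| \leq \frac{|\alpha| \cdot M(\alpha)}{\max(1, |\alpha|)},
\]
so $|\alpha| \geq M(\alpha)^{-1} = e^{-d h(\alpha)}$ when $|\alpha| < 1$, while the inequality is trivial when $|\alpha| \geq 1$ since $h(\alpha) \geq 0$.

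Applying this to $\alpha := f(\gamma_1, \ldots, \gamma_m) \in \mathbb{Q}(\gamma_1, \ldots, \gamma_m)$ gives $d(\alpha) \leq D$ and hence $|\alpha| \geq e^{-D h(\alpha)}$. I then bound $h(\alpha)$ using the standard properties of the absolute logarithmic height: $h(uv) \leq h(u) + h(v)$, $h(u^k) = k\, h(u)$, $h(\sum_{i=1}^N u_i) \leq \sum_i h(u_i) + \log N$, and $h(n) = \log|n|$ for nonzero $n \in \mathbb{Z}$. Each monomial $c_{\vec e}\, \gamma_1^{e_1} \cdots \gamma_m^{e_m}$ of $f$ then contributes height at most $\log H(f) + (\deg f)\sum_i h(\gamma_i)$, and summing over the $N \leq (\deg f + 1)^m$ monomials yields
\[
h(\alpha) \leq \log N + \log H(f) + (\deg f) \sum_i h(\gamma_i) \leq T \Bigl(2 + \sum\nolimits_i h(\gamma_i)\Bigr)
\]
after absorbing $\log N$ into the constant factor $2$ in front of $T = \deg f + \log H(f)$. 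Combining with $|\alpha| \geq e^{-D h(\alpha)}$ then produces the claimed bound $|\alpha| \geq e^{-cT}$.

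The main technical obstacle is the constant bookkeeping needed to land on exactly the advertised form of $c$. In particular, the $\log N$ term (which a priori carries a factor of $m$) must be cleanly subsumed into the multiplicative constant $2$: this is handled using $\log N \leq m\log(\deg f + 1)$ together with the observation that $m$ is controlled by $D$ (or by handling the edge case $\deg f = 0$ directly, where $\alpha$ is a nonzero rational integer and the claim $|\alpha| \geq 1$ is immediate). The substantive mathematics -- Liouville's inequality plus the standard functorial estimates for heights -- is otherwise routine.
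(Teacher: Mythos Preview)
The paper does not prove this lemma; it is quoted verbatim from Waldschmidt's book, so there is no ``paper's own proof'' to compare against. Your outline---Liouville's inequality $|\alpha|\ge e^{-d(\alpha)h(\alpha)}$ combined with the height bound $h(f(\gamma_1,\ldots,\gamma_m))\le \log L(f)+\sum_j(\deg_{x_j}f)\,h(\gamma_j)$ (which is exactly Lemma~\ref{lem:height:upper-bound} in the paper, i.e., \cite[Lemma~3.7]{Waldschmidt2000})---is the standard route and is what Waldschmidt does.

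There is, however, a genuine gap in your constant-chasing. You need $\log N\le 2\deg f$ (after cancellation) to absorb the monomial count into the advertised constant~$2$, and you propose to achieve this via ``$m$ is controlled by $D$''. That claim is false: take $\gamma_1=\cdots=\gamma_m$ all equal (or all rational), so that $D=[\mathbb{Q}(\gamma_1,\ldots,\gamma_m):\mathbb{Q}]$ is fixed while $m$ is arbitrary. With $\deg f=1$ and $m$ large one already has $\log\binom{m+1}{1}=\log(m+1)>2$, so the inequality $\log N\le 2\deg f$ fails. Either the constant in Waldschmidt's precise formulation differs slightly from the paper's paraphrase (e.g., Waldschmidt may state the bound with $L(f)$ rather than $H(f)$, which removes the issue entirely), or one needs a sharper combinatorial estimate than the crude $(\deg f+1)^m$. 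In any case, your justification for collapsing the $m$-dependence is incorrect as written. For the paper's purposes this is harmless, since the lemma is only ever applied with $m=2$ (Corollary~\ref{cor:lower-bound:tutte}), where the bookkeeping is trivial; but as a proof of the general statement your argument is incomplete at exactly the point you flagged.
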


\begin{corollary} \label{cor:lower-bound:tutte}
  Let $q$ and $\gamma$ be algebraic numbers. We can compute a rational number $C_{q, \gamma}$ with $C_{q, \gamma} > 1$ such that, for any graph $G$, either $Z_{\text{Tutte}}(G; q, \gamma) = 0$ or $\left| Z_{\text{Tutte}}(G; q, \gamma) \right| \ge C_{q, \gamma}^{-\mathrm{size}(G)}$.
\end{corollary}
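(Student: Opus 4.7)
The plan is to realise $Z_{\text{Tutte}}(G;q,\gamma)$ as the evaluation at $(q,\gamma)$ of an integer polynomial in two variables whose degree and height are controlled by $\mathrm{size}(G)$, and then to apply Lemma~\ref{lem:algebraic:lower-bound}. Concretely, from \eqref{eq:tutte} one has $Z_{\text{Tutte}}(G;q,\gamma) = f_G(q,\gamma)$ where
\[
f_G(x,y) \;=\; \sum_{A \subseteq E} x^{k(A)}\, y^{|A|} \;\in\; \mathbb{Z}[x,y].
\]
Since $k(A) \le |V|$ and $|A| \le |E|$, the total degree of $f_G$ is at most $|V|+|E| = \mathrm{size}(G)$. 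Each coefficient of $f_G$ counts a certain family of subsets of $E$ and is therefore bounded by $2^{|E|} \le 2^{\mathrm{size}(G)}$, so $\log H(f_G) \le \mathrm{size}(G)\log 2$.

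Assuming $Z_{\text{Tutte}}(G;q,\gamma)\neq 0$, Lemma~\ref{lem:algebraic:lower-bound} applied to $f_G$ with $(\gamma_1,\gamma_2)=(q,\gamma)$ gives
\[
\bigl|Z_{\text{Tutte}}(G;q,\gamma)\bigr| \;\ge\; e^{-cT},
\]
where $T = \deg f_G + \log H(f_G) \le (1+\log 2)\,\mathrm{size}(G)$ and $c = D\bigl(2+h(q)+h(\gamma)\bigr)$ with $D = [\mathbb{Q}(q,\gamma):\mathbb{Q}]$. In particular, for any real number $C_{q,\gamma} \ge e^{c(1+\log 2)}$ we get the desired inequality $|Z_{\text{Tutte}}(G;q,\gamma)| \ge C_{q,\gamma}^{-\mathrm{size}(G)}$, and since $c > 0$ any such $C_{q,\gamma}$ automatically satisfies $C_{q,\gamma} > 1$.

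It remains to argue that we can effectively produce a \emph{rational} $C_{q,\gamma}$ with these properties from the representations of $q$ and $\gamma$. From those representations we read off $d(q)$, $d(\gamma)$, $H(q)$, $H(\gamma)$. We bound $D \le d(q)\,d(\gamma)$, and the inequality $M(\alpha) \le H(\alpha)\sqrt{d(\alpha)+1}$ in \eqref{eq:mahler-inequality} yields
\[
h(q) \;\le\; d(q)^{-1}\log\!\bigl(H(q)\sqrt{d(q)+1}\bigr), \qquad h(\gamma) \;\le\; d(\gamma)^{-1}\log\!\bigl(H(\gamma)\sqrt{d(\gamma)+1}\bigr).
\]
Assembling these gives a rational upper bound $\tilde c$ on $c$, and then taking any rational $C_{q,\gamma} \ge 3^{\lceil (1+\log 2)\tilde c\rceil}$ (using $e < 3$) produces an effectively computable rational upper bound on $e^{c(1+\log 2)}$, as required. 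There is no real obstacle here: the only non-trivial ingredient is Lemma~\ref{lem:algebraic:lower-bound}, and everything else reduces to routine degree/height bookkeeping.
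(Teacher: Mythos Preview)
Your proof is correct and follows essentially the same approach as the paper: realise $Z_{\text{Tutte}}(G;q,\gamma)$ as $f_G(q,\gamma)$ for an integer polynomial with $\deg f_G \le n+m$ and $H(f_G)\le 2^m$, apply Lemma~\ref{lem:algebraic:lower-bound}, and then compute a rational $C_{q,\gamma}$ by bounding $D\le d(q)d(\gamma)$ and $h(q),h(\gamma)$ via~\eqref{eq:mahler-inequality}. The only cosmetic difference is that the paper absorbs the $(1+\log 2)$ factor into the cruder bound $T\le 2\,\mathrm{size}(G)$ and aims directly for $C_{q,\gamma}>e^{2c}$.
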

\begin{proof}
  Recall that we represent an algebraic number $\gamma$ as its minimal polynomial $p$ and a rectangle of the complex plane where $\gamma$ is the only root of $p$. Let $G = (V, E)$ be a graph. Let $n = |V|$ and $m = |E|$. Let us assume that $Z_{\text{Tutte}}(G; q, \gamma) \ne 0$. We can apply Lemma~\ref{lem:algebraic:lower-bound} for $f(q,\gamma) = Z_{\text{Tutte}}(G; q, \gamma)$ to find that $\left| Z_{\text{Tutte}}(G; q, \gamma) \right| \ge e^{-c T}$, where $c$ and $T$ are as in Lemma~\ref{lem:algebraic:lower-bound}. We have $c = D (2 + h(q) + h(\gamma))$ and $D = [ \mathbb{Q}(q, \gamma) : \mathbb{Q} ]$, so $c \ge 2$. Note that, by definition of $Z_{\text{Tutte}}$, we have $H(f) \le 2^m$ and $\deg f \le n + m$. Hence, we find that $\left| Z_{\text{Tutte}}(G; q, \gamma) \right| \ge e^{-2 c \, \mathrm{size}(G)}$. It remains to compute a rational number $C_{q, \gamma}$ in $(e^{2 c}, \infty)$ to conclude the result. From $D = [ \mathbb{Q}(q, \gamma) : \mathbb{Q} ]$, we can compute $D$ exactly. Moreover, we can apply \eqref{eq:mahler-inequality} to upper bound $h(q)$ and $h(\gamma)$ in terms of the usual heights and degrees of $q$ and $\gamma$, and compute an appropriate rational number $C_{q, \gamma}$ with the help of these upper bounds.
\end{proof}

The case $q = 2$ (Ising model) of Corollary~\ref{cor:lower-bound:tutte} has previously been shown in~\cite[Lemma 6.4]{Goldberg2017}. Note that the approach followed in this section can be applied to obtain lower bounds for other partition functions.

In the rest of this section we upper bound the degree and the usual height of the algebraic number $Z_{\text{Tutte}}(G; q, \gamma)$ in terms of the usual heights and degrees of $q$ and $\gamma$. We will make use of these bounds in the proof of Lemma~\ref{lem:compute-fraction:q>1}.

Let $q$ and $\gamma$ be two algebraic numbers. By the tower law, we have $[ \mathbb{Q}(q, \gamma) : \mathbb{Q} ] = [ \mathbb{Q}(q, \gamma) : \mathbb{Q}(q) ] [ \mathbb{Q}(q) : \mathbb{Q} ] \le  d(q) d(\gamma)$, where we used that the degree of the minimal polynomial of $\gamma$ over $\mathbb{Q}(q)$ is bounded by $d(\gamma)$. Since $Z_{\text{Tutte}}(G; q, \gamma)$ is in $\mathbb{Q}(q, \gamma)$, it follows that its degree is bounded by $d(q) d(\gamma)$.

Now we argue how we can bound the usual height of $Z_{\text{Tutte}}(G; q, \gamma)$. A well-known property of the absolute logarithmic height is that $h(\alpha \beta) \le h(\alpha) + h(\beta)$, $h(\alpha + \beta) \le \log 2 +  h(\alpha) + h(\beta)$ and $h(1/\alpha) = h(\alpha)$~\cite[Property 3.3]{Waldschmidt2000}. Moreover, if $n$ is an integer, then $h(n) = \log |n|$. A more general property is the following one.

\begin{lemma}[{\cite[Lemma 3.7]{Waldschmidt2000}}] \label{lem:height:upper-bound}
  Let $f \in \mathbb{Z}[x_1, \ldots, x_t]$ be a non-zero polynomial in $t$ variables with integer coefficients. Let $\gamma_1, \ldots, \gamma_t$ be algebraic numbers. Then
  \begin{equation*}
    h \left( f \left( \gamma_1, \ldots, \gamma_t \right) \right) \le \log L(f) + \sum_{j = 1}^t \deg_{x_j} (f) h(\gamma_j), 
  \end{equation*}
  where $L(f)$ is the sum of the absolute values of the coefficients of $f$ and $\deg_{x_j} (f)$ is the degree of $f$ with respect to the $j$-th variable.
\end{lemma}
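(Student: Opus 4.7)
The plan is to invoke the adelic characterization of the absolute logarithmic height. Fix a number field $K$ containing $\gamma_1, \ldots, \gamma_t$ and normalize the places $v$ of $K$ so that $|\cdot|_v$ restricts to the standard absolute value on the completion $\mathbb{Q}_v$. Then for every $\alpha \in K$ one has
\[
  h(\alpha) = \frac{1}{[K:\mathbb{Q}]}\sum_v [K_v:\mathbb{Q}_v]\, \log\max\{1, |\alpha|_v\},
\]
with the sum over all places of $K$. I would apply this to $\alpha = f(\gamma_1,\ldots,\gamma_t)$ and bound each local contribution separately, treating archimedean places (where the ordinary triangle inequality produces a $\log L(f)$ term) differently from non-archimedean ones (where the ultrametric inequality, together with the fact that the coefficients are integers, absorbs $L(f)$ entirely).

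Write $f = \sum_{\mathbf{e}} c_{\mathbf{e}}\, \mathbf{x}^{\mathbf{e}}$, with $\mathbf{e} = (e_1,\ldots,e_t)$ and $0 \le e_j \le d_j := \deg_{x_j}(f)$. At an archimedean $v$, the triangle inequality and the crude bound $|\gamma_j|_v^{e_j} \le \max\{1,|\gamma_j|_v\}^{d_j}$ give
\[
|f(\gamma)|_v \;\le\; \sum_{\mathbf{e}} |c_{\mathbf{e}}| \prod_{j} |\gamma_j|_v^{e_j} \;\le\; L(f) \prod_{j} \max\{1,|\gamma_j|_v\}^{d_j},
\]
whence $\log\max\{1,|f(\gamma)|_v\} \le \log L(f) + \sum_j d_j \log\max\{1,|\gamma_j|_v\}$. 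At a non-archimedean $v$, since each $c_{\mathbf{e}} \in \mathbb{Z}$ has $|c_{\mathbf{e}}|_v \le 1$, the ultrametric inequality yields $|f(\gamma)|_v \le \max_{\mathbf{e}} \prod_j |\gamma_j|_v^{e_j} \le \prod_j \max\{1,|\gamma_j|_v\}^{d_j}$, so $\log\max\{1,|f(\gamma)|_v\} \le \sum_j d_j \log\max\{1,|\gamma_j|_v\}$ with \emph{no} $\log L(f)$ term.

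Summing these bounds weighted by $[K_v:\mathbb{Q}_v]/[K:\mathbb{Q}]$ over all $v$, the contributions involving $\gamma_j$ combine into $d_j\, h(\gamma_j)$ for each $j$ by the very definition of $h$, while the archimedean $\log L(f)$ terms aggregate to
\[
  \log L(f) \cdot \frac{1}{[K:\mathbb{Q}]} \sum_{v \mid \infty} [K_v:\mathbb{Q}_v] \;=\; \log L(f),
\]
using that the archimedean local degrees of $K$ sum to $[K:\mathbb{Q}]$. This gives the stated inequality. There is no real obstacle here beyond keeping the normalizations straight: the argument is a routine local--global decomposition, and the only substantive input is that integer coefficients have non-archimedean absolute value at most one, which is precisely what prevents an $L(f)$ contribution from arising at the finite places.
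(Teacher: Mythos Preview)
Your proof is correct and is precisely the standard argument; the paper does not supply its own proof of this lemma but simply cites it from Waldschmidt~\cite[Lemma~3.7]{Waldschmidt2000}, where essentially the same local--global computation is carried out. There is nothing to add: your treatment of the archimedean versus non-archimedean places, and in particular your observation that integer coefficients contribute nothing at the finite places, is exactly the point.
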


\begin{corollary} \label{cor:degree-height}
  Let $q$ and $\gamma$ be algebraic numbers. Then, for any graph $G =(V, E)$ with $n = |V|$ and $m = |E|$, we have
  \begin{equation*}
    d \left( \frac{Z_{s|t}(G; q, \gamma)}{Z_{st}(G; q, \gamma)}  \right) \le d \left( q \right) d \left( \gamma \right) \mbox{\ \  and\  \ }
    H \left( \frac{Z_{s|t}\left(G; q, \gamma\right)}{Z_{st}\left(G; q, \gamma\right)}  \right) \le \left(2^{m+1/2} e^{n h(q) + m h(\gamma)} \right)^{2 d(q) d(\gamma)}.
  \end{equation*}
\end{corollary}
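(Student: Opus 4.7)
The plan is to bound the degree via the tower law and bound the usual height by going through the absolute logarithmic height (which is well-behaved under arithmetic operations) and then converting to the usual height via the Mahler--height inequality \eqref{eq:mahler-inequality}.

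For the degree bound, observe that $Z_{s|t}(G;q,\gamma)$ and $Z_{st}(G;q,\gamma)$ both lie in the field $\mathbb{Q}(q,\gamma)$, since they are polynomial expressions in $q$ and $\gamma$ with integer coefficients. Hence their ratio, when defined, also belongs to $\mathbb{Q}(q,\gamma)$. As noted in the paragraph preceding the corollary, the tower law and the fact that the minimal polynomial of $\gamma$ over $\mathbb{Q}(q)$ has degree at most $d(\gamma)$ yield $[\mathbb{Q}(q,\gamma):\mathbb{Q}] \leq d(q) d(\gamma)$, and therefore $d\bigl(Z_{s|t}/Z_{st}\bigr) \leq d(q) d(\gamma)$.

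For the height bound, I would first view $Z_{st}(G;q,\gamma) = \sum_{A} q^{k(A)} \gamma^{|A|}$ (summing over those $A \subseteq E$ for which $s$ and $t$ lie in the same component) as the evaluation of a polynomial $f_{st} \in \mathbb{Z}[x_1,x_2]$ at $(q,\gamma)$, and similarly $Z_{s|t}$ as the evaluation of $f_{s|t}$. Both polynomials have non-negative integer coefficients that sum to at most $2^m$ (one contribution per subset of $E$), degree in $x_1$ at most $n$ (since $k(A) \leq n$), and degree in $x_2$ at most $m$ (since $|A| \leq m$). Applying Lemma~\ref{lem:height:upper-bound} gives $h(Z_{s|t}), h(Z_{st}) \leq m \log 2 + n h(q) + m h(\gamma)$. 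Then, using the standard properties $h(\alpha\beta) \leq h(\alpha) + h(\beta)$ and $h(1/\alpha) = h(\alpha)$, I obtain
\begin{equation*}
  h\bigl(Z_{s|t}/Z_{st}\bigr) \leq 2\bigl(m \log 2 + n h(q) + m h(\gamma)\bigr).
\end{equation*}

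Finally, I would convert this logarithmic-height bound into a bound on the usual height using \eqref{eq:mahler-inequality}, which yields $H(\alpha) \leq 2^{d(\alpha)} M(\alpha) = 2^{d(\alpha)} e^{d(\alpha) h(\alpha)}$ for any algebraic number $\alpha$. Applied to $\alpha = Z_{s|t}/Z_{st}$, together with the degree bound $d(\alpha) \leq d(q) d(\gamma)$ just established and the height bound above, this gives
\begin{equation*}
  H\bigl(Z_{s|t}/Z_{st}\bigr) \leq 2^{d(q) d(\gamma)} \cdot e^{2 d(q) d(\gamma) (m \log 2 + n h(q) + m h(\gamma))} = \bigl(2^{m+1/2} e^{n h(q) + m h(\gamma)}\bigr)^{2 d(q) d(\gamma)},
\end{equation*}
which is the claimed inequality. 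The argument is essentially a routine bookkeeping exercise; the only non-trivial ingredient is Lemma~\ref{lem:height:upper-bound} together with \eqref{eq:mahler-inequality}, both of which are available from the literature, so I do not expect any serious obstacles.
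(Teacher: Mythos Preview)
Your proof is correct and follows essentially the same route as the paper's: bound the degree via the tower law since the ratio lies in $\mathbb{Q}(q,\gamma)$, bound each of $h(Z_{st})$ and $h(Z_{s|t})$ via Lemma~\ref{lem:height:upper-bound} using $L(f_{st}), L(f_{s|t})\le 2^m$, add them to bound $h(Z_{s|t}/Z_{st})$, and then convert to the usual height through \eqref{eq:mahler-inequality}. The final algebraic simplification to $(2^{m+1/2}e^{nh(q)+mh(\gamma)})^{2d(q)d(\gamma)}$ is exactly what the paper obtains.
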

\begin{proof}
  The degree bound on $Z_{s|t}(G; q, \gamma) / Z_{st}(G; q, \gamma)$ follows from the fact that that it is in $\mathbb{Q}(q, \gamma)$. For its absolute logarithmic height, we have
  \begin{equation*}
     h \left( \frac{Z_{s|t}(G; q, \gamma)}{Z_{st}(G; q, \gamma)} \right) \le h \left( Z_{st}(G; q, \gamma) \right) + h \left( Z_{s|t}(G; q, \gamma) \right).
  \end{equation*}
  Note that $L(Z_{st}(G; q, \gamma)) + L(Z_{s|t}(G; q, \gamma)) = 2^m$. As a consequence of Lemma~\ref{lem:height:upper-bound}, we find that
  \begin{equation*}
    h \left( Z_{st}(G; q, \gamma) \right) + h \left( Z_{s|t}(G; q, \gamma) \right)  \le  2 \left(m \log 2 + n h(q) + m h(\gamma)\right).
  \end{equation*}
  Recall that $M(\alpha) = \exp (d(\alpha) h(\alpha))$. Thus, the bounds on the Mahler's measure \eqref{eq:mahler-inequality} yield the inequality $H(\alpha) \le (2\exp ( h(\alpha)))^{d(\alpha)}$. We conclude that
  \begin{align*}
    H \left( \frac{Z_{s|t}(G; q, \gamma)}{Z_{st}(G; q, \gamma)} \right) & \le  \left(2  e^{2 \left(m \log 2 + n h(q) + m h(\gamma)\right) } \right)^{d(q) d(\gamma)} = \left(2^{m+1/2} e^{n h(q) + m h(\gamma)} \right)^{2d(q) d(\gamma)}. \qedhere 
  \end{align*}
\end{proof}

One could derive analogous bounds to those of Corollary~\ref{cor:degree-height} for the algebraic number $Z_{\text{Tutte}}(G; q, \gamma)$ by applying the same argument.

\subsection{Computing representations of algebraic numbers via approximations} \label{sec:hardness:lll}

Kannan, Lenstra and Lov\'asz~\cite{Kannan1988} showed how to reconstruct the minimal polynomial of an algebraic number from a certain number of digits of its binary expansion, and we will use their algorithm as a black-box in our reduction of Section~\ref{sec:hardness:real}, in the following form.

\begin{lemma}[{\cite[Theorem 1.19]{Kannan1988}}] \label{lem:lll}
  Let $\alpha$ be an algebraic number and let $d$ and $U$ be upper bounds on the degree and usual height, respectively, of $\alpha$. Suppose that we are given a rational approximation $\overline{\alpha}$ to $\alpha$ such that $\left| \alpha - \overline{\alpha} \right| \le 2^{-\nbits}/(12 d)$, where $\nbits$ is the smallest positive integer such that
  \begin{equation*}
    2^\nbits \ge 2^{d^2 / 2} (d + 1)^{(3d + 4)/2} U^{2d}.
  \end{equation*}
  Then the minimal polynomial of $\alpha$ can be determined in $O(d^5 (d + \log U ) )$ arithmetic operations on integers having $O(d^2 (d + \log U))$ binary bits.
\end{lemma}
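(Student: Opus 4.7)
The plan is to reconstruct the minimal polynomial by applying the LLL lattice basis reduction algorithm to a carefully chosen lattice in $\mathbb{Z}^{d+2}$. For a scaling parameter $N$ of order $2^\nbits$, let $r_j$ denote the nearest integer to $N \overline{\alpha}^j$, and consider the lattice $L$ spanned by the rows of the $(d+1) \times (d+2)$ integer matrix whose $j$-th row ($j = 0, \ldots, d$) has a $1$ in position $j+1$, zeros in the other standard coordinates, and $r_j$ in the last coordinate. A lattice vector then corresponds to a coefficient tuple $(a_0, \ldots, a_d)$ of a polynomial $P(x) = \sum_j a_j x^j$, together with a last coordinate $\sum_j a_j r_j$ that approximates $N P(\overline{\alpha})$.

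First I would observe that the minimal polynomial $p_\alpha$ of $\alpha$, viewed as a coefficient vector, already lies in $L$ and has small norm: its first $d+1$ coordinates are bounded by $U$, and its last coordinate is close to $N p_\alpha(\overline{\alpha})$, which by the mean value theorem (using $p_\alpha(\alpha) = 0$) is of order $N U \cdot d \cdot |\overline{\alpha} - \alpha|$. The precision assumption $|\alpha - \overline{\alpha}| \le 2^{-\nbits}/(12 d)$ is tuned precisely so that this contribution stays $O(U)$, i.e.\ $p_\alpha$ contributes a lattice vector of norm $O(U \sqrt{d+1})$.

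Next, apply LLL to $L$ to obtain a non-zero lattice vector whose norm is at most $2^{d/2}$ times the length of the shortest vector, hence at most $2^{O(d)} U \sqrt{d+1}$. Interpret this vector as a polynomial $P$ of degree $\le d$ whose coefficients and whose rounded value at $\overline{\alpha}$ (multiplied by $N$) are both bounded by this quantity. Dividing the last-coordinate bound by $N \asymp 2^\nbits$ gives $|P(\overline{\alpha})| \le 2^{-\nbits} \cdot \mathrm{poly}(d, U)$, and then $|P(\alpha)| \le |P(\overline{\alpha})| + \max |P'| \cdot |\overline{\alpha} - \alpha| \le 2^{-\nbits} \cdot \mathrm{poly}(d, U)$.

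The hard part, and the reason for the specific quantitative bound on $2^\nbits$, is ruling out $P(\alpha) \ne 0$: by the algebraic number-theoretic lower bound in the spirit of Lemma~\ref{lem:algebraic:lower-bound}, applied to the non-zero algebraic number $P(\alpha)$, one has $|P(\alpha)| \ge 2^{-\Theta(d^2 + d \log U)}$ whenever $P(\alpha) \ne 0$. The calibrated choice $2^\nbits \ge 2^{d^2/2} (d + 1)^{(3d+4)/2} U^{2d}$ is designed to make the upper bound above strictly smaller than this lower bound, forcing $P(\alpha) = 0$. Hence the minimal polynomial of $\alpha$ divides $P$, and since $\deg P \le d$ we recover it by factoring $P$ over $\mathbb{Z}$ in polynomial time (for example via the Lenstra--Lenstra--Lov\'asz polynomial factorisation algorithm). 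Tracking bit-sizes through the LLL run and the subsequent factorisation yields the complexity bounds stated in the lemma.
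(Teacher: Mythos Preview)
The paper does not prove this lemma at all: it is quoted verbatim from \cite[Theorem 1.19]{Kannan1988} and used as a black box, so there is no ``paper's own proof'' to compare against. Your sketch is essentially the argument of Kannan--Lenstra--Lov\'asz themselves: embed candidate polynomials into a lattice via the scaled powers $N\overline{\alpha}^j$, observe that the minimal polynomial gives a short vector, run LLL to find a short vector, and use a Liouville-type separation bound to force the corresponding polynomial to vanish at $\alpha$. That is the right picture, and it is appropriate here since the present paper offers nothing beyond the citation.

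A couple of small caveats on your sketch, in case you intend to flesh it out. First, the mean-value bound $|p_\alpha(\overline{\alpha})| \lesssim U d\,|\overline{\alpha}-\alpha|$ tacitly assumes a bound on $|\overline{\alpha}|$ (and hence on $|\alpha|$); this is fine since $|\alpha| \le 1 + U$ by the usual root bound, but it should be said. Second, the separation bound you invoke is not quite Lemma~\ref{lem:algebraic:lower-bound} from this paper but rather the resultant-based estimate in \cite{Kannan1988}: if $P$ has integer coefficients, degree $\le d$, height $\le H$, and $P(\alpha)\ne 0$, then $|P(\alpha)|$ is bounded below by roughly $(d+1)^{-d} U^{-d} H^{1-d}$, and it is the precise constants there that dictate the shape $2^{d^2/2}(d+1)^{(3d+4)/2}U^{2d}$ of the threshold for $2^\nbits$.
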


The algorithm in Lemma~\ref{lem:lll} is based on the Lenstra–Lenstra–Lov\'asz lattice basis reduction algorithm, we refer the reader to~\cite{Yap2000} for more details.

\subsection{Exact Hardness results} \label{sec:pre:hardness}

We will use the following hardness results from~\cite{Jaeger1990} regarding the problem of exactly evaluating $Z_{\mathrm{Tutte}}(G;q,\gamma)$, given a graph $G$. We refer to this problem as $\ET(q, \gamma)$. Jaeger et al.~\cite{Jaeger1990} identify the following $9$ ``special'' points of the Tutte plane: $(1, -1)$, $(0,0)$, $(4, -2)$, $(2, -2)$, $(2, -1)$, $(2, -i-1)$, $(2, i-1)$, $(3, \omega_3^2-1)$, and $(3, \omega_3-1)$,  where $i = \sqrt{-1}$ and $\omega_3 = \exp(2 \pi i / 3)$.\footnote{In the $(x,y)$-parametrisation, the  special points are $(0,0)$, $(1,1)$, $(-1,-1)$,  $(0,-1)$, $(-1,0)$, $(i,-i)$, $(-i,i)$, $(\omega_3, \omega_3^2)$, and $(\omega_3^2, \omega_3)$.} With these special points in mind, their main result on the complexity of $\ET(q, \gamma)$ can be stated as follows.

\begin{theorem}[{\cite[Proposition 1]{Jaeger1990}}]  \label{thm:hardness:exact}
 Let $q$ and $\gamma$ be algebraic numbers. Then $\ET(q, \gamma)$ is $\# \mathsf{P}$-hard unless $q = 1$ or $(q, \gamma)$ is a special point, in which case $\ET(q, \gamma)$ is in $\mathsf{FP}$.
\end{theorem}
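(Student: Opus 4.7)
The tractable side is straightforward. When $q = 1$, every term in $Z_{\mathrm{Tutte}}(G; 1, \gamma) = \sum_{A \subseteq E} q^{k(A)} \gamma^{|A|}$ has $q^{k(A)} = 1$, so the sum collapses to $(1 + \gamma)^{|E|}$, which lies in $\mathsf{FP}$. Each of the eight special points admits a classical polynomial-time evaluation: the real special points with $q \in \{2, 4\}$ correspond to counting spanning trees/forests, or to Eulerian-subgraph quantities that admit Pfaffian/matrix-tree formulas; the non-real special points $(2, \pm i - 1)$ and $(3, \omega_3^{\pm 1} - 1)$ can be evaluated via Gaussian elimination over $\mathbb{Z}[i]$ and $\mathbb{Z}[\omega_3]$ respectively, using a representation of the Tutte polynomial at those points as a determinant/Pfaffian derived from the unitary structure of the underlying state sum. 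I would take these classical tractable evaluations as known rather than re-deriving them.

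For the $\# \mathsf{P}$-hardness direction, my plan is a two-stage reduction. First (anchor hardness): on each hyperbola $\mathcal{H}_q$ for integer $q \geq 2$, identify at least one point where hardness is immediate. For integer $q \geq 3$, $Z_{\mathrm{Tutte}}(G; q, -1)$ equals, up to sign, the chromatic polynomial $P(G, q)$, and counting proper $q$-colourings is $\# \mathsf{P}$-hard. For $q = 2$, hardness of the Ising partition function at a suitable real $\gamma$ (e.g., encoding counting of perfect matchings after a standard transformation) provides the anchor. Second (propagation via gadgets): use the implementations of Section~\ref{sec:pre:weights} --- series compositions, parallel compositions, $n$-thickenings, and $n$-stretchings --- to replace each edge of an input graph $G$ by a gadget whose effective weight, as a function of the original $\gamma$, spans enough values along $\mathcal{H}_q$ to recover $Z_{\mathrm{Tutte}}(G; q, \gamma')$ at an anchor $\gamma'$ by polynomial interpolation. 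Concretely, one queries the oracle on $(n_1, n_2)$-blow-ups of $G$ with $n_1, n_2 = O(|E|)$, producing a linear system in the coefficients of the univariate polynomial $Z_{\mathrm{Tutte}}(G; q, \cdot)$; generic sampling of effective weights renders the Vandermonde-like system invertible and a single polynomial-time reduction suffices.

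To cover non-integer algebraic $q$, I would add a second interpolation in the $q$-direction, using that $Z_{\mathrm{Tutte}}(G; q, \gamma)$ is a polynomial in $q$ of degree at most $|V|$; oracle queries on graphs obtained by joining $G$ to auxiliary gadgets (cliques, thickenings, isolated vertices) that tune $q^{k(A)}$ recover its $q$-coefficients, after which one specialises to an integer $q' \geq 3$ and invokes the chromatic hardness above. All arithmetic must be carried out in the number field $\mathbb{Q}(q, \gamma)$ with polynomial-sized representations, which is exactly the effective algebraic-number machinery recalled in Section~\ref{sec:complex-implementations:algebraic}. The main obstacle throughout is showing that the interpolation systems are singular \emph{precisely} on the claimed tractable set. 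The effective-weight recursion for thickenings/stretchings is governed by the maps $y \mapsto y^n$ and $x \mapsto x^n$ on $\mathcal{H}_q$; at the special points the $(x,y)$-orbit under these maps lies on the unit circle and is either a root-of-unity cycle (the points involving $i$ and $\omega_3$) or a sign-collapsing fixed configuration (the points $(\pm 1, \mp 1)$ etc.), which causes the effective weights to visit only a finite set and obstructs interpolation. The bulk of the work, as in the original proof by Jaeger, Vertigan and Welsh, is a careful case-by-case algebraic verification that this collapse occurs only at the listed nine exceptional points and nowhere else on the Tutte plane.
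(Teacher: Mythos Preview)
The paper does not prove this theorem at all; it is quoted directly from Jaeger--Vertigan--Welsh~\cite{Jaeger1990} and used as a black box. So there is no ``paper's own proof'' to compare against, and your sketch is an attempt to reconstruct the original JVW argument rather than anything in this paper.

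Your framework for the hardness direction --- anchor hardness plus propagation by thickenings/stretchings and interpolation along the hyperbola --- is indeed the JVW strategy. But your treatment of non-integer $q$ contains a genuine gap. You propose to ``interpolate in the $q$-direction'' by querying the oracle on graphs augmented with cliques, thickenings, or isolated vertices that ``tune $q^{k(A)}$''. This cannot work: the oracle evaluates $Z_{\mathrm{Tutte}}(\cdot\,; q, \gamma)$ for a \emph{fixed} $q$, and no graph construction changes that parameter. Adding isolated vertices multiplies the answer by a known power of $q$; attaching gadgets shifts the effective $\gamma$ along $\mathcal{H}_q$; neither lets you sample the polynomial at a different $q$. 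The actual JVW argument never leaves the hyperbola $\mathcal{H}_q$: for each $q \notin \{0,1\}$ it locates a $\#\mathsf{P}$-hard anchor \emph{on that hyperbola} (not by passing to integer $q'$), and then uses that generic thickening/stretching orbits on $\mathcal{H}_q$ are infinite to interpolate in $\gamma$ alone. The special points are exactly where those orbits collapse to finite sets, as you correctly note at the end.

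A minor point: you write ``eight special points'' in one place and ``nine exceptional points'' in another; the paper lists nine. Your description of the tractable cases is also somewhat imprecise (the real special points are not all spanning-tree/forest counts), though you are right to treat these evaluations as classical and not re-derive them.
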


In~\cite{Vertigan2005}, Vertigan studied the complexity of the problem $\PT(q, \gamma)$, which also turns out to be hard for most parameters $q$ and $\gamma$.

\begin{theorem}[{\cite[Theorem 5.1]{Vertigan2005}}] \label{thm:hardness:exact-planar}
  Let $q$ and $\gamma$ be algebraic numbers. Then $\PT(q, \gamma)$ is $\# \mathsf{P}$-hard unless $q \in \{1, 2\}$ or $(q, \gamma)$ is a special point, in which case $\PT(q, \gamma)$ is in $\mathsf{FP}$.
\end{theorem}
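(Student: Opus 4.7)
The plan is to handle the tractable cases and the hardness cases separately. For the tractable direction, at $q=1$ the multivariate Tutte polynomial collapses to $\prod_e (1 + \gamma)$ (for uniform $\gamma$) since every term in~\eqref{eq:tutte:def:general} contributes $q^{k(A)}=1$, and more generally one obtains a product formula that is trivially computable. At $q=2$ restricted to planar graphs, polynomial-time tractability comes from the Fisher--Kasteleyn--Temperley Pfaffian method: the Ising partition function on a planar graph can be written as a Pfaffian of a skew-symmetric matrix of size $O(|V|)$, which is evaluable in polynomial time. The nine special points $(1,-1),(0,0),(4,-2),(2,-2),(2,-1),(2,-i-1),(2,i-1),(3,\omega_3^2-1),(3,\omega_3-1)$ each admit an explicit closed-form evaluation, catalogued already in~\cite{Jaeger1990}, so $\PT(q,\gamma)$ at each of them is in $\mathsf{FP}$.

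For the hardness direction, my starting point would be Theorem~\ref{thm:hardness:exact}, which gives $\numP$-hardness for general graphs at every non-special $(q,\gamma)$ with $q\ne 1$, together with known planar $\numP$-hardness of the chromatic polynomial at integer $q\ge 3$ (equivalently $\ET(q,-1)$ on planar $G$, a Valiant-style base case). The tools available to me that preserve planarity are exactly the series and parallel compositions of Section~\ref{sec:pre:weights} (both of which move along the hyperbola $\mathcal{H}_q$ via the $n$-thickening map $y\mapsto y^n$ and the $n$-stretching map $x\mapsto x^n$), together with planar duality, which realises the Tutte-polynomial symmetry $T_{G^*}(x,y)=T_G(y,x)$. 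My goal is to show that, given a non-special $(q,\gamma)$ with $q\notin\{1,2\}$, repeated application of these planar-preserving shifts (possibly composed with duality) produces a set of accessible points on $\mathcal{H}_q\cup \mathcal{H}_q^{\mathrm{swap}}$ large enough to interpolate the univariate restriction of the Tutte polynomial, or to land on a point known to be planar $\numP$-hard.

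The interpolation step is standard in spirit: with $q$ fixed, $Z_{\mathrm{Tutte}}(G;q,\gamma)$ is a polynomial of degree at most $|E|$ in $\gamma$, so evaluating at $|E|+1$ distinct values of $\gamma$ on the same hyperbola suffices to recover the coefficients. The $k$-thickening operation lets me convert a single oracle at $(x,y)$ into oracles at $(x_k,y^k)$ for all $k\le|E|+1$, giving enough evaluations provided the orbit $\{y^k\}$ supplies enough distinct values. Once I have the full univariate polynomial I can evaluate it at any target point, in particular at a known planar-hard base case produced from the chromatic polynomial or from Theorem~\ref{thm:hardness:exact-planar}'s seed cases. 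To extend from rational/integer $(q,\gamma)$ to algebraic $(q,\gamma)$, I would rely on the representations and exact arithmetic described in Section~\ref{sec:complex-implementations:algebraic} and on the lower bounds of Corollary~\ref{cor:lower-bound:tutte}, which guarantee that intermediate algebraic quantities can be compared and recovered.

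The main obstacle, and the reason the classification carves out precisely the nine special points plus the lines $q=1$ and (planarly) $q=2$, is verifying that the special points are genuinely \emph{trapped}: under thickening/stretching combined with duality, their orbit is finite and consists only of points admitting explicit formulas. For example, at $(2,i-1)$ one has $y=i$ and $x=1-i$, and $y\mapsto y^k$ is periodic with period $4$; at $(3,\omega_3-1)$ the orbits sit inside cube roots of unity; a careful case analysis shows that no planar gadget can escape these periodic orbits to reach a point with unbounded complexity, matching the tractability side. Checking this trapping rigorously for each exceptional point, and conversely certifying that at every non-exceptional $(q,\gamma)$ the orbit under the planar operations is rich enough to perform interpolation or to hit a known-hard base case, is the bulk of the technical work I expect the proof to require.
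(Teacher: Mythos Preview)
This theorem is not proved in the paper: it is quoted verbatim as \cite[Theorem 5.1]{Vertigan2005} and used as a black box throughout (for instance in the proofs of Lemmas~\ref{lem:thickened-tutte} and~\ref{lem:exact-reduction:q>1}). There is therefore no ``paper's own proof'' to compare your proposal against.

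As a side remark, your sketch is broadly in the spirit of the Jaeger--Vertigan--Welsh/Vertigan classification arguments: polynomial-time evaluability at $q=1$, the planar Ising Pfaffian method at $q=2$, closed forms at the nine special points, and for hardness an interpolation argument driven by the planar-preserving operations of thickening, stretching, and duality. One caution: you invoke ``known planar $\numP$-hardness of the chromatic polynomial at integer $q\ge 3$'' as a base case, but in Vertigan's setting that is part of what is being established, so one has to be careful not to argue circularly; the actual proof in \cite{Vertigan2005} builds the base cases from scratch via matroid/bicycle-dimension arguments rather than importing them. In any event, none of this appears in the present paper, which simply cites the result.
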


\subsection{Computational problems} \label{sec:pre:problems}

In this section, we define a few computational problems that will be useful in our reductions; these were also considered in~\cite{Goldberg2017}.  Let $q$ be a real algebraic number,  $\gamma_1, \ldots, \gamma_k$ be algebraic numbers, and $K,\rho$ be real numbers with $K > 1$, $\rho>0$.

\prob{$\STz(q, \gamma_1, \ldots, \gamma_k)$.}{A (multi)graph $G$ and a weight function $\hat{\gamma} \colon E \to \{\gamma_1, \ldots, \gamma_k\}$.}{A correct statement of the form $Z_{\text{Tutte}}(G; q, \hat{\gamma}) \ge 0$ or  $Z_{\text{Tutte}}(G; q, \hat{\gamma}) \le 0$.}

\prob{$\FNTz(q, \gamma_1, \ldots, \gamma_k)$.}{A (multi)graph $G$ and a weight function $\hat{\gamma} \colon E \to \{\gamma_1, \ldots, \gamma_k\}$.}{If $Z_{\text{Tutte}}(G; q, \hat{\gamma})= 0$, the algorithm may output any rational number. Otherwise, it must output $\hat{N} \in \mathbb{Q}$ such that $\hat{N} / K \le |Z_{\text{Tutte}}(G; q, \hat{\gamma})| \le K \hat{N}$.}

\prob{$\pDAT \rho \sDATz(q, \gamma)$.}{A (multi)graph $G$.}{If $Z_{\text{Tutte}}(G; q, \gamma)= 0$, the algorithm may output any rational number. Otherwise, it must output $\hat{A} \in \mathbb{Q}$ such that, for some $a \in \arg(Z_{\text{Tutte}}(G; q, \gamma))$, we have $| \hat{A} - a| \le \rho$.}

 We also consider these problems for the Potts model (with parameters $q$ and $y = \gamma+1$), and we write $\textsc{Potts}$ instead of $\textsc{Tutte}$ in the name of these problems when we refer to the Potts ones.  We also consider all these problems restricted to planar graphs, in which case we write $\PT$ instead of $\ET$ in the name of the problem. It is a trivial observation that the planar case reduces to the general case.

\subsection{Reducing exact computation to sign and approximate computation} \label{sec:hardness:real}

In this section, we first review the binary search technique of~\cite{Goldberg2017}, which we will refer to as ``interval-shrinking''. Then, we use this to obtain several of our inapproximability theorems.

Let $f(\varepsilon) = -\varepsilon A + B$ be a linear function, where $A$ and $B$ are real algebraic numbers with $A \ne 0$. Let $\varepsilon^{*} = B/A$ be the zero of $f$. Let $(\varepsilon', \varepsilon'')$ be an open interval with length $l > 0$ such that $\varepsilon^{*}$ is in $(\varepsilon', \varepsilon'')$ or, equivalently, $f(\varepsilon') f(\varepsilon'') < 0$. We want to find a small open subinterval of $(\varepsilon', \varepsilon'')$ that contains $\varepsilon^{*}$. 

First, assume that we have an oracle that, on input $\varepsilon$, outputs the sign of $f(\varepsilon)$, unless when $f(\varepsilon) = 0$, in which case the output of the oracle is unreliable. Let $\varepsilon_0, \varepsilon_1, \ldots, \varepsilon_4$ be a partition of the interval $(\varepsilon', \varepsilon'')$ such that $\varepsilon_0 = \varepsilon'$, $\varepsilon_4 = \varepsilon''$ and $\varepsilon_{i+1} - \varepsilon_i \ge l / 10$ for every $i \in \{0, \ldots, 3\}$. We invoke the  oracle with input $\varepsilon_i$ to determine the sign of $f(\varepsilon_i)$ for every $i \in \{0,\ldots, 4\}$; let $s_i$ be the answer of the oracle. Then, we have a monotone sequence $s_0, \ldots, s_4$ of positive and negative signs with $s_0 \ne s_4$. Hence, there are two possibilities: either $s_0 = s_1 = s_2$, in which case $\varepsilon_1 < \varepsilon^{*}$ and we can recurse on $(\varepsilon_1, \varepsilon_4)$, or $s_2 = s_3 = s_4$, in which case $\varepsilon^{*} < \varepsilon_3$ and we can recurse on $(\varepsilon_0, \varepsilon_3)$. In any of these two cases, we can shrink the interval $(\varepsilon', \varepsilon'')$ to at most $9/10$ of its original length.  Then, recursively, we can  find an open subinterval of arbitrarily small length containing the zero of $f$.

Next, assume that we have an oracle that returns a multiplicative approximation to the norm of $f$. More accurately, let $\eta = 1/41$ and suppose that we have an oracle that, on input $\varepsilon$, returns a value $\hat{f}(\varepsilon)$ satisfying
\begin{equation*}
 \left(1-\eta \right) \left| f(\varepsilon) \right| < \frac{1}{1+\eta} \left| f(\varepsilon) \right| \le \hat{f}(\varepsilon) \le \left( 1 + \eta \right)\left| f(\varepsilon)\right| 
\end{equation*}
when $f(\varepsilon) \ne 0$ (otherwise the value  $\hat{f}(\varepsilon)$ is unreliable). The approach given in~\cite{Goldberg2017} by  Goldberg and Guo to shrink $(\varepsilon', \varepsilon'')$ is as follows. First, let us assume that $A > 0$, so $f$ is strictly decreasing. Let $\varepsilon_0, \varepsilon_1, \ldots, \varepsilon_{10}$ be a partition of the interval $(\varepsilon', \varepsilon'')$ such that $\varepsilon_0 = \varepsilon'$, $\varepsilon_{10} = \varepsilon''$ and $\varepsilon_{i+1} - \varepsilon_i \ge l / 20$ for every $i \in \{0, \ldots, 9\}$. These numbers are not chosen to be optimal but they suffice. We invoke our oracle to compute $\hat{f}(\varepsilon_i)$ for $i \in \{0, \ldots, 10\}$. Let $s_i$ be the sign (positive, negative, or zero) of $\hat{f}(\varepsilon_i) - \hat{f}(\varepsilon_{i+1})$ for each $i \in \{0, \ldots, 9\}$. We analyse the signs $s_i$ for $i \in \{0, \ldots, 9\}$.  First, we consider the case $\varepsilon_i < \varepsilon_{i+1} < \varepsilon^{*}$. Note that we have $f(\varepsilon_i) > f(\varepsilon_{i+1}) > 0$.  Moreover,
\begin{align*}
  \hat{f}\left(\varepsilon_i\right) - \hat{f}\left(\varepsilon_{i+1}\right) & \ge \left( 1 - \eta \right) f \left( \varepsilon_i \right) - \left( 1 + \eta \right) f \left( \varepsilon_{i+1} \right) \\
  & = A \left( \varepsilon_{i+1} - \varepsilon_i - \eta \left( 2 \varepsilon^{*} - \varepsilon_i - \varepsilon_{i+1} \right) \right).
\end{align*}
Note that $\varepsilon^* - \varepsilon_i$ and $\varepsilon^{*} - \varepsilon_{i+1}$ are both at most $l$ and, thus, we obtain $2 \varepsilon^{*} - \varepsilon_i - \varepsilon_{i+1} \le 2l$. So since $\eta = 1/41$ and $\varepsilon_{i+1} - \varepsilon_i \ge l / 20$, we conclude that $s_i$ is positive. Now we consider the case $\varepsilon^{*} < \varepsilon_i < \varepsilon_{i+1}$. This time we have $f(\varepsilon_{i+1}) < f(\varepsilon_i) < 0$,
\begin{align*}
  \hat{f}\left(\varepsilon_i\right) - \hat{f}\left(\varepsilon_{i+1}\right) & \le \left( 1 + \eta \right) \left(-f \left( \varepsilon_i \right)\right) - \left( 1 - \eta \right) \left(-f \left( \varepsilon_{i+1} \right)\right) \\
  & = -A \left( \varepsilon_{i+1} - \varepsilon_i - \eta \left(  \varepsilon_i + \varepsilon_{i+1} - 2 \varepsilon^{*} \right) \right),
\end{align*}
and $0 < \varepsilon_i + \varepsilon_{i+1} - 2 \varepsilon^{*} < 2l$. We conclude that $s_i$ is negative. If $\varepsilon_i \le \varepsilon^{*}$ and $\varepsilon^{*} \le \varepsilon_{i+1}$, then we do not know what the value of $s_i$ will be. However, this is true for at most two consecutive  values of $i$. With these properties of the signs $s_i$ in mind, let us study the sequence $s_0, \ldots, s_9$. There are two possibilities. The first one is that $s_0, s_1, s_2, s_3$ are all positive, in which case $\varepsilon_2 < \varepsilon^{*}$ and we can recurse on $(\varepsilon_2, \varepsilon_{10})$. The second possibility is that $s_6, s_7, s_8, s_9$ are all negative, in which case $\varepsilon^{*} < \varepsilon_8$ and we can recurse on $(\varepsilon_0, \varepsilon_8)$. In any of these two cases, we can shrink the interval $(\varepsilon', \varepsilon'')$ to at most $9/10$ of its original length. Again using binary search it is possible to find a small open subinterval containing the zero of $f$. Let us now assume that $A < 0$. In this case, one can analogously prove that the sign $s_i$ is positive when $\varepsilon_i < \varepsilon_{i+1} < \varepsilon^{*}$ and negative when $\varepsilon^{*} < \varepsilon_i < \varepsilon_{i+1}$, so the same procedure allows us to shrink $(\varepsilon', \varepsilon'')$.

Let $q$ and $\gamma$ be real algebraic numbers with $q \not \in \{0,1\}$ and $\gamma > 0$. Let $H$ be a graph and let $s$ and $t$ be two distinct connected vertices of $H$. We are going to apply these interval stretching techniques to the linear function
\begin{equation} \label{eq:f}
    f(\varepsilon; H, \gamma) = Z_{s|t}(H; q, \gamma)\left( 1 - \frac{1}{q} \right) + \varepsilon \left( Z_{st}(H; q, \gamma) + \frac{1}{q} Z_{s|t}(H; q, \gamma) \right).
  \end{equation}
  Let us write this function as $f(\varepsilon; H, \gamma) = B(H, \gamma) - \varepsilon A(H, \gamma)$, where $B(H, \gamma) = Z_{s|t}(H; q, \gamma) ( 1 - 1 / q )$ and $A(H, \gamma) =  -Z_{st}(H; q, \gamma) - q^{-1} Z_{s|t}(H; q, \gamma)$. We have
  \begin{equation}
    \label{eq:f:endpoints}
    \begin{aligned}
      f(0; H, \gamma) & = Z_{s|t}(H; q, \gamma)\left( 1 - \frac{1}{q} \right); \\
      f(1-q; H, \gamma) & =  (1-q)Z_{st}(H; q, \gamma).
    \end{aligned}
  \end{equation}
  Under certain hypotheses, we are going to prove that $f(0; H, \gamma) f(1-q; H, \gamma) < 0$, so $A(H, \gamma) \ne 0$ and $f(-;H, \gamma)$ has a zero between $0$ and $1-q$. This allows us to find a suitable interval where we can perform interval-shrinking. For this purpose we will also need Lemma~\ref{lem:f:subinterval}, that tells us that the zero of $f(-; H, \gamma)$ is not close to either $0$ or $1-q$. 

  \begin{lemma} \label{lem:f:subinterval}
    Let $q$ and $\gamma$ be real algebraic numbers with $q \not \in \{0,1\}$ and $\gamma > 0$. Let $H = (V, E)$ be a graph and let $s$ and $t$ be two distinct connected vertices of $H$. Let $n = |V|$, $m = |E|$, $r = \max \{n, m\}$ and $c = 2 \max \{|q|, 1/|q|\}\max \{\gamma, 1/\gamma \}$. Let $\varepsilon^{*}$ be the zero of the function $f(\varepsilon; H, \gamma) = B(H, \gamma) - A(H, \gamma)$, defined as in \eqref{eq:f}. Let us assume that $|Z_{st}(H; q, \gamma)| \ge c^{-r}$,  $|Z_{s|t}(H; q, \gamma)| \ge c^{-r}$ and $A(H, \gamma) \ne 0$.  Then we have $\left| 1-q - \varepsilon^{*} \right| \ge \left| 1-q \right| c^{-2r}$ and $\left| \varepsilon^{*} \right| \ge \left| 1-1/q \right| c^{-2r}$.
  \end{lemma}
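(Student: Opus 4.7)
The plan is to obtain an explicit formula for $\varepsilon^*$ (and for $1-q-\varepsilon^*$) as a ratio in which the numerator is controlled by the hypothesized lower bounds on $|Z_{st}|$ and $|Z_{s\mid t}|$, and then to bound the denominator $|A(H,\gamma)|$ above by $c^r$.

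Since $A(H,\gamma)\neq 0$ and $f(\varepsilon^*;H,\gamma)=0$, we have $\varepsilon^*=B(H,\gamma)/A(H,\gamma)$. A direct algebraic computation gives $B-(1-q)A=-(q-1)\,Z_{st}(H;q,\gamma)$, and hence
\begin{equation*}
\varepsilon^* \;=\; \frac{(1-1/q)\,Z_{s\mid t}(H;q,\gamma)}{A(H,\gamma)}, \qquad 1-q-\varepsilon^* \;=\; \frac{(q-1)\,Z_{st}(H;q,\gamma)}{A(H,\gamma)}.
\end{equation*}
The lower bounds $|Z_{st}|,|Z_{s\mid t}|\ge c^{-r}$ then reduce the lemma to showing $|A(H,\gamma)|\le c^{r}$.

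The main step is to identify $A(H,\gamma)$ with (minus) a Tutte polynomial of a smaller graph: writing $-A(H,\gamma)=Z_{st}(H;q,\gamma)+q^{-1}Z_{s\mid t}(H;q,\gamma)$ and decomposing $Z_{\mathrm{Tutte}}(H/st;q,\gamma)$ according to whether $s$ and $t$ lie in the same component (where $H/st$ is the graph obtained from $H$ by identifying $s$ and $t$), one checks that $-A(H,\gamma)=Z_{\mathrm{Tutte}}(H/st;q,\gamma)$. The graph $H/st$ has $n-1$ vertices and $m$ edges, so the standard termwise bound $|q^{k(A')}\gamma^{|A'|}|\le \max\{|q|,1/|q|\}^{\,n-1}\max\{\gamma,1/\gamma\}^{m}$ for each $A'\subseteq E$, together with the trivial count of $2^m$ subsets, yields
\begin{equation*}
|A(H,\gamma)| \;\le\; 2^{m}\max\{|q|,1/|q|\}^{n-1}\max\{\gamma,1/\gamma\}^{m} \;\le\; c^{r}.
\end{equation*}

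Combining this with the two identities above and the hypothesized lower bounds on $|Z_{st}|$ and $|Z_{s\mid t}|$ gives at once $|1-q-\varepsilon^*|\ge |1-q|\,c^{-r}/c^{r}=|1-q|\,c^{-2r}$ and $|\varepsilon^*|\ge |1-1/q|\,c^{-2r}$, as required. The key subtlety, and essentially the only nontrivial point in the argument, is that a naive triangle-inequality bound $|A|\le |Z_{st}|+|Z_{s\mid t}|/|q|$ loses a factor (for instance when $|q|$ is small) and is insufficient to reach the exponent $c^{-2r}$; the identification $-A=Z_{\mathrm{Tutte}}(H/st;q,\gamma)$ is what yields the correct tight bound.
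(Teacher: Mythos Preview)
Your proof is correct and follows essentially the same route as the paper's: both obtain the identities
\[
\varepsilon^* = \frac{(1-1/q)\,Z_{s\mid t}}{A}, \qquad 1-q-\varepsilon^* = \frac{(q-1)\,Z_{st}}{A},
\]
and reduce the lemma to the bound $|A(H,\gamma)|\le c^r$.

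The only (minor) difference is in how that bound is obtained. The paper does not pass through the contracted graph $H/st$; it simply applies the triangle inequality term by term to the sum defining $A=-Z_{st}-q^{-1}Z_{s\mid t}$, noting that each term $q^{k(A)}\gamma^{|A|}$ (from $Z_{st}$) or $q^{k(A)-1}\gamma^{|A|}$ (from $q^{-1}Z_{s\mid t}$) is bounded in absolute value by $\max\{|q|,1/|q|\}\,|q|^{k(A)-1}|\gamma|^{|A|}$, which sums to at most $c^r$. Your identification $-A=Z_{\mathrm{Tutte}}(H/st;q,\gamma)$ is a clean conceptual repackaging of exactly the same estimate (and shaves one factor of $\max\{|q|,1/|q|\}$, which is not needed). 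Your closing remark is therefore slightly off: the paper's term-by-term triangle inequality \emph{does} suffice; what would fail is first bounding $|Z_{st}|$ and $|Z_{s\mid t}|$ each by $c^r$ and only then combining, since the extra $1/|q|$ would then spoil the exponent.
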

  
  \begin{proof}
     In view of the definition of $f(\varepsilon; H, \gamma)$ and equation \eqref{eq:f:endpoints}, we have
     \begin{equation*}
       \left|1-q - \varepsilon^{*}\right| = \frac{\left|f \left( \varepsilon^{*}; H, \gamma \right) - f(1-q)\right|}{|A(H, \gamma)|} = \frac{\left|1-q\right| \left|Z_{st}\left( H; q, \gamma \right)\right|}{|A(H, \gamma)|}. 
    \end{equation*}
    Note that 
    \begin{equation} \label{eq:bound-A}
      \left|A(H, \gamma)\right| \le \sum_{A \subseteq E} \max \{|q|, 1/|q|\} \left| q \right|^{k(A)-1} \left| \gamma \right|^{|A|} \le c^r.
    \end{equation}
    Moreover, we have $|Z_{st}( H; q, \gamma )| \ge c^{-r}$ by hypothesis, so we conclude that $|1 - q - \varepsilon^{*}| \ge |1-q| c^{-2r}$.  Analogously, we find that
     \begin{equation*}
       \left|\varepsilon^{*}\right| = \frac{\left|f \left( \varepsilon^{*}; H, \gamma \right) - f(0)\right|}{|A(H, \gamma)|} = \frac{\left|1-1/q\right| \left|Z_{s|t}\left( H; q, \gamma \right)\right|}{|A(H, \gamma)|} \ge \left| 1 - \frac{1}{q} \right| c^{-2r}. \qedhere 
    \end{equation*}    
  \end{proof}

  \begin{lemma} \label{lem:compute-fraction:q>1} 
    Let $K$ be a real number with $K > 1$. Let $q$, $\gamma_1$ and $\gamma_2$ be real algebraic numbers such that $q > 1$, $\gamma_1 \in (-2, -1)$ and $\gamma_2 > 0$. Let us assume that we have access to an oracle for $\FNPTz(q, \gamma_1, \gamma_2)$. Then there exists an algorithm that takes as input a positive integer $\rho$ and a planar graph $H$ along with two distinct connected vertices $s$ and $t$ of $H$, and, for $\gamma = (\gamma_2+1)^\rho-1$, this algorithm computes a representation of the algebraic number $Z_{s|t}(H; q, \gamma) / Z_{st}(H; q, \gamma)$ in polynomial time in $\rho$ and the size of $H$. Moreover, if we have access to the more powerful oracle $\FNTz(q, \gamma_1, \gamma_2)$, then we can remove the constraint that $H$ is planar.
  \end{lemma}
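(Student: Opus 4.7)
The plan is to reduce the problem of exactly computing $Z_{s|t}(H;q,\gamma)/Z_{st}(H;q,\gamma)$ to that of locating to exponentially small precision the zero $\varepsilon^{*}$ of the linear function $f(\,\cdot\,;H,\gamma)$ defined in~\eqref{eq:f}, and then reconstructing the minimal polynomial of $\varepsilon^{*}$ by a call to the Kannan--Lenstra--Lov\'asz algorithm of Lemma~\ref{lem:lll}. A direct expansion of the Tutte polynomial in the two possible cases for a new edge $e=(s,t)$ of weight $\varepsilon-1$ gives the combinatorial identity
\begin{equation*}
  f(\varepsilon;H,\gamma) = Z_{\text{Tutte}}\bigl(H\cup\{e_{\varepsilon-1}\};\,q,\gamma\bigr).
\end{equation*}
Moreover, since $\gamma=(\gamma_2+1)^\rho-1$ is the weight implemented by $\rho$ parallel edges of weight $\gamma_2$, the value of $f$ is unchanged when $H$ is replaced by its $\rho$-thickening $H^{(\rho)}$ with every edge weighted $\gamma_2$ (the proportionality factor in Lemma~\ref{lem:implementations:multivariate} being trivially $1$). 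Once $\varepsilon^{*}$ is in hand, the desired ratio $Z_{s|t}/Z_{st}=q\varepsilon^{*}/(1-q-\varepsilon^{*})$ is produced by the algebraic-number operations reviewed in Section~\ref{sec:complex-implementations:algebraic}.

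For each query $\varepsilon$ in the search interval I would build a planar theta gadget $J_\varepsilon$ with weights in $\{\gamma_1,\gamma_2\}$ that $(x_1,y_1)$-implements an approximation $\hat{\gamma}\approx\varepsilon-1$, by invoking Corollary~\ref{cor:implement-approx:literature} with $y'=\varepsilon$; this is legitimate since $y_1=\gamma_1+1\in(-1,0)$, $y_2=\gamma_2+1>1$, and the search interval keeps $|\varepsilon|$ inside $[|y_1|^k,|y_1|^{-k}]$ for a $k$ polynomial in $\rho$ and $\size{H}$. Substituting $J_\varepsilon$ for $e_{\varepsilon-1}$ in $H^{(\rho)}\cup\{e_{\varepsilon-1}\}$ gives, by Lemma~\ref{lem:implementations:multivariate}, a planar graph $G_\varepsilon$ with weights only in $\{\gamma_1,\gamma_2\}$ and with
\begin{equation*}
  Z_{\text{Tutte}}(G_\varepsilon;q,\cdot)=\frac{Z_{s|t}(J_\varepsilon;q,\cdot)}{q^2}\cdot f(\hat{\varepsilon};H,\gamma),\qquad\hat{\varepsilon}=\hat{\gamma}+1.
\end{equation*}
The factor $Z_{s|t}(J_\varepsilon)/q^2$ is a known algebraic number, computed explicitly via the series--parallel recursion for theta graphs. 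Hence the $\FNPTz(q,\gamma_1,\gamma_2)$ oracle applied to $G_\varepsilon$, with accuracy boosted to any constant $1+\eta$ by taking the disjoint union of polynomially many copies of $G_\varepsilon$ (which preserves planarity), yields a $(1\pm\eta)$-approximation of $|f(\hat{\varepsilon};H,\gamma)|$ after dividing by the factor.

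I would initialise the interval-shrinking on a sub-interval of $(1-q,0)$ whose endpoints are at distance at least $2^{-\mathrm{poly}(\rho,\size{H})}$ from $0$ and from $1-q$; that such a sub-interval contains $\varepsilon^{*}$ strictly in its interior follows from Lemma~\ref{lem:f:subinterval} together with the lower bound $|Z_{st}|,|Z_{s|t}|\ge C_{q,\gamma}^{-\size{H}}$ of Corollary~\ref{cor:lower-bound:tutte}. Iterating the multiplicative-oracle shrinking routine for $O(b)$ steps produces a rational $\overline{\varepsilon}$ with $|\varepsilon^{*}-\overline{\varepsilon}|\le 2^{-b}$ in time polynomial in $b$, $\rho$ and $\size{H}$. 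Corollary~\ref{cor:degree-height} bounds the degree and usual height of $\varepsilon^{*}$ polynomially in $\size{H}$ and the data of $q,\gamma$, so choosing $b$ as prescribed by Lemma~\ref{lem:lll} recovers the minimal polynomial of $\varepsilon^{*}$ from $\overline{\varepsilon}$. The non-planar version of the lemma follows by the identical argument using $\FNTz(q,\gamma_1,\gamma_2)$.

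The main obstacle is the end-to-end error analysis: the oracle delivers an approximation of $|f(\hat{\varepsilon};H,\gamma)|$, whereas the shrinking procedure requires one of $|f(\varepsilon;H,\gamma)|$. Since $f$ is linear with $|A(H,\gamma)|\le c^{\size{H}}$ (as in~\eqref{eq:bound-A}), we have $|f(\varepsilon)-f(\hat{\varepsilon})|\le c^{\size{H}}|\varepsilon-\hat{\varepsilon}|$; combined with the lower bound $|f(\varepsilon;H,\gamma)|\ge 2^{-\mathrm{poly}(\rho,\size{H})}$ for rational $\varepsilon$ in the sub-interval (from Lemma~\ref{lem:algebraic:lower-bound} applied to $H\cup\{e_{\varepsilon-1}\}$), this lets us choose the gadget precision---and hence $\size{J_\varepsilon}$---polynomial in $\rho$ and $\size{H}$ so that the substitution error is dominated by $|f(\varepsilon)|$, yielding the genuine $(1\pm\eta)$-approximation needed. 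If at any stage the rescaled oracle output falls below this computed lower bound, we conclude $f(\varepsilon;H,\gamma)=0$ and set $\varepsilon^{*}=\varepsilon$ directly.
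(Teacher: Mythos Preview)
Your proposal is correct and follows essentially the same route as the paper: express $Z_{\text{Tutte}}(H';q,\tau)$ as the linear function $f(\varepsilon;H,\gamma)$, locate its zero $\varepsilon^{*}\in(1-q,0)$ to exponential precision by interval-shrinking with the norm oracle (implementing the needed $\varepsilon$-values via Corollary~\ref{cor:implement-approx:literature}), and then recover the exact algebraic number via Lemma~\ref{lem:lll} together with the degree/height bounds of Corollary~\ref{cor:degree-height}. Two minor deviations are worth noting. First, you apply the Kannan--Lenstra--Lov\'asz reconstruction to $\varepsilon^{*}$ and then compute $Z_{s|t}/Z_{st}=q\varepsilon^{*}/(1-q-\varepsilon^{*})$, whereas the paper approximates $\alpha=Z_{s|t}/Z_{st}$ directly by $\hat{\varepsilon}q/(1-q-\hat{\varepsilon})$ and applies Lemma~\ref{lem:lll} to $\alpha$; both are fine. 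Second, and more substantively, the ``main obstacle'' you flag---that the oracle gives $|f(\hat{\varepsilon})|$ while you want $|f(\varepsilon)|$ at a rational target $\varepsilon$---is a self-inflicted complication. The paper sidesteps it entirely: it takes the \emph{implemented} values $\hat{\varepsilon}_0,\ldots,\hat{\varepsilon}_{10}$ themselves as the partition points of the current interval (choosing the gadget precision only so that the spacing condition $\hat{\varepsilon}_{i+1}-\hat{\varepsilon}_i\ge l/20$ holds), so no comparison between $f(\varepsilon)$ and $f(\hat{\varepsilon})$ is ever needed and the lower bound on $|f(\varepsilon)|$ for rational $\varepsilon$ via Lemma~\ref{lem:algebraic:lower-bound} is unnecessary. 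Your detour can be made to work, but it costs you an extra layer of estimates that the paper's formulation avoids. (Also, Corollary~\ref{cor:lower-bound:tutte} bounds $|Z_{\text{Tutte}}|$, not $|Z_{st}|$ and $|Z_{s|t}|$ separately; the paper just uses the trivial bounds $Z_{st}\ge q\gamma^m$ and $Z_{s|t}\ge q^n$, valid since $q,\gamma>0$.)
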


\begin{proof}
   Since $\FNPTz(q, z)$ is equally hard for any $K > 1$ (see Section~\ref{sec:intro:main-results}), we may assume that $K = 1 + \eta$ for $\eta = 1/41$. 
  
  Let $\rho$, $H = (V, E)$ and $s, t$ be the inputs of our algorithm. Let $n = |V|$ and $m = |E|$. Let $c = 2 \max \{|q|, 1/|q|\}\max \{\gamma, 1/\gamma \}$, so $c \ge 2$, and let $r = \max \{n, m\}$. Let $H'$ be a copy of $H$ with an extra edge from $s$ to $t$. Let $\gamma'$ be a weight that we can implement and let $\varepsilon = \gamma'+1$, so the pair $(1 + q/(\varepsilon -1), \varepsilon)$ is $(q, \gamma')$ written in $(x,y)$ coordinates. We will choose $\gamma'$ and argue how we can implement $\gamma'$ later in the proof. When we say we implement $\varepsilon$, we mean that we implement the pair $(1 + q/(\varepsilon -1), \varepsilon)$ in $(x,y)$ notation or, equivalently, $\gamma'$.
  
  Let $\mathbf{\tau}$ be the weight function on $H'$ that assigns the weight $\gamma$ to the edges of $H$ and the weight $\gamma'$ to the new edge. Then, as was observed in~\cite[Lemma 2]{Goldberg2014}, we have 
  \begin{equation} \label{eq:reduction:1}
    \begin{aligned}
    Z_{\text{Tutte}}(H'; q, \mathbf{\tau}) & = Z_{st}(H; q, \gamma)(1+ \gamma') + Z_{s|t}(H; q, \gamma) \left( 1 + \frac{\gamma'}{q} \right) \\
                                        & = Z_{s|t}(H; q, \gamma)\left( 1 - \frac{1}{q} \right) + \varepsilon \left( Z_{st}(H; q, \gamma) + \frac{1}{q} Z_{s|t}(H; q, \gamma) \right) \\ & = f(\varepsilon; H, \gamma), 
    \end{aligned}
  \end{equation}
  where $f(\varepsilon; H, \gamma)$ was introduced in \eqref{eq:f}. Hence, $Z_{\text{Tutte}}(H'; q, \tau)$ can be seen as a function, with variable $\varepsilon$, of the form $f(\varepsilon; H, \gamma) = B(H, \gamma) - \varepsilon A(H, \gamma)$, where $B(H, \gamma) = Z_{s|t}(H; q, \gamma) ( 1 - 1 / q )$ and $A(H, \gamma) =  -Z_{st}(H; q, \gamma) - q^{-1} Z_{s|t}(H; q, \gamma)$. This construction will be used several times in this section. Now we analise $f(-; H, \gamma)$ for our particular setting ($q > 1$). Since $q$ and $\gamma$ are positive, the quantities $Z_{st}(H; q, \gamma)$ and $Z_{s|t}(H; q, \gamma)$ are positive, so $A(H, \gamma)$ is negative. From $q > 1$ and \eqref{eq:f:endpoints}, it follows that $f(0; H, \gamma) = B(H, \gamma) > 0$ and $f(1-q; H, \gamma) < 0$, so $f(0; H, \gamma)f(1-q; H, \gamma) < 0$ as we wanted. We conclude that the zero $\varepsilon^{*}$ of $f(\varepsilon; H, \gamma)$ is in $(1-q, 0)$. Note that $\varepsilon \in (1-q, 0)$ if and only if $\gamma' \in (-q, -1)$. Moreover, we have
  \begin{equation} \label{eq:bounds-zst:q>1}
    \begin{aligned}
      Z_{st}(H; q, \gamma) & \ge q \gamma^m \ge c^{-r}, \\
      Z_{s|t}(H; q, \gamma) & \ge q^n \ge c^{-r}.
    \end{aligned}
  \end{equation}
  This allow us to apply Lemma~\ref{lem:f:subinterval}. Once we have all these properties of $f(\varepsilon; H, \gamma)$ at our disposal, we can proceed to describe our algorithm. Our algorithm also works for $q \in (- \infty, 0) \cap (0,1)$ as long as $f(0; H, \gamma)f(1-q; H, \gamma) < 0$ and the hypotheses of Lemma~\ref{lem:f:subinterval} hold. In the rest of the proof we will only use the fact that $q > 1$ one more time, but this will be made explicit and can easily be adapted to the case $q < 1$ as we will explain in Lemma~\ref{lem:compute-fraction:0<q<1}.

  Our algorithm computes a positive integer $j_0$ such that $c^{-j_0} \le |q-1|/2$. Let $j$ be an integer with $j \ge j_0$. We will first show how to additively approximate $Z_{s|t}( H; q, \gamma) / Z_{st}( H; q, \gamma)$ with error at most $2 |q| c^{-j} / |q-1|$. 

  If we could efficiently implement the point $(1-q/(\varepsilon-1), \varepsilon)$ (in $(x,y)$ coordinates) for any $\varepsilon \in [1-q,0]$ using only planar graphs, then our algorithm could perform the interval-shrinking technique explained at the beginning of this section. This would allow us to compute an interval of length at most $c^{-j-4r}$ where the linear function $f(\varepsilon)$ has a zero, which would, in turn, provide us with the desired additive approximation, as we will see later. However, some difficulties arise since we do not know how to implement any specific real algebraic weight. This difficulty was overcome by Goldberg and Jerrum by developing Lemmas~\ref{lem:implement-approx:literature:1} and~\ref{lem:implement-approx:literature:2}. Here we use the version of these lemmas given in  Corollary~\ref{cor:implement-approx:literature}. Let $y_1 = \gamma_1 + 1$, $x_1 = 1 + q / (y_1 - 1)$, $y_2 = \gamma_2 +1$ and $x_2 = 1 + q / (y_2 - 1)$. Note that $y_1 \in (-1,0)$, $y_2 > 1$ and $q \ne 0$. Hence, Corollary~\ref{cor:implement-approx:literature} allows us to efficiently implement approximations of real algebraic numbers when applied with the parameters $x_1, y_1, x_2, y_2$. Every time we invoke Corollary~\ref{cor:implement-approx:literature} we will be using these parameters. We are going to use this corollary to implement approximations of $\varepsilon \in (1-q, 0)$. This is the only point where our algorithm uses the fact that $\gamma_1 \in (-2,-1)$ or, equivalently, $y_1 \in (-1,0)$. In further lemmas where we study the case  $q < 1$, we will have to implement approximations of $\varepsilon \in (0, 1-q)$ and, hence, we will get away with the weaker hypothesis $\gamma_1 \in (-1,0)$, or, equivalently, $y_1 \in (0,1)$. (This hypothesis is ``weaker'' in the sense that a $2$-thickening of a $y_1 \in (-1,0)$ implements a $y_1 \in (0,1)$.)

  We want to implement numbers $\varepsilon'$ and $\varepsilon''$ so that $\varepsilon^{*} \in (\varepsilon', \varepsilon'') \subseteq (1-q, 0)$. Note that here we are using that $q > 1$. When $q< 1$ our algorithm would work on the interval $(0, 1-q)$ instead of $(1-q, 0)$. This paragraph is the last time that we use the hypothesis $q > 1$ in this proof. The argument given in this paragraph will be revisited when we deal with the case $q < 1$ in further lemmas. Our algorithm first applies the algorithm given in Corollary~\ref{cor:implement-approx:literature} with $y' = - ( 1-1/q ) c^{-2r}/2$, $k$ such that $|y_1|^k < |y'| < |y_1|^{-k}$ and $n = \lceil 2r \log_2(c) - \log_2(1-1/q) + 2 \rceil$.  Note that $k = O(r)$ and $n = O(r)$. This procedure computes a theta graph and a weight function taking weights in $\{\gamma_1, \gamma_2\}$ that implement a point $(1 + q/(\varepsilon''-1), \varepsilon'')$ such that $|y' - \varepsilon'' | \le 2^{-n} \le (1-1/q)c^{-2r}/4$ in polynomial time in $r = O ( \mathrm{size}(H) )$. We have $ -3(1-1/q)c^{-2r}/4 \le \varepsilon'' \le  -(1-1/q)c^{-2r}/4$, so, by Lemma~\ref{lem:f:subinterval}, we find that $\varepsilon^{*} < \varepsilon'' < 0$. Now our algorithm invokes again Corollary~\ref{cor:implement-approx:literature}, this time with inputs $y' = 1-q + (q-1)c^{-2r}/2$,  $k$ such that $|y_1|^k < |y'| < |y_1|^{-k}$ and $n = \lceil 2r \log_2(c) - \min \{0, \log_2(q-1)\} + 2 \rceil$. This implements $(1 + q/(\varepsilon'-1), \varepsilon')$ with $|y' - \varepsilon' | \le (q-1)c^{-2r}/4$, which gives $1-q+(q-1)c^{-2r}/4 \le \varepsilon' \le  1-q+3(q-1)c^{-2r}/4$. Again by Lemma~\ref{lem:f:subinterval}, we find that $1-q < \varepsilon' < \varepsilon^{*}$.  The interval $(\varepsilon', \varepsilon'')$ is the starting interval for the interval-shrinking procedure.

  Let us assume that we are carrying out the interval-shrinking technique explained at the beginning of this section, so we have an interval $(\varepsilon', \varepsilon'')$ of length $l$ where $f$ changes sign. Let us also assume that we can implement the endpoints $\varepsilon'$ and $\varepsilon''$. We want to find a subinterval of length at most $9l/10$ where $f$ changes sign. We can assume that $l > c^{-j-4r}$, since otherwise we do not need to shrink the interval further. Let $p = 10$ be the number of subintervals into which $(\varepsilon', \varepsilon'')$ is partitioned by the interval-shrinking technique. We want to find numbers $\varepsilon_1, \ldots, \varepsilon_{p-1}$ such that we can implement the point $(1 +q/(\varepsilon_i -1), \varepsilon_i)$ for every $i \in \{1, \ldots, p-1\}$ and, for $\varepsilon_0=\varepsilon'$ and $\varepsilon_p = \varepsilon''$, we have $\varepsilon_i - \varepsilon_{i-1} \ge l / 2 p$ for every $i \in \{1, \ldots, p\}$, which is what is required to perform interval-shrinking. For each $i \in \{1, \ldots, p-1\}$, our algorithm computes $\varepsilon_i' = \varepsilon' + i l / p$ and then it applies the algorithm given in Corollary~\ref{cor:implement-approx:literature} with $y'= \varepsilon_i'$, $k$ such that $|y_1|^k < |y'| < |y_1|^{-k}$ and $n = \lceil (j + 4 r) \log_2(c) + \log_2(4 p) \rceil$. This procedure computes a graph and a weight function taking weights in $\{\gamma_1, \gamma_2\}$ that implement a point $(1 + q/(\varepsilon_i -1), \varepsilon_i)$ such that $|\varepsilon_i' - \varepsilon_i| \le 2^{-n} \le c^{-j - 4 r} /(4 p)$. This application of the procedure given in Corollary~\ref{cor:implement-approx:literature} takes polynomial time in $j$, $r$ and $k$. Note that $k$ is polynomial in $r$ and $j$ because $|1-q| \ge |\varepsilon'_i| \ge l/p \ge c^{-j - 4 r}/ p$ for any $i \in \{1, \ldots, p-1\}$. The algebraic numbers $\varepsilon', \varepsilon_1, \ldots, \varepsilon_{p-1}, \varepsilon''$ form a partition the interval $(\varepsilon', \varepsilon'')$. Our algorithm has computed theta (and, thus, planar) graphs that implement $(1 + q/(\varepsilon_i -1), \varepsilon_i)$, so it can use the oracle $\FNPTz(q, \gamma_1, \gamma_2)$ to multiplicatively approximate $f(\varepsilon_i)$  for every $i \in \{0, \ldots, p\}$. Note that
  \begin{equation*}
    \varepsilon_i - \varepsilon_{i-1} \ge \varepsilon_i' -  \varepsilon_{i-1}' - c^{-j - 4 r} \frac{1}{2p} \ge \frac{l}{2p}
  \end{equation*}
  for every $i \in \{1, \ldots, p\}$. Therefore, our algorithm can apply the interval-shrinking technique discussed at the beginning of this section to shrink $(\varepsilon', \varepsilon'')$.

  To guarantee that this interval-shrinking technique computes an interval of length at most $c^{-j - 4 r}$, it suffices to subdivide the original interval $\lceil (j + 4r) \log_{10/9} (c) + \log_{10/9}|1-q|\rceil$ times due to the fact that each iteration shrinks the interval to $9/10$ of its size. In~\cite{Goldberg2014} and~\cite{Goldberg2017} the authors used the information provided by this interval-shrinking procedure to solve the problem $\# \textsc{Minimum Cardinality } (s,t)\textsc{-Cut}$ for arbitrary graphs (not-necessarily planar). Here we follow a different approach that allows us to compute the representation of $Z_{s|t}( H; q, \gamma )/ Z_{st}( H; q, \gamma )$.

   Once our algorithm has computed an interval of length at most $c^{-j - 4 r}$ where $f$ has a zero, it implements a point $(1+q/(\hat{\varepsilon}-1), \hat{\varepsilon})$ such that $\hat{\varepsilon}$ is in this interval. This can be done by applying Corollary~\ref{cor:implement-approx:literature} with the same parameters as before other than $y'$, which is set as the middle point of the computed interval. Let $\varepsilon^{*}$ be the zero of $f$. Note that $|\hat{\varepsilon}-\varepsilon^{*}| \le c^{-j - 4 r}$. Recall that $f(\varepsilon; H, \gamma) = B(H, \gamma) - A(H, \gamma) \varepsilon$. For a graph $H'$ and a weight function $\tau$ as in \eqref{eq:reduction:1}, with $\gamma' = \hat{\varepsilon}-1$ (which we can now implement as promised before \eqref{eq:reduction:1}), we obtain
  \begin{align}  \label{eq:exact-reduction:2}
    \left| Z_{Tutte}\left( H'; q,\tau \right) \right| & = \left| f(\hat{\varepsilon}) \right| =  \left| f(\hat{\varepsilon}) - f(\varepsilon^{*}) \right| \le |A(H, \gamma)| c^{-j - 4 r} \le c^{-j- 3r},
  \end{align}
  where we used the elementary bound $|A(H, \gamma)| \le c^r$, which has been established in \eqref{eq:bound-A}. By dividing by $Z_{st}\left( H; q, \gamma \right)$ in \eqref{eq:reduction:1}, which is non-zero, and rearranging the terms we find that
  \begin{equation*} 
    \frac{Z_{\text{Tutte}}\left( H'; q, \mathbf{\tau} \right)}{Z_{st}\left( H; q, \gamma \right)} =  \hat{\varepsilon} + \left( 1 + \frac{\hat{\varepsilon}-1}{q} \right) \frac{Z_{s|t}\left( H; q, \gamma \right)}{Z_{st}\left( H; q, \gamma \right)}.
  \end{equation*}
  Dividing by $1 + (\hat{\varepsilon}-1)/q = (q - 1 + \hat{\varepsilon}) / q$ yields
  \begin{equation} \label{eq:exact-reduction:3}
    \frac{q Z_{\text{Tutte}}\left( H'; q, \mathbf{\tau} \right)}{(q-1+\hat{\varepsilon}) Z_{st}\left( H; q, \gamma \right)} = -\frac{\hat{\varepsilon} q}{1-q-\hat{\varepsilon}} + \frac{Z_{s|t}\left( H; q, \gamma \right)}{Z_{st}\left( H; q, \gamma \right)}.
  \end{equation}
  We claim that $|1-q-\hat{\varepsilon}| \ge |1-q| c^{-2r} /2$. Recall that in view of Lemma~\ref{lem:f:subinterval}, we have $|1-q - \varepsilon^{*}| \ge |1-q| c^{-2r}$. Hence, we obtain
  \begin{equation*}
    \left|1-q-\hat{\varepsilon}\right| \ge \left|1 - q - \varepsilon^{*}\right| - \left|\varepsilon^{*} - \hat{\varepsilon} \right| \ge \left|1-q\right| c^{-2r} - c^{-j-4r} \ge \frac{\left|1-q\right|}{2} c^{-2r},
  \end{equation*}
  where we used that $c^{-j-4r} \le c^{-j_0} c^{-4r} \le \left|q-1\right| c^{-4r} / 2$ by definition of $j_0$. Therefore, we can apply this lower bound in conjunction with \eqref{eq:bounds-zst:q>1}, \eqref{eq:exact-reduction:2} and \eqref{eq:exact-reduction:3} to conclude that
  \begin{align*}
    \left| \frac{Z_{s|t}\left( H; q, \gamma \right)}{Z_{st}\left( H; q, \gamma \right)} -  \frac{\hat{\varepsilon} q }{1-q - \hat{\varepsilon}} \right| \le \frac{2|q| \left|Z_{\text{Tutte}}\left( H'; q, \mathbf{\tau} \right)\right|}{\left| 1-q\right|} c^{3r} \le \frac{2|q|}{\left|1-q\right|} c^{-j}.
   \end{align*}
 Our algorithm then computes $\hat{\varepsilon} q  /(1-q - \hat{\varepsilon})$ as an approximation of $\alpha = Z_{s|t}\left( H; q, \gamma \right) / Z_{st}\left( H; q, \gamma \right)$. We have shown that $\alpha$ is a real algebraic number that we can additively approximate up to an error at most $2|q| c^{-j} / |1-q|$ in polynomial time in $j$ and the size of $H$. Technically, our approximation $\hat{\varepsilon} q /(1-q - \hat{\varepsilon})$ is another algebraic number. For this reason, our algorithm approximates $\hat{\varepsilon} q /(1-q - \hat{\varepsilon})$ by a rational number $\overline{\alpha}$ (with additive error at most  $2|q| c^{-j} / \left|q-1\right|$) and uses this rational number as our approximation of $\alpha$. The overall error that we make is then $ |\alpha - \overline{\alpha}| \le 4|q| c^{-j} / \left|1-q\right|$.

 In view of Corollary~\ref{cor:degree-height}, we have $d(\alpha) \le d(q) d(\gamma) \le d(q) d(\gamma_2)$, where we have used that $\gamma \in \mathbb{Q}(\gamma_2)$ and, thus, $d(\gamma) \le d(\gamma_2)$. Moreover, Corollary~\ref{cor:degree-height} yields 
 \begin{equation*}
   H(\alpha) \le \left(2^{m+1/2} e^{n h(q) + m h(\gamma)} \right)^{2d(q) d(\gamma)}.
 \end{equation*}
 Since $h(\gamma) = h((\gamma_2 - 1)^\rho -1) \le \rho (1 + h(\gamma_2))$ by Lemma~\ref{lem:height:upper-bound}, our algorithm can compute a rational number $D_{q, \gamma_2}$ with $D_{q,\gamma_2} > 1$ such that $H(\alpha) \le D_{q, \gamma_2}^{\rho\, \mathrm{size}(H)}$. The only non-trivial step of this computation is upper bounding $h(q)$ and $h(\gamma_2)$ in terms of the degrees and usual heights of $q$ and $\gamma_2$ as in \eqref{eq:mahler-inequality}. Let $d = d(q) d(\gamma_2) = O(1)$ and $U = D_{q, \gamma_2}^{\rho\, \mathrm{size}(H)}$. Let $\nbits$ be as in Lemma~\ref{lem:lll}. Then we have $2^\nbits = O(D_{q, \gamma_2}^{2d\, \rho \, \mathrm{size}(H)})$, so $\nbits = O(\rho\, \mathrm{size}(H))$. By choosing $j$ appropriately, we can use the algorithm that we have developed in this proof to find a rational approximation $\overline{\alpha}$ with $\left| \alpha - \overline{\alpha} \right| \le 2^{-\nbits}/ (12 d)$. As we have argued, this takes polynomial time in $\nbits$ and $\mathrm{size}(H)$. Since $\nbits = O (\rho \, \mathrm{size}(H))$, we conclude that the computation of $\overline{\alpha}$ runs in polynomial time in $\rho$ and $\mathrm{size}(H)$. Once we have computed this approximation, our algorithm invokes the algorithm given in Lemma~\ref{lem:lll} to determine the minimal polynomial of $\alpha$ in time $O(d^5 (d + \log U ) ) = O( \rho \, \mathrm{size}(H))$. Finally, it remains to compute an interval of the real line where $\alpha$ is the only root of its minimal polynomial. Since $\alpha$ is a real algebraic number and we know its minimal polynomial, our algorithm can use Sturm sequences to isolate the real roots of this minimal polynomial. Then, by approximating $\alpha$ it decides which one of the computed intervals corresponds to $\alpha$.

  Finally, note that our algorithm also works for arbitrary graphs (not-necessarily planar) as long as our oracle provides us with reliable answers for any graph.
\end{proof}

  \begin{lemma} \label{lem:compute-fraction:q>1:sign} 
    Let $q$, $\gamma_1$ and $\gamma_2$ be real algebraic numbers such that $q > 1$, $\gamma_1 \in (-2, -1)$ and $\gamma_2 > 0$. Let us assume that we have access to an oracle for the computational problem $\SPTz(q, \gamma_1, \gamma_2)$. Then there exists an algorithm that takes as input a positive integer $\rho$ and a planar graph $H$ along with two distinct connected vertices $s$ and $t$ of $H$, and, for $\gamma = (\gamma_2+1)^\rho-1$, this algorithm computes a representation of the algebraic number $Z_{s|t}(H; q, \gamma) / Z_{st}(H; q, \gamma)$ in polynomial time in $\rho$ and the size of $H$. Moreover, if we have access to the more powerful oracle $\STz(q, \gamma_1, \gamma_2)$, then we can remove the constraint that $H$ is planar.
  \end{lemma}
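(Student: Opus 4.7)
The plan is to follow the proof of Lemma~\ref{lem:compute-fraction:q>1} essentially verbatim, replacing only the norm-based interval-shrinking step with the sign-based interval-shrinking step described at the beginning of Section~\ref{sec:hardness:real}. The sign oracle is weaker, but fortunately the first interval-shrinking procedure presented there needs only signs at 5 partition points and still shrinks any sign-changing interval by a factor of at least $9/10$, so the same iteration converges at the same geometric rate.

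More concretely, we set up the same construction: given a planar graph $H$ with distinguished vertices $s,t$ and an exponent $\rho$, we define $\gamma = (\gamma_2+1)^\rho - 1$, form $H'$ by adding a new edge $st$ of weight $\gamma' = \varepsilon - 1$, and note (as in \eqref{eq:reduction:1}) that $Z_{\text{Tutte}}(H';q,\tau)$ equals the linear function $f(\varepsilon; H, \gamma) = B(H,\gamma) - \varepsilon A(H,\gamma)$ whose unique zero $\varepsilon^{*}$ lies in $(1-q,0)$, with the same quantitative bounds on $|A(H,\gamma)|$, $Z_{st}$, $Z_{s|t}$, and on the distance of $\varepsilon^{*}$ from the endpoints (via Lemma~\ref{lem:f:subinterval}) as before. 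We then use Corollary~\ref{cor:implement-approx:literature}, exactly as in the previous lemma, to implement endpoints $\varepsilon', \varepsilon''$ with $\varepsilon^{*} \in (\varepsilon',\varepsilon'') \subseteq (1-q,0)$ separated from the endpoints by at least $|1-q|c^{-2r}/4$.

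For each iteration of interval-shrinking we implement $5$ partition points $\varepsilon_0 < \varepsilon_1 < \varepsilon_2 < \varepsilon_3 < \varepsilon_4$ of the current interval of length $l$ with spacing at least $l/10$ (using Corollary~\ref{cor:implement-approx:literature} with accuracy small compared to $l/10$), invoke the planar sign oracle $\SPTz(q,\gamma_1,\gamma_2)$ on the corresponding weighted planar graphs $H'_i$ to read off the sign of $f(\varepsilon_i; H,\gamma)$, and then apply the sign-based shrinking procedure from Section~\ref{sec:hardness:real} (look for $s_0=s_1=s_2$ or $s_2=s_3=s_4$) to pass to a subinterval of length at most $9l/10$ still containing $\varepsilon^{*}$. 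Iterating $O(j+r)$ times we reach a subinterval of length at most $c^{-j-4r}$, implement a weight $\hat\varepsilon$ inside it, and apply \eqref{eq:exact-reduction:2}--\eqref{eq:exact-reduction:3} together with the lower bound $|1-q-\hat\varepsilon| \ge |1-q|c^{-2r}/2$ to produce a rational approximation $\overline{\alpha}$ of $\alpha = Z_{s|t}(H;q,\gamma)/Z_{st}(H;q,\gamma)$ with additive error at most $4|q| c^{-j}/|1-q|$. The final step — bounding $d(\alpha)$ and $H(\alpha)$ via Corollary~\ref{cor:degree-height}, choosing $j$ so that the error is below the threshold of Lemma~\ref{lem:lll}, reconstructing the minimal polynomial of $\alpha$, and isolating $\alpha$ among the real roots using Sturm sequences — is identical to the previous lemma's.

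The only mild obstacle is the sign oracle's unreliability when $f(\varepsilon_i)=0$. Since $f$ is linear and not identically zero, this can occur for at most one of the five partition points in any iteration; the monotone-signs analysis given in Section~\ref{sec:hardness:real} still identifies a subinterval of length at most $9l/10$ in which $f$ changes sign (choosing the half of the partition whose three signs agree), and if the oracle happens to report conflicting signs making both halves valid, either choice is safe. Moreover, if $f(\hat\varepsilon)=0$ exactly, then $\hat\varepsilon = \varepsilon^{*}$ is an even better approximation and the subsequent reconstruction proceeds without change. Finally, if we are given the stronger non-planar oracle $\STz(q,\gamma_1,\gamma_2)$, the same argument applies without any restriction on $H$, since all the graphs $H'_i$ produced by the reduction are planar precisely when $H$ is, and otherwise the non-planar oracle handles them equally well.
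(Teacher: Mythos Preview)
Your proposal is correct and follows exactly the paper's approach: the paper's proof simply states that the algorithm is the same as in Lemma~\ref{lem:compute-fraction:q>1}, with the sole modification of replacing the norm-based interval-shrinking (with $p=10$) by the sign-based interval-shrinking (with $p=4$, i.e., five partition points) described at the start of Section~\ref{sec:hardness:real}. Your write-up spells out the details the paper leaves implicit, including the handling of the oracle's unreliability at the zero, and is entirely consistent with the paper's intended argument.
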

  \begin{proof}
    The algorithm is exactly the same one of Lemma~\ref{lem:compute-fraction:q>1}. The proof is analogous too. The only difference is in the interval-shrinking technique, where we split $(\varepsilon',\varepsilon'')$ into $4$ intervals instead of $10$ (so $p = 4$ in the proof), but this has been discussed at the beginning of this section.
  \end{proof}

\begin{lemma} \label{lem:compute-fraction:0<q<1} 
  Let $K$ be a real number with $K > 1$. Let $q$, $\gamma_1$ and $\gamma_2$ be real algebraic numbers such that $0 < q < 1$, $\gamma_1 \in (-1, 0)$ and $\gamma_2 > 0$. Let us assume that we have access to an oracle for $\FNPTz(q, \gamma_1, \gamma_2)$. Then there exists an algorithm that takes as input a positive integer $\rho$ and a planar graph $H$ along with two distinct connected vertices $s$ and $t$ of $H$, and, for $\gamma = (\gamma_2+1)^\rho-1$, this algorithm computes a representation of the algebraic number $Z_{s|t}(H; q, \gamma) / Z_{st}(H; q, \gamma)$ in polynomial time in $\rho$ and the size of $H$. Moreover, if we have access to the more powerful oracle $\FNTz(q, \gamma_1, \gamma_2)$, then we can remove the constraint that $H$ is planar.
\end{lemma}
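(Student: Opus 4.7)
The plan is to follow the proof of Lemma~\ref{lem:compute-fraction:q>1} almost verbatim, adjusting only the two places where the hypothesis $q>1$ was used: the location of the sign change of the linear function $f(\varepsilon; H, \gamma) = B(H,\gamma) - \varepsilon A(H,\gamma)$ from \eqref{eq:f}, and the range of real numbers that we need to implement as approximate edge interactions. As in the previous lemma, we set $\gamma = (\gamma_2+1)^\rho - 1$, so $\gamma>0$, and we aim to recover a representation of $\alpha = Z_{s|t}(H;q,\gamma)/Z_{st}(H;q,\gamma)$ from an accurate rational approximation of $\alpha$ via the algorithm of Lemma~\ref{lem:lll}.

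First I would verify that $f(-;H,\gamma)$ changes sign on $(0, 1-q)$ (rather than on $(1-q,0)$). Since $q\in(0,1)$ and $Z_{st}(H;q,\gamma), Z_{s|t}(H;q,\gamma)>0$, formula \eqref{eq:f:endpoints} gives $f(0; H, \gamma) = Z_{s|t}(H;q,\gamma)(1-1/q) < 0$ and $f(1-q;H,\gamma) = (1-q) Z_{st}(H;q,\gamma) > 0$, so the zero $\varepsilon^*$ of $f$ lies in $(0, 1-q)$ and $A(H,\gamma)\neq 0$. The lower bounds $Z_{st}(H;q,\gamma) \geq q \gamma^m \geq c^{-r}$ and $Z_{s|t}(H;q,\gamma) \geq q^n \geq c^{-r}$ (with $c = 2\max\{|q|,1/|q|\} \max\{\gamma,1/\gamma\}$ and $r=\max\{n,m\}$) continue to hold for $0<q<1$, so Lemma~\ref{lem:f:subinterval} still yields $|\varepsilon^*|\geq |1-1/q|c^{-2r}$ and $|1-q-\varepsilon^*|\geq |1-q|c^{-2r}$.

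The only remaining change is the use of Corollary~\ref{cor:implement-approx:literature} to implement approximations of real values $\varepsilon \in (0, 1-q)$; this is precisely the situation anticipated in the proof of Lemma~\ref{lem:compute-fraction:q>1}, where it was noted that for $q<1$ we only need $y_1 = \gamma_1 + 1 \in (0,1)$, i.e.\ the weaker hypothesis $\gamma_1 \in (-1,0)$ assumed here. Concretely, with $y_1 \in (0,1)$ and $y_2 > 1$, the algorithm of Corollary~\ref{cor:implement-approx:literature} implements, in polynomial time in $n$ and $k$, an approximation $\varepsilon$ of any positive real algebraic number $y'$ with $|y'|\in [|y_1|^k, |y_1|^{-k}]$ and additive error at most $2^{-n}$, using a planar (theta) graph and weights in $\{\gamma_1,\gamma_2\}$. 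We use this first to implement endpoints $\varepsilon', \varepsilon''$ satisfying $0 < \varepsilon' < \varepsilon^* < \varepsilon'' < 1-q$, with $\varepsilon' \approx (1-1/q)c^{-2r}/2 > 0$ and $\varepsilon'' \approx (1-q) - (1-q)c^{-2r}/2$, and then at each stage of the interval-shrinking procedure to implement the interior subdivision points, with the same accuracy bound $c^{-j-4r}/(4p)$ and the same choice of $p \in \{4, 10\}$ depending on whether the oracle is $\SPTz$ or $\FNPTz$.

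With these modifications, the remainder of the argument of Lemma~\ref{lem:compute-fraction:q>1} goes through unchanged: the interval-shrinking produces, in polynomial time, a $\hat\varepsilon$ with $|\hat\varepsilon - \varepsilon^*| \le c^{-j-4r}$, from which equation \eqref{eq:exact-reduction:3} gives a rational additive approximation of $\alpha$ with error $O(c^{-j})$, and the degree and height bounds of Corollary~\ref{cor:degree-height}, combined with $h(\gamma)\leq \rho(1+h(\gamma_2))$, show that choosing $j = O(\rho\,\mathrm{size}(H))$ yields an approximation accurate enough to apply Lemma~\ref{lem:lll} and recover the minimal polynomial of $\alpha$; Sturm sequences then isolate $\alpha$ among the real roots of that polynomial. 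The non-planar variant follows identically upon replacing $\FNPTz$ (resp.\ $\SPTz$) by $\FNTz$ (resp.\ $\STz$), since planarity was only invoked to guarantee that the implementing graphs passed to the oracle are planar. The only mild subtlety to double check is that $|y'|$ never leaves the range $[|y_1|^k,|y_1|^{-k}]$ as $\varepsilon'$ approaches $0$; since $|\varepsilon'|\geq |1-1/q|c^{-2r}/2$ and all interior subdivision points have absolute value at least $c^{-j-4r}/(2p)$, it suffices to pick $k = O(j+r)$, which remains polynomial in $\rho$ and $\mathrm{size}(H)$.
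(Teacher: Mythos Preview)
Your proof sketch is correct and follows essentially the same approach as the paper's proof, which likewise argues that the algorithm of Lemma~\ref{lem:compute-fraction:q>1} carries over once one checks that the zero of $f$ now lies in $(0,1-q)$, that the bounds \eqref{eq:bounds-zst:q>1} and Lemma~\ref{lem:f:subinterval} still apply, and that the hypothesis $\gamma_1\in(-1,0)$ suffices to implement positive $\varepsilon$ via Corollary~\ref{cor:implement-approx:literature}. One small slip: since $q\in(0,1)$ gives $1-1/q<0$, your formula for $\varepsilon'$ should read $\varepsilon'\approx |1-1/q|\,c^{-2r}/2$ (as in the paper), and the mention of $p=4$ belongs to the sign-oracle variant treated separately in Lemma~\ref{lem:compute-fraction:0<q<1:sign}.
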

\begin{proof}
  We claim that the algorithm presented in Lemma~\ref{lem:compute-fraction:q>1} also works in this setting. Let $f(\varepsilon; H, \gamma) = B(H,\gamma)  - \varepsilon A(H, \gamma)$ as in \eqref{eq:f}. As we pointed out in the proof of Lemma~\ref{lem:compute-fraction:q>1}, the algorithm works as long as $f(0; H, \gamma) f(1-q; H, \gamma) < 0$ and the hypothesis of Lemma~\ref{lem:f:subinterval} hold. First, since  $q$ and $\gamma$ are positive, equations \eqref{eq:bounds-zst:q>1} hold. It follows that $A(H, \gamma) = -Z_{st}(H; q, \gamma) - q^{-1}Z_{st}(H; q, \gamma) \ne 0$. Hence, the hypothesis of Lemma~\ref{lem:f:subinterval} hold. In view of \eqref{eq:f:endpoints} and the fact that $q \in (0,1)$ and $\gamma$ is positive, we have $f(0; H, \gamma) < 0$ and $f(1-q; H, \gamma) > 0$. We conclude that $f(0; H, \gamma) f(1-q; H, \gamma) < 0$, as we wanted.

  This time the interval-stretching technique applied in Lemma~\ref{lem:compute-fraction:q>1} runs on a subinterval $(\varepsilon', \varepsilon'')$ of $(0, 1-q)$, so we only need to implement positive values of $\varepsilon$. For this reason, we can get away with the hypothesis $\gamma_1 \in (-1,0)$ instead of the hypothesis $\gamma_1 \in (-2,-1)$, as was announced in the proof of  Lemma~\ref{lem:compute-fraction:q>1}. Finally, we must indicate how our algorithm implements the numbers $\varepsilon'$ and $\varepsilon''$ so that $\varepsilon^{*} \in (\varepsilon', \varepsilon'') \subseteq (0, 1-q)$, as this was only done in Lemma \ref{lem:compute-fraction:q>1} for $q > 1$. The argument that we give here also applies when $q < 0$. Let $y_1 = \gamma_1 + 1$, $x_1 = 1 + q / (y_1 - 1)$, $y_2 = \gamma_2 +1$ and $x_2 = 1 + q / (y_2 - 1)$. We have $y_1 \in (0,1)$, $y_2 > 1$, $q < 1$ and $q \ne 0$. Our algorithm first applies the algorithm given in Corollary~\ref{cor:implement-approx:literature} with $y' = \left| 1-1/q \right| c^{-2r}/2$, $k$ such that $|y_1|^k < |y'| < |y_1|^{-k}$ and $n = \lceil 2r \log_2(c) - \min\{0,\log_2|1-1/q|\} + 2 \rceil$.  Note that $k = O(r)$ and $n = O(r)$. This procedure computes a theta graph and a weight function taking weights in $\{\gamma_1, \gamma_2\}$ that implement a point $(1 + q/(\varepsilon'-1), \varepsilon')$ such that $|y' - \varepsilon'| \le 2^{-n} \le |1-1/q|c^{-2r}/4$ in polynomial time in $r = O ( \mathrm{size}(H) )$. We obtain $ |1-1/q|c^{-2r}/4 \le \varepsilon' \le  3|1-1/q|c^{-2r}/4$, so, by Lemma~\ref{lem:f:subinterval}, we find that $0 < \varepsilon' < \varepsilon^{*}$. Next our algorithm invokes again Corollary~\ref{cor:implement-approx:literature}, this time with inputs $y' = 1-q - (1-q)c^{-2r}/2$, $k$ such that $|y_1|^k < |y'| < |y_1|^{-k}$ and $n = \lceil 2r \log_2(c) - \min \{0,\log_2(1-q)\} + 2 \rceil$. This implements $(1 + q/(\varepsilon''-1), \varepsilon'')$ with $|y' - \varepsilon'' | \le (1-q)c^{-2r}/4$, which gives $1-q-3(1-q)c^{-2r}/4 \le \varepsilon'' \le  1-q+(1-q)c^{-2r}/4$. Again by Lemma~\ref{lem:f:subinterval}, we find that $\varepsilon^{*} < \varepsilon'' < 1-q$. The interval $(\varepsilon', \varepsilon'')$ is the starting interval for the interval-shrinking procedure that we needed.
\end{proof}

  \begin{lemma} \label{lem:compute-fraction:0<q<1:sign} 
    Let $q$, $\gamma_1$ and $\gamma_2$ be real algebraic numbers such that $0 < q < 1$, $\gamma_1 \in (-1, -0)$ and $\gamma_2 > 0$. Let us assume that we have access to an oracle for the computational problem $\SPTz(q, \gamma_1, \gamma_2)$. Then there exists an algorithm that takes as input a positive integer $\rho$ and a planar graph $H$ along with two distinct connected vertices $s$ and $t$ of $H$, and, for $\gamma = (\gamma_2+1)^\rho-1$, this algorithm computes a representation of the algebraic number $Z_{s|t}(H; q, \gamma) / Z_{st}(H; q, \gamma)$ in polynomial time in $\rho$ and the size of $H$. Moreover, if we have access to the more powerful oracle $\STz(q, \gamma_1, \gamma_2)$, then we can remove the constraint that $H$ is planar.
  \end{lemma}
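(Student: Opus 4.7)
The algorithm will be identical in structure to the one developed in Lemma~\ref{lem:compute-fraction:0<q<1}; the only change is that the multiplicative-approximation queries inside the binary search are replaced by sign queries. Concretely, recall that in Lemma~\ref{lem:compute-fraction:0<q<1} we reduced the problem to locating, as accurately as we need, the zero $\varepsilon^{*}$ of the linear function $f(\varepsilon; H, \gamma) = B(H,\gamma) - \varepsilon A(H, \gamma)$ defined in~\eqref{eq:f}, where $\varepsilon^* \in (0, 1-q)$ because $f(0; H, \gamma) f(1-q; H, \gamma) < 0$ (this sign pattern on the endpoints was verified in the proof of Lemma~\ref{lem:compute-fraction:0<q<1} and uses only $0<q<1$ and $\gamma>0$). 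The hypotheses of Lemma~\ref{lem:f:subinterval} remain satisfied by the same argument, so $\varepsilon^{*}$ is bounded away from the endpoints $0$ and $1-q$ by a quantity of the form $c^{-2r}$, and the initial bracketing interval $(\varepsilon', \varepsilon'') \subseteq (0, 1-q)$ can be implemented using Corollary~\ref{cor:implement-approx:literature} exactly as in the proof of Lemma~\ref{lem:compute-fraction:0<q<1} (the hypothesis $\gamma_1 \in (-1, 0)$, equivalently $y_1 \in (0,1)$, suffices because we only need to implement positive values of $\varepsilon$).

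The key adaptation is in the interval-shrinking step itself. Instead of the $10$-subinterval procedure that relies on multiplicative approximations with $\eta = 1/41$, we use the simpler $4$-subinterval procedure described at the beginning of Section~\ref{sec:hardness:real}, which only uses the sign of $f(\varepsilon_i; H, \gamma)$ at five points $\varepsilon_0 < \varepsilon_1 < \cdots < \varepsilon_4$ partitioning the current interval with consecutive spacing at least $l/10$. For each $i$, we invoke Corollary~\ref{cor:implement-approx:literature} with the same parameters as in Lemma~\ref{lem:compute-fraction:0<q<1} to implement a theta graph realising a weight $\hat{\varepsilon}_i$ with $|\hat{\varepsilon}_i - \varepsilon'_i| \le 2^{-n}$ for $n = \lceil (j+4r)\log_2(c) + \log_2(4p)\rceil$ (now with $p = 4$ rather than $p = 10$); the resulting graph together with the weight assignment of \eqref{eq:reduction:1} is planar. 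Querying the oracle for $\SPTz(q, \gamma_1, \gamma_2)$ on each of these five instances returns the sign of $Z_{\text{Tutte}}(H'; q, \tau) = f(\hat{\varepsilon}_i; H, \gamma)$, which (as explained in Section~\ref{sec:hardness:real}) suffices to shrink the interval to $9/10$ of its length. Iterating this step $O((j+r)\log(1/(1-q)))$ times produces an interval of length at most $c^{-j-4r}$.

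The remainder of the proof proceeds verbatim as in Lemmas~\ref{lem:compute-fraction:q>1} and~\ref{lem:compute-fraction:0<q<1}: we implement some $\hat\varepsilon$ in the final small interval via Corollary~\ref{cor:implement-approx:literature}, use~\eqref{eq:reduction:1}--\eqref{eq:exact-reduction:3} to convert a tight bound on $|Z_{\text{Tutte}}(H'; q, \tau)|$ into an additive approximation of $\alpha = Z_{s|t}(H; q, \gamma)/Z_{st}(H; q, \gamma)$ with error $O(c^{-j})$, and apply Lemma~\ref{lem:lll} with the degree and height bounds from Corollary~\ref{cor:degree-height} to recover the minimal polynomial of $\alpha$ in polynomial time. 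Sturm-sequence isolation then yields a rational interval separating $\alpha$ from the other real roots of this minimal polynomial, completing the representation. Since there is nothing genuinely new to verify beyond the substitution of the sign-based shrinking procedure for the multiplicative one, there is no real obstacle; the proof is purely a matter of splicing the two preceding proofs together, and is therefore omitted.
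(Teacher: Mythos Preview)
Your proposal is correct and follows essentially the same approach as the paper: the paper's proof simply says the algorithm is the same as in Lemma~\ref{lem:compute-fraction:0<q<1} with the sole change being the interval-shrinking technique (using the $4$-subinterval sign-based procedure in place of the $10$-subinterval multiplicative one), exactly as you describe. Your write-up is more detailed than the paper's one-line proof, but the content is the same.
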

  
  \begin{proof}
    The algorithm is exactly the same the one of Lemma~\ref{lem:compute-fraction:0<q<1}, the only difference being in the interval-stretching technique as we have already explained.
  \end{proof}

\begin{lemma} \label{lem:compute-fraction:q<0} 
  Let $K$ be a real number with $K > 1$. Let $q$, $\gamma_1$ and $\gamma_2$ be real algebraic numbers such that $q < 0$, $\gamma_1 \in (-1, 0)$ and $\gamma_2 > 0$. Let us assume that we have access to an oracle for $\FNPTz(q, \gamma_1, \gamma_2)$. Then there exists an algorithm that takes as input:  
  \begin{itemize}
  \item a positive integer $\rho$ ;
  \item a planar graph $H = (V, E)$ such that, for $\gamma = (\gamma_2+1)^\rho-1$, we have $\gamma \ge (8 \max \{|q|, 1/|q|\})^r$, where $r=\max \{|V|, |E|\}$;
  \item two distinct connected vertices $s$ and $t$ of $H$.
  \end{itemize}
 This algorithm computes a representation of the algebraic number $Z_{s|t}(H; q, \gamma) / Z_{st}(H; q, \gamma)$ in polynomial time in $\rho$ and the size of $H$. Moreover, for such inputs $\rho$, $H$ and $s,t$, we have $Z_{st}(H; q, \gamma) \ne 0$ and $Z_{\text{Tutte}}(H; q, \gamma) \ne 0$. If we have access to the more powerful oracle $\FNTz(q, \gamma_1, \gamma_2)$, then we can remove the constraint that $H$ is planar.
\end{lemma}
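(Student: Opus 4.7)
The proof adapts the interval-shrinking strategy of Lemma~\ref{lem:compute-fraction:0<q<1} to the regime $q<0$ and very large $\gamma$, the modifications being those required to control the signs of the partition functions. Without loss of generality we may assume $H$ is connected: the components of $H$ not containing $s,t$ contribute multiplicatively to both $Z_{st}$ and $Z_{s|t}$ and therefore cancel in the ratio we wish to compute, and a uniform lower bound on their values is obtained by the same leading-term argument described below. Writing $Z_{\text{Tutte}}(H;q,\gamma)=q\gamma^m+R_1(\gamma)$ and $Z_{st}(H;q,\gamma)=q\gamma^m+R_2(\gamma)$ (using $k(E)=1$), the lower-order remainders satisfy $|R_i(\gamma)|\le 2^m\max\{1,|q|^n\}\gamma^{m-1}$, and the hypothesis $\gamma\ge (8\max\{|q|,1/|q|\})^r$ forces this bound to be at most $|q|\gamma^m/2$. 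Hence $|Z_{\text{Tutte}}|,\,|Z_{st}|\ge |q|\gamma^m/2>0$, establishing both non-vanishing assertions, and both signs equal $\mathrm{sign}(q)=-1$.

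Next I would analyse $Z_{s|t}=Z_{\text{Tutte}}-Z_{st}$ in order to place the zero $\varepsilon^{*}$ of the linear function $f(\varepsilon;H,\gamma)$ of \eqref{eq:f} inside the interval $(0,1-q)$. By a standard consequence of Menger's theorem, for every minimum $st$-edge-cut $C$ of size $\kappa\ge 1$ the removal of $C$ from a connected graph $H$ yields exactly two components, so the leading-in-$\gamma$ term of $Z_{s|t}$ is $Nq^2\gamma^{m-\kappa}$, where $N\ge 1$ counts the minimum $st$-cuts. The lower-order remainder is again bounded by $2^m\max\{1,|q|^n\}\gamma^{m-\kappa-1}$, which the hypothesis on $\gamma$ suppresses to at most $N|q|^2\gamma^{m-\kappa}/2$; therefore $|Z_{s|t}|\ge N|q|^2\gamma^{m-\kappa}/2>0$ and $\mathrm{sign}(Z_{s|t})=\mathrm{sign}(q^2)=+1$, which is opposite to $\mathrm{sign}(Z_{st})$. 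Using \eqref{eq:f:endpoints} together with $1-1/q>0$ and $1-q>0$ (valid for $q<0$), we conclude $f(0;H,\gamma)>0>f(1-q;H,\gamma)$, so $\varepsilon^{*}\in(0,1-q)$; and the two lower bounds $|Z_{st}|,|Z_{s|t}|\ge c^{-r}$ required by Lemma~\ref{lem:f:subinterval} hold by a wide margin, so $\varepsilon^{*}$ is bounded away from both endpoints.

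The interval-shrinking procedure itself then runs verbatim as in the $0<q<1$ case of Lemma~\ref{lem:compute-fraction:0<q<1}: every implemented value of $\varepsilon$ is positive, so Corollary~\ref{cor:implement-approx:literature} applied with $(x_1,y_1)$ corresponding to $\gamma_1\in(-1,0)$ (so $y_1\in(0,1)$) and $(x_2,y_2)$ corresponding to $\gamma_2>0$ produces the required theta-graph gadgets with weights in $\{\gamma_1,\gamma_2\}$ on the augmented planar graph $H'$ obtained by adding a weighted $st$-edge, and the $\FNPTz(q,\gamma_1,\gamma_2)$ oracle provides multiplicative approximations of $|f(\varepsilon)|=|Z_{\text{Tutte}}(H';q,\tau)|$ via \eqref{eq:reduction:1}. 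After $O(\rho\,\mathrm{size}(H))$ shrinking iterations we obtain $\hat{\varepsilon}$ with $|\hat{\varepsilon}-\varepsilon^{*}|\le c^{-j-4r}$; formula \eqref{eq:exact-reduction:3} then converts this to a rational additive approximation of $\alpha=Z_{s|t}(H;q,\gamma)/Z_{st}(H;q,\gamma)$ of accuracy $O(c^{-j})$. Bounding the degree and height of $\alpha$ via Corollary~\ref{cor:degree-height} and invoking Kannan--Lenstra--Lov\'asz (Lemma~\ref{lem:lll}), we reconstruct the minimal polynomial of $\alpha$ in polynomial time in $\rho$ and $\mathrm{size}(H)$, and Sturm sequences then isolate $\alpha$.

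The main obstacle is ensuring both the correct sign and a uniform lower bound on $|Z_{s|t}|$, because unlike in the $q>0$ cases there is no termwise positivity to rely on. The key is the Menger-theoretic observation that the leading-in-$\gamma$ coefficient of $Z_{s|t}$ equals $Nq^2$ with $N\ge 1$, which is positive for $q<0$ regardless of the combinatorial structure of $H$; the hypothesis $\gamma\ge (8\max\{|q|,1/|q|\})^r$ is then calibrated precisely to dominate all lower-order contributions to $Z_{s|t}$, $Z_{st}$ and $Z_{\text{Tutte}}$ simultaneously, yielding the sign and magnitude estimates needed to import the $0<q<1$ analysis wholesale.
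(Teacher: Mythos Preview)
Your proposal is correct and follows essentially the same approach as the paper: dominate $Z_{st}$, $Z_{s|t}$ and $Z_{\text{Tutte}}$ by their leading-in-$\gamma$ terms (using the min-cut structure for $Z_{s|t}$), deduce the required sign pattern $f(0;H,\gamma)>0>f(1-q;H,\gamma)$ and the lower bounds needed for Lemma~\ref{lem:f:subinterval}, and then import the interval-shrinking and LLL reconstruction verbatim from the $0<q<1$ case. The only cosmetic differences are that the paper packages the remainder bounds via the parameter $\delta=(2\max\{|q|,1/|q|\})^r/\gamma\le 1/4$ rather than your explicit $2^m\max\{1,|q|^n\}\gamma^{m-1}$, and proves $Z_{\text{Tutte}}\ne 0$ by showing $|Z_{st}|>|Z_{s|t}|$ rather than directly via its own leading term; neither difference is substantive.
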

\begin{proof}
  We claim that the algorithm presented in Lemmas~\ref{lem:compute-fraction:q>1} and~\ref{lem:compute-fraction:0<q<1} also works in this setting. Let $n = |V|$ and $m = |E|$. Let $c = 2 \max \{|q|,1/|q|\} \gamma$. We may assume that $r\ge2$. First, let us assume that $H$ is connected. Let $f(\varepsilon; H, \rho) = B(H, \gamma) - \varepsilon A(H, \gamma)$ as in \eqref{eq:f}, so $B(H, \gamma) = Z_{s|t}(H; q, \gamma) ( 1 - 1 / q )$ and $A(H, \gamma) = -Z_{st}(H; q, \gamma) - q^{-1} Z_{s|t}(H; q, \gamma)$. Recall that we have to prove that the conditions of Lemma~\ref{lem:f:subinterval} hold, as well as the inequality $f(0; H, \gamma) f(1-q; H, \gamma) < 0$. Let $\delta = (2 \max \{|q|, 1/|q|\})^r / \gamma$. Note that $0 < \delta \le 1/4$ because $\gamma \ge (8 \max \{|q|, 1/|q|\})^r$. Each one of the (at most $2^m$) terms in $Z_{st}(H; q, \gamma)$, other than the term with all edges in $A$, has absolute value at most $\gamma^{m-1} |q| \max \{|q|, 1\}^{n-1} \le \delta 2^{-m}\gamma^m |q|$. Since $H$ is connected, the term with all edges in $A$ is $q \gamma^m$. Thus, we have the inequalities
  \begin{equation} \label{eq:bounds-zst}
    \gamma^{m} q - \delta \gamma^m |q| \le Z_{st}(H; q, \gamma) \le  \gamma^m q + \delta \gamma^m |q| < 0.
  \end{equation}
  In particular, $Z_{st}(H; q, \gamma) \ne 0$. It also follows that
  \begin{equation*}
    \left| Z_{st}(H; q, \gamma) \right| \ge \gamma^m |q| (1 - \delta) \ge \gamma^m |q| 3/4 \ge c^{-r},
  \end{equation*}
  which is one of the conditions of  Lemma~\ref{lem:f:subinterval}. Recall that an $(s,t)$-cut of $H$ is a subset $A$ of edges of $H$ such that any path from $s$ to $t$ in $H$ has an edge in $A$. The size of this $(s,t)$-cut is the cardinality of $A$. Let $k$ be the size of a minimum cardinality $(s,t)$-cut in $H$, and let $C$ be the number of $(s,t)$-cuts of size $k$. We study the terms $q^{k(A)}\gamma^{|A|}$ appearing in $Z_{s|t}(H; q, \gamma)$, so $A$ is a subset of $E$ such that $s$ and $t$ are not connected in $(V, A)$. Note that such an $A$ is the complement of an $(s,t)$-cut and, hence, $|A| \le m-k$. Moreover, if $A$ is not the complement of an $(s,t)$-cut of size $k$, then the absolute value of $q^{k(A)}\gamma^{|A|}$ is at most $\gamma^{m-k-1} q^2 \max \{1, |q|\}^{n-2} \le \delta 2^{-m} \gamma^{m-k} q^2$. Thus, we have the inequalities
  \begin{equation} \label{eq:bounds-zs|t}
    0 < C \gamma^{m-k} q^2 - \delta \gamma^{m-k} q^2 \le Z_{s|t}(H; q, \gamma) \le  C\gamma^{m-k} q^2 + \delta \gamma^{m-k} q^2.
  \end{equation}
  The inequalities \eqref{eq:bounds-zst} and \eqref{eq:bounds-zs|t} have been previously given in the proof of~\cite[Lemma 2]{Goldberg2014}. As a consequence, we find that 
  \begin{equation*}
    \left| Z_{s|t}(H; q, \gamma) \right| \ge C \gamma^{m-k} q^2 (1 - \delta) \ge C\gamma^{m-k} q^{2} 3/4 \ge \gamma^{m-k} q^{2} 3/4  \ge c^{-r},
  \end{equation*}
  which is another one of the conditions of Lemma~\ref{lem:f:subinterval}. In view of \eqref{eq:f:endpoints} and the facts that $q < 0$ and we know the signs of $Z_{s|t}(H; q, \gamma)$ and $Z_{st}(H; q, \gamma)$, it follows that  $f(0; H, \gamma) > 0$ and $f(1-q; H, \gamma) < 0$. Hence, we find that $f(0; H, \gamma) f(1-q; H, \gamma) < 0$, as we wanted. Note that $A(H, \gamma)$ has to be non-zero because $f(-;H,\gamma)$ is non-constant as $f(0; H, \gamma) f(1-q; H, \gamma) < 0$. This is the last condition of Lemma~\ref{lem:f:subinterval} that we had to check.  We conclude that we can apply the algorithm given in the proof of Lemma~\ref{lem:compute-fraction:0<q<1} to compute $Z_{s|t}( H; q, \gamma ) / Z_{st}( H; q, \gamma)$ in polynomial time in $\rho$ and the size of $H$. Finally, we show that $Z_{\text{Tutte}}(H; q, \gamma) \ne 0$. This is not needed for the algorithm of Lemma~\ref{lem:compute-fraction:0<q<1}, but is part of the statement of the current lemma. In light of \eqref{eq:bounds-zst} and \eqref{eq:bounds-zs|t}, we have $\left| Z_{st}(H; q, \gamma) \right| \ge \gamma^{m} |q| (1 - \delta)$ and $\left| Z_{s|t}(H; q, \gamma) \right| \le C \gamma^{m-k} q^2(1+\delta)$. Note that
  \begin{align*}
     \gamma^{m} |q| (1-\delta) \ge \frac{3}{4}\gamma^{m} |q|  > \frac{5}{4}C\gamma^{m-k} q^2 \ge \gamma^{m-k} q^2 (1+\delta),
  \end{align*}
  where we used that $\gamma \ge (8 \max \{|q|, 1/|q|\})^r \ge 8 \cdot 2^{m} |q| > 5 C |q|$ since $r \ge 2$.  Therefore, we find that $\left| Z_{st}(H; q, \gamma) \right| > \left| Z_{s|t}(H; q, \gamma) \right|$. We conclude that
  \begin{equation*}
    Z_{\text{Tutte}}(H; q, \gamma) =   Z_{st}(H; q, \gamma) + Z_{s|t}(H; q, \gamma) < 0.
  \end{equation*}
  
  It remains to consider the case where $H$ is not connected. Let $H_1, \ldots, H_l$ be the connected components of $H$, and let us assume that the vertices $s$ and $t$ are in $H_1$ without loss of generality. We have
  \begin{align*}
    Z_{st}\left(H; q, \gamma\right) &=  Z_{st}\left(H_1; q, \gamma\right) Z_{\text{Tutte}}\left(H_2; q, \gamma\right) \cdots Z_{\text{Tutte}}\left(H_l; q, \gamma\right); \\
    Z_{s|t}\left(H; q, \gamma\right) &=  Z_{s|t}\left(H_1; q, \gamma\right) Z_{\text{Tutte}}\left(H_2; q, \gamma\right) \cdots Z_{\text{Tutte}}\left(H_l; q, \gamma\right); \\
    Z_{\text{Tutte}}\left(H; q, \gamma\right) &=  Z_{\text{Tutte}}\left(H_1; q, \gamma\right) Z_{\text{Tutte}}\left(H_2; q, \gamma\right) \cdots Z_{\text{Tutte}}\left(H_l; q, \gamma\right).    
  \end{align*}
  We have already shown that $Z_{st}\left(H_1; q, \gamma\right)$, $Z_{st}\left(H_1; q, \gamma\right)$ and $Z_{\text{Tutte}}\left(H_j; q, \gamma\right)$ are non-zero for all $j$. Hence, we obtain $Z_{s|t}( H; q, \gamma ) / Z_{st}( H; q, \gamma) = Z_{s|t}( H_1; q, \gamma ) / Z_{st}( H_1; q, \gamma)$, and we can apply our algorithm to $H_1$ instead of $H$. Moreover, we have $Z_{st}\left(H; q, \gamma\right) \ne 0$ and $Z_{\text{Tutte}}(H; q, \gamma) \ne 0$ as we wanted. This finishes the proof.
\end{proof}

\begin{lemma}  \label{lem:compute-fraction:q<0:sign} 
  Let $q$, $\gamma_1$ and $\gamma_2$ be real algebraic numbers such that $q < 0$, $\gamma_1 \in (-1, 0)$ and $\gamma_2 > 0$. Let us assume that we have access to an oracle for $\SPTz(q, \gamma_1, \gamma_2)$. Then there exists an algorithm that takes as input:  
  \begin{itemize}
  \item a positive integer $\rho$ ;
  \item a planar graph $H = (V, E)$ such that, for $\gamma = (\gamma_2+1)^\rho-1$, we have $\gamma \ge (8 \max \{|q|, 1/|q|\})^r$, where $r=\max \{|V|, |E|\}$;
  \item two distinct connected vertices $s$ and $t$ of $H$.
  \end{itemize}
 This algorithm computes a representation of the algebraic number $Z_{s|t}(H; q, \gamma) / Z_{st}(H; q, \gamma)$ in polynomial time in $\rho$ and the size of $H$. Moreover, for such inputs $\rho$, $H$ and $s,t$, we have $Z_{st}(H; q, \gamma) \ne 0$ and $Z_{\text{Tutte}}(H; q, \gamma) \ne 0$. If we have access to the more powerful oracle $\STz(q, \gamma_1, \gamma_2)$, then we can remove the constraint that $H$ is planar.
\end{lemma}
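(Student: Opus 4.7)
The plan is to mirror the proof of Lemma~\ref{lem:compute-fraction:q<0} essentially verbatim, swapping only the oracle-call subroutine inside the interval-shrinking step. Concretely, I would use the same algorithm, the same function $f(\varepsilon; H, \gamma) = B(H, \gamma) - \varepsilon A(H, \gamma)$ defined as in \eqref{eq:f}, the same choice of starting subinterval $(\varepsilon', \varepsilon'') \subseteq (0, 1-q)$ implemented via Corollary~\ref{cor:implement-approx:literature} (using $\gamma_1 \in (-1,0)$ to implement positive values of $\varepsilon$), and the same terminal reconstruction of the minimal polynomial of $\alpha = Z_{s|t}(H;q,\gamma)/Z_{st}(H;q,\gamma)$ via Corollary~\ref{cor:degree-height} and Lemma~\ref{lem:lll}.

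The first step is to verify, exactly as in the proof of Lemma~\ref{lem:compute-fraction:q<0}, the two structural ingredients needed to drive interval-shrinking: $f(0; H, \gamma) f(1-q; H, \gamma) < 0$ and the hypotheses of Lemma~\ref{lem:f:subinterval}. These follow from the inequalities \eqref{eq:bounds-zst} and \eqref{eq:bounds-zs|t} on $Z_{st}(H;q,\gamma)$ and $Z_{s|t}(H;q,\gamma)$ that were established there using the hypothesis $\gamma \ge (8 \max\{|q|,1/|q|\})^r$. The conclusion $Z_{st}(H;q,\gamma)\ne 0$ and $Z_{\text{Tutte}}(H;q,\gamma)\ne 0$ carries over unchanged from the corresponding part of Lemma~\ref{lem:compute-fraction:q<0}, as it depends only on these bounds and not on which oracle is being used.

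The only substantive change is the shrinking subroutine: instead of invoking the $\FNPTz$ oracle and comparing consecutive multiplicatively-approximated values, we invoke the $\SPTz$ oracle and look for the monotone sign change. As discussed at the beginning of Section~\ref{sec:hardness:real}, a sign oracle suffices to shrink the current interval by a factor $9/10$ when we partition it into $p = 4$ equal parts (rather than $p = 10$); the implementation of the intermediate endpoints $\varepsilon_1, \dots, \varepsilon_{p-1}$ via Corollary~\ref{cor:implement-approx:literature} goes through identically, with the only change being the smaller constant in the choice of $n = \lceil (j+4r)\log_2(c) + \log_2(4p)\rceil$, which does not affect the polynomial-time analysis. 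For non-connected $H$, we reduce to the connected component $H_1$ containing $s, t$ exactly as in Lemma~\ref{lem:compute-fraction:q<0}, since $Z_{\text{Tutte}}(H_j;q,\gamma)\ne 0$ for every other component $H_j$ is proved there without any reference to the oracle.

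The main (and essentially only) obstacle to check is that the sign-based shrinking correctly handles the case $q<0$: in Lemma~\ref{lem:compute-fraction:q<0} the zero $\varepsilon^*$ lies in $(0, 1-q)$ and $A(H,\gamma)$ could in principle be of either sign, so I would verify that the ``monotone sign sequence'' analysis of the sign-oracle technique at the start of Section~\ref{sec:hardness:real} applies regardless of the sign of the slope $-A(H,\gamma)$ (it does, since that analysis was written for arbitrary $A \ne 0$). Once this is confirmed, all the remaining accounting---the approximation $\hat\varepsilon$ of $\varepsilon^*$ up to $c^{-j-4r}$, the derivation of \eqref{eq:exact-reduction:3}, the bound $|1-q-\hat\varepsilon|\ge |1-q|c^{-2r}/2$, the additive error $4|q|c^{-j}/|1-q|$ on $\alpha$, and the final call to Lemma~\ref{lem:lll} with $d = d(q)d(\gamma_2)$ and $U = D_{q,\gamma_2}^{\rho\,\mathrm{size}(H)}$---is transcribed verbatim. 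The non-planar version again follows by dropping the theta-graph restriction from the gadgets, since Corollary~\ref{cor:implement-approx:literature} produces theta graphs in either case.
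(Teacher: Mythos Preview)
Your proposal is correct and follows exactly the same approach as the paper, which proves this lemma in a single sentence by pointing to Lemma~\ref{lem:compute-fraction:q<0} and noting that the only change is the interval-shrinking subroutine (sign oracle with $p=4$ instead of norm oracle with $p=10$), as already discussed at the start of Section~\ref{sec:hardness:real}. Your additional remark that the sign-oracle analysis there was written for arbitrary $A\ne 0$, and hence applies regardless of the sign of $A(H,\gamma)$ when $q<0$, is a correct and worthwhile sanity check that the paper leaves implicit.
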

\begin{proof}
  The algorithm is the same one as that of Lemma~\ref{lem:compute-fraction:q<0}, the only difference being in the interval-stretching technique, as we have already explained. 
\end{proof}

Now we deal with the last part of our reduction, where we reduce the computation of $Z_{\text{Tutte}}(G; q, \gamma)$ to the computation of $Z_{s|t}(H; q, \gamma ) / Z_{st}( H; q, \gamma )$ on the subgraphs $H$ of $G$. First, let us introduce some notation.
\begin{definition}
We say that a pair $(q, \gamma)$ of algebraic numbers is \emph{zero-free} for a graph $G$ if $q \ne 0$ and, for every subgraph $H$ of $G$ and every pair of distinct vertices $s$ and $t$ in the same connected component of $H$, the quantities $Z_{st}(H, q, \gamma)$ and $Z_{\mathrm{Tutte}}(H, q, \gamma)$ are non-zero.  
\end{definition}
Note that if $(q,\gamma)$ is zero-free for $G$, then $(q, \gamma)$ is also zero-free for any subgraph of $H$. We consider the following computational problems.

\prob{$\RT(q, \gamma)$.}{A (multi)graph $G = (V, E)$ such that $(q,\gamma)$ is zero-free for $G$ and two distinct vertices $s$ and $t$ in the same connected component of $G$.}{A representation of the algebraic number $Z_{s|t}(G; q, \gamma ) / Z_{st}(G; q, \gamma )$.}

\prob{$\ZT(q, \gamma)$.}{A (multi)graph $G = (V, E)$ such that $(q,\gamma)$ is zero-free for $G$.}{A representation of the algebraic number $Z_{\text{Tutte}}(G; q, \gamma )$.}

We also consider the planar versions of these problems, $\RPT(q, \gamma)$ and $\ZPT(q, \gamma)$. Then we can express the last part of our reduction as a reduction between these two computational problems.

\begin{lemma} \label{lem:reduce-frac-to-exact}
  Let $q$ and $\gamma$ be algebraic numbers with $q \ne 0$. Then we have the reductions
  \begin{align*}
    \ZPT(q, \gamma) & \le_T \RPT(q, \gamma), \\
    \ZT(q, \gamma) & \le_T \RT(q, \gamma).
  \end{align*}
\end{lemma}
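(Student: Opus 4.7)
The plan is to compute $Z_{\text{Tutte}}(G; q, \gamma)$ by adding the edges of $G$ one at a time and making at most one oracle call to $\RPT(q, \gamma)$ per added edge. Fix an enumeration $e_1, \ldots, e_m$ of $E(G)$, let $G_0 = (V, \emptyset)$ and, for $i \ge 1$, let $G_i = G_{i-1} + e_i$. Each $G_i$ is a subgraph of $G$, so $(q, \gamma)$ is zero-free for $G_i$, and $G_i$ is planar whenever $G$ is. The base case is $Z_{\text{Tutte}}(G_0; q, \gamma) = q^{|V|}$; I will give a one-step update rule that outputs $Z_{\text{Tutte}}(G_i; q, \gamma)$ from $Z_{\text{Tutte}}(G_{i-1}; q, \gamma)$ together with at most one query to the $\RPT(q, \gamma)$ oracle on the instance $(G_{i-1}, u_i, v_i)$, where $e_i = (u_i, v_i)$. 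Iterating $m$ times yields $Z_{\text{Tutte}}(G_m; q, \gamma) = Z_{\text{Tutte}}(G; q, \gamma)$.

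For the update rule, split the sum defining $Z_{\text{Tutte}}(G_i; q, \gamma)$ according to whether $e_i$ is present and, when it is, whether $u_i$ and $v_i$ already lie in the same component of the chosen subgraph of $G_{i-1}$. A direct calculation gives
\begin{equation*}
Z_{\text{Tutte}}(G_i; q, \gamma) = (1+\gamma)\, Z_{u_i v_i}(G_{i-1}; q, \gamma) + \left(1+\tfrac{\gamma}{q}\right) Z_{u_i \mid v_i}(G_{i-1}; q, \gamma).
\end{equation*}
If $u_i$ and $v_i$ lie in different connected components of $G_{i-1}$, then $Z_{u_i v_i}(G_{i-1}; q, \gamma) = 0$, so $Z_{u_i\mid v_i}(G_{i-1}; q, \gamma) = Z_{\text{Tutte}}(G_{i-1}; q, \gamma)$ and the identity collapses to $Z_{\text{Tutte}}(G_i; q, \gamma) = (1+\gamma/q)\, Z_{\text{Tutte}}(G_{i-1}; q, \gamma)$, needing no oracle call. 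Otherwise, $(G_{i-1}, u_i, v_i)$ is a valid input to $\RPT(q, \gamma)$, and zero-freeness of $(q,\gamma)$ for $G_{i-1}$ forces both $Z_{u_i v_i}(G_{i-1}; q, \gamma) \ne 0$ and $Z_{\text{Tutte}}(G_{i-1}; q, \gamma) \ne 0$. The oracle returns the algebraic number $R_{i-1} = Z_{u_i \mid v_i}(G_{i-1}; q, \gamma)/Z_{u_i v_i}(G_{i-1}; q, \gamma)$; since $Z_{\text{Tutte}}(G_{i-1}; q, \gamma) = Z_{u_i v_i}(G_{i-1}; q, \gamma)(1 + R_{i-1})$, we have $1+R_{i-1} \ne 0$, and thus
\begin{equation*}
Z_{u_i v_i}(G_{i-1}; q, \gamma) = \frac{Z_{\text{Tutte}}(G_{i-1}; q, \gamma)}{1+R_{i-1}}, \qquad Z_{u_i \mid v_i}(G_{i-1}; q, \gamma) = R_{i-1}\, Z_{u_i v_i}(G_{i-1}; q, \gamma),
\end{equation*}
which, substituted into the identity above, yields $Z_{\text{Tutte}}(G_i; q, \gamma)$.

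All arithmetic takes place in the number field $\mathbb{Q}(q, \gamma)$, whose degree over $\mathbb{Q}$ is at most $d(q)d(\gamma) = O(1)$, so the operations on algebraic-number representations run in polynomial time per step via the algorithms recalled in Section~\ref{sec:complex-implementations:algebraic}; the total number of oracle calls is at most $m$. In the planar case each $G_{i-1}$ is planar, so all oracle calls are to $\RPT(q, \gamma)$; discarding the planarity bookkeeping gives $\ZT(q, \gamma) \le_T \RT(q, \gamma)$ by the same proof. There is no real obstacle here: the only point requiring care is that every oracle query respects the validity conditions of $\RPT(q, \gamma)$, which is guaranteed by testing connectivity of $u_i$ and $v_i$ in $G_{i-1}$ and by the zero-freeness of $(q, \gamma)$ for the subgraph $G_{i-1}$.
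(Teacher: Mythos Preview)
Your proof is correct and uses the same key identity as the paper, namely
\[
Z_{\text{Tutte}}(G'; q, \gamma) = (1+\gamma)\, Z_{st}(G''; q, \gamma) + \Bigl(1+\tfrac{\gamma}{q}\Bigr) Z_{s\mid t}(G''; q, \gamma),
\]
where $G'$ and $G''$ differ by a single edge with endpoints $s,t$. The difference is purely the direction of the recursion: the paper deletes non-bridge edges one at a time until a tree remains (querying the oracle on both $G$ and $G\setminus e$ at each step, then recursing on $G\setminus e$), whereas you add edges starting from the empty graph $G_0=(V,\emptyset)$ and build up to $G$. Your edge-addition viewpoint is arguably a little cleaner, since it avoids the need to locate a non-bridge and handles bridges automatically via the ``different components'' branch, and it uses one oracle call per edge rather than two. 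One tiny omission: when $e_i$ is a loop ($u_i=v_i$) the oracle input would not have distinct terminals, but this case is trivial since then $Z_{\text{Tutte}}(G_i;q,\gamma)=(1+\gamma)\,Z_{\text{Tutte}}(G_{i-1};q,\gamma)$ with no oracle call needed; you may want to mention it explicitly.
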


\begin{proof}
  First, we show $\ZPT(q, \gamma) \le_T \RPT(q, \gamma)$. Let $G$ be the input of $\ZPT(q, \gamma)$. The reduction computes a representation of $Z_{\text{Tutte}}(G; q, \gamma)$ as follows. We assume that $G$ is not a tree since it is known how to compute the Tutte polynomial of a tree in polynomial time~\cite[Example 2.1]{Sokal2005}.  Then we can find an edge $e = (s,t)$ of $G$ that is not a bridge. We are going to use the oracle for $\RPT(q, \gamma)$ to reduce the computation of $Z_{\text{Tutte}}(G; q, \gamma)$ to that of $Z_{\text{Tutte}}(G \setminus e; q, \gamma)$, where $G \setminus e$ is formed from $G$ by deleting $e$. Note that if $G$ is planar, then $G \setminus e$ is also planar. Since $(q, \gamma)$ is zero-free for $G$, we have $Z_{st}( G; q, \gamma ) \ne 0$.  Let $\alpha = Z_{s|t}(G; q, \gamma ) / Z_{st}( G; q, \gamma )$. First, note that
   \begin{equation*}
      Z_{\text{Tutte}}\left(G; q, \gamma\right) =  Z_{st}\left( G; q, \gamma \right) + Z_{s|t}\left( G; q, \gamma \right) = Z_{st}\left( G; q, \gamma \right) \left(1 + \alpha\right).
    \end{equation*}
   By calling the oracle the algorithm obtains a representation of the factor $1 + \alpha$. Since $e$ is not a bridge, $s$ and $t$ are connected in $G \setminus e$, so, by calling the oracle again, the algorithm has access to a representation of the algebraic number $\beta = Z_{s|t}( G \setminus e; q, \gamma ) / Z_{st} ( G \setminus e; q, \gamma )$. We have 
   \begin{align*}
     Z_{st}\left(G; q, \gamma\right) & = Z_{st}\left(G \setminus e; q, \gamma \right) \left(1 + \gamma \right)  + \gamma q^{-1}  Z_{s|t}\left( G \setminus e; q, \gamma \right) \\
                         & = Z_{\text{Tutte}}\left(G \setminus e; q, \gamma \right) \left( \frac{1 + \gamma}{1 + \beta}  +  \gamma q^{-1}\frac{ \beta}{1 + \beta}\right),
   \end{align*}
   where we multiplied and divided by $Z_{\text{Tutte}}(G \setminus e; q, \gamma ) = Z_{st}(G \setminus e; q, \gamma )(1+\beta)$, which is non-zero since $(q, \gamma)$ is zero-free for $G$. Note that the fact that $Z_{\text{Tutte}}\left(G \setminus e; q, \gamma \right) \ne 0$ is equivalent to $\beta \ne -1$. We obtain
   \begin{equation} \label{eq:reducing-exact-to-quotient}
     Z_{\text{Tutte}}\left(G; q, \gamma\right) = Z_{\text{Tutte}}\left(G \setminus e; q, \gamma \right) \left( 1 + \gamma  +  \gamma q^{-1} \beta \right) \frac{1 + \alpha}{1 + \beta}.
   \end{equation}
   The algorithm then computes a representation of $Z_{\text{Tutte}}(G \setminus e; q, \gamma)$ recursively. Note that this reduction also works between the non-planar versions of the problems.
\end{proof}  

In the rest of this section we put our reduction together. There is one result for each one of the cases $q > 1$, $0 < q < 1$ and $q< 0$ (see Lemmas~\ref{lem:exact-reduction:q>1},~\ref{lem:exact-reduction:0<q<1} and~\ref{lem:exact-reduction:q<0}).

\begin{lemma} \label{lem:exact-reduction:q>1}
  Let $K$ be a real number with $K > 1$. Let $q$, $\gamma_1$ and $\gamma_2$ be real algebraic numbers such that $q >1$, $\gamma_1 \in (-2, -1)$ and $\gamma_2 > 0$.  Then we have the following reductions:
  \begin{align*}
    \PT(q, \gamma_2) & \le_T \FNPTz(q, \gamma_1, \gamma_2) \\
    \PT(q, \gamma_2) & \le_T \SPTz(q, \gamma_1, \gamma_2),
  \end{align*}
  where $\le_T$ denotes a Turing reduction. Moreover, these reductions also hold for the analogous non-planar problems.
\end{lemma}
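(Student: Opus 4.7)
The plan is to assemble the reduction from the building blocks already established in the section: Lemma~\ref{lem:compute-fraction:q>1} (or Lemma~\ref{lem:compute-fraction:q>1:sign}), which handles the interval-shrinking step, together with Lemma~\ref{lem:reduce-frac-to-exact}, which converts ratio computations into exact Tutte evaluations via the standard deletion recursion on a non-bridge edge. The target chain of reductions is
\begin{equation*}
\PT(q,\gamma_2)\;\le_T\;\ZPT(q,\gamma_2)\;\le_T\;\RPT(q,\gamma_2)\;\le_T\;\FNPTz(q,\gamma_1,\gamma_2),
\end{equation*}
with an entirely analogous chain in the non-planar setting and an entirely analogous chain using $\SPTz(q,\gamma_1,\gamma_2)$ in place of $\FNPTz(q,\gamma_1,\gamma_2)$.

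First I would observe that the hypothesis $q>1$, $\gamma_2>0$ makes the pair $(q,\gamma_2)$ automatically zero-free for every graph. Indeed, every term $q^{k(A)}\gamma_2^{|A|}$ in the definitions of $Z_{st}$, $Z_{s\mid t}$ and $Z_{\mathrm{Tutte}}$ is strictly positive, so $Z_{st}(H;q,\gamma_2)>0$ and $Z_{\mathrm{Tutte}}(H;q,\gamma_2)>0$ for every (sub)graph $H$ and every pair of connected terminals. Consequently $\PT(q,\gamma_2)$ and $\ZPT(q,\gamma_2)$ are the same problem up to the trivial syntactic check of this zero-freeness condition, giving the first reduction in the chain above for free.

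Next, I would invoke Lemma~\ref{lem:compute-fraction:q>1} with the choice $\rho=1$, so that $\gamma = (\gamma_2+1)^{\rho}-1 = \gamma_2$. That lemma, assuming an oracle for $\FNPTz(q,\gamma_1,\gamma_2)$, produces a polynomial-time algorithm that on input a planar graph $H$ and two connected terminals $s,t$ computes the representation of the algebraic number $Z_{s\mid t}(H;q,\gamma_2)/Z_{st}(H;q,\gamma_2)$. This is exactly an algorithm for $\RPT(q,\gamma_2)$, so Lemma~\ref{lem:compute-fraction:q>1} gives $\RPT(q,\gamma_2)\le_T \FNPTz(q,\gamma_1,\gamma_2)$. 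Composing with $\ZPT(q,\gamma_2)\le_T \RPT(q,\gamma_2)$ from Lemma~\ref{lem:reduce-frac-to-exact} and with the trivial $\PT(q,\gamma_2)\le_T \ZPT(q,\gamma_2)$ discussed above, and using the transitivity of Turing reductions, yields the first reduction claimed in the statement. The second reduction (from the sign oracle) is obtained identically, but replacing Lemma~\ref{lem:compute-fraction:q>1} by Lemma~\ref{lem:compute-fraction:q>1:sign}. For the non-planar versions, both Lemma~\ref{lem:compute-fraction:q>1} (resp. its sign variant) and Lemma~\ref{lem:reduce-frac-to-exact} explicitly allow dropping the planarity restriction when one has access to the stronger non-planar oracles $\FNTz(q,\gamma_1,\gamma_2)$ (resp. $\STz(q,\gamma_1,\gamma_2)$), so the same composition yields $\ET(q,\gamma_2)\le_T \FNTz(q,\gamma_1,\gamma_2)$ and $\ET(q,\gamma_2)\le_T \STz(q,\gamma_1,\gamma_2)$.

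There is essentially no obstacle in this lemma itself: all the technical content (interval-shrinking, reconstruction of the minimal polynomial via the algorithm of Kannan--Lenstra--Lov\'asz, and the deletion reduction) has already been packaged into Lemmas~\ref{lem:compute-fraction:q>1}, \ref{lem:compute-fraction:q>1:sign} and~\ref{lem:reduce-frac-to-exact}. The only point requiring any attention is to check that all the hypotheses of those lemmas are met in the present regime and that the zero-freeness requirement on the inputs to $\RPT$ and $\ZPT$ is automatically fulfilled, which, as noted above, follows immediately from positivity of $q$ and $\gamma_2$.
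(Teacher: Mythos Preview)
Your proposal is correct and follows essentially the same approach as the paper's proof: you chain together the trivial equivalence $\PT(q,\gamma_2)\equiv\ZPT(q,\gamma_2)$ (via zero-freeness from positivity of $q$ and $\gamma_2$), then Lemma~\ref{lem:reduce-frac-to-exact}, then Lemma~\ref{lem:compute-fraction:q>1} (or Lemma~\ref{lem:compute-fraction:q>1:sign}) with $\rho=1$, exactly as the paper does. The handling of the non-planar case is also identical.
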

\begin{proof}
  We claim that the problems $\PT(q, \gamma_2)$ and $\ZPT(q, \gamma_2)$ are equivalent. This follows from the fact that $(q, \gamma_2)$ is zero-free for every graph $G$. Lemma~\ref{lem:reduce-frac-to-exact} gives us a reduction from $ \ZPT(q, \gamma_2)$ to $\RPT(q, \gamma_2)$. Recall that we have $q>0$, $\gamma_1 \in (-2,-1)$ and $\gamma_2 > 0$. Thus, we can apply Lemma~\ref{lem:compute-fraction:q>1} with $\rho = 1$ to obtain a reduction from the problem $\RPT(q, \gamma_2)$ to the problem $\FNPTz(q, \gamma_1, \gamma_2)$, which gives the first reduction of the statement. The second reduction is derived analogously, but this time we apply Lemma~\ref{lem:compute-fraction:q>1:sign} instead of Lemma~\ref{lem:compute-fraction:q>1}. Finally, note that our reductions also hold for the non-planar version of the problems since the algorithms given in Lemma~\ref{lem:compute-fraction:q>1} and Lemma~\ref{lem:compute-fraction:q>1:sign} work for arbitrary graphs (non-necessarily planar) as long as the oracle does.
\end{proof}

\begin{lemma} \label{lem:exact-reduction:0<q<1}
  Let $K$ be a real number with $K > 1$.  Let $q$, $\gamma_1$ and $\gamma_2$ be real algebraic numbers such that $0 < q < 1$, $\gamma_1 \in (-1, 0)$ and $\gamma_2 > 0$. Then we have the following reductions:
  \begin{align*}
    \PT(q, \gamma_2) & \le_T \SPTz(q, \gamma_1, \gamma_2), \\
    \PT(q, \gamma_2) & \le_T \FNPTz(q, \gamma_1, \gamma_2).
  \end{align*}
  Moreover, these reductions also hold for the analogous non-planar problems.
\end{lemma}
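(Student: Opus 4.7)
The plan is to follow the blueprint of the proof of Lemma~\ref{lem:exact-reduction:q>1}, replacing the two ``compute-fraction'' lemmas by their counterparts for the range $0<q<1$, namely Lemma~\ref{lem:compute-fraction:0<q<1} and Lemma~\ref{lem:compute-fraction:0<q<1:sign}. The hypotheses of these lemmas exactly match the hypotheses of the statement we want to prove: $0<q<1$, $\gamma_1\in(-1,0)$ (as opposed to $(-2,-1)$ needed in the $q>1$ case), and $\gamma_2>0$. This is the only substantive change, and it is already anticipated by the design of Lemma~\ref{lem:compute-fraction:0<q<1}, whose proof explicitly notes that the interval-shrinking takes place inside $(0,1-q)$ (a subset of $(0,\infty)$), so only approximations of positive edge interactions are needed; consequently the weaker hypothesis $\gamma_1\in(-1,0)$ suffices since a $2$-thickening of $\gamma_1$ already yields a positive edge interaction of norm less than $1$.

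First I would observe that $\PT(q,\gamma_2)$ is Turing-equivalent to $\ZPT(q,\gamma_2)$ in this parameter range. Indeed, for every graph $G$, since $q>0$ and $\gamma_2>0$ all summands in $Z_{\mathrm{Tutte}}(G;q,\gamma_2)$ and $Z_{st}(G;q,\gamma_2)$ are strictly positive, so $(q,\gamma_2)$ is zero-free for every graph $G$, and an oracle for $\PT(q,\gamma_2)$ trivially solves $\ZPT(q,\gamma_2)$ and vice versa. Next I would invoke Lemma~\ref{lem:reduce-frac-to-exact}, which gives a Turing reduction $\ZPT(q,\gamma_2)\le_T \RPT(q,\gamma_2)$; the zero-freeness just verified is what allows us to apply the recursive step \eqref{eq:reducing-exact-to-quotient} of that lemma.

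To finish the first reduction, I would call Lemma~\ref{lem:compute-fraction:0<q<1} with $\rho=1$: given a planar input graph $H$ with terminals $s,t$ in the same connected component, it uses an $\FNPTz(q,\gamma_1,\gamma_2)$-oracle to compute a representation of the algebraic number $Z_{s|t}(H;q,\gamma_2)/Z_{st}(H;q,\gamma_2)$ in polynomial time. This yields $\RPT(q,\gamma_2)\le_T \FNPTz(q,\gamma_1,\gamma_2)$ and, by composition, the first chain of reductions in the statement. For the second reduction (to $\SPTz(q,\gamma_1,\gamma_2)$), the argument is identical but uses Lemma~\ref{lem:compute-fraction:0<q<1:sign} in place of Lemma~\ref{lem:compute-fraction:0<q<1}.

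The non-planar versions follow because both compute-fraction lemmas explicitly state that, when the oracle is the more powerful non-planar one, the planarity constraint on $H$ in their input can be dropped, and the recursive step of Lemma~\ref{lem:reduce-frac-to-exact} does not require planarity either. There is no real obstacle in this proof: all the technical work sits inside the compute-fraction lemmas and Lemma~\ref{lem:reduce-frac-to-exact}, both already proved; our task is merely to verify that the hypothesis matching is correct and to chain the reductions, so the proof will essentially be a three-line composition.
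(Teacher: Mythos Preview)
Your proposal is correct and follows exactly the same approach as the paper's proof, which simply states that the argument is analogous to Lemma~\ref{lem:exact-reduction:q>1} with Lemmas~\ref{lem:compute-fraction:0<q<1} and~\ref{lem:compute-fraction:0<q<1:sign} replacing their $q>1$ counterparts. Your explanatory aside about why $\gamma_1\in(-1,0)$ suffices is slightly off (the point is that $y_1=\gamma_1+1\in(0,1)$ directly, not that a $2$-thickening is needed), but this does not affect the chain of reductions, which is exactly right.
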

\begin{proof}
  The proof is analogous to that of Lemma~\ref{lem:exact-reduction:q>1}; now, we instead combine Lemmas~\ref{lem:reduce-frac-to-exact},~\ref{lem:compute-fraction:0<q<1} and~\ref{lem:compute-fraction:0<q<1:sign}.
\end{proof}

So far we have obtained reductions when $q > 1$ or $0 < q < 1$. To obtain a similar result when $q < 0$ we have to introduce the following variant of $\ET(q, \gamma)$, where $q$ is an algebraic number and $\gamma$ is a positive real algebraic number.

\prob{$\TT(q, \gamma)$.}{A (multi)graph $G = (V, E)$.}{A representation of the algebraic number $Z_{\text{Tutte}}(G; q, (\gamma+1)^{\rho(G)}-1)$, where $\rho(G)$ is the smallest positive integer such that $(\gamma+1)^{\rho(G)}-1 > M(G)$ for $M(G) = (8 \max \{ |q|, 1/|q| \})^r$ and $r = \max \{|V|, |E|\}$.}

We also consider the planar version of this problem, $\TPT(q, \gamma)$, where the input graph is promised to be planar. 

\begin{lemma} \label{lem:thickened-tutte}
  Let $q$ be an algebraic number and let $\gamma$ be a real algebraic number with $\gamma > 0$. Then the problem $\TPT(q, \gamma)$ is $\# \mathsf{P}$-hard unless $q \in \{1, 2\}$, and the problem $\TT(q, \gamma)$ is $\# \mathsf{P}$-hard unless $q = 1$.
\end{lemma}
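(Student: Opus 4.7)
The plan is to reduce the exact evaluation problem $\PT(q,\gamma)$ (respectively $\ET(q,\gamma)$) to $\TPT(q,\gamma)$ (resp.\ $\TT(q,\gamma)$) via polynomial interpolation, exploiting that adding isolated vertices scales the Tutte polynomial by a power of $q$. First observe that, because $\gamma>0$ is real, the pair $(q,\gamma)$ is never one of the nine special points listed in Section~\ref{sec:pre:hardness} (all special $\gamma$-values are either non-positive or non-real). Hence Theorem~\ref{thm:hardness:exact-planar} shows that $\PT(q,\gamma)$ is $\numP$-hard whenever $q\notin\{1,2\}$, and Theorem~\ref{thm:hardness:exact} shows that $\ET(q,\gamma)$ is $\numP$-hard whenever $q\neq 1$; in both cases one may assume $q\neq 0$, since otherwise the thickened problem trivially outputs $0$ on graphs with at least one vertex.

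Given an instance $G=(V,E)$ of $\PT(q,\gamma)$ (resp.\ $\ET(q,\gamma)$) with $m=|E|$, let $G_j$ denote $G$ together with $j$ additional isolated vertices; this preserves planarity. A direct computation from the definition~\eqref{eq:tutte:def:general} gives $Z_{\text{Tutte}}(G_j;q,z)=q^{j}Z_{\text{Tutte}}(G;q,z)$ for every $z$. Querying the oracle for $\TPT(q,\gamma)$ (resp.\ $\TT(q,\gamma)$) on input $G_j$ returns $a_j=Z_{\text{Tutte}}(G_j;q,z_j)$ where $z_j=(\gamma+1)^{\rho(G_j)}-1$, so, using $q\neq 0$, we recover the value of the single-variable polynomial $p(z):=Z_{\text{Tutte}}(G;q,z)$ at $z_j$ via $p(z_j)=a_j/q^{j}$. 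Note that $p(z)$ has degree at most $m$ in $z$.

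Because $M(G_j)\to\infty$ monotonically as $j\to\infty$, so does $\rho(G_j)$, and a straightforward estimate shows $\rho(G_j)=O(j+|V|)$; in particular we can choose indices $j_0<j_1<\cdots<j_m$, all polynomially bounded in $\mathrm{size}(G)$, for which the integers $\rho(G_{j_i})$ are pairwise distinct. The corresponding evaluation points $z_{j_0},\ldots,z_{j_m}$ are then pairwise distinct, and Lagrange interpolation recovers $p(\gamma)=Z_{\text{Tutte}}(G;q,\gamma)$ from the $m+1$ pairs $(z_{j_i},p(z_{j_i}))$, thereby solving the $\numP$-hard instance.

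The main technical point to verify is that the algebraic-number arithmetic remains of polynomial bit complexity. Every $z_{j_i}$ lies in the fixed number field $\mathbb{Q}(\gamma)$, so has degree bounded by $d(\gamma)=O(1)$; its usual height is at most $H(\gamma+1)^{\rho(G_{j_i})}$ (up to polynomial factors), which is singly exponential in $\rho(G_{j_i})=O(|V|+m)$ and hence of polynomial bit length. Consequently, computing the products and quotients in the Lagrange formula uses only polynomially many operations on algebraic numbers of polynomial bit size, each performable in polynomial time by the arithmetic on algebraic numbers reviewed in Section~\ref{sec:complex-implementations:algebraic}. This completes the reduction and gives the claimed hardness for $\TPT(q,\gamma)$ when $q\notin\{1,2\}$ and for $\TT(q,\gamma)$ when $q\neq 1$.
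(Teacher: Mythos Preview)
Your argument is correct and, like the paper's, is an interpolation reduction from an exact evaluation problem; the difference lies in how the distinct evaluation points are manufactured. The paper instead $j$-thickens $G$ (replacing every edge by $j$ parallel edges) for $j=1,\ldots,m$, so that the oracle call on $G_j$ returns $Z_{\text{Tutte}}(G;q,(\gamma+1)^{j\rho(G_j)}-1)$; since $j\rho(G_j)$ is strictly increasing in $j$, the $m$ evaluation points are automatically distinct and the polynomial $Z_{\text{Tutte}}(G;q,x)$ can be interpolated and then evaluated at $x=2$, reducing from $\PT(q,2)$. Your device of padding with isolated vertices and dividing out the factor $q^{j}$ is an equally valid way to vary $\rho(G_j)$ while keeping the underlying polynomial fixed; it costs you the extra (but easy) step of selecting indices with pairwise distinct $\rho$-values and the explicit exclusion of $q=0$, whereas the thickening approach gets distinctness for free and never needs to divide by $q$. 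Both routes keep all intermediate quantities in the fixed field $\mathbb{Q}(q,\gamma)$ with polynomially bounded heights, so the arithmetic analysis is the same.
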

\begin{proof}
  We are going to reduce $\PT(q, 2)$ to $\TPT(q, \gamma)$. The result then follows from the $\# \mathsf{P}$-hardness of $\PT(q, 2)$, cf. Theorem~\ref{thm:hardness:exact-planar}.

  Let $G$ be an $m$-edge instance of $\PT(q, 2)$. For $j=1,\hdots,m$, let $G_j$ be the graph obtained from $G$ by $j$-thickening each of its edges. We have $M(G_j) = (8 \max \{|q|, 1 / |q|\})^{\max\{n, j m\}}$ so $M(G_j)$, and therefore $\rho(G_j)$, are non-decreasing in $j$. Let $\gamma_j =  (\gamma+1)^{j\rho(G_j)}-1$ and note that $Z_{\text{Tutte}}(G_j; q, (\gamma+1)^{\rho(G_j)}-1) = Z_{\text{Tutte}}(G; q, \gamma_j)$.  Note that the points $\gamma_1, \ldots, \gamma_m$ are distinct because $j \rho(G_j) \le j \rho(G_{j+1}) < (j+1)\rho(G_{j+1})$ for every $j$. Moreover, their representation is polynomial in the size of $G$, and hence so is the representation of $Z_{\text{Tutte}}(G; q, \gamma_j)$.

The reduction constructs $G_1, \ldots, G_m$ and computes $Z_{\text{Tutte}}(G; q, \gamma_j)$ using the oracle for $\TPT(q, \gamma)$ with input $G_j$.  By interpolation, we then recover the polynomial $Z_{\text{Tutte}}(G; q, x)$, whose degree is  $m$ when $q$ is viewed as a constant,  in time polynomial in the size of $G$. The reduction is then completed by evaluating $Z_{\text{Tutte}}(G; q, x)$ at $x =2$.

  Finally note that this reduction also works from $\ET(q, 2)$ to $\TT(q, \gamma)$. The only difference is that $\ET(q, 2)$ is also $\# \mathsf{P}$-hard for $q = 2$ (see Theorem~\ref{thm:hardness:exact}), so we also get $\# \mathsf{P}$-hardness in this case.
\end{proof}

We are going to reduce the problem $\TPT(q, \gamma_2)$ to the problem $\FNPTz(q, \gamma_1, \gamma_2)$ for appropriate $\gamma_1$ and $\gamma_2$. In order to do so, we need to adapt Lemma~\ref{lem:reduce-frac-to-exact} to this context. For this purpose, we consider the following computational problems.

\prob{$\RTT(q, \gamma)$.}{A (multi)graph $G = (V, E)$, two distinct connected vertices $s$ and $t$ of $G$, and a positive integer $\rho$ such that, for $\gamma_{\rho} = (\gamma+1)^{\rho}-1$, $(q,\gamma_{\rho})$ is zero-free for $G$ and $\gamma_{\rho} > M(G)$, where $M(G) = (8 \max \{ |q|, 1/|q| \})^r$ and $r = \max \{|V|, |E|\}$.}{A representation of the algebraic number  $Z_{s|t}(G; q, \gamma_{\rho}) / Z_{st}(G; q, \gamma_{\rho} )$.}

\prob{$\ZTT(q, \gamma)$.}{A (multi)graph $G = (V, E)$ and a positive integer $\rho$ such that, for $\gamma_{\rho} = (\gamma+1)^{\rho}-1$, $(q,\gamma_{\rho})$ is zero-free for $G$ and $\gamma_{\rho} > M(G)$, where $M(G) = (8 \max \{ |q|, 1/|q| \})^r$ and $r = \max \{|V|, |E|\}$.}{A representation of the algebraic number $Z_{\text{Tutte}}(G; q, \gamma_{\rho} )$.}

We also consider the planar versions of these problems, $\RTPT(q, \gamma)$ and $\ZTPT(q, \gamma)$.

\begin{lemma} \label{lem:reduce-frac-to-exact:thickened}
  Let $q$ and $\gamma$ be algebraic numbers with $q \ne 0$. Then we have the reductions
  \begin{align*}
    \ZTPT(q, \gamma) & \le_T \RTPT(q, \gamma), \\
    \ZTT(q, \gamma) & \le_T \RTT(q, \gamma).
  \end{align*}
\end{lemma}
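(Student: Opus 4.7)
The plan is to adapt the proof of Lemma~\ref{lem:reduce-frac-to-exact} essentially verbatim, since the recursion structure is identical; the only thing that needs careful checking is that the extra promise attached to the thickened problems propagates correctly to the recursive subinstances.

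Given an input $(G, \rho)$ to $\ZTPT(q, \gamma)$, set $\gamma_{\rho} = (\gamma+1)^{\rho} - 1$ and compute $Z_{\text{Tutte}}(G; q, \gamma_{\rho})$ by induction on the number of edges of $G$. The base case is the forest case, which can be handled directly in polynomial time as noted in the proof of Lemma~\ref{lem:reduce-frac-to-exact}. Otherwise I would locate a non-bridge edge $e = (s,t)$ of $G$ and query the oracle for $\RTPT(q, \gamma)$ on the instances $(G, s, t, \rho)$ and $(G \setminus e, s, t, \rho)$ to obtain representations of
\[
\alpha = \frac{Z_{s|t}(G; q, \gamma_{\rho})}{Z_{st}(G; q, \gamma_{\rho})}, \qquad \beta = \frac{Z_{s|t}(G\setminus e; q, \gamma_{\rho})}{Z_{st}(G\setminus e; q, \gamma_{\rho})}.
\]
Then I would recursively compute $Z_{\text{Tutte}}(G \setminus e; q, \gamma_{\rho})$ on input $(G\setminus e, \rho)$ and reconstruct $Z_{\text{Tutte}}(G; q, \gamma_{\rho})$ through the identity~\eqref{eq:reducing-exact-to-quotient} derived in the proof of Lemma~\ref{lem:reduce-frac-to-exact}, using that $1+\beta \ne 0$ since $(q,\gamma_{\rho})$ is zero-free for $G$.

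The main (and only) thing to verify, relative to Lemma~\ref{lem:reduce-frac-to-exact}, is that with the same integer $\rho$ each subinstance arising in the recursion satisfies the promise in the definitions of $\ZTPT(q, \gamma)$ and $\RTPT(q, \gamma)$; that is, $(q, \gamma_{\rho})$ is zero-free for $G \setminus e$ and $\gamma_{\rho} > M(G \setminus e)$. The first condition is immediate because zero-freeness passes to subgraphs by definition. The second follows from the monotonicity of $M(\cdot)$ in the number of vertices and edges, which yields $M(G \setminus e) \le M(G) < \gamma_{\rho}$. Moreover, the oracle calls themselves require $s$ and $t$ to lie in the same component of the queried graph; this is immediate for $G$ since $e = (s,t)$, and for $G \setminus e$ it is exactly the non-bridge property of $e$.

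For the non-planar reduction $\ZTT(q, \gamma) \le_T \RTT(q, \gamma)$, the very same argument applies without modification, since the planarity of $G$ is used only to ensure that $G \setminus e$ remains planar, and edge deletion preserves planarity. Hence both reductions stated in the lemma are obtained simultaneously.
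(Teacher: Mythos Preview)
Your proposal is correct and takes essentially the same approach as the paper's proof, which simply notes that the reduction is the one from Lemma~\ref{lem:reduce-frac-to-exact} with the same integer $\rho$ passed to every oracle call on a subgraph of $G$. You have in fact supplied more detail than the paper does, explicitly checking that the promise conditions (zero-freeness and $\gamma_\rho > M(\cdot)$) propagate to the subinstances via subgraph inheritance and monotonicity of $M$.
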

\begin{proof}
  The reduction is almost exactly the one explained in Lemma~\ref{lem:reduce-frac-to-exact}. The only difference is that, for an input $(G, \rho)$, each call to the oracle has as parameters a subgraph $H$ of $G$, two vertices $s$ and $t$ determined in the reduction, and the same positive integer $\rho$. 
\end{proof}

\begin{lemma} \label{lem:exact-reduction:q<0}
  Let $K$ be a real number with $K > 1$. Let $q$, $\gamma_1$ and $\gamma_2$ be real algebraic numbers such that $q < 0$, $\gamma_1 \in (-1, 0)$ and $\gamma_2 > 0$. Then we have the following reductions:
  \begin{align*}
    \TPT(q, \gamma_2 ) & \le_T \SPTz(q, \gamma_1, \gamma_2), \\
    \TPT(q, \gamma_2 ) & \le_T \FNPTz(q, \gamma_1, \gamma_2).
  \end{align*}
  Moreover, these reductions also hold for the analogous non-planar problems.  
\end{lemma}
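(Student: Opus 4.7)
The plan is to mirror the three-step pattern used in Lemmas~\ref{lem:exact-reduction:q>1} and~\ref{lem:exact-reduction:0<q<1}, but applied to the thickened variants of the intermediate problems. The reason we cannot just repeat the previous argument directly is that, for $q<0$, the pair $(q,\gamma_2)$ is no longer zero-free for every graph, which is exactly the role played by the threshold $\gamma_{\rho(G)} > M(G)$ in the definition of $\TPT(q,\gamma_2)$.

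First I would establish the reduction $\TPT(q,\gamma_2)\le_T \ZTPT(q,\gamma_2)$: given an instance $G=(V,E)$ of $\TPT(q,\gamma_2)$, the reduction outputs the pair $(G,\rho(G))$ as an instance of $\ZTPT(q,\gamma_2)$, since both have the same desired output $Z_{\mathrm{Tutte}}(G;q,\gamma_{\rho(G)})$. The promise $\gamma_{\rho(G)} > M(G)$ holds by the very definition of $\rho(G)$. The remaining promise is that $(q,\gamma_{\rho(G)})$ is zero-free for $G$, i.e.\ that $Z_{st}(H;q,\gamma_{\rho(G)})$ and $Z_{\mathrm{Tutte}}(H;q,\gamma_{\rho(G)})$ are non-zero for every subgraph $H$ of $G$ and every pair of distinct vertices $s,t$ lying in the same component of $H$. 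Since $H\subseteq G$ gives $M(H)\le M(G)<\gamma_{\rho(G)}$, the ``moreover'' clause of Lemma~\ref{lem:compute-fraction:q<0} applied to $H$ (with the same $\rho(G)$) delivers exactly these non-vanishing statements. In the degenerate cases where no connected pair $s,t$ exists in $H$, only $Z_{\mathrm{Tutte}}(H;q,\gamma_{\rho(G)})\ne 0$ is needed, which still follows from the same ``moreover'' clause after decomposing $H$ into its connected components (the lemma handles the disconnected case explicitly at the end of its proof).

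Second, Lemma~\ref{lem:reduce-frac-to-exact:thickened} immediately yields $\ZTPT(q,\gamma_2)\le_T\RTPT(q,\gamma_2)$, since the thickened variant of the delete-an-edge recurrence only requires zero-freeness of $(q,\gamma_{\rho})$ on the considered graph, which is propagated by the oracle's input promise. Third, I would feed the $\RTPT(q,\gamma_2)$ queries into Lemma~\ref{lem:compute-fraction:q<0} (respectively Lemma~\ref{lem:compute-fraction:q<0:sign}); the hypotheses $q<0$, $\gamma_1\in(-1,0)$, $\gamma_2>0$ are exactly what these lemmas require, and the promise attached to $\RTPT$ inputs, namely $\gamma_{\rho}>M(H)$, is precisely the size-vs-$\rho$ bound those lemmas demand. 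This gives $\RTPT(q,\gamma_2)\le_T\FNPTz(q,\gamma_1,\gamma_2)$ and $\RTPT(q,\gamma_2)\le_T\SPTz(q,\gamma_1,\gamma_2)$ respectively. Composing the three reductions proves both assertions of the lemma.

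For the non-planar version, the same three-step chain works verbatim: Lemma~\ref{lem:reduce-frac-to-exact:thickened} is stated for both the planar and non-planar settings, and the non-planar conclusions of Lemmas~\ref{lem:compute-fraction:q<0} and~\ref{lem:compute-fraction:q<0:sign} allow us to drop planarity of the input, at the cost of using the stronger oracles $\FNTz(q,\gamma_1,\gamma_2)$ or $\STz(q,\gamma_1,\gamma_2)$. The only non-routine point in the whole argument is the zero-freeness bookkeeping in the first step; once that is in place the rest is a direct concatenation of results already proved.
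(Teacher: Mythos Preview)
Your proposal is correct and follows essentially the same three-step chain as the paper's proof: reduce $\TPT$ to $\ZTPT$ via the zero-freeness guarantee from the ``moreover'' clause of Lemma~\ref{lem:compute-fraction:q<0}, then $\ZTPT$ to $\RTPT$ via Lemma~\ref{lem:reduce-frac-to-exact:thickened}, then $\RTPT$ to the target oracle via Lemma~\ref{lem:compute-fraction:q<0} (resp.~\ref{lem:compute-fraction:q<0:sign}). If anything, you are slightly more careful than the paper about the zero-freeness bookkeeping in the first step (handling the degenerate components with no connected pair $s,t$).
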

\begin{proof}
  Let $G$ and $\rho$ be the inputs of $\TPT(q, \gamma_2)$. Let $H$ be a subgraph of $G$ and let $s$ and $t$ be two distinct connected vertices of $H$. By applying Lemma~\ref{lem:compute-fraction:q<0} we find that $Z_{st}( H; q, (\gamma_2+1)^{\rho}-1)$ and  $Z_{\text{Tutte}}( H; q, (\gamma_2+1)^{\rho}-1)$ are non-zero. Hence, $(q, (\gamma_2+1)^{\rho}-1)$ is zero-free for $G$. This shows that $\TPT(q, \gamma_2)$ reduces to $\ZTPT(q, \gamma_2)$. Lemma~\ref{lem:reduce-frac-to-exact:thickened} gives us a reduction from $ \ZTPT(q, \gamma_2)$ to $\RTPT(q, \gamma_2)$. Recall that we have $q<0$, $\gamma_1 \in (-1,0)$ and $\gamma_2 > 0$. Thus, Lemma~\ref{lem:compute-fraction:q<0} gives a reduction from $\RTPT(q, \gamma_2)$ to $\FNPTz(q, \gamma_1, \gamma_2)$, which completes the proof for the first reduction of the statement. The second reduction is analogous, but this time we apply Lemma~\ref{lem:compute-fraction:q<0:sign} instead of Lemma~\ref{lem:compute-fraction:q<0}. Finally, note that our reductions also hold for the analogous non-planar problems since the algorithms given in Lemma~\ref{lem:compute-fraction:q<0} and Lemma~\ref{lem:compute-fraction:q<0:sign} work for arbitrary graphs (non-necessarily planar) as long as the oracle does.
\end{proof}

\subsection{The connection between approximate shifts and reductions} \label{sec:hardness:approximate-shifts}

In this section we show how a polynomial-time approximate shift from $(q,\gamma_1)$ to $(q, \gamma_2)$ may allow us to reduce the problems of approximating the norm of the Tutte polynomial at $(q, \gamma_2)$ to the same problem at $(q, \gamma_1)$ (see Lemma~\ref{lem:reduction:norm:outline}). We also derive a similar result for the problem $\pDAT \rho \sDATz(q, \gamma)$ in Lemma~\ref{lem:reduction:distance}.

\begin{lemma} \label{lem:approximate-shift:reduction}
  Let $q$, $\gamma_1$ and $\gamma_2$ be algebraic numbers with $q \ne 0$ such that there is a polynomial-time series-parallel approximate shift from $(q, \gamma_1)$ to $(q, \gamma_2)$. Then there is an algorithm that has as input a graph $G$ and a positive integer $k$ and computes, in polynomial time in $k$ and the size of $G$, a graph $H$ and a representation of an algebraic number $D$ with $D \ne 0$ such that
    \begin{align*} 
    \Big|Z_{\text{Tutte}}\left(G; q, \gamma_2\right) - \frac{ Z_{\text{Tutte}}\left(H; q, \gamma_1\right)}{ D} \Big| \le 2^{-k}.
    \end{align*}
 Moreover, if the graph $G$ is planar, then the graph $H$ is also planar, and if $q$ and $\gamma_1$ are real, then $D$ is also real.
\end{lemma}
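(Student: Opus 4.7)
The plan is to use the hypothesised polynomial-time series-parallel approximate shift to produce a series-parallel graph $J$ with terminals $s,t$ that $\gamma_1$-implements some weight $\hat{\gamma}_2$ with $|\gamma_2 - \hat{\gamma}_2| \le 2^{-n}$, where $n$ will be chosen to be polynomial in $k$ and $\mathrm{size}(G)$. Then I form $H$ by substituting every edge of $G$ by a copy of $J$, identifying the endpoints of the replaced edge with $s$ and $t$. Applying Lemma~\ref{lem:implementations:multivariate} edge by edge gives the identity
\begin{equation*}
Z_{\text{Tutte}}(H; q, \gamma_1) = \left(\frac{Z_{s|t}(J; q, \gamma_1)}{q^2}\right)^{|E(G)|} Z_{\text{Tutte}}(G; q, \hat{\gamma}_2),
\end{equation*}
so the natural choice is $D = (Z_{s|t}(J; q, \gamma_1)/q^2)^{|E(G)|}$, in which case $Z_{\text{Tutte}}(H; q, \gamma_1)/D = Z_{\text{Tutte}}(G; q, \hat{\gamma}_2)$.

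It is immediate that $D \ne 0$: by the very definition of implementation, $Z_{s|t}(J; q, \gamma_1) \ne 0$. Since $q$ and $\gamma_1$ are algebraic, a representation of $D$ can be computed in polynomial time using the algebraic arithmetic discussed in Section~\ref{sec:complex-implementations:algebraic}. When $G$ is planar, $H$ is planar as well: a two-terminal series-parallel graph admits a planar embedding with both terminals on the outer face, and substituting an edge of a planar embedding of $G$ by such a graph preserves planarity. When $q$ and $\gamma_1$ are real, $Z_{s|t}(J; q, \gamma_1)$ is the evaluation of an integer-coefficient polynomial at real algebraic numbers and is hence real, so $D$ is real.

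It remains to choose $n$ so that $|Z_{\text{Tutte}}(G; q, \hat{\gamma}_2) - Z_{\text{Tutte}}(G; q, \gamma_2)| \le 2^{-k}$. Writing $Z_{\text{Tutte}}(G; q, x) = \sum_{A \subseteq E(G)} q^{k(A)} x^{|A|}$, this is a polynomial in $x$ of degree at most $|E(G)|$ whose coefficients have absolute value at most $2^{|E(G)|}\max(|q|,1)^{|V(G)|}$. A crude Lipschitz estimate on the disk $\{x : |x - \gamma_2| \le 1\}$ (using the derivative, or expanding $\hat{\gamma}_2^{|A|} - \gamma_2^{|A|}$ as a telescoping sum) yields a bound
\begin{equation*}
|Z_{\text{Tutte}}(G; q, \hat{\gamma}_2) - Z_{\text{Tutte}}(G; q, \gamma_2)| \le C^{\mathrm{size}(G)}\,|\hat{\gamma}_2 - \gamma_2|,
\end{equation*}
for a constant $C = C(q, \gamma_2) > 1$ that can be computed from $q$ and $\gamma_2$. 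Taking $n = k + \lceil \mathrm{size}(G) \log_2 C \rceil$ makes the right-hand side at most $2^{-k}$. This choice of $n$ is polynomial in $k$ and $\mathrm{size}(G)$, so $\mathrm{size}(J)$ is polynomial in these quantities, $\mathrm{size}(H) \le |E(G)|\,\mathrm{size}(J)$ is also polynomial, and the entire algorithm runs in polynomial time.

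The construction itself is essentially mechanical once the approximate shift produces $J$; the only analytic step is the Lipschitz bound above, and the main care needed is to ensure that the precision $n$ to which we have to drive the approximate shift remains polynomial in $k$ and $\mathrm{size}(G)$, which is exactly what the bound $C^{\mathrm{size}(G)}$ delivers.
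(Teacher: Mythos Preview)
Your proposal is correct and follows essentially the same approach as the paper: produce a series-parallel gadget $J$ via the approximate shift, substitute it for every edge of $G$, set $D=(Z_{s|t}(J;q,\gamma_1)/q^2)^{|E(G)|}$, and control $|Z_{\text{Tutte}}(G;q,\hat\gamma_2)-Z_{\text{Tutte}}(G;q,\gamma_2)|$ by a crude coefficient bound that is exponential in $\mathrm{size}(G)$. One small point worth making explicit: the reason a representation of $D$ is computable in polynomial time is not merely ``algebraic arithmetic'' but that $Z_{s|t}(J;q,\gamma_1)$ itself can be computed in polynomial time from the recursive series-parallel decomposition of $J$ (this is exactly why the series-parallel hypothesis is in the statement), after which the algebraic arithmetic of Section~\ref{sec:complex-implementations:algebraic} finishes the job.
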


\begin{proof}
  Let $G = (V, E)$ and $k$ be the inputs of the algorithm. Let $n = |V|$ and $m = |E|$. By the definition of series-parallel polynomial-time approximate shifts, for any positive integer $j$, one can compute, in polynomial time in $j$, a series-parallel graph $J_j$ that $\gamma_1$-implements $\hat{\gamma}$ with $\left| \gamma_2 - \hat{\gamma} \right| \le 2^{-j}$ for terminals $s$ and $t$. By definition of implementations, we have $\hat{\gamma} = q Z_{st}( J_j; q, \gamma_1) / Z_{s|t}( J_j; q, \gamma_1)$ and $Z_{s|t}( J_j; q, \gamma_1) \ne 0$. We construct a graph $G_j$ that is a copy of $G$ where every edge $f$ in $G$ has been replaced by a copy of $J_j$ as in Lemma~\ref{lem:implementations:multivariate}, identifying the endpoints of $f$ with $s$ and $t$. In light of Lemma~\ref{lem:implementations:multivariate}, we have
  \begin{equation*}
    Z_{\text{Tutte}}\left(G_j; q, \gamma_1\right) = \left( \frac{Z_{s|t}\left( J_j; q, \gamma_1 \right)}{q^2} \right)^{m} Z_{\text{Tutte}}\left(G; q, \hat{\gamma}\right).
  \end{equation*}
  We can compute a representation of $D_j = Z_{s|t}\left( J_j; q, \gamma_1 \right) / q^2$ in polynomial time in the size of $J_j$ because $J_j$ is a series-parallel graph. However, note that this hypothesis is not essential as  long as there is some way to compute a representation of $D_j$ while constructing $J_j$. Note that $\left| \hat{\gamma} \right| \le \left| \gamma_2 \right| + 2^{-j}$, so $\left|Z_{\text{Tutte}}(G; q, \gamma_2) -  Z_{\text{Tutte}}(G; q, \hat{\gamma}) \right| $ is upper bounded by
  \begin{align*}
	\sum_{A \subseteq E} \left|q\right|^{k(A)} \left| \gamma_2^{|A|} - \hat{\gamma}^{|A|} \right|&\le \sum_{A \subseteq E} \left|q\right|^{k(A)} \left| \gamma_2 - \hat{\gamma}\right| \sum_{t = 0}^{|A|-1}  \left|\gamma_2^{|A| - 1 -t} \hat{\gamma}^{t}  \right|  \\
	&\le  \sum_{A \subseteq E} \left|q\right|^{k(A)} \left| \gamma_2 - \hat{\gamma}\right| \left(\lvert A \rvert-1\right) \Big( \left| \gamma_2 \right| + \frac{1}{2} \Big)^{|A|-1} \\ 
	& \le  \left| \gamma_2 - \hat{\gamma}\right|  \left|q\right|^{n} 2^{m} (m-1) \Big( \left| \gamma_2 \right| + \frac{1}{2} \Big)^{m-1}.
  \end{align*}
  Hence, for $j$ such that $2^{-j} |q|^{n} 2^{m} (m-1) ( | \gamma_2 | + 1/2 )^{m-1} \le 2^{-k}$, which can be achieved for $j = O(\mathrm{size}(G) + k)$, we obtain
  \begin{align*} \label{eq:approximate-shift:reduction}
    \Big|Z_{\text{Tutte}}\left(G; q, \gamma_2\right) - \frac{ Z_{\text{Tutte}}\left(G_j; q, \gamma_1\right)}{ D_j^{m}} \Big|=\left|Z_{\text{Tutte}}(G; q, \gamma_2) -  Z_{\text{Tutte}}(G; q, \hat{\gamma}) \right| \le 2^{-k}.
  \end{align*}
  The algorithm returns $H = G_j$ and $D = D_j^m \ne 0$. Note that if $G$ is planar, then $H=G_j$ is also planar by construction. If $q$ and $\gamma_1$ are real, then the number $D =  (Z_{s|t}( J_j; q, \gamma_1) / q^2)^m$ is clearly real too.
\end{proof}

In the rest of this section we use  Lemma~\ref{lem:approximate-shift:reduction} to translate information about the function $Z_{\text{Tutte}}(-; q, \gamma_1)$ for certain graphs to information about $Z_{\text{Tutte}}(G; q, \gamma_2)$. This leads to the reductions given in Lemmas~\ref{lem:reduction:norm:outline} and~\ref{lem:reduction:distance}. These results are stated for polynomial series-parallel approximate shifts, but they would also hold even if the shifts are not series-parallel as long as, in the proof of Lemma~\ref{lem:approximate-shift:reduction}, the graphs $J_j$ are planar and we can compute $D_j = Z_{s|t}\left( J_j; q, \gamma_1 \right) / q^2$ in polynomial time in the size of $J_j$.

We are now ready to prove Lemma~\ref{lem:reduction:norm:outline}, which was stated in the Proof Outline, and which we restate here for convenience.
\begin{lemout}
\statelemredout
\end{lemout}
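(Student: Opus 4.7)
The plan is to combine Lemma~\ref{lem:approximate-shift:reduction} (which converts the polynomial-time series-parallel approximate shift into an additive approximation of $Z_{\text{Tutte}}(G;q,\gamma_2)$ by a quantity of the form $Z_{\text{Tutte}}(H;q,\gamma_1)/D$) with the lower bound of Corollary~\ref{cor:lower-bound:tutte} (which turns an additive approximation into a multiplicative one whenever the target value is non-zero). Given an input $G$, I would compute a rational $C=C_{q,\gamma_2}>1$ via Corollary~\ref{cor:lower-bound:tutte}, fix parameters $K'>1$ and $\eta\in(0,1)$ (independent of $G$) such that $K'(1+\eta)^2/(1-\eta)^2\le K$, and then set $k$ to be the smallest positive integer with $2^{-k}\le \eta\cdot C^{-\mathrm{size}(G)}$, which is polynomial in the size of $G$.

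Next I would invoke Lemma~\ref{lem:approximate-shift:reduction} with $G$ and $k$ to compute, in polynomial time, a graph $H$ (planar if $G$ is) and an algebraic number $D\neq 0$ such that $|Z_{\text{Tutte}}(G;q,\gamma_2)-Z_{\text{Tutte}}(H;q,\gamma_1)/D|\le 2^{-k}$. Since the difficulty of $\FNT(q,\gamma_1)$ does not depend on the approximation factor (see the discussion in Section~\ref{sec:intro:main-results}), I may query the oracle for $\FNT(q,\gamma_1)$ at $H$ with approximation factor $K'$ to obtain $\hat{N}\in\mathbb{Q}$. I would also compute a rational number $\hat{D}$ with $|\hat{D}-|D||\le \eta|D|$ in polynomial time (using the representation of the algebraic number $D$ and Sturm-sequence approximation, exploiting the fact that $|D|>0$ has a computable positive lower bound). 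Finally, I would output $\hat{N}/\hat{D}$.

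The correctness analysis splits into two cases. If $Z_{\text{Tutte}}(G;q,\gamma_2)=0$, any output is valid by the specification of $\FNT$. Otherwise, Corollary~\ref{cor:lower-bound:tutte} gives $|Z_{\text{Tutte}}(G;q,\gamma_2)|\ge C^{-\mathrm{size}(G)}$, so the additive bound $2^{-k}\le \eta\cdot C^{-\mathrm{size}(G)}$ forces
\[
\left|\frac{Z_{\text{Tutte}}(H;q,\gamma_1)}{D}\right|\in\bigl[(1-\eta)|Z_{\text{Tutte}}(G;q,\gamma_2)|,\,(1+\eta)|Z_{\text{Tutte}}(G;q,\gamma_2)|\bigr].
\]
In particular $Z_{\text{Tutte}}(H;q,\gamma_1)\neq 0$, so the oracle's output $\hat{N}$ satisfies $\hat{N}/K'\le |Z_{\text{Tutte}}(H;q,\gamma_1)|\le K'\hat{N}$. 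Composing the three multiplicative distortions (the $K'$ from the oracle, the $(1\pm\eta)$ from the shift, and the $(1\pm\eta)$ from approximating $|D|$) and using the choice of $K'$ and $\eta$, the ratio $\hat{N}/\hat{D}$ is a $K$-approximation of $|Z_{\text{Tutte}}(G;q,\gamma_2)|$.

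The main obstacle is bookkeeping: one must verify that the accumulated multiplicative errors, coming from the oracle, from the shift's additive approximation (converted via the lower bound of Corollary~\ref{cor:lower-bound:tutte}), and from approximating the algebraic number $|D|$ by a rational, compose to at most $K$, while simultaneously ensuring that the precision parameter $k$ and the precisions used in approximating $|D|$ remain polynomial in the size of $G$. The planar version follows verbatim because Lemma~\ref{lem:approximate-shift:reduction} preserves planarity of the input.
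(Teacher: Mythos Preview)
Your proposal is correct and follows essentially the same approach as the paper: both combine Lemma~\ref{lem:approximate-shift:reduction} with the lower bound of Corollary~\ref{cor:lower-bound:tutte} to turn an additive approximation into a multiplicative one, then absorb the extra constant factors using the $K$-independence of the problem. The paper handles the constants slightly differently (it solves the $\textsc{Factor-}4K$ version of the $\gamma_2$ problem with the given Factor-$K$ oracle for $\gamma_1$, rather than tightening the oracle to factor $K'$ as you do), and your treatment of $|D|$ is in fact cleaner than the paper's, which writes $D/\hat{D}$ even though $D$ need not be real.
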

\begin{proof}
  We are going to solve $\textsc{Factor-}4K\textsc{-NonZero-NormTutte}(q, \gamma_2)$ in polynomial time with the help of an oracle for $\FNTz(q, \gamma_1)$. Recall that hardness of these problems does not depend on $K$ (see Section~\ref{sec:intro:main-results}). Let $C_{q,\gamma_2} > 1$ be the constant computed in Corollary~\ref{cor:lower-bound:tutte} for the algebraic numbers $q$ and $\gamma= \gamma_2$; so, for any graph $G$, either  $Z_{\text{Tutte}}(G; q, \gamma_2) = 0$ or  $\lvert Z_{\text{Tutte}}(G; q, \gamma_2) \rvert \ge C_{q, \gamma_2}^{-\mathrm{size}(G)}$. Let $G = (V, E)$ be the input of the computational problem $\textsc{Factor-}4K\textsc{-NonZero-NormTutte}(q, \gamma_2)$. We assume that $Z_{\text{Tutte}}(G; q, \gamma_2) \ne 0$ since otherwise we can output anything.  Let $k$ be the smallest integer such that $2^{-k} \le C_{q,\gamma_2}^{-\mathrm{size}(G)}/ 2$. The reduction uses the algorithm given in Lemma~\ref{lem:approximate-shift:reduction} to compute a graph $H$ and a representation of an algebraic number $D$ with $D \ne 0$ such that
  \begin{equation} \label{eq:reduction:lower-bound}
    \left| Z_{\text{Tutte}}\left(G; q, \gamma_2\right) - \frac{Z_{\text{Tutte}}\left(H; q, \gamma_1\right)}{D} \right| \le 2^{-k} \le  \frac{C_{q, \gamma_2}^{-\mathrm{size}(G)}}{2} \le \frac{ \left| Z_{\text{Tutte}}\left(G; q, \gamma_2\right) \right| }{2}.
  \end{equation}
  Therefore, we have
  \begin{equation*}
    \frac{1}{2} \le \frac{\left| Z_{\text{Tutte}}\left(H; q, \gamma_1\right) \right|}{D \left| Z_{\text{Tutte}}\left(G; q, \gamma_2\right) \right|} \le \frac{3}{2}.
  \end{equation*}
  By invoking the oracle for $\FNTz(q, \gamma_1)$, the reduction computes a rational number $N$ with $N / K \le \left| Z_{\text{Tutte}}(H; q, \gamma_1) \right| \le K N$. The reduction also computes a non-zero rational number $\hat{D}$ such that $1/2  \le D/ \hat{D} \le 2$. Then $\hat{N} = N / \hat{D}$ satisfies
  \begin{equation*}
    \frac{\hat{N}}{4K} \le  \left| Z_{\text{Tutte}}\left(G; q, \gamma_2\right) \right| \le 4K\hat{N},
  \end{equation*}
  so the reduction outputs $\hat{N}$ for $\textsc{Factor-}4K\textsc{-NonZero-NormTutte}(q, \gamma_2)$. Note that this reduction analogously applies to the planar case since the graph $H$ is planar when $G$ is planar (see Lemma~\ref{lem:approximate-shift:reduction}).
\end{proof}

We next give the analogue of Lemma~\ref{lem:reduction:norm:outline} for the argument.

\begin{lemma} \label{lem:reduction:distance}
  Let $q$, $\gamma_1$ and $\gamma_2$ be algebraic numbers with $q \ne 0$. If there is a polynomial-time series-parallel approximate shift from $(q, \gamma_1)$ to $(q, \gamma_2)$, then we have the following reduction, $\pDAT 5 \pi/12 \sDATz(q, \gamma_2) \le_T \pDAT \pi/3 \sDATz(q, \gamma_1)$. This reduction also holds for the planar version of the problem.
\end{lemma}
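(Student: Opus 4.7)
The plan is to follow the same high-level strategy as Lemma~\ref{lem:reduction:norm:outline}, but tracking additive errors in the argument rather than multiplicative errors in the norm. The extra tolerance $5\pi/12 - \pi/3 = \pi/12$ in the target problem is the error budget, and I will split it as $\pi/24$ for the perturbation introduced by the approximate shift and $\pi/24$ for approximating the argument of the normalising constant $D$ produced by that shift.

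Given an input graph $G$ to $\pDAT 5\pi/12 \sDATz(q,\gamma_2)$, first dispose of the case $Z_{\text{Tutte}}(G;q,\gamma_2) = 0$, in which any rational output is allowed. Otherwise, invoke Corollary~\ref{cor:lower-bound:tutte} for the algebraic pair $(q,\gamma_2)$ to obtain $C_{q,\gamma_2} > 1$ with $|Z_{\text{Tutte}}(G;q,\gamma_2)| \ge C_{q,\gamma_2}^{-\mathrm{size}(G)}$, and choose $k$ polynomial in $\mathrm{size}(G)$ with $2^{-k} \le \sin(\pi/24)\,C_{q,\gamma_2}^{-\mathrm{size}(G)}$. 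Applying Lemma~\ref{lem:approximate-shift:reduction} to $G$ and $k$ produces, in polynomial time, a graph $H$ (planar whenever $G$ is) and a representation of a non-zero algebraic number $D$ with
\begin{equation*}
\big| Z_{\text{Tutte}}(G;q,\gamma_2) - Z_{\text{Tutte}}(H;q,\gamma_1)/D \big| \le 2^{-k}.
\end{equation*}
Writing $w_1 = Z_{\text{Tutte}}(G;q,\gamma_2)$ and $w_2 = Z_{\text{Tutte}}(H;q,\gamma_1)/D$, the choice of $k$ forces $|w_1 - w_2| \le \sin(\pi/24)\,|w_1|$, so $w_2 \ne 0$, the ratio $w_2/w_1$ lies in the closed disk $\{z : |z - 1| \le \sin(\pi/24)\}$, and therefore every argument of $w_2/w_1$ is within $\arcsin(\sin(\pi/24)) = \pi/24$ of $0$. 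In particular $Z_{\text{Tutte}}(H;q,\gamma_1) \ne 0$, and for any $c \in \arg(w_2)$ some $a' \in \arg(w_1)$ satisfies $|a' - c| \le \pi/24$.

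Next, call the oracle for $\pDAT \pi/3 \sDATz(q,\gamma_1)$ on $H$ to obtain $\hat{A}$ with $|\hat{A} - a| \le \pi/3$ for some $a \in \arg(Z_{\text{Tutte}}(H;q,\gamma_1))$, and, using the explicit algebraic representation of $D$, compute a rational $\hat{B}$ with $|\hat{B} - b| \le \pi/24$ for some $b \in \arg(D)$; this requires only constant precision and is possible in polynomial time because $\mathrm{Re}(D)/|D|$ and $\mathrm{Im}(D)/|D|$ are real algebraic numbers that can be rationally approximated to arbitrary accuracy. Output $\hat{A} - \hat{B}$. Then $a - b \in \arg(w_2)$, so the triangle inequality gives $|\hat{A} - \hat{B} - (a-b)| \le \pi/3 + \pi/24$; selecting $a'' \in \arg(w_1)$ with $|a'' - (a-b)| \le \pi/24$ (which exists by the previous paragraph, possibly after translating $a-b$ by a multiple of $2\pi$) yields $|\hat{A} - \hat{B} - a''| \le \pi/3 + \pi/12 = 5\pi/12$, as required. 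The planar variant follows because Lemma~\ref{lem:approximate-shift:reduction} preserves planarity.

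The main obstacle is the perturbation estimate translating an additive complex-number error into an additive argument error: this is meaningful only with a quantitative lower bound on $|Z_{\text{Tutte}}(G;q,\gamma_2)|$, which is precisely the role of Corollary~\ref{cor:lower-bound:tutte} and which in turn drives the choice of the accuracy parameter $k$ fed into Lemma~\ref{lem:approximate-shift:reduction}. The rest of the argument is routine bookkeeping, parallel to the norm-based reduction of Lemma~\ref{lem:reduction:norm:outline}.
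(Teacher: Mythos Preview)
Your proof is correct and follows essentially the same route as the paper's: both invoke Corollary~\ref{cor:lower-bound:tutte} to lower-bound $|Z_{\text{Tutte}}(G;q,\gamma_2)|$, apply Lemma~\ref{lem:approximate-shift:reduction} with precision chosen so that the argument of $w_2/w_1$ is at most $\pi/24$, call the oracle for the $\pi/3$ problem on $H$, approximate $\arg(D)$ to within $\pi/24$, and combine via the triangle inequality to land within $5\pi/12$. The only cosmetic difference is that the paper uses the cruder bound $|w_1-w_2|\le |w_1|/8$ and then observes $\sin(\pi/24)>1/8$, whereas you work directly with $\sin(\pi/24)$.
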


\begin{proof}
  Let $C_{q,\gamma_2}$ be the constant computed in Corollary~\ref{cor:lower-bound:tutte} for $\gamma= \gamma_2$. Let $G = (V, E)$ be the input of $\pDAT \pi/2 \sDATz(q, \gamma_2)$. We assume that $ Z_{\text{Tutte}}(G; q, \gamma_2)  \ne 0$ since otherwise we can output anything. The reduction proceeds again similarly to that of Lemma~\ref{lem:reduction:norm:outline}. First, it applies Lemma~\ref{lem:approximate-shift:reduction} for appropriate $k$ as in \eqref{eq:reduction:lower-bound} to compute a graph $H$ and a representation of a real algebraic number $D$ with $D \ne 0$ such that
  \begin{equation} \label{eq:reduction:lower-bound:2}
    \left| Z_{\text{Tutte}}\left(G; q, \gamma_2\right) - \frac{Z_{\text{Tutte}}\left(H; q, \gamma_1\right)}{D} \right| \le 2^{-k-2} \le  \frac{C_{q, \gamma_2}^{-\mathrm{size}(G)}}{8} \le \frac{ \left| Z_{\text{Tutte}}\left(G; q, \gamma_2\right) \right| }{8}.
  \end{equation}
  Let $\alpha = Z_{\text{Tutte}}(G; q, \gamma_2)$ and $\beta = Z_{\text{Tutte}}(H; q, \gamma_1 ) / D $, so \eqref{eq:reduction:lower-bound:2}  can be rewritten as $\left| \alpha - \beta \right| \le \left| \alpha \right|/8$. We claim that $|\mathrm{Arg}(\alpha) - \mathrm{Arg}(\beta) | \le \pi / 24$. Since $\beta$ is in the disc of centre $\alpha$ and radius $|\alpha| / 8$,  by basic geometry, we have
  \begin{center}
\begin{tikzpicture}
\node [fill, draw, circle, minimum width=3pt, inner sep=0pt, label=left:$0$] (0) at (0,0) {};

\node [fill, draw, circle, minimum width=3pt, inner sep=0pt, label=right:$\alpha$] (o) at (8, 0) {};

\node [fill, draw, circle, minimum width=3pt, inner sep=0pt, label=right:$\beta$] (beta) at (7, 0.75) {};

\node [circle,draw,name path=circle] (c) at (o) [minimum size=4cm] {};

\draw (0)  --  (tangent cs:node=c,point={(0)},solution=1) 
coordinate (B)
       (0) -- (tangent cs:node=c,point={(0)},solution=2)
       coordinate (C)
       (o)  -- (tangent cs:node=c,point={(0)},solution=2) 
       (o)  -- (tangent cs:node=c,point={(0)},solution=1) 
       (0) -- (o);

\draw (0) -- (C) node [midway, above, sloped] (TextNode1) {$|\alpha|$};
\draw (o) -- (C) node [midway, right] (TextNode1) {$|\alpha|/8$};
 \pic [draw, -, angle radius=11mm, angle eccentricity=1.2, "$\theta$"] {angle = o--0--C};
\pic [draw,thick,angle radius=9mm, "$\pi/2$"]   {angle = 0--C--o};
\end{tikzpicture}
\end{center}
so $\sin(\theta) = 1/8$, where $\theta$ is the angle between $0$, $\alpha$ and the intersection of the circle of radius $|\alpha|/8$ and center $\alpha$ with the tangent line that goes through $0$. Since $\sin(\pi/24) > 1/8$, we conclude that $|\mathrm{Arg}(\alpha) - \mathrm{Arg}(\beta) | \le \theta \le \pi/24$ as we claimed. By invoking the oracle for $\pDAT \pi/3 \sDATz(q, \gamma_1)$, the reduction computes a rational number $\hat{A}_1$ such that, for some $a_1 \in \arg(Z_{\text{Tutte}}(H; q, \gamma_1))$, we have $|a_1  - \hat{A}_1  | \le \pi/3$. Since the reduction has at its disposal a representation of the algebraic number $D$, it can compute (in polynomial time in the length of this representation) a rational number $\hat{A}_2$ such that, for some $a_2 \in \arg(D)$, we have $| a_2 - \hat{A}_2 | \le \pi/24$. The reduction outputs $\hat{A} = \hat{A}_1 - \hat{A}_2$. We claim that there is an argument $a$ of $\alpha$ such that $|a - \hat{A}| \le 5\pi/12$. Note that $b = a_1 - a_2$ is an argument of $\beta$. By the triangle inequality, we have $| b - \hat{A} | \le | a_1 - \hat{A}_1 | + | \hat{A}_2 - a_2 | \le 9\pi/24$. Let $a = \mathrm{Arg}(\alpha) + (b - \mathrm{Arg}(\beta))$, which is an argument of $\alpha$. We conclude that
\begin{equation*}
  |a - \hat{A} | \le | b - \hat{A} | + | a - b | = | b - \hat{A} | + | \mathrm{Arg}(\alpha) - \mathrm{Arg}(\beta) | \le 5 \pi / 12.
\end{equation*}
This reduction analogously applies to the planar case since the graph $H$ is planar when $G$ is planar (see Lemma~\ref{lem:approximate-shift:reduction}).
\end{proof}

One could actually change the angles $\rho_2 = 5 \pi /12$ and $\rho_1 = \pi/3$ in the statement of Lemma~\ref{lem:reduction:distance} as long as $\rho_1 < \rho_2$, but $\rho_2 = 5 \pi /12$ and $\rho_1 = \pi/3$ will suffice for our purposes.

\subsection{Hardness for the Tutte polynomial} \label{sec:hardness:tutte}

In this section we  use the reductions of Section~\ref{sec:hardness:real} to obtain intermediate hardness results that will be used to obtain our main theorems in the upcoming sections. We start with the following corollary which strengthens previous results of~\cite{Goldberg2014} (that applied to general graphs rather than planar).

\begin{corollary} \label{cor:first-hardness-result:q>1}
  Let $K>1$ be a real number. Let $q\neq 0,2$, and $\gamma_1,\gamma_2$ be real algebraic numbers  with $\gamma_2\in (-\infty,-2)\cup (0,\infty)$ and either 
	\begin{itemize}
	\item $q >1$, $\gamma_1 \in (-2, -1)$, or
	\item $q < 1$, $\gamma_1 \in (-1, 0)$.
	 \end{itemize}
	Then, $\FNPTz(q, \gamma_1, \gamma_2)$ and $\SPTz(q, \gamma_1, \gamma_2)$ are $\numP$-hard.
	\end{corollary}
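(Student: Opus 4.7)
The plan is to proceed by case analysis on the sign of $\gamma_2$, and within each case on the position of $q$. When $\gamma_2>0$, exactly one of Lemmas~\ref{lem:exact-reduction:q>1}, \ref{lem:exact-reduction:0<q<1}, \ref{lem:exact-reduction:q<0} applies, according to whether $q>1$ (with $q\neq 2$ by hypothesis), $0<q<1$, or $q<0$. Each of these lemmas yields Turing reductions to both $\FNPTz(q,\gamma_1,\gamma_2)$ and $\SPTz(q,\gamma_1,\gamma_2)$, starting from $\PT(q,\gamma_2)$ in the first two subcases and from $\TPT(q,\gamma_2)$ in the third. In the first two subcases Theorem~\ref{thm:hardness:exact-planar} tells us that $\PT(q,\gamma_2)$ is $\numP$-hard, since $q\notin\{1,2\}$ (the value $q=1$ being excluded by the strict inequalities $q>1$ or $q<1$ in the hypothesis) and since $(q,\gamma_2)$ is not a special point (every special point has $\gamma$ equal to $0$, negative, or non-real, whereas $\gamma_2>0$). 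In the third subcase, Lemma~\ref{lem:thickened-tutte} tells us that $\TPT(q,\gamma_2)$ is $\numP$-hard, since $q\notin\{1,2\}$.

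The remaining case is $\gamma_2<-2$, and the plan here is first to reduce to the case $\gamma_2>0$ by replacing the role of $\gamma_2$ with $\gamma_2':=(1+\gamma_2)^2-1$, which is a positive real algebraic number because $|1+\gamma_2|>1$. For this, consider the multigraph $H$ with two vertices $s,t$ joined by two parallel edges, each of weight $\gamma_2$. A direct calculation (or specialisation of the parallel composition formula in Section~\ref{sec:pre:weights}) gives $Z_{s|t}(H;q,\gamma_2)=q^2$ and $Z_{st}(H;q,\gamma_2)=q\gamma_2(2+\gamma_2)$, so $H$ $\gamma_2$-implements $\gamma_2'$, and moreover the normalising factor $Z_{s|t}(H;q,\gamma_2)/q^2$ appearing in Lemma~\ref{lem:implementations:multivariate} equals $1$. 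Consequently, given an input $(G,\hat\gamma)$ of $\FNPTz(q,\gamma_1,\gamma_2')$, the substitution that replaces each edge of weight $\gamma_2'$ by a copy of $H$ in place produces a planar graph $G'$ with weight function $\tau$ valued in $\{\gamma_1,\gamma_2\}$ and satisfying $Z_{\mathrm{Tutte}}(G';q,\tau)=Z_{\mathrm{Tutte}}(G;q,\hat\gamma)$; this reduces $\FNPTz(q,\gamma_1,\gamma_2')$ to $\FNPTz(q,\gamma_1,\gamma_2)$, and the same transformation reduces $\SPTz(q,\gamma_1,\gamma_2')$ to $\SPTz(q,\gamma_1,\gamma_2)$. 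Applying the conclusion of the first paragraph to the tuple $(q,\gamma_1,\gamma_2')$ then finishes the proof.

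Overall, the main technical content has already been packaged into the reductions of Section~\ref{sec:hardness:real} and into Lemma~\ref{lem:implementations:multivariate}, so the proof is essentially a bookkeeping exercise. The only point that requires some care is that the reductions of Section~\ref{sec:hardness:real} are stated only for $\gamma_2>0$, so the case $\gamma_2<-2$ cannot be handled directly; handling it via the $2$-thickening above, and checking that $\gamma_2'>0$ avoids all special points, is what needs attention.
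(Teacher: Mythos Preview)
Your proof is correct and follows essentially the same approach as the paper: case analysis on the sign of $\gamma_2$, invoking Lemmas~\ref{lem:exact-reduction:q>1}, \ref{lem:exact-reduction:0<q<1}, \ref{lem:exact-reduction:q<0} together with Theorem~\ref{thm:hardness:exact-planar} or Lemma~\ref{lem:thickened-tutte} when $\gamma_2>0$, and handling $\gamma_2<-2$ via the $2$-thickening that implements $(1+\gamma_2)^2-1>0$. Your version is more explicit about the special-points check and the normalising factor in the $2$-thickening, but the argument is the same.
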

 \begin{proof}
We consider first the case when $\gamma_2>0$. For $q,\gamma_1,\gamma_2$ as in the first item, the conclusion  follows from Theorem~\ref{thm:hardness:exact-planar} and the reductions given in Lemma~\ref{lem:exact-reduction:q>1}. For the second item: when $q\in (0,1)$, the result follows from the reductions given in Lemma~\ref{lem:exact-reduction:0<q<1} and Theorem~\ref{thm:hardness:exact-planar}, while for $q<0$, the result follows from Lemmas~\ref{lem:thickened-tutte} and~\ref{lem:exact-reduction:q<0}.

The other case is when $\gamma_2 < -2$. Then, we can $\gamma_2$-implement $(\gamma_2+1)^2 -1 > 0$ with a $2$-thickening and proceed as in the previous case.
 \end{proof}

\begin{lemma} \label{lem:regionB}
   Let $K$ be a real number with $K > 1$. Let $x, y$ be a real algebraic numbers such that $(x,y) \ne (-1, -1)$, $\min \{x, y\} \le -1$ and $\max \{x, y\} < 0$. Let $q = (x-1)(y-1)$ and $\gamma = y-1$. Then $\FNPTz(q, \gamma)$ and $\SPTz(q, \gamma)$ are $\# \mathsf{P}$-hard.
\end{lemma}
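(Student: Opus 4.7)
The plan is to reduce the two-weight hardness from Corollary~\ref{cor:first-hardness-result:q>1} to the single-weight problems via appropriate series-parallel shifts. The hypotheses force $q = (x-1)(y-1) > 2$ (both factors are negative and at least one is $\leq -2$, with strict inequality using $(x,y) \neq (-1,-1)$ together with $\max\{x,y\} < 0$), so Corollary~\ref{cor:first-hardness-result:q>1} yields $\numP$-hardness of $\FNPTz(q, \gamma_1, \gamma_2)$ and $\SPTz(q, \gamma_1, \gamma_2)$ for any algebraic $\gamma_1 \in (-2,-1)$ and $\gamma_2 \in (-\infty,-2)\cup(0,\infty)$. Reducing this two-weight problem to the single-weight $\FNPTz(q,\gamma)$ (respectively $\SPTz(q,\gamma)$) is a direct multi-weight adaptation of Lemma~\ref{lem:reduction:norm:outline} (built on Lemma~\ref{lem:approximate-shift:reduction}): each edge of weight $\gamma_i$ in the input graph is replaced by the corresponding approximating series-parallel gadget and the resulting normalisation factor is tracked explicitly. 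Thus the whole proof boils down to exhibiting polynomial-time approximate series-parallel shifts from $(x,y)$ to $(q,\gamma_1)$ and $(q,\gamma_2)$.

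If $y \in (-1,0)$, take $\gamma_1 := \gamma$ (trivial shift), and apply Theorem~\ref{thm:approx-shifts} to obtain a polynomial-time approximate series-parallel shift from $(x,y)$ to an algebraic $(x_2,y_2)\in\mathcal{H}_q$ with $y_2 > 1$, which gives $\gamma_2 > 0$. The symmetric case $x\in(-1,0)$ reduces to this one via planar duality: the identity $T_G(x,y) = T_{G^*}(y,x)$ together with the standard conversion between $T_G$ and $Z_{\text{Tutte}}$ gives a polynomial-time equivalence between $\FNPTz(q,y-1)$ and $\FNPTz(q,x-1)$ (up to a computable proportionality factor whose sign is also computable, so the same equivalence holds for $\SPTz$), and the swapped point $(y,x)$ has its $y$-coordinate in $(-1,0)$.

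It remains to treat the case where both $x,y \leq -1$ and $(x,y) \neq (-1,-1)$, in which $q > 4$. By planar duality we may further assume $|x| > 1$: if $x = -1$ then $y < -1$, and the dual point $(y,-1)$ has its ``$x$-coordinate'' equal to $y<-1$. Since $|x|>1$, for every sufficiently large odd $n$ the $n$-stretch produces the weight $y_n = 1 + q/(x^n - 1) \in (0,1)$ (as $|x|^n > q-1$ eventually). We then construct \emph{exact} series-parallel shifts. For $\gamma_2$: if $y<-1$ take $\gamma_2:=\gamma<-2$ (trivial shift); if $y = -1$ take the 2-stretch, giving $y_2 = 1 + q/(x^2-1) > 1$ and hence $\gamma_2 > 0$. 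For $\gamma_1$: take the parallel composition of the original edge and $m$ copies of the $n$-stretched edge, producing the new $y$-coordinate $y\cdot y_n^m$. Choose $m$ so that $y_n^m < 1/|y|$ (possible since $y_n \in (0,1)$); then $y\cdot y_n^m\in(-1,0)$ when $y<-1$, and when $y=-1$ the choice $m=1$ already gives $-y_n\in(-1,0)$. The resulting point is real algebraic on $\mathcal{H}_q$ and has $y$-coordinate in $(-1,0)$, so $\gamma_1 \in (-2,-1)$ as required.

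The main obstacle is handling Case~3 when one of $x,y$ equals $-1$ exactly, because a stretch of the degenerate coordinate is either trivial or ill-defined. This is resolved by invoking planar duality to swap the coordinates so that the non-trivial stretch is always applied to a coordinate whose absolute value exceeds $1$; once this stretch is available, the parallel composition with the original edge yields both required shifts as explicit series-parallel gadgets without any approximation.
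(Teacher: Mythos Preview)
Your argument is correct, but it takes a rather different route from the paper's. The paper works entirely with \emph{exact} series-parallel implementations: the main case $\min\{x,y\}<-1$ is handled by citing \cite[Lemmas~8--11]{Goldberg2014}, and the boundary cases $x=-1$ or $y=-1$ (with the other coordinate in $(-1,0)$) are reduced to the strict case by a single $3$-thickening or $3$-stretch. With exact implementations of both $\gamma_1\in(-2,-1)$ and $\gamma_2\notin[-2,0]$ available, the passage from the two-weight problem of Corollary~\ref{cor:first-hardness-result:q>1} to the one-weight problem is immediate via Lemma~\ref{lem:implementations:multivariate}, with no error analysis or lower bounds required.

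Your proof instead combines three distinct mechanisms: Theorem~\ref{thm:approx-shifts} (approximate shifts) when $y\in(-1,0)$, planar Tutte duality $T_G(x,y)=T_{G^*}(y,x)$ when $x\in(-1,0)$, and explicit stretch/thicken gadgets when both coordinates are $\le -1$. This buys self-containment within the paper's own machinery (no black-box appeal to the \cite{Goldberg2014} implementation lemmas), at the cost of extra overhead: the ``multi-weight adaptation of Lemma~\ref{lem:reduction:norm:outline}'' you invoke tacitly needs a two-weight analogue of Corollary~\ref{cor:lower-bound:tutte} (immediate from Lemma~\ref{lem:algebraic:lower-bound}, but not stated), a corresponding variant of Lemma~\ref{lem:approximate-shift:reduction} where only some edges are replaced, and the observation that the normalisation factor $D$ is real with computable sign so that the $\SPTz$ reduction also goes through. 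You also import planar duality, which the paper does not otherwise use. None of these are gaps, but they make your route noticeably heavier than the two-line appeal to existing exact shifts that the paper uses.
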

\begin{proof}
  Note that $q > 2$. We claim that we can $(x,y)$-implement $(x_1, y_2)$ with $y_1 \in (-1,0)$, and $(x_2, y_2)$ with $|y_2|>1$  using planar (in fact, series-parallel) graphs. The result then follows by invoking Corollary~\ref{cor:first-hardness-result:q>1} with $\gamma_1 = y_1- 1$ and $\gamma_2 = y_2 -1$. 
	
	The case $\min \{x,y\} < -1$ is treated in~\cite[Lemmas 8--11]{Goldberg2014}. Hence, we may assume that $-1 \le x < 0$ and $-1 \le y < 0$. Since $(x,y) \ne (-1,-1)$ by hypothesis, there are two cases:
  \begin{itemize}
  \item $x = -1$ and $-1 < y < 0$. As pointed out in~\cite[Corollary 26]{Goldberg2014}, a $3$-thickening from $(x,y)$ implements the point
$(x', y') = \left( 1 -\frac{2}{1+y+y^2}, y^3 \right)$  with $x' < -1$ and $y' \in (-1,0)$, so the point $(x', y')$ has already been studied in this proof.
  \item $-1 < x < 0$ and $y = -1$. This time we perform a $3$-stretching from $(x,y)$ to implement a point $(x',y')$ with $x' \in (-1,0)$ and $y' < -1$. \qedhere
  \end{itemize}
\end{proof}

\begin{lemma} \label{lem:regionG}
   Let $K>1$ be a real number and $q,x, y$ be real algebraic numbers with $\max \{|x|, |y|\} < 1$ and $q = (x-1)(y-1) > 32/27$. 
	
Then, for $\gamma=y-1$, $\FNPTz(q, \gamma)$ and  $\SPTz(q, \gamma)$ the following problems are $\# \mathsf{P}$-hard, unless $q=2$.
\end{lemma}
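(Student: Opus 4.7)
The plan is to apply Corollary~\ref{cor:first-hardness-result:q>1} in its first branch, applicable since $q > 32/27 > 1$ and $q \neq 2$. This requires constructing, via series-parallel graphs, $(x,y)$-implementations of weights $\gamma_1 \in (-2, -1)$ and $\gamma_2 \in (-\infty, -2) \cup (0, \infty)$. A preliminary observation is that $|x|<1$ forces $y < 1 - q/2$ and symmetrically $|y|<1$ forces $x < 1 - q/2$; also $(1-x)(1-y) = q > 1$ ensures at least one of $x, y$ is negative.

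For $q > 2$ the observation already gives $x, y \in (-1, 0)$, so $\gamma_1 = y-1$ works. For $\gamma_2$, a $k$-stretching implements the weight $y_k = 1 + q/(x^k - 1)$, and since $|x|<1$ we have $y_k \to 1 - q < -1$; for sufficiently large $k$ (of appropriate parity when $x<0$) we have $y_k < -1$, so $\gamma_2 = y_k - 1 \in (-\infty, -2)$.

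For $q \in (32/27, 2)$ the main tool is the map on the $x$-coordinate induced by the series-parallel operation ``$2$-stretching followed by $2$-thickening'', namely
\[
  \phi(x) \;=\; 1 + \frac{(1-x^2)^2}{q - 2 + 2x^2}.
\]
A direct computation shows that $\phi(x) = x$ factors as $(1-x)\bigl(x^3 - x^2 - x + (1-q)\bigr) = 0$; the cubic attains its local maximum on $[-1,1]$ at $x = -1/3$ with value $32/27 - q$. The hypothesis $q > 32/27$ thus ensures $\phi$ has no fixed point in $[-1, 1)$. Since $\phi(x) > 1$ as soon as $x^2 > 1-q/2$ (the denominator changes sign there), iterating $\phi$ from any $x_0 \in (-1, 1)$ must in finitely many steps yield a value strictly greater than $1$; the corresponding $y$-coordinate then exceeds $1$, providing $\gamma_2 > 0$. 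The weight $\gamma_1 \in (-2,-1)$ is obtained as follows: if the original $y \in (-1, 0)$, take $\gamma_1 = y-1$; otherwise $y \in (0, 1-q/2)$ (so $x \in (-1, 0)$), and a single $2$-stretching produces $y' = 1 - q/(1-x^2)$ lying in $(-1, 0)$ when $x^2 < 1-q/2$, or in $(-\infty, -1)$ when $x^2 \geq 1-q/2$ (in the latter case $y'$ already serves as $\gamma_2$, and $\gamma_1$ is then obtained by composing with powers of the implemented $y_2$ via series/parallel compositions).

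The main obstacle will be the rigorous analysis of the iteration of $\phi$ near the critical value $q = 32/27$: at this threshold $\phi'(-1/3) = 1$, a saddle-node bifurcation occurs, so the escape time from the ``stuck'' region $|x| < \sqrt{1-q/2}$ grows without bound as $q \to (32/27)^+$. Since $q$ and the starting $(x,y)$ are constants of the problem, however, a bounded (if $q$-dependent) number of iterations suffices, and the implementing graphs remain series-parallel by construction, completing the reduction to Corollary~\ref{cor:first-hardness-result:q>1}.
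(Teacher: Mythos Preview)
Your strategy is exactly what the paper defers to: its proof simply cites \cite[Lemmas~12 and~15]{Goldberg2014}, and your iteration of the ``2-stretch then 2-thicken'' map $\phi$, with its fixed-point cubic $x^3-x^2-x+(1-q)$ and the resulting threshold $32/27$, is precisely the content of those lemmas. The case $q>2$ is handled correctly. For $q\in(32/27,2)$, however, two steps in your sketch are not yet proofs.

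First, ``no fixed point in $[-1,1)$, hence the orbit escapes'' is a non-sequitur as stated: absence of fixed points does not by itself exclude periodic orbits. What makes the argument work is monotonicity, which follows from the same factorisation you already computed: $\phi(x)-x=-(1-x)\,p(x)/(q-2+2x^2)$ with $p(x)<0$ on $[-1,1]$, so on the stuck region $\{x^2<1-q/2\}$ the denominator is negative and $\phi(x)<x$. The decreasing orbit cannot accumulate at an interior point (no fixed point) nor at $-\sqrt{1-q/2}$ (where $\phi\to-\infty$), so it exits in finitely many steps; wherever it lands (possibly at some $x<-1$), one further application of $\phi$ gives a value $>1$. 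You should also flag the edge case where some iterate hits $x^2=1-q/2$ exactly, since the subsequent 2-thicken then degenerates to $y=1$; a $3$-thicken at that step avoids it. Second, your construction of $\gamma_1$ in the subcase $y\ge 0$, $x^2\ge 1-q/2$ (``compose with powers of the implemented $y_2$'') is too vague to be a proof. A clean uniform fix for all $q\in(32/27,2)$: since $|x|<1$, a $k$-stretch gives $y_k=1+q/(x^k-1)\to 1-q\in(-1,0)$, so for large $k$ one has $y_k\in(-1,0)$ and hence $\gamma_1=y_k-1\in(-2,-1)$, independently of the sign of $y$.
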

\begin{proof}
  In view of~\cite[Lemmas 12 and 15]{Goldberg2014}, we can $(x,y)$-implement points $(x_1,y_1)$ and  $(x_2,y_2)$ with $y_1 \in (-1,0)$ and $y_2 > 1$. These implementations only use series-parallel graphs. Hence, we can apply (the first item of) Corollary~\ref{cor:first-hardness-result:q>1} with $\gamma_1 = y_1- 1$ and $\gamma_2 = y_2 -1$ to finish the proof.
\end{proof}

\subsection{Proofs of our main theorems}\label{sec:mainproof}

In this section we show how our main Theorems~\ref{thm:mainone},~\ref{thm:maintwo},~\ref{thm:mainthree} and~\ref{thm:mainfour} follow from the $\numP$-hardness results of Section~\ref{sec:hardness:tutte}. We start with Theorem~\ref{thm:mainfour}.

\begin{thmmainfour} 
\statethmmainfour
\end{thmmainfour}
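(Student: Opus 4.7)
The plan is to transport known $\numP$-hardness from a real target on $\mathcal{H}_q$ back to the non-real point $(x,y):=(1+q/\gamma,\gamma+1)$ via the approximate shifts of Theorem~\ref{thm:approx-shifts}. First I would dispose of the tractable case: if $q=3$ and $\gamma+1\in\{\omega_3,\omega_3^2\}$, then $(x,y)\in\{(\omega_3^2,\omega_3),(\omega_3,\omega_3^2)\}$, which are two of Vertigan's nine special points, so Theorem~\ref{thm:hardness:exact-planar} allows exact polynomial-time evaluation of $Z_{\mathrm{Tutte}}(G;q,\gamma)$ on planar graphs, trivially solving both $\FNPT(q,\gamma)$ and $\pDAT\pi/3\sDAPT(q,\gamma)$.

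For the hardness direction, note that since $q>2$ the pair $(x,y)$ does not lie on $\mathcal{H}_2$, ruling out the exceptions $(i,-i)$ and $(-i,i)$ of Theorem~\ref{thm:approx-shifts}, while in the case $q=3$ the remaining exceptions $(\omega_3,\omega_3^2),(\omega_3^2,\omega_3)$ have been excluded by assumption. Since $y\in\mathbb{C}\setminus\mathbb{R}$ and $(x,y)\in\mathcal{H}_q$, the hypotheses of Theorem~\ref{thm:approx-shifts} hold, and the theorem supplies a polynomial-time approximate series-parallel shift from $(x,y)$ to \emph{every} real algebraic $(x',y')\in\mathcal{H}_q$. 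The remaining task is therefore to select a convenient real target at which hardness is already available in Section~\ref{sec:hardness:tutte}, and pull it back along the shift.

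For the target, I would pick any rational $y'\in(\max\{-1,1-q/2\},0)$ and set $x':=1+q/(y'-1)$; the inequality $y'>1-q/2$ forces $x'<-1$, while $y'\in(-1,0)$ gives $\max\{x',y'\}<0$ and $(x',y')\ne(-1,-1)$. Hence Lemma~\ref{lem:regionB} applies with $\gamma':=y'-1$, so both $\FNPT(q,\gamma')$ and $\SPT(q,\gamma')$ are $\numP$-hard. Combining the approximate shift with Lemma~\ref{lem:reduction:norm:outline} yields $\FNPT(q,\gamma')\le_T\FNPT(q,\gamma)$, establishing the $\numP$-hardness of $\FNPT(q,\gamma)$.

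For the argument problem, I would observe that since $Z_{\mathrm{Tutte}}(G;q,\gamma')$ is real, a $5\pi/12$-approximation of its argument modulo $2\pi$ suffices to determine its sign, because the arcs $[-5\pi/12,5\pi/12]$ and $[7\pi/12,17\pi/12]$ are disjoint mod $2\pi$. Hence $\SPT(q,\gamma')\le_T \pDAT 5\pi/12\sDAPT(q,\gamma')$, and Lemma~\ref{lem:reduction:distance} chains this with the approximate shift to give
\[
\SPT(q,\gamma')\le_T \pDAT 5\pi/12\sDAPT(q,\gamma')\le_T \pDAT \pi/3 \sDAPT(q,\gamma),
\]
so $\pDAT\pi/3\sDAPT(q,\gamma)$ is also $\numP$-hard. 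No serious additional obstacle is expected: the hard work has already been packaged into Theorem~\ref{thm:approx-shifts}, and the only verification needed beyond that is the small calculation above showing that the hardness region of Lemma~\ref{lem:regionB} meets $\mathcal{H}_q$ at a real algebraic point for every $q>2$.
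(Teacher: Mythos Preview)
Your proposal is correct and follows essentially the same approach as the paper: pick a real algebraic target on $\mathcal{H}_q$ where planar hardness is known, invoke Theorem~\ref{thm:approx-shifts} to get an approximate series-parallel shift to it, and then transport hardness via Lemmas~\ref{lem:reduction:norm:outline} and~\ref{lem:reduction:distance}. The one noteworthy difference is in the choice of target: the paper fixes $y_2=-1/2$ and then case-splits on whether $x_2=1-2q/3\le -1$ (using Lemma~\ref{lem:regionB}) or $x_2\in(-1,0)$ (using Lemma~\ref{lem:regionG}), whereas you choose $y'\in(\max\{-1,1-q/2\},0)$ precisely so that $x'<-1$ is guaranteed, which lands directly in the region of Lemma~\ref{lem:regionB} and eliminates both the case split and the appeal to Lemma~\ref{lem:regionG}.
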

\begin{proof}
Let $(x,y)\in \mathcal{H}_q$ be such that $y=\gamma+1$. Consider  the point $(x_2, y_2) \in H_q$ with $y_2 = -1/2$ and $x_2 = 1 + q/(y_2 -1)$.  Note that $x_2 = 1 - 2q/3 \le 1 - 4/3 < 0$. There are two cases. Either $x_2 \le -1$ and the point $(x_2, y_2)$ satisfies the hypothesis of Lemma~\ref{lem:regionB}, or $-1 < x_2 < 0$ and the point $(x_2, y_2)$ satisfies the hypothesis of Lemma~\ref{lem:regionG}. In any case, we conclude that  $\SPTz(q, \gamma_2)$ and $\FNPTz(q, \gamma_2)$ are $\numP$-hard for $\gamma_2 = y_2 -1$ when $q > 2$. 

  By Lemma~\ref{lem:reduction:norm:outline} (for $\gamma_1 = \gamma$ and $\gamma_2 = \gamma_2$), we see that $\FNPTz(q, \gamma_2)$ reduces to $\FNPTz(q, \gamma)$, proving that the latter is $\numP$-hard too. The proof for $\pDAT \pi/3 \sDAPTz(q, \gamma)$ is analogous: first observe that since $q,\gamma_2$ are real and $5\pi/12<\pi/2$, the problem $\SPTz(q, \gamma_2)$ reduces (trivially) to $\pDAT 5\pi/12 \sDAPTz(q, \gamma_2)$. Moreover, applying Theorem~\ref{thm:approx-shifts} with $x$ and $y$ as above, $y' = y_2\in (-1,0)$ and $x' = x_2$, we have a polynomial-time series-parallel approximate shift from $(x,y)$ to $(x',y')$ or, equivalently, from $(q, \gamma)$ to $(q, \gamma_2)$.  Using Lemma~\ref{lem:reduction:distance} with $\gamma_1 = \gamma$ and $\gamma_2 = \gamma_2$, we conclude that $\pDAT 5\pi/12 \sDAPTz(q, \gamma_2)$ reduces to $\pDAT \pi/3 \sDAPTz(q, \gamma)$, proving that the latter is $\numP$-hard, as wanted. 
\end{proof}

\begin{thmmainthree} 
\statethmmainthree
\end{thmmainthree}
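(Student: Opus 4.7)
The plan is to follow essentially the same template as the proof of Theorem~\ref{thm:mainfour}, substituting non-planar hardness for planar hardness at the base of the reduction: for $q=2$ the Ising partition function is tractable on planar graphs but $\numP$-hard on general graphs, so we work with general graphs throughout. First I would verify that $y=\pm i$ are exactly the cases that must be excluded: these correspond to the points $(i,-i)$ and $(-i,i)$ of $\mathcal{H}_2$, which appear among the special points of Theorem~\ref{thm:hardness:exact} for which exact evaluation is in $\mathsf{FP}$; the remaining exceptional points of Theorem~\ref{thm:approx-shifts}, namely $(\omega_3, \omega_3^2)$ and $(\omega_3^2, \omega_3)$, lie on $\mathcal{H}_3$ and are not an issue when $q=2$.

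For the hardness reduction, I would fix the real target $(x_2,y_2)\in\mathcal{H}_2$ with $y_2=-1/2$, so $x_2=-1/3$ and $\gamma_2=y_2-1=-3/2$. Since $y_2\in(-1,0)$, the established results on the Ising model on general graphs (see \cite{Goldberg2017} for the norm problem and \cite{Goldberg2014} for the sign problem) imply that both $\FNI(-1/2)$ (which is $\FNT(2,-3/2)$) and $\STz(2,-3/2)$ are $\numP$-hard. Now, given any algebraic $y\in \mathbb{C}\backslash\mathbb{R}$ with $y\neq\pm i$, set $\gamma=y-1$ and $x=1+2/(y-1)$, so $(x,y)\in\mathcal{H}_2$ and $(x,y)\notin\{(i,-i),(-i,i),(\omega_3,\omega_3^2),(\omega_3^2,\omega_3)\}$. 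Theorem~\ref{thm:approx-shifts} then yields a polynomial-time approximate series-parallel shift from $(x,y)$ to $(x_2,y_2)$, equivalently from $(2,\gamma)$ to $(2,-3/2)$.

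To conclude, Lemma~\ref{lem:reduction:norm:outline} applied with $q=2$, $\gamma_1=\gamma$ and $\gamma_2=-3/2$ reduces the $\numP$-hard problem $\FNT(2,-3/2)$ to $\FNT(2,\gamma)=\FNI(y)$, establishing $\numP$-hardness of the latter. For the argument problem, note that since $\gamma_2=-3/2$ is real, $Z_{\mathrm{Tutte}}(G;2,-3/2)$ is always real, so any rational $\hat{A}$ that approximates an argument of this quantity within $5\pi/12<\pi/2$ determines its sign; hence $\STz(2,-3/2)$ reduces trivially to $\pDAT 5\pi/12 \sDATz(2,-3/2)$. Lemma~\ref{lem:reduction:distance} in turn reduces the latter to $\pDAT \pi/3 \sDATz(2,\gamma)=\pDAT \pi/3 \sDAI(y)$, proving its $\numP$-hardness as well.

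There is no conceptual obstacle here beyond checking that the shift machinery applies (which requires only $y\neq\pm i$), since the Ising-specific non-planar hardness literature directly supplies the required base hardness at $(x_2,y_2)$. In particular, this bypasses the need to invoke Lemmas~\ref{lem:regionB}, \ref{lem:regionG} or Corollary~\ref{cor:first-hardness-result:q>1}, all of which explicitly exclude $q=2$.
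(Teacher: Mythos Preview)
Your proposal is correct and follows essentially the same approach as the paper: both fix the real target $y_2=-1/2$, invoke the existing $\numP$-hardness for the Ising model on general graphs at that point (the paper attributes both the norm and argument hardness to \cite{Goldberg2017}), apply Theorem~\ref{thm:approx-shifts} to obtain the approximate series-parallel shift, and then use Lemmas~\ref{lem:reduction:norm:outline} and~\ref{lem:reduction:distance} (the latter via the trivial sign-to-argument reduction for real $\gamma_2$) exactly as in the proof of Theorem~\ref{thm:mainfour}. Your write-up is in fact slightly more explicit than the paper's, which simply says ``analogously to the proof of Theorem~\ref{thm:mainfour}''.
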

\begin{proof}
Let $q=2$, $\gamma=y-1$, $y_2=-1/2$, $\gamma_2=y_2-1$. From the result of Goldberg and Guo~\cite{Goldberg2017},  the problems $\FNI(y_2)$ and $\pDAT \pi/3 \sDAI(y_2)$ are $\numP$-hard, hence  $\FNPTz(q, \gamma_2)$ and $\pDAT \pi/3 \sDAPTz(q, \gamma_2)$ are $\numP$-hard as well, using that $Z_{\mathrm{Ising}}(G; y_2) = Z_{\mathrm{Tutte}}(G; 2, \gamma_2)$.

By applying Lemma~\ref{lem:reduction:norm:outline} and Theorem~\ref{thm:approx-shifts} analogously to the proof of Theorem~\ref{thm:mainfour}, we conclude that $\FNPTz(q, \gamma)$ and $\pDAT \pi/3 \sDAPTz(q, \gamma)$ are \#P-hard, and hence $\FNI(y)$ and $\pDAT \pi/3 \sDAI(y)$, using that $Z_{\mathrm{Ising}}(G; y) = Z_{\mathrm{Tutte}}(G; 2, \gamma)$.
\end{proof}

\begin{thmmainone}
\statethmmainone
\end{thmmainone}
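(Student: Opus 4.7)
My plan is to derive Theorem~\ref{thm:mainone} as a direct corollary of Theorem~\ref{thm:mainfour}, using the well-known identity $Z_{\mathrm{Potts}}(G;q,y)=Z_{\mathrm{Tutte}}(G;q,y-1)$ which holds for every integer $q\geq 2$ (cited already in the introduction via~\cite{Sokal2005}). Setting $\gamma=y-1$, this identity turns any instance of $\FNPP(q,y)$ into an instance of $\FNPTz(q,\gamma)$ with exactly the same output, and analogously for the argument version, so hardness and tractability transfer verbatim between the Potts and Tutte formulations. This step is purely notational.

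Next I would check that the hypotheses of Theorem~\ref{thm:mainfour} hold. Since $q\geq 3$ is an integer, it is a real algebraic number with $q>2$, and since $y\in\mathbb{C}\setminus\mathbb{R}$ is algebraic, so is $\gamma=y-1\in\mathbb{C}\setminus\mathbb{R}$. Theorem~\ref{thm:mainfour} therefore yields $\numP$-hardness of both $\FNPTz(q,\gamma)$ and $\pDAT\pi/3\sDAPTz(q,\gamma)$ unless $q=3$ and $\gamma+1\in\{e^{2\pi i/3},e^{4\pi i/3}\}$, which, since $\gamma+1=y$, is exactly the exceptional set appearing in the statement of Theorem~\ref{thm:mainone}. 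Combined with the translation from the previous paragraph, this gives $\numP$-hardness of $\FNPP(q,y)$ and $\pDAT\pi/3\sDAPP(q,y)$ in every non-exceptional case.

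It remains to verify that at the two exceptional Potts points the problems are in fact solvable exactly in polynomial time. For this I would translate into Tutte $(x,y)$-coordinates: writing $\omega_3=e^{2\pi i/3}$, $\gamma=\omega_3-1$ gives Tutte coordinates $Y=\gamma+1=\omega_3$ and $X=1+q/\gamma=1+3/(\omega_3-1)$, and a routine computation (using $|\omega_3-1|^2=3$) produces $X=\omega_3^2$, so the pair is $(\omega_3^2,\omega_3)$; the symmetric choice $y=\omega_3^2$ yields $(\omega_3,\omega_3^2)$. Both lie in the list of special points in Section~\ref{sec:pre:hardness}, so $\PT(q,\gamma)$ is in $\mathsf{FP}$ by Theorem~\ref{thm:hardness:exact-planar}; exact evaluation trivially solves the norm- and argument-approximation problems. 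There is no real obstacle here: all of the substantive technical work (the polynomial-time approximate shifts of Section~\ref{sec:complex-implementations}, the interval-shrinking reduction of Section~\ref{sec:hardness:real}, and the combination via Lemmas~\ref{lem:reduction:norm:outline} and~\ref{lem:reduction:distance}) has already been carried out in the proof of Theorem~\ref{thm:mainfour}; the only calculation is the verification that the two Potts exceptions correspond to Tutte special points.
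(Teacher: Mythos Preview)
Your proposal is correct and follows exactly the same approach as the paper: the paper's proof is the one-liner ``Just apply Theorem~\ref{thm:mainfour} to the integer $q$, and use $Z_{\mathrm{Potts}}(G; q, y) = Z_{\mathrm{Tutte}}(G; q, y-1)$.'' Your write-up is simply a more detailed version of this, with the additional (harmless) explicit verification that the exceptional Potts points $y\in\{\omega_3,\omega_3^2\}$ land on the special Tutte points $(\omega_3^2,\omega_3),(\omega_3,\omega_3^2)$.
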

\begin{proof}
Just apply Theorem~\ref{thm:mainfour} to the integer $q$, and use $Z_{\mathrm{Potts}}(G; q, y) = Z_{\mathrm{Tutte}}(G; q, y-1)$.
\end{proof} 

\begin{thmmaintwo}
\statethmmaintwo
\end{thmmaintwo}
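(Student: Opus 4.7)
The plan is to split on the value of $y$ and reduce each case to the planar hardness result Lemma~\ref{lem:regionB}, following the template used in the proofs of Theorems~\ref{thm:mainthree} and~\ref{thm:mainfour}. Writing the Potts point in Tutte $(x, y)$-coordinates with $x = 1 + q/(y-1)$, a short calculation shows that $y \in (1-q, 0)$ forces $x \in (1-q/2, 0)$; in particular both $x$ and $y$ are negative real algebraic numbers, and $(x, y) = (-1, -1)$ occurs exactly when $(q, y) = (4, -1)$.

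For the excluded point $(q, y) = (4, -1)$, the Tutte coordinates are $(q, \gamma) = (4, -2)$, which is one of the nine special points of Jaeger, Vertigan and Welsh; by Theorem~\ref{thm:hardness:exact-planar}, $Z_{\mathrm{Potts}}(G; 4, -1) = Z_{\mathrm{Tutte}}(G; 4, -2)$ is computable in polynomial time on general graphs, so both approximation problems lie in $\mathsf{FP}$.

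When $y \in (-q+1, -1]$ and $(q, y) \neq (4, -1)$, the Tutte point $(x, y)$ already satisfies the hypothesis of Lemma~\ref{lem:regionB}: $\min\{x, y\} \leq y \leq -1$, $\max\{x, y\} < 0$, and $(x, y) \neq (-1, -1)$. That lemma then yields $\numP$-hardness of $\FNPTz(q, y-1) = \FNPP(q, y)$ and of the sign problem $\SPTz(q, y-1)$. The sign hardness upgrades to $\pDAT \pi/3 \sDAPP(q, y)$-hardness by the elementary reduction $\SPTz \leq_T \pDAT \pi/3 \sDAPTz$ valid for real parameters: for a real $Z \neq 0$, $\arg(Z) \subseteq \pi\mathbb{Z}$, and any rational within $\pi/3 < \pi/2$ of some argument determines the sign.

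When $y \in (-1, 0)$, I will instead shift to a target point. Set $(x_2, y_2) = (1 - 2q/3, -1/2) \in \mathcal{H}_q$: since $q \geq 3$ gives $x_2 \leq -1$, this target also satisfies Lemma~\ref{lem:regionB}, so $\FNPTz(q, -3/2)$ and $\SPTz(q, -3/2)$ are $\numP$-hard. Because $y \in (-1, 0)$ is real and the exceptional points in Theorem~\ref{thm:approx-shifts} are non-real, Theorem~\ref{thm:approx-shifts} supplies a polynomial-time series-parallel approximate shift from $(x, y)$ to $(x_2, y_2)$. Lemma~\ref{lem:reduction:norm:outline} then transfers the norm hardness to $(x, y)$, and Lemma~\ref{lem:reduction:distance} composed with $\SPTz \leq_T \pDAT 5\pi/12 \sDAPTz$ transfers the sign/argument hardness to $\pDAT \pi/3 \sDAPP(q, y)$. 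All the required ingredients are already developed in the paper, so I do not expect any substantive obstacle beyond this case split; the only conceptual point worth emphasising is that Theorem~\ref{thm:approx-shifts} only handles real $y \in (-1, 0)$, but the complementary range $y \in (-q+1, -1]$ is precisely where the starting point itself already lives in the region of Lemma~\ref{lem:regionB}.
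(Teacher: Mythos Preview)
Your argument is correct, but it takes a different route from the paper's, and contains one small (harmless) slip: $y\in(1-q,0)$ gives $x\in(1-q,0)$, not $x\in(1-q/2,0)$ (e.g.\ $q=4$, $y=-1/2$ gives $x=-5/3$). You only use that $x<0$, so nothing breaks.

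The substantive difference is in how the case $y\in(-1,0)$ is handled. You invoke Theorem~\ref{thm:approx-shifts} to approximately shift to $(1-2q/3,-1/2)$ and then apply Lemma~\ref{lem:regionB} there, mirroring the proofs of Theorems~\ref{thm:mainthree} and~\ref{thm:mainfour}. The paper instead case-splits on whether $\min\{x,y\}\le -1$: if so, Lemma~\ref{lem:regionB} applies directly at $(x,y)$; if not, then both $x,y\in(-1,0)$, and since $q\ge 3>32/27$ the point already satisfies the hypotheses of Lemma~\ref{lem:regionG}, which gives $\numP$-hardness without any approximate shift. The paper's route is shorter and avoids the machinery of Theorem~\ref{thm:approx-shifts} and Lemmas~\ref{lem:reduction:norm:outline}--\ref{lem:reduction:distance} entirely; your route has the virtue of reusing the exact template of the non-real theorems and of never touching Lemma~\ref{lem:regionG}.
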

\begin{proof}
Let $y \in (-q+1, 0)$. The point $(x,y)$ with $x = 1+q/(y-1)$ satisfies $x \in (1-q, 0)$, $(x,y) \ne (-1,-1)$ and $y < 0$. If $x \le -1$ or $y \le -1$, $\numP$-hardness follows from Lemma~\ref{lem:regionB}. Otherwise, we have $q \ge 3$ and $ x,y\in (-1,0)$, so hardness follows from Lemma~\ref{lem:regionG}.
\end{proof}

\section{Further consequences of our results} \label{sec:further-consequences}

In this final section, we discuss some further consequences of our techniques, as mentioned in Section~\ref{sec:intro:tutte}. First, in Section~\ref{sec:further-consequences:real-summary}, we explain how our results can be used to obtain hardness for  $\SPTz(q, \gamma)$ and $\FNPTz(q, \gamma)$ (and the non-planar version of these problems) at other parameters than the ones studied in Section~\ref{sec:hardness:tutte}, building on work of Goldberg and Jerrum~\cite{Goldberg2014}. Secondly, in Section~\ref{sec:further-consequences:bqp}, we apply our results to the problem of approximating the Jones polynomial of an alternating link, which is connected to the quantum complexity class $\mathsf{BQP}$ as explained in~\cite{Bordewich2005}.

\subsection{Hardness results for real algebraic parameters in the Tutte plane} \label{sec:further-consequences:real-summary}

The regions studied in Lemmas~\ref{lem:regionB} and~\ref{lem:regionG} have been studied by Goldberg and Jerrum~\cite{Goldberg2014}, where they showed $\#\mathsf{P}$-hardness of $\SPT(q, \gamma)$ at several regions of the real algebraic plane. As we explained in Section~\ref{sec:hardness:tutte}, we obtain hardness at a point $(q,\gamma)$ as long as we can $\gamma$-implement algebraic numbers $\gamma_1$ and $\gamma_2$ as in Corollary~\ref{cor:first-hardness-result:q>1}. Goldberg and Jerrum came up with multiple implementations that achieve the conditions of  Corollary~\ref{cor:first-hardness-result:q>1}. By applying their implementations, we obtain $\numP$-hardness for $\FNT(q, y-1)$ in the same regions where they obtained $\numP$-hardness of $\SPT(q, \gamma)$ in~\cite[Theorem 1]{Goldberg2014}.

 Some of the implementations developed in~\cite{Goldberg2014} consist of planar graphs (as those used in Lemmas~\ref{lem:regionB} and~\ref{lem:regionG}), so we can extend their results to the planar version of the problems for some of the previous regions. 

\begin{theorem} \label{thm:hardness:real:planar} 
  Let $q $ and $\gamma$ be real algebraic numbers with $q\neq 0,1,2$. Let $y = \gamma+1$ and $x = 1 + q/ (y-1)$. The problems $\SPTz(q, \gamma)$ and $\FNPTz(q, \gamma)$ are $\# \mathsf{P}$-hard when $x,y$ are real algebraic numbers satisfying one of the following: 
\begin{enumerate}
\item\label{it:qxwecca}  $\min(x,y) \leq -1$, $\max(x,y)<0$ and $(x,y) \ne (-1,-1)$,
\item\label{it:qxweccb}  $|x|>1$, $|y|>1$ and $xy < 0$,
\item\label{it:qxweccc}  $\max(|x|,|y|)<1$ and $q>32/27$,
\item\label{it:qxweccd}  $\max(|x|,|y|)<1$, $q\leq 32/27$ and $x<-2y-1$,  
\item\label{it:qxwecce}  $\max(|x|,|y|)<1$, $q\leq 32/27$ and $y<-2x-1$.
\end{enumerate}
\end{theorem}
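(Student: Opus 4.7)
The overall strategy is to exhibit, for each of the five parameter regions, planar (in fact series-parallel) $(x,y)$-implementations of two weights $\gamma_1, \gamma_2$ satisfying the hypotheses of Corollary~\ref{cor:first-hardness-result:q>1}: namely, $y_2 = \gamma_2+1 \in (-\infty,-1)\cup(1,\infty)$ together with $y_1 = \gamma_1+1 \in (-1,0)$ when $q>1$, or $y_1 \in (0,1)$ when $q<1$. Substituting each edge of an input graph for $\FNPTz(q,\gamma_1,\gamma_2)$ (respectively $\SPTz$) by the corresponding series-parallel gadget built from edges of weight $\gamma$ preserves planarity, so hardness of the two-weight planar problem transfers directly to the one-weight problem $\FNPTz(q,\gamma)$ (resp.\ $\SPTz(q,\gamma)$).

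Cases~\ref{it:qxwecca} and~\ref{it:qxweccc} are exactly Lemmas~\ref{lem:regionB} and~\ref{lem:regionG} respectively and require no further work. For Case~\ref{it:qxweccb}, by the symmetry $x\leftrightarrow y$ one may assume $x<-1$ and $y>1$, which forces $q=(x-1)(y-1)<0$. The single edge realizes $y_2=y>1$, while the $n$-stretching produces the point $(x^n, y_n)\in \mathcal{H}_q$ with $y_n = 1+q/(x^n-1)$; taking $n$ even and sufficiently large makes $x^n>0$ arbitrarily large, so $y_n\to 1^{-}$ and eventually $y_n\in (0,1)$, as required since $q<0<1$. Both gadgets are series-parallel.

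For Cases~\ref{it:qxweccd} and~\ref{it:qxwecce}, by the $x\leftrightarrow y$ symmetry of the Tutte polynomial on planar graphs (interchanging the roles of $(x,y)$) it suffices to treat Case~\ref{it:qxweccd}, where $|x|,|y|<1$, $q\leq 32/27$ and $x<-2y-1$. The plan here is to adapt the implementations of Goldberg--Jerrum~\cite[Section 4]{Goldberg2014}, observing that their constructions in this region proceed by sequences of thickenings and stretchings only. Concretely, the condition $x<-2y-1$ forces $y<0$, and iterated thickenings/stretchings, combined if necessary with a parallel composition against the original edge, bring us to points $(x_1,y_1),(x_2,y_2)\in \mathcal{H}_q$ with $y_1$ in the interval dictated by $\mathrm{sgn}(q-1)$ and $|y_2|>1$. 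Since parallel and series compositions of series-parallel graphs remain series-parallel, the resulting gadgets are planar.

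The main obstacle is the bookkeeping in Cases~\ref{it:qxweccd} and~\ref{it:qxwecce}: while Goldberg--Jerrum's argument clearly yields $y_1, y_2$ with the correct real-interval properties, one must check carefully that each step can be carried out purely through series and parallel compositions — no reduction step there actually relies on a non-series-parallel (hence potentially non-planar) gadget. Once this verification is in place, feeding the resulting implementations into Corollary~\ref{cor:first-hardness-result:q>1} immediately yields the claimed $\numP$-hardness for both $\FNPTz(q,\gamma)$ and $\SPTz(q,\gamma)$, completing the theorem.
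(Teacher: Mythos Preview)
Your proposal is correct and follows essentially the same strategy as the paper: for each region, exhibit series-parallel $(x,y)$-implementations of weights $\gamma_1,\gamma_2$ meeting the hypotheses of Corollary~\ref{cor:first-hardness-result:q>1}, with Cases~\ref{it:qxwecca} and~\ref{it:qxweccc} handled by Lemmas~\ref{lem:regionB} and~\ref{lem:regionG} and the remaining cases by the series-parallel constructions of~\cite{Goldberg2014}. Your use of planar duality to reduce Case~\ref{it:qxwecce} to Case~\ref{it:qxweccd} and to normalise Case~\ref{it:qxweccb}, together with your explicit even-stretching for $\gamma_1$ in Case~\ref{it:qxweccb}, are minor variants --- the paper instead cites \cite[Lemma~16]{Goldberg2014} for Case~\ref{it:qxweccb} and \cite[Lemmas~13--15]{Goldberg2014} separately for Cases~\ref{it:qxweccd} and~\ref{it:qxwecce}, without invoking duality.
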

\begin{proof}
  The proof follows from the following results of~\cite{Goldberg2014}, which show how to implement $\gamma_1$ and $\gamma_2$ with a planar (actually series-parallel) graph as in Corollary~\ref{cor:first-hardness-result:q>1} for each of the regions in the statement.
  
	Item~\ref{it:qxwecca} follows from  Lemma~\ref{lem:regionB}. For Item~\ref{it:qxweccb}, note that $q  < 0$, so we have to implement $\gamma_1 \in (-1,0)$ and $\gamma_2 \not \in [-2, 0]$. We choose $\gamma_2 = y-1$ and $\gamma_1$ as implemented in~\cite[Lemma 16]{Goldberg2014}. Item~\ref{it:qxweccc} follows from  Lemma~\ref{lem:regionB}. For Item~\ref{it:qxweccd}, we implement $\gamma_1 \in (-1,0)$ and $\gamma_2 \not \in [-2, 0]$; the implementations are as in~\cite[Lemmas 14 and 15]{Goldberg2014}. For Item~\ref{it:qxwecce}, we implement $\gamma_1 \in (-1,0)$ and $\gamma_2 \not \in [-2, 0]$; the implementations are as in~\cite[Lemmas 13 and 15]{Goldberg2014}.
\end{proof}

The complexity of approximating the Tutte polynomial of a planar graph has previously been studied in~\cite{Goldberg2012} and~\cite{Kuperberg2015}. Our result on this matter (Theorem~\ref{thm:hardness:real:planar}) strengthens the results of~\cite{Goldberg2012} in three directions. First, we also study the complexity of determining the sign of the Tutte polynomial. Secondly, we find new regions where the approximation problem is hard. These regions are 3, 4 and 5, as well as the points in region 1 such that $q \le 5$ and $q \ne 3$. Finally, we prove $\# \mathsf{P}$-hardness, whereas in~\cite{Goldberg2012} hardness was obtained under the hypothesis that $\mathsf{RP} \ne \mathsf{NP}$.

For $q \in \mathbb{Z}^+$, let  $P(G;q)$ count the number of proper $q$-colourings of a graph $G$. The \emph{chromatic polynomial} of $G$ is the only polynomial that agrees with  $P(G;q)$  on positive integers. It is well-known that $P(G;q) =  Z_{\text{Tutte}}(G; q, -1)$, see for instance \cite{Sokal2005}. The value $q = 32/27$ appearing in Theorem~\ref{thm:hardness:real:planar} is, in some sense, a phase transition for the complexity of computing the sign of $P(G,q)$: this sign depends upon $G$ in an essentially trivial way for $q < 32/27$ \cite[Theorem 5]{jackson1993} and its computation is $\numP$-hard for $q > 32/27$, see ~\cite{Goldberg2014} for an in detail discussion of the relevance of the phase transition $q = 32/27$. 



\subsection{Hardness results for the Jones polynomial} \label{sec:further-consequences:bqp}

We briefly review some relevant facts about links and the Jones polynomial that relate it to the Tutte polynomial on graphs, see~\cite{Welsh1993} for their definitions. Let $V_L(T)$ denote the Jones polynomial of a link $L$.  By a result of Thistlethwaite, when $L$ is an alternating link with associated planar graph $G(L)$, we have $V_L(t) = f_L(t) T(G(L); -t, -t^{-1})$, where $f_L(t)$ is an easily-computable factor that is plus or minus a half integer power of $t$, and $T(G; x, y)$ is the Tutte polynomial of $G$ in the $(x,y)$-parametrisation~\cite{Thistlethwaite1987,Welsh1993}. Moreover, every planar graph is the graph of an alternating link~\cite[Chapter 2]{Welsh1993}. Hence, we can translate our results on the complexity of approximating the Tutte polynomial of a planar graph to the complexity of approximating the Jones polynomial of an alternating link, and obtain $\numP$-hardness results for approximating $V_L(t)$. More formally, we consider the following problems.

\prob{$\FNJ(t)$.}{A link $L$.}{If $V_L(t)= 0$, the algorithm may output any rational number. Otherwise, it must output $\hat{N} \in \mathbb{Q}$ such that $\hat{N} / K \le |V_L(t)| \le K \hat{N}$.}
  
\prob{$\DAJ(q, \gamma)$.}{A link $L$.}{If $V_L(t)= 0$, the algorithm may output any rational number. Otherwise, it must output $\hat{A} \in \mathbb{Q}$ such that, for some $a \in \arg(V_L(t))$, we have $| \hat{A} - a| \le \rho$.}

\begin{corollary} \label{cor:jones}
Let $K$ be a real number with $K > 1$. Let $t$ be an algebraic number with $\mathrm{Re}(t) > 0$. Then $\FNJ(t)$ and $\DAJpi(t)$ are $\numP$-hard unless $t \in \{1, -e^{2\pi i/3}, -e^{4\pi i/3}\}$ when both problems can be solved exactly.
\end{corollary}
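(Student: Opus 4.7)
The plan is to reduce $\FNJ(t)$ and $\DAJpi(t)$ to the corresponding Tutte polynomial approximation problems on planar graphs via Thistlethwaite's identity
\[
V_L(t) \;=\; f_L(t)\,T(G(L);\,-t,-t^{-1}),
\]
where $f_L(t)$ is a (signed) half-integer power of $t$, hence efficiently computable from $L$. Since every planar graph arises as $G(L)$ for some alternating link $L$, approximating the norm (respectively argument) of $V_L(t)$ over alternating links is polynomially equivalent to approximating the norm (respectively argument) of $T(G;-t,-t^{-1})$ over planar graphs $G$; so it suffices to prove $\numP$-hardness of these Tutte evaluations at $(x,y)=(-t,-t^{-1})$.

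Passing to the $(q,\gamma)$-parametrisation with $\gamma=-t^{-1}-1$ and $q=(x-1)(y-1)=2+t+t^{-1}$, a short check shows that $q$ is real precisely when $t$ is real or $|t|=1$, which motivates handling these as two cases. \emph{Case 1 ($t>0$ real with $t\neq 1$):} the point $(x,y)=(-t,-t^{-1})$ has both coordinates negative, $(x,y)\neq(-1,-1)$, $\min(x,y)\le -1$, and $q=2+t+t^{-1}\ge 4$, landing it in the region of Item~\ref{it:qxwecca} of Theorem~\ref{thm:hardness:real:planar}; this yields $\numP$-hardness of $\FNPTz(q,\gamma)$ and $\SPTz(q,\gamma)$, and the sign-hardness upgrades trivially to $\pDAT\pi/3\sDAPTz(q,\gamma)$-hardness because $T(G;x,y)$ is real-valued here. \emph{Case 2 ($|t|=1$, $\mathrm{Re}(t)>0$, $t\notin\{1,-\omega_3,-\omega_3^2\}$):} writing $t=e^{i\theta}$ with $\theta\in(-\pi/2,\pi/2)\setminus\{0,\pm\pi/3\}$, we get $q=2+2\cos\theta\in(2,4]$ real and $\gamma=-e^{-i\theta}-1\in\mathbb{C}\setminus\mathbb{R}$, so Theorem~\ref{thm:mainfour} directly gives the required hardness.

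The tractable side is handled by observing that the three exempted values $t\in\{1,-\omega_3,-\omega_3^2\}$ correspond under $(x,y)=(-t,-t^{-1})$ precisely to the Tutte special points $(-1,-1),(\omega_3,\omega_3^2),(\omega_3^2,\omega_3)$ from Section~\ref{sec:pre:hardness}, at which Theorem~\ref{thm:hardness:exact-planar} gives polynomial-time computability of $T(G;x,y)$; combined with the efficiency of $f_L(t)$, this places both $\FNJ(t)$ and $\DAJpi(t)$ in $\mathsf{FP}$ there. The step I expect to be most delicate is the bookkeeping in Case~2: verifying that the exemption $q=3$, $\gamma+1\in\{e^{2\pi i/3},e^{4\pi i/3}\}$ of Theorem~\ref{thm:mainfour} translates through $(x,y)=(-t,-t^{-1})$ into exactly $t\in\{-\omega_3,-\omega_3^2\}$, and similarly checking that the three tractable $t$-values match the Tutte special points on the nose, so that no spurious exceptional cases appear beyond the claimed ones.
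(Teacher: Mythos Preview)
Your overall strategy---reduce via Thistlethwaite to planar Tutte, then invoke Theorem~\ref{thm:mainfour} for non-real $y$ and Theorem~\ref{thm:hardness:real:planar} for real $y$---is exactly the paper's approach, and the bookkeeping you flag (matching the exemption of Theorem~\ref{thm:mainfour} to $t\in\{-\omega_3,-\omega_3^2\}$, and matching the three easy $t$-values to the Tutte special points) works out just as you anticipate. Your treatment of the real case is in fact slightly tidier than the paper's: for real $t>0$ one always has $\max\{|x|,|y|\}=\max\{t,t^{-1}\}>1$, so only Item~\ref{it:qxwecca} of Theorem~\ref{thm:hardness:real:planar} is needed, whereas the paper redundantly also invokes Item~\ref{it:qxweccc}.

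There is, however, a genuine gap. The statement is for \emph{every} algebraic $t$ with $\mathrm{Re}(t)>0$, but your two cases ``$t$ real'' and ``$|t|=1$'' do not exhaust this: writing $t=re^{i\theta}$ one has
\[
q \;=\; 2+t+t^{-1} \;=\; 2+(r+r^{-1})\cos\theta \;+\; i\,(r-r^{-1})\sin\theta,
\]
which is non-real whenever $r\neq 1$ and $\theta\neq 0$. For such $t$ (e.g.\ $t=2+i$) the parameter $q$ is a genuinely complex number, so neither Theorem~\ref{thm:mainfour} (which needs $q>2$ real) nor Theorem~\ref{thm:hardness:real:planar} applies, and your argument does not go through. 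To be fair, the paper's own proof shares this defect: it writes $q=2+2\,\mathrm{Re}(t)$, a formula that is only valid when $|t|=1$, and then proceeds as if $q$ were always real. So either the corollary is tacitly intended for $t$ real or $|t|=1$ (which is the case of interest for the Jones/BQP connection), or both your argument and the paper's leave the generic complex-$t$ case unaddressed.
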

\begin{proof}
  Let us consider the point $(x,y)=(-t,-t^{-1})$ in the Tutte plane. Note that $t \in \{1, -e^{2\pi i/3}, -e^{4\pi i/3}\}$ if and only if $(x,y)$ is one of the special points $(-1,-1)$, $(e^{4\pi i/3}, e^{2\pi i/3})$ and $(e^{2\pi i/3}, e^{4\pi i/3})$, where the Jones polynomial of a link can be exactly evaluated in polynomial time in the size of the link \cite{Jaeger1990}. Let us assume that $t$ is not one of these three values. We have $q = (-t-1)(-t^{-1} -1) = 2 + 2\mathrm{Re}(t) > 2$. When $t$ is non-real, in view of  Theorem~\ref{thm:mainfour}, $\FNPT(q, y-1)$ and $\pDAT \pi/3 \sDAPT(q,y-1)$ are $\numP$-hard and the result follows. When $t$ is real, note that $y < 0$, $x < 0$ and $q > 2$. Thus, either $(x,y)$ is such that $\max \{|x|, |y|\} \ge 1$ and $(x,y) \ne (-1,-1)$, so hardness is covered in region 1 of Theorem~\ref{thm:hardness:real:planar}, or $\max \{|x|, |y|\} < 1$, so hardness is covered in region 3 of Theorem~\ref{thm:hardness:real:planar}.
\end{proof}

The case $t = e^{2\pi i /5}$ of Corollary~\ref{cor:jones} is particularly relevant due to its connection with quantum computation. This connection between approximate counting and the quantum complexity class $\mathsf{BQP}$ was explored by Bordewich, Freedman, Lov\'{a}sz and Welsh in~\cite{Bordewich2005}, where they posed the question of determining the complexity of the following problem:

\prob{$\SRPT(q, \gamma)$}{A planar (multi)graph $G$.}{Determine whether $\mathrm{Re} (Z_{\text{Tutte}}(G; q, \gamma)) \ge 0$ or $\mathrm{Re}(Z_{\text{Tutte}}(G; q, \gamma)) \le 0$.}

 The non-planar version of $\SRPT(q, \gamma)$ has been studied in~\cite[Section 5]{Goldberg2017}, where it was shown that determining the sign of the real part of the Tutte polynomial is $\# \mathsf{P}$-hard in certain cases that include $t =e^{2 \pi i /5}$. Our results on the complexity of $\SPTz(q, \gamma)$ allow us to adapt the argument in \cite{Goldberg2017} to answer the question asked in~\cite{Bordewich2005}.

\begin{corollary} \label{cor:sign-real}
Consider the point $(x,y) = (\exp(-a \pi i /b), \exp(a \pi i /b))$, where $a$ and $b$ are positive integers such that $1/2 < a/b < 3/2$ and $a \ne b$. Let $q = (x-1)(y-1)$ and $\gamma = y-1$. Then $q \in (2,4)$ and $\SRPT(q, \gamma)$ is $\# \mathsf{P}$-hard.
\end{corollary}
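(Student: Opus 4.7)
My plan has two parts: verifying $q \in (2,4)$, then reducing to a real-parameter hardness result. For the first, since $y = \bar{x}$ with $|x|=1$, I compute $q = (x-1)(y-1) = |1-x|^2 = 2(1-\cos(a\pi/b))$. The hypotheses $1/2 < a/b < 3/2$ and $a \neq b$ place $a\pi/b$ in $(\pi/2, 3\pi/2) \setminus \{\pi\}$, on which $\cos$ takes values in $[-1,0)$, with $-1$ attained only at $\pi$; hence $q \in (2,4)$.

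To set up the reduction I would pivot on the real point $(x_2, y_2)$ with $y_2 = -1/2$, so $\gamma_2 = -3/2$ and $x_2 = 1 - 2q/3 \in (-5/3, -1/3)$. Because $q > 32/27$, at least one of Lemma~\ref{lem:regionB} (applicable when $x_2 \le -1$, i.e.\ $q \ge 3$) and Lemma~\ref{lem:regionG} (applicable when $-1 < x_2 < 0$, i.e.\ $q < 3$) yields $\#\mathsf{P}$-hardness of $\SPTz(q, \gamma_2)$ on planar graphs; since $\gamma_2$ is real, this coincides with $\SRPT(q, \gamma_2)$. Excluding the two pairs $a/b \in \{2/3, 4/3\}$, corresponding to the special points $(\omega_3, \omega_3^2)$ and $(\omega_3^2, \omega_3)$ where $Z_{\mathrm{Tutte}}$ is computable in polynomial time and the hardness claim is vacuous, Theorem~\ref{thm:approx-shifts} supplies a polynomial-time series-parallel approximate shift from $(x,y)$ to $(x_2, y_2)$.

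Given an input planar graph $G$ to $\SRPT(q, \gamma_2)$, I would then apply Lemma~\ref{lem:approximate-shift:reduction} to obtain a planar graph $H$ and a non-zero complex algebraic number $D$, all computed in polynomial time, with
\[
 \bigl| Z_{\mathrm{Tutte}}(G; q, \gamma_2) - Z_{\mathrm{Tutte}}(H; q, \gamma)/D \bigr| \le 2^{-k}
\]
for any prescribed precision $k$. Taking $k$ polynomial in $\mathrm{size}(G)$ and combining with the lower bound of Corollary~\ref{cor:lower-bound:tutte} forces the error to be strictly below $\tfrac12 |Z_{\mathrm{Tutte}}(G; q, \gamma_2)|$, so $Z_{\mathrm{Tutte}}(H; q, \gamma)$ concentrates in a narrow cone around the ray $D \cdot \mathrm{sign}(Z_{\mathrm{Tutte}}(G; q, \gamma_2)) \cdot \mathbb{R}_{>0}$. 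Equivalently, $\arg(Z_{\mathrm{Tutte}}(H; q, \gamma))$ lies within a controlled angle of $\arg(D)$ or $\arg(D)+\pi$ according to the (unknown) sign we want to recover. To extract this sign from a single $\SRPT(q, \gamma)$ query, as in \cite[Section~5]{Goldberg2017} I would further build a planar ``rotating'' gadget $K$ with $\arg(Z_{\mathrm{Tutte}}(K; q, \gamma)) \approx -\arg(D)$ and query the oracle on the planar disjoint union $H \sqcup K$: multiplicativity of $Z_{\mathrm{Tutte}}$ under disjoint union places $\arg(Z_{\mathrm{Tutte}}(H \sqcup K; q, \gamma))$ close to $\{0, \pi\}$, so the oracle's answer, combined with the explicitly computable sign of $\mathrm{Re}(D \cdot Z_{\mathrm{Tutte}}(K; q, \gamma))$, determines $\mathrm{sign}(Z_{\mathrm{Tutte}}(G; q, \gamma_2))$.

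The principal obstacle will be the construction of the rotating gadget $K$. Theorem~\ref{thm:approx-shifts} directly produces approximate shifts only to \emph{real} targets, whereas here I need a planar graph whose Tutte evaluation has any prescribed complex argument, to within a polynomial-time computable accuracy. My plan is to take $K$ as a disjoint union of small planar subgadgets of varying sizes, each contributing a computable angular increment: combining $n$-edge banana graphs (whose Tutte values for $y = e^{a\pi i/b}$ contribute angles depending on $n$), series stretchings, and theta gadgets derived from the approximate shifts of Section~\ref{sec:complex-implementations}, should yield a rich family of achievable arguments. A Kronecker-style equidistribution argument, using that our exclusions keep $y$ away from the handful of low-order roots of unity at which the group of achievable arguments would collapse to a small finite set, should suffice to show that this family is effectively dense in $\mathbb{R}/2\pi\mathbb{Z}$, with polynomial-time computable approximation rate. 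This is essentially the density argument of \cite[Section~5]{Goldberg2017}, adapted to the planar setting via our Theorem~\ref{thm:approx-shifts}.
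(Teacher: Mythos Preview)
Your verification that $q\in(2,4)$ is correct and matches the paper's. The reduction, however, is far more elaborate than necessary, and the step you flag as the ``principal obstacle''---constructing a planar rotating gadget $K$ whose Tutte evaluation has a prescribed argument---is a genuine gap you have not filled. The paper's argument sidesteps it entirely.

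The point you are missing is that $y=\exp(a\pi i/b)$ is a root of unity, so a plain $b$-thickening already gives an \emph{exact} shift to a real weight: $y^b=\exp(a\pi i)=(-1)^a$. With $a$ odd (in particular for the motivating case $a=3$, $b=5$), the $b$-thickening $(x,y)$-implements the real point $(1-q/2,\,-1)$, i.e.\ $\gamma'=-2$. Moreover, for the thickening gadget $\Upsilon^b$ one has $Z_{s|t}(\Upsilon^b;q,\gamma)=q^2$, so the multiplicative factor in Lemma~\ref{lem:implementations:multivariate} is exactly $1$. Hence if $G'$ is the $b$-thickening of a planar graph $G$, then $Z_{\text{Tutte}}(G';q,\gamma)=Z_{\text{Tutte}}(G;q,-2)$ is a real number, and a single call to the $\SRPT(q,\gamma)$ oracle on $G'$ returns its sign directly. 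Since $(1-q/2,-1)$ falls in region~1 of Theorem~\ref{thm:hardness:real:planar} for every $q\in(2,4)$, $\SPTz(q,-2)$ is $\#\mathsf P$-hard and the corollary follows in two lines---no approximate shifts, no rotation, no density argument.

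In short, Theorem~\ref{thm:approx-shifts} is the wrong tool here: it produces approximate shifts with an uncontrolled complex factor $D$, which is precisely what forces you into the unresolved rotation problem. The special arithmetic of $y$ hands you an exact real shift with trivial factor for free; once you see that, the whole second half of your plan becomes unnecessary.
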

\begin{proof}
  The proof is essentially the same one as that of~\cite[Theorem 1.7]{Goldberg2017}. First, note that
  \begin{equation*}
    q = (x-1) (y-1) = 2 - x - y = 2 - \exp(-a \pi i /b) - \exp(a \pi i /b) = 2 - 2 \cos (a \pi / b),
  \end{equation*}
  which is real. Since $1/2 < a/b < 3/2$ and $a \ne b$, we have $q \in (2, 4)$. A $b$-thickening allows us to $(x,y)$-implement $(1-q/2, -1)$. Since $\SPTz(q, -2)$ is $\# \mathsf{P}$-hard (see Theorem~\ref{thm:hardness:real:planar}), we conclude that  $\SRPT(q, \gamma)$ is $\# \mathsf{P}$-hard.
\end{proof}

 Corollary~\ref{cor:sign-real} includes the case where $a = 3$ and $b =5$. In this case, we have $x = \exp(-a \pi i /b) = -\exp(\pi i) \exp(-3 \pi i /5) = -\exp(2 \pi i /5)$ and $y =x^{-1}$. That is, $(x,y) = (-t, -t^{-1})$ for $t = \exp(2 \pi i /5)$, which is the point of interest in~\cite{Bordewich2005}.

\section*{Acknowledgement}

We thank Ben Green, Joel Ouaknine and Oliver Riordan for useful discussions on Section~\ref{sec:hardness:lower-bound}. We also thank Miriam Backens for useful conversations and suggestions about this work.

\bibliographystyle{plain}
\bibliography{\jobname}

\end{document}